\documentclass[conference]{IEEEtran}

\usepackage{amsthm,mathtools,amssymb,mathrsfs}
\usepackage{color}
\usepackage{tikz}
\usepackage[format=hang]{subfig}
\usepgflibrary{shapes}
\usetikzlibrary{petri,arrows,positioning}

\tikzstyle{ran}=[shape=circle,draw,inner sep=0pt,minimum size=4mm,thick]
\tikzstyle{tran}=[thick,draw,->,>=stealth]

%\usepgflibrary{shapes}

% \addtolength{\textheight}{4mm}
% \addtolength{\textwidth}{.4cm}
% \addtolength{\topmargin}{-2mm}
% \addtolength{\oddsidemargin}{-.2cm}
% \addtolength{\evensidemargin}{-.2cm}
\widowpenalty=10000

\newcommand{\calF}{\mathcal{F}}

\newcommand{\calP}{\mathcal{P}}

\newcommand{\A}{\mathcal{A}}
\newcommand{\B}{\mathcal{B}}

\newcommand{\C}{\mathscr{C}}
\newcommand{\D}{\mathcal{D}}
\newcommand{\E}{\mathbb{E}}
\newcommand{\calE}{\mathcal{E}}

\newcommand{\G}{\mathcal{G}}

\newcommand{\V}{\mathcal{V}}

\newcommand{\M}{\mathcal{M}}
\newcommand{\N}{\mathcal{N}}
\newcommand{\X}{\mathcal{X}}

\newcommand{\Prob}{\mathit{Prob}}
\newcommand{\Pat}{\mathit{Pat}}
\newcommand{\Nset}{\mathbb{N}}

\newcommand{\Zset}{\mathbb{Z}}
\newcommand{\Qset}{\mathbb{Q}}
\newcommand{\Rset}{\mathbb{R}}
\newcommand{\fpath}{\mathit{FPath}}
\newcommand{\run}{\mathit{Run}}

\newcommand{\len}{\mathit{length}}

\newcommand{\change}{\mathit{change}}

\newcommand{\post}{\mathit{post}^*}
\newcommand{\pre}{\mathit{pre}^*}

\newcommand{\xmin}{x_{\mathrm{min}}}
\newcommand{\size}[1]{|\!|#1|\!|}
\newcommand{\norm}[1]{|\!|\!|#1|\!|\!|}
\renewcommand{\vec}[1]{\pmb{#1}}

\newcommand{\mi}{i}

\newcommand{\conf}{\mathit{conf}}
\newcommand{\sem}[1]{[\![#1]\!]}
\newcommand{\Reg}{\mathit{Reg}}

\newcommand{\ms}[1]{m^{(#1)}}
\newcommand{\ps}[1]{p^{(#1)}}
\newcommand{\xs}[1]{x^{(#1)}}

\newcommand{\gmax}{g_{\mathit max}}

\newcommand{\rmax}{r_{\mathit max}}
\newcommand{\ymin}{y_{\mathit min}}

\newcommand{\tran}[1]{{}\mathchoice%
    {\stackrel{#1}{\rightarrow}}
    {\mathop {\smash\rightarrow}\limits^{\vrule width 0pt height 0pt
                                                depth 4pt\smash{#1}}}
    {\stackrel{#1}{\rightarrow}}
    {\stackrel{#1}{\rightarrow}}
{}}
\newcommand{\ltran}[1]{{}\mathchoice%
    {\stackrel{#1}{\longrightarrow}}
    {\mathop {\smash\longrightarrow}\limits^{\vrule width 0pt height 0pt
                                                depth 4pt\smash{#1}}}
    {\stackrel{#1}{\longrightarrow}}
    {\stackrel{#1}{\longrightarrow}}
{}}

\theoremstyle{plain}
\newtheorem{assumption}{Assumption}

\newtheorem{lemma}{Lemma}

\newtheorem{proposition}{Proposition}
\newtheorem{definition}{Definition}
\newtheorem{corollary}{Corollary}
\newtheorem{theorem}{Theorem}

\newtheorem{remark}{Remark}

\newenvironment{reftheorem}[2]{\begin{trivlist}
\item[\hskip \labelsep {\bfseries #1}\hskip \labelsep {\bfseries #2}]\itshape}{\end{trivlist}}

\pagestyle{plain}

\begin{document}

%\runningtitle{Probabilistic Multi-Counter Automata}
\title{Long-Run Average Behaviour of Probabilistic Vector Addition Systems}

\author{\IEEEauthorblockN{Tom\'{a}\v{s} Br\'{a}zdil}
\IEEEauthorblockA{\small Faculty of Informatics\\
Masaryk University\\
Brno, Czech Republic\\
brazdil@fi.muni.cz}
\and
\IEEEauthorblockN{Stefan Kiefer}
\IEEEauthorblockA{\small Department of Computer Science\\
University of Oxford\\
United Kingdom\\
stefan.kiefer@cs.ox.ac.uk}\\
\and
\IEEEauthorblockN{Anton\'{\i}n Ku\v{c}era}
\IEEEauthorblockA{\small Faculty of Informatics\\
Masaryk University\\
Brno, Czech Republic\\
kucera@fi.muni.cz}
\and
\IEEEauthorblockN{Petr Novotn\'{y}}
\IEEEauthorblockA{\small IST Austria\\
%	Masaryk University\\
	Klosterneuburg, Austria\\
	petr.novotny@ist.ac.at}
}
%\IEEEcompsocitemizethanks{%
%\IEEEcompsocthanksitem M. Shell is with the Georgia Institute of Technology.
%\IEEEcompsocthanksitem J. Doe and J. Doe are with Anonymous University.}
%\date{\today{}}

\maketitle

\begin{abstract}
We study the pattern frequency vector for runs in probabilistic Vector
Addition Systems with States (pVASS). Intuitively, each configuration
of a given pVASS
is assigned one of finitely many \emph{patterns}, and every run can
thus be seen as an infinite sequence of these patterns. The pattern frequency
vector assigns to each run the limit of pattern frequencies computed
for longer and longer prefixes of the run. If the limit does not exist,
then the vector is undefined. We show that for one-counter pVASS,
the pattern frequency vector is defined and takes one of finitely many
values for almost all runs. Further, these values and their associated
probabilities can be approximated up to an arbitrarily small relative error
in polynomial time. For stable two-counter pVASS, we show the same result,
but we do not provide any upper complexity bound.
As a byproduct of our study, we discover counterexamples falsifying
some classical results about stochastic Petri nets published in the~80s.
\end{abstract}

\section{Introduction}
\label{sec-intro}

Stochastic extensions of Petri nets are intensively used in
performance and dependability analysis as well as reliability
engineering and bio-informatics.  They have been developed in the
early
eighties~\cite{DBLP:journals/tc/Molloy82,DBLP:journals/tocs/MarsanCB84},
and their token-game semantics yields a denumerable Markov chain.  The
analysis of stochastic Petri nets (SPNs) has primarily focused on
long-run average behaviour.  Whereas for safe nets long-run averages
always exist and can be efficiently computed, the setting of
infinite-state nets is much more challenging.  This is a practically
very relevant problem as, e.g., classical open queueing networks and
biological processes typically yield nets with unbounded state space.
The aim of this paper is to study the long-run average behaviour for
infinite-state nets.  We do so by considering probabilistic
Vector Addition Systems with States (pVASS, for short), finite-state weighted
automata equipped with a finite number of non-negative counters.  A pVASS
evolves by taking weighted rules along which any counter can be
either incremented or decremented by one (or zero). The probability
of performing a given enabled rule is given by its weight divided
by the total weight of all enabled rules. This model is
equivalent to discrete-time SPNs: a counter vector
corresponds to the occupancy of the unbounded places in the net,
and the bounded places are either encoded in the counters
or in the control states. Producing a
token yields an increment, whereas token consumption yields a
decrement. Discrete-time SPNs describe the probabilistic branching of
the continuous-time Markov chains determined by SPNs, and many
properties of continuous-time SPNs can be derived directly from the properties of
their underlying discrete-time SPNs. In fact, discrete-time SPNs
are a model of interest in itself,
see e.g.,~\cite{DBLP:journals/fuin/Kudlek05a}.

%% Possibly present an example infinite-state SPN.

Our study concentrates on long-run average \emph{pattern frequencies} for
pVASS. A \emph{configuration} of a given pVASS $\A$ is a pair $p\vec{v}$, where $p$ is the current control state and
$\vec{v} \in \Nset^d$ is the vector of current counter values.
The \emph{pattern} associated to $p\vec{v}$ is a pair $p\alpha$, where
$\alpha \in \{0,*\}^d$, and $\alpha_i$
is either $0$ or $*$, depending on whether $\vec{v}_i$ is zero or positive
(for example, the pattern associated to $p(12,0)$ is
$p(*,0)$). Every run in $\A$ is an infinite sequence of configurations
which determines a unique infinite sequence of the associated patterns.
For every finite prefix of a run $w$, we can compute the frequency
of each pattern in the prefix, and define the \emph{pattern frequency
vector} for $w$, denoted by $F_\A(w)$, as the limit of the sequence
of frequencies computed for longer and longer prefixes of~$w$. If the
limit does not exist, we put $F_\A(w) = {\perp}$ and say that $F_\A$
is not well defined for~$w$. Intuitively, a pattern represents the
information sufficient to determine the set of enabled rules (recall
that each rule can consume at most one token from each counter). Hence,
if we know $F_\A(w)$, we can also determine the limit frequency of
rules fired along~$w$. However, we can also encode various predicates
in the finite control of $\A$ and determine the frequency of (or time
proportion spent in) configurations in $w$ satisfying the predicate.
For example, we might wonder what is the proportion of time spent
in configurations where the second counter is even, which can be
encoded in the above indicated way.

The very basic questions about the pattern frequency vector include
the following:
\begin{itemize}
\item Do we have $\calP(F_\A {=} \perp) = 0$, i.e., is $F_\A$
  well defined for almost all runs?
\item Is $F_\A$ (seen as a random variable) discrete? If so, how many
  values can $F_\A$ take with positive probability?
%  are there at most countably many admissible values
%  ${\perp}, \vec{F}_1,\vec{F}_2,\ldots$ such that for almost all runs $w$
%  we have that $F_\A(w) = \vec{F}_i$ for some $i \in \Nset$?
\item Can we somehow compute or approximate possible
  values of $F_\A$ and the probabilities of all runs that
  take these values?
\end{itemize}
These fundamental questions are rather difficult for general pVASS. In this paper,
we concentrate on the subcase of pVASS with one or two counters, and we also observe
that with three or more counters, there are some new unexpected phenomena that make
the analysis even more challenging. Still, our results can be seen as a basis for designing algorithms that analyze the long-run average behaviour in certain subclasses of pVASS with  arbitrarily many counters (see below). The main ``algorithmic results'' of our paper can be summarized as follows:
\smallskip

\textbf{1.} For a \emph{one-counter} pVASS with $n$ control states, we show
  that $F_\A$ is well defined and takes at most $\max\{2,2n{-}1\}$ different values for almost
  all runs.
  These values and the associated probabilities may be irrational, but can be
  effectively approximated up to an arbitrarily small relative error
  $\varepsilon > 0$ in \emph{polynomial time}.

\textbf{2.} For \emph{two-counter} pVASS that are \emph{stable}, we show
  that $F_\A$ is well defined and takes only finitely many values for almost
  all runs. Further, these values and the associated probabilities can be
  effectively approximated up to an arbitrarily small absolute/relative error
  $\varepsilon > 0$.
\smallskip

Intuitively, a two-counter pVASS $\A$ is \emph{unstable} if the changes
of the counters are well-balanced so that certain infinite-state Markov chains
used to analyze the behaviour of $\A$ may become \emph{null-recurrent}.
Except for some degenerated cases, this null-recurrence
is not preserved under small perturbations in transition probabilities. Hence,
we can assume that pVASS models constructed by estimating some real-life probabilities are stable. Further, the analysis of null-recurrent Markov chains
requires different methods and represents an almost independent task.
Therefore, we decided to disregard unstable two-counter pVASS in this
paper. Let us note that the problem whether a given two-counter pVASS $\A$ is (un)stable
is decidable in exponential time.

The above results for one-counter and stable two-counter pVASS are obtained by showing
the following:
\smallskip

\textbf{(a)} There are finitely many sets of configurations called \emph{regions}, such that almost every run eventually stays in some region, and almost all runs that stay in the same region share the same well-defined value of the pattern frequency vector.

\textbf{(b)} For every region $R$, the associated pattern frequency vector and the
probability of reaching $R$ can be computed/approximated effectively. For one-counter pVASS,
we first identify families of regions (called \emph{zones}) that share the same pattern frequency 
vector, and then consider these zones rather then individual regions. 
\smallskip

For one-counter pVASS, we show that the total number of all regions (and hence also zones) cannot exceed $\max\{2,2n{-}1\}$, where $n$ is the number of control states. To compute/approximate the pattern frequency vector of a given zone $Z$ and the probability of staying in~$Z$, the tail bounds of \cite{BKK:pOC-time-LTL-martingale-JACM}
and the polynomial-time algorithm of \cite{SEY:pOC-poly-Turing} provide all the tools we need.

For two-counter pVASS, we do not give an explicit bound on the number of regions, but we show that all regions are effectively semilinear (i.e., for each region there is a computable Presburger formula which represents the region).
Here we repeatedly use the result of \cite{LS:two-counter-VASS-flat} which says that the reachability  relation of a two-counter VASS is effectively semilinear. Technically, we show that every run eventually reaches a configuration where one or both counters become bounded
or irrelevant (and we apply the results for one-counter pVASS), or a configuration
of a special set $C$ for which we show the existence and effective constructibility of
a finite eager attractor\footnote{A finite eager attractor \cite{AHMS:Eager-limit} for a set of configurations $C$ is a finite set of configurations $A \subseteq C$ such that the probability of reaching $A$ from every configuration of $C \cup \post(A)$ is equal to~$1$, and the probability of revisiting $A$ in more than $\ell$ steps after leaving $A$ decays (sub)exponentially in $\ell$.}.
This is perhaps the most advanced part of our paper, where we need to establish new
exponential tail bounds for certain random variables using an appropriately defined martingale. We believe that these tail bounds and the associated martingale are of broader interest,
because they provide generic and powerful tools for quantitative analysis of two-counter pVASS.
Hence, every run which visits $C$ also visits its finite eager attractor, and the regions
where the runs initiated in $C$ eventually stay
correspond to bottom strongly connected components of this attractor. For each of these
bottom strongly connected components, we approximate the pattern frequency vector
by employing the abstract algorithm of \cite{AHMS:Eager-limit}.

The overall complexity of our algorithm for stable two-counter pVASS could be estimated
by developing lower/upper bounds on the parameters that are used in the lemmata of
Section~\ref{sec-two-counters}. Many of these parameters are ``structural'' (e.g.,
we consider the minimal length of a path from some configuration to some set of
configurations). Here we miss a refinement of the results published in
\cite{LS:two-counter-VASS-flat} which would provide explicit upper bounds.
%We feel that providing these bounds is somewhat beyond the
%scope of this paper, which is primarily about stochastic aspects of pVASS. Further,
%one should first carefully consider the possibility of refining the results
%of \cite{LS:two-counter-VASS-flat} and providing \emph{generic} bounds.
Another difficulty is that we do not have any lower bound on $|\tau_R|$ in the case when
$\tau_R \neq 0$, where $\tau_R$ is the mean payoff defined in
Section~\ref{sec-two-counters}. Still, we conjecture that these ``structural bounds''
and hence also the complexity of our algorithm are not too high (perhaps,
singly exponential in the size of $\A$ and in $|\tau_R|$), but we leave
this problem for future work.

The results summarized in (a) and (b) give a reasonably deep understanding
of the long-run behaviour of  a given one-counter or a stable two-counter pVASS, which can be used to develop algorithms for other interesting problems. For example, we can decide the existence of a finite attractor for the set of configuration reachable from a given initial configuration, we can provide a sufficient condition which guarantees that all pattern frequency vectors taken with positive probability are rational, etc.
An obvious question is whether these results can be extended to pVASS with
three or more counters. The answer is twofold.
\smallskip

\textbf{I.} The algorithm for stable two-dimensional pVASS presented in Section~\ref{sec-two-counters} in fact ``reduces'' the analysis of a given two-counter pVASS $\A$ to the analysis of several one-counter pVASS and the analysis of some ``special''
configurations of~$\A$. It seems that this approach can be generalized to a recursive
procedure which takes a pVASS $\A$ with $n$ counters, isolates certain subsets of runs whose properties can be deduced by analyzing pVASS with smaller number of counters, and checks that the remaining runs are sufficiently simple so that they can be analyzed directly.
Thus, we would obtain a procedure for analyzing a subset of pVASS with $n$~counters.

\textbf{II.} In Section~\ref{sec-three-counters} we give an example of a \emph{three-counter} pVASS
$\A$ with strongly connected state-space whose long-run behaviour is \emph{undefined} for almost all runs (i.e., $F_\A$ takes the $\perp$ value), and this property is not sensitive to small perturbations in transition probabilities. Since we do not provide a rigorous mathematical analysis of $\A$ in this paper, the above claims are formally just \emph{conjectures} confirmed only by Monte Carlo simulations. Assuming that these conjectures are valid, the method used for two-counter pVASS is not sufficient for the analysis of general three-counter pVASS, i.e., there are new phenomena which cannot be identified by the methods used for two-counter pVASS.
\smallskip

%To sum up, it is likely that the method used for stable two-counter pVASS can be
%extended to certain subclasses of \mbox{$n$-counter} pVASS, which is a
%challenging topic for future research. At the same time, the method is
%insufficient for general three-counter pVASS.
%\smallskip

%In our proofs for one-counter pVASS, we rely on
%the results and proof techniques developed for probabilistic
%one-counter automata in
%\cite{BBEKW:OC-MDP,EWY:one-counter,BKK:pOC-time-LTL-martingale}
%(see Section~\ref{sec-one-counter}
%for more comments).

\noindent
\textbf{Related work.}
The problem of studying pattern frequency vector
is directly related to the study of ergodicity properties in
stochastic Petri nets, particularly to the study
of the so-called \emph{firing process}. A classical paper in this
area \cite{DBLP:journals/tse/FlorinN89} has been written by
Florin \& Natkin in the~80s. In the
paper, it is claimed that if the state-space of a given stochastic
Petri net (with arbitrarily many unbounded places) is
strongly connected, then the firing process in ergodic.
In the setting of (discrete-time) probabilistic Petri nets, this
implies that for almost all runs, the limit frequency of transitions
performed along a run is defined and takes the same value.
A simple counterexample to this claim is shown in Fig.~\ref{fig-SPN}.
The net $\N$ has two unbounded places and strongly connected state-space, but the limit frequency of transitions takes two values with positive probability (each with probability $1/2$).
Note that $\N$ can be translated into an equivalent pVASS $\A$ with two counters
which is also shown in Fig.~\ref{fig-SPN}.  Intuitively, if both places/counters are positive, then both of them have a tendency to decrease, i.e., the trend $t_S$
of the only BSCC $S$ of $\C_\A$ is negative in both components (see Section~\ref{sec-prelim}).
However, if we reach a configuration where the first place/counter is zero and the second place/counter is sufficiently large, then the second place/counter starts to \emph{increase},
i.e., it never becomes zero again with some positive probability
(i.e., the the mean payoff $\tau_{R_2}$ is positive, where $R_2$ is the only type~II
region of the one-counter pVASS $\A_2$, see Section~\ref{sec-two-counters}).
The first place/counter stays zero for most of the time, because when it becomes positive, it is immediately emptied with a very large probability. This means
that the frequency of firing $t_2$ will be much higher than the frequency
of firing $t_1$. When we reach a configuration where the first place/counter
is large and the second place/counter is zero, the situation is symmetric, i.e.,
the frequency of firing $t_1$ becomes much higher than the frequency
of firing $t_2$. Further, almost every run eventually behaves according to
one of the two scenarios, and therefore there are two  limit
frequencies of transitions, each of which is taken with probability~$1/2$.
This possibility of reversing the ``global'' trend of the counters after hitting
zero in some counter was not considered in \cite{DBLP:journals/tse/FlorinN89}.
Further, as we already mentioned, we conjecture the existence of a three-counter pVASS $\A$ with strongly connected state-space (the one of Section~\ref{sec-three-counters}) where the limit frequency
of transitions is undefined for almost all runs. So, we must unfortunately conclude
that the results of \cite{DBLP:journals/tse/FlorinN89} are invalid for fundamental
reasons. On the other
hand, the results achieved for one-counter pVASS are consistent
with another paper by
Florin \& Natkin \cite{DBLP:journals/jss/FlorinN86} devoted to stochastic Petri
nets with only one unbounded place and strongly connected state-space,
where the firing process is indeed ergodic (in our terms, the
pattern frequency vector takes only one value with probability~$1$).

\begin{figure}[t]
	\begin{center}
		\begin{tikzpicture}[x=1.1cm,y=3cm,>=stealth',font=\scriptsize,
		every transition/.style={draw,minimum size=5mm},
		every place/.style={draw,minimum size=5mm},
		bend angle=45]
		\node[place] (1) at (0,0) {};
		\node[place] (2) at (1.8,0) {};
%		\node at (-0.7,0) {counter 1};
%		\node at (2.7,0) {counter 2};
		\node[transition] (middle) at (1,0) {$100$}
		edge[pre] (1)
		edge[pre] (2);
		\node[transition] at (0,0.4) {$1$}
		edge[post] (1);
		\node[transition] at (1.8,0.4) {$1$}
		edge[post] (2);
		\node[transition] (1d) at (0,-0.4) {$10$}
		edge[pre] (1);
		\draw (1d.west) edge[post,bend left] (1);
		\draw (1d.east) edge[post,bend right] (1);
		\node[transition] (2d) at (1.8,-0.4) {$10$}
		edge[pre] (2);
		\draw (2d.west) edge[post,bend left] (2);
		\draw (2d.east) edge[post,bend right] (2);
		\node at (0,-.55) {$t_1$};
		\node at (1.8,-.55) {$t_2$};
        \node (s)  [ran] at (5.3,0) {$s$};
        \node (t1) [ran] at (4,.5) {};
        \node (t2) [ran] at (5.3,.5) {};
        \node (u1) [ran] at (4,-.5) {};
        \node (u2) [ran] at (5.3,-.5) {};
        \draw [tran] (t1) -- node[above] {$(1,0),1$} (t2);
        \draw [tran] (u1) -- node[below] {$(0,1),1$} (u2);
        \draw [tran] (t2)  -- node[left]  {$(1,0),1$} (s);
        \draw [tran] (u2)  -- node[left]  {$(0,1),1$} (s);
        \draw [tran, rounded corners, <-] (t1)  -- node[left]  {$(-1,0),10$} +(0,-.4) -- (s);
%        \draw [tran, rounded corners,<-] (t2)  -- node[right] {$(1,0),1$} +(0,-.3) -- (s);
        \draw [tran, rounded corners,<-] (u1)  -- node[left] {$(0,-1),10$} +(0,.4) -- (s);
        \draw [tran, rounded corners] (s.30) -- +(.6,.2) -- node[right] {$(1,0),1$} +(.3,.3) -- (s);
        \draw [tran, rounded corners] (s.330) -- +(.6,-.2) -- node[right] {$(0,1),1$} +(.3,-.3) -- (s);
        \draw [tran, rounded corners] (s.350) -- +(.4,-.1) -- node[right] {$(-1,-1),100$} +(.4,.1) -- (s);
%        \draw [tran, rounded corners] (s)  -- +(.2,.25) -- node[above] {$(1,0),1$}
%              +(-.2,.25) -- (s);
%        \draw [tran, rounded corners] (s)  -- +(.2,.25) -- node[above] {$(1,0),1$}
%        +(-.2,.25) -- (s);
		\end{tikzpicture}
	\end{center}
	\caption{A discrete-time SPN $\N$ and an equivalent pVASS $\A$.}
	\label{fig-SPN}
\end{figure}
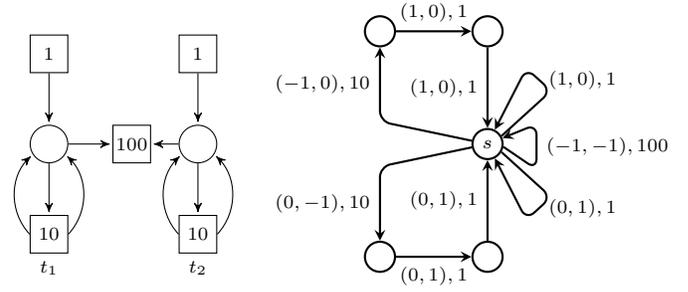

\section{Preliminaries}
\label{sec-prelim}

\noindent
We use $\Zset$, $\Nset$, $\Nset^+$, $\Qset$, and $\Rset$
to denote the set of all integers, non-negative
integers, positive integers, rational numbers, and real numbers,
respectively. The absolute value of a given $x \in \Rset$ is denoted
by $|x|$. Let $\delta > 0$, $x \in \Qset$, and $y \in \Rset$.
We say that $x$ approximates $y$ up to
a relative error $\delta$, if either \mbox{$y \neq 0$} and
\mbox{$|x-y|/|y| \leq \delta$}, or $x = y = 0$. Further, we
say that $x$ approximates $y$ up to
an absolute error $\delta$ if \mbox{$|x-y| \leq \delta$}. 
We assume that rational numbers (including integers) are represented
as fractions of binary numbers, and we use $\size{x}$ to denote
the size (length) of this representation.

% We use standard notation for intervals, e.g.,
% $(0,1]$ denotes \mbox{$\{x \in \Rset \mid 0 < x \leq 1 \}$}.
% The set of finite words over an alphabet $\Sigma$ is denoted by
% $\Sigma^*$, and the set of infinite words over $\Sigma$ is denoted by
% $\Sigma^\omega$.  We also use $\Sigma^+$ to
% denote the set $\Sigma^* \smallsetminus \{\varepsilon\}$, where
% $\varepsilon$ is the empty word. The length of a given $w \in \Sigma^* \cup
% \Sigma^{\omega}$ is denoted by $\len(w)$, where the length of an infinite
% word is $\infty$.
% Given a word (finite or infinite) over $\Sigma$, the individual
% letters of $w$ are denoted by $w(0),w(1),\ldots$

%Given a finite set~$A$, we regard elements of~$\Rset^A$ as vectors
%over~$A$. Similarly, elements of $\Rset^{A\times A}$ are regarded
%as square matrices. The $\infty$-norm of a vector $\vecv \in \Rset^A$
%and a matrix $M \in \Rset^{A\times A}$, denoted by $\norm{\vecv}$ and
%$\norm{M}$, is defined as $\max_{a \in A}|\vecv_a|$ and
%$\max_{a \in A}\sum_{b\in A} |M_{a,b}|$, respectively.
%The maximal and minimal element of a given $\vec{v} \in \Rset^A$
%is denoted by $\max(\vec{v})$ and $\min(\vec{v})$, respectively.
%Given $\vec{v},\vec{u} \in \Rset^A$, we write $|\vec{v}-\vec{u}|$
%to denote $\max\{|\vec{v}(a)-\vec{u}(a)| : a \in A\}$.

Let $\V = (V,\tran{})$, where $V$ is a non-empty set of vertices and
${\tran{}} \subseteq V \times V$ a \emph{total} relation
(i.e., for every $v \in V$ there is some $u \in
V$ such that $v \tran{} u$). The reflexive and transitive closure of
$\tran{}$ is denoted by $\tran{}^*$, and the reflexive, symmetric and transitive closure
of $\tran{}$ is denoted by $\leftrightarrow^*$. We say that $\V$ is \emph{weakly connected}
if $s \leftrightarrow^* t$ for all $s,t \in V$.
A \emph{finite path} in $\V$ of
\emph{length} $k \geq 0$ is a finite sequence
of vertices $v_0,\ldots,v_k$, where
$v_i \tran{} v_{i+1}$ for all $0 \leq i <k$. The length of a finite
path $w$ is denoted by $\len(w)$. A \emph{run} in $\V$ is an infinite
sequence $w$ of vertices such that every finite prefix of $w$ is
a finite path in $\V$. The individual vertices of $w$ are denoted by
$w(0),w(1),\ldots$
%The sets of all finite paths and all runs
%in $\V$ are denoted by $\fpath_{\V}$ and $\run_{\V}$, respectively.
The sets of all finite paths and all runs in $\V$
that start with a given finite path $w$ are denoted by
$\fpath_{\V}(w)$ and $\run_{\V}(w)$ (or just by $\fpath(w)$ and $\run(w)$
if $\V$ is understood), respectively. For a given set $S \subseteq V$,
we use $\pre(S)$ and $\post(S)$ to denote the set of all $v \in V$ such that
$v \tran{}^* s$ and $s \tran{}^* v$ for some $s \in S$, respectively.
Further, we say that a run $w$ \emph{stays} in~$S$ if there is a $k \in \Nset$
such that for all $\ell \geq k$ we have that $w(\ell) \in S$. The set of all
runs initiated in $s$ that stay in $S$ is denoted by $\run(s,S)$.

A \emph{strongly connected component (SCC)} of $\V$ is a
maximal subset $C \subseteq V$ such that for all $v,u \in C$
we have that $v \tran{}^* u$. A SCC~$C$ of $\V$ is
a \emph{bottom SCC (BSCC)} of $\V$ if for all
$v \in C$ and $u \in V$ such that $v \tran{} u$ we have that
$u \in C$.
% Let $U \subseteq V$.
% %We say that $U$ is \emph{strongly connected}
% %if either $U = \{v\}$ where $v \tran{} v$, or for all $u,v \in V$
% %where $u \neq v$ we have that $u \tran{}^* v$.
% We say that $U$ is \emph{strongly connected}
% if $v\tran{}^+ u$ for all $v,u\in U$ (here $v\tran{}^+ u$ if there is a path
% of length greater than $1$ from $v$ to $u$).
% %either $U = \{v\}$ where $v \tran{} v$, or for all $u,v \in V$
% %where $u \neq v$ we have that $u \tran{}^* v$.
% %
% Further, we say that $U$
% is a \emph{strongly connected component (SCC)}  if $U \neq \emptyset$
% is a maximal strongly connected subset of $V$, and $U$ is
% a \emph{bottom SCC (BSCC)} if
% for every $u \in U$ and every $u \tran{} v$ we have that $v \in U$.

We assume familiarity with basic notions of probability theory, e.g.,
\emph{probability space}, \emph{random variable}, or the \emph{expected
value}. Given events $E,F$, we say that \emph{$E$ holds for almost all elements 
of $F$} if $\calP(E \cap F) = \calP(F)$ (in particular, if $\calP(F) = 0$, then any
event holds for almost all elements of $F$).
As usual, a \emph{probability distribution} over a finite or
countably infinite set $A$ is a function
$f : A \rightarrow [0,1]$ such that \mbox{$\sum_{a \in A} f(a) = 1$}.
We call $f$ \emph{positive} if
$f(a) > 0$ for every $a \in A$, and \emph{rational} if $f(a) \in
\Qset$ for every $a \in A$.

%A \emph{$\sigma$-field} over a set $\Omega$ is a set
%$\calF \subseteq 2^{\Omega}$ that includes $\Omega$ and is closed under
%complement and countable union. A
%\emph{probability space} is a triple $(\Omega,\calF,\calP)$ where
%$\Omega$ is a set called \emph{sample space},
%$\calF$ is a $\sigma$-field over $\Omega$ whose elements are called
%\emph{events}, and $\calP : \calF \rightarrow [0,1]$ is a
%\emph{probability measure} such that, for each countable collection
%$\{X_i\}_{i\in I}$ of pairwise disjoint elements of $\calF$ we have that
%$\calP(\bigcup_{i\in I} X_i) = \sum_{i\in I} \calP(X_i)$, and moreover
%$\calP(\Omega)\mathord{=}1$.
% A \emph{random variable} over
% a probability space $(\Omega,\calF,\calP)$
% is a function  $X : \Omega \rightarrow \Rset$ such that
% \mbox{$\{\omega \in \Omega \mid X(\omega) \leq c\} \in \F$}
% for every $c \in \Rset$. The \emph{expected value}
% of a random variable $X$ is defined by
% $\expect{X} = \int_{\omega \in \Omega} X(\omega)\, \mathrm{d} \calP$.

\begin{definition}
\label{def-Markov-chain}
  A \emph{Markov chain} is a triple \mbox{$\M = (S,\tran{},\Prob)$}
  where $S$ is a finite or countably infinite set of \emph{vertices},
  \mbox{${\tran{}} \subseteq S \times S$} is a total
  \emph{transition relation},
  and $\Prob$ is a function that assigns to each state $s \in S$
  a positive probability distribution over the outgoing transitions
  of~$s$. As usual, we write $s \tran{x} t$ when $s \tran{} t$
  and $x$ is the probability of $s \tran{} t$.
\end{definition}
%A Markov chain $\M$ can be also represented by its
%\emph{transition matrix} $M \in [0,1]^{S{\times}S}$, where
%$M_{s,t} = 0$ if $s \not\rightarrow t$, and $M_{s,t} = x$ if $s \tran{x} t$.

To every $s \in S$ we associate the standard probability
space $(\run_{\M}(s),\calF,\calP)$ of runs starting at $s$,
where
$\calF$ is the \mbox{$\sigma$-field} generated by all \emph{basic cylinders}
$\run_{\M}(w)$, where $w$ is a finite path starting at~$s$, and
$\calP: \calF \rightarrow [0,1]$ is the unique probability measure such that
$\calP(\run_{\M}(w)) = \prod_{i{=}1}^{\len(w)} x_i$ where
$w(i{-}1) \tran{x_i} w(i)$ for every $1 \leq i \leq \len(w)$.
If $\len(w) = 0$, we put $\calP(\run_{\M}(w)) = 1$.

If \mbox{$\M = (S,\tran{},\Prob)$} is a strongly connected finite-state Markov chain,
we use $\mu_S$ to denote the unique \emph{invariant distribution of $\M$}.
% (i.e., the unique positive row vector satisfying $\mu_S \cdot M = \mu_S$ such that the sum of all components
%of $\mu_S$ is equal to one).
Recall that by the strong ergodic theorem, (see, e.g., \cite{Norris:book}), 
the limit frequency of visits to the states of $S$ is defined for
almost all $w \in \run(s)$ (where $s \in S$ is some initial state) and it is equal
to~$\mu_S$.
%Let $\M = (S,\tran{},\Prob)$ be a Markov chain and $T$ a finite
%subset of $S$. We say that $T$ is a
%\emph{finite attractor} if for every $t \in T$, the probability
%of all $w \in \run(t)$ that revisit $T$ (in one or more transitions)
%is~$1$, and the expected number of transitions needed to revisit $T$ is~finite.
%We say that a finite attractor $T$ \emph{attracts} a given $s \in S$
%if almost all $w \in \run(s)$ visit~$T$. Further, we say that $\M$
%\emph{has a finite attractor} if there is finite attractor
%$T \subseteq S$ that attracts all $s \in S$. Note that for a strongly
%connected $\M$, it holds that $\M$ has a finite attractor iff $\M$ is ergodic.
%\begin{remark}
%\label{rem-attract}
%  Almost every run that visits a finite attractor $T$ of $\M$ eventually
%  visits a BSCC $C$ of $T$, and the set of all states reachable from $C$
%  forms an ergodic Markov chain. This implies that almost all runs that
%  visit $B$ have the same limit frequency of visits to every state
%  reachable from~$B$.
%\end{remark}

\begin{definition}
	\label{def-pMCA}
	A \emph{probabilistic vector addition system with states (pVASS)} of dimension
	$d \geq 1$ is a triple $\A = (Q,\gamma,W)$, where $Q$ is a finite
	set of \emph{control states}, \mbox{$\gamma \subseteq Q \times \{-1,0,1\}^d \times Q$}
	is a set of \emph{rules}, and $W : \gamma \rightarrow \Nset^+$ is a
	\emph{weight assignment}.
\end{definition}

%\begin{definition}
%\label{def-pMCA}
%A \emph{probabilistic vector addition system with states (pVASS)} of dimension
%$d \in \Nset^+$ is a tuple $\A = (Q,\delta,\gamma,W)$, where $Q$ is a finite
%set of \emph{control states}, $\delta \subseteq Q \times Q$ is a total relation
%(consisting of \emph{rules}), $\gamma : \delta \rightarrow \{-1,0,1\}^d$, and $W : \delta \rightarrow \Nset^+$ is a
%\emph{weight assignment}.
%\end{definition}

%\noindent
%In the following, we often write $p \tran{\kappa} q$ to denote that
%\mbox{$(p,q) \in \delta$} and $\gamma((p,q)) = \kappa$, and $p \ltran{\kappa,\ell} q$
%to denote that $p \tran{\kappa} q$ and $W((p,q)) = \ell$.
%The encoding size of $\A$ is denoted by $\size{\A}$, where the weights
%are encoded in binary.

In the following, we often write $p \tran{\kappa} q$ to denote that
\mbox{$(p,\kappa,q) \in \gamma$}, and $p \ltran{\kappa,\ell} q$
to denote that $(p,\kappa,q) \in \gamma$ and $W((p,\kappa,q)) = \ell$.
The encoding size of $\A$ is denoted by $\size{\A}$, where the weights
are encoded in binary.

\begin{assumption}
\label{asm-simpleVASS}
	From now on (in the whole paper), we assume that $(Q,\tran{})$, where
	$p \tran{} q$ iff $p \tran{\kappa} q$ for some $q$, is weakly connected.
	Further, we also assume that for every pair of control states $p,q$
	there is at most one rule of the form $p \tran{\kappa} q$.
\end{assumption}

\noindent
The first condition of Assumption~\ref{asm-simpleVASS} is obviously safe (if $(Q,\tran{})$
is not weakly connected, then $\A$ is a ``disjoint union'' of several independent pVASS,
and we can apply our results to each of them separately). The second condition is also safe because every pVASS $\A$ can be easily transformed into another pVASS $\A'$ satisfying this condition in the following way: for each control state $s$ of $\A$ and each rule of the form $r \tran{\kappa} s$ we add a fresh control state $s[r,\kappa]$ to $\A'$. Further, for every $s \ltran{\kappa,\ell} t$ in $\A$ we add $s[r,\kappa'] \ltran{\kappa,\ell} t[s,\kappa]$ to $\A'$ (for all states of the form $s[r,\kappa']$ in $\A'$). In other words, $\A'$ is the same as $\A$, but it also ``remembers'' the rule that was used to enter a given control state.
%The information about pattern frequencies in $\A$ (see below) can trivially be deduced from the one in~$\A'$.

A \emph{configuration} of $\A$ is an element of \mbox{$\conf(\A) = Q \times \Nset^d$},
written as $p\vec{v}$.
%In this paper we also use
%\emph{symbolic configurations}, which are the elements of
%\mbox{$Q \times (\Nset \cup \{*\})^d$}, written as $p\alpha$.
%The $*$-components of $\alpha$ are called \emph{symbolic}, and
%the other components are called \emph{precise}.
%%The encoding size of $p\alpha$ is denoted by $|p\alpha|$, where
%%the precise components of $\alpha$ are encoded in \emph{unary}.
%We say that a configuration $q\vec{v}$ \emph{matches} a symbolic
%configuration $p\alpha$ if $p=q$, the vectors $\vec{v}$ and $\alpha$
%agree in all components that are precise in $\alpha$, and the other
%components of $\vec{v}$ are positive. Note that every ``ordinary''
%configuration $p\vec{v}$ can be seen as a symbolic configuration where
%all components of $\vec{v}$ are precise.
%Further, we stipulate that
%$* + c = *$ and $c < *$ for all $c \in \Zset$.
A rule $p \tran{\kappa} q$ is \emph{enabled} in a
configuration $p\vec{v}$ if $\vec{v}_i > 0$ for all $1 \leq i \leq d$
with $\kappa_i = -1$. To $\A$ we associate an infinite-state
Markov chain $\M_\A$ whose vertices are the configurations
of $\A$, and the outgoing transitions of a configuration
$p\vec{v}$ are determined as follows:
\begin{itemize}
\item If no rule of $\gamma$ is enabled in $p\vec{v}$, then
  $p\vec{v} \tran{1} p\vec{v}$ is the only outgoing transition of $p\vec{v}$;
\item otherwise, for every rule $p \ltran{\kappa,\ell} q$ enabled in
  $p\vec{v}$ there is a transition  $p\vec{v} \ltran{\ell/T} q(\vec{v}+\kappa)$
  where $T$ is the total weight of all rules enabled in $p\vec{v}$, and
  there are no other outgoing transitions of $p\vec{v}$.
\end{itemize}

In this paper, we also consider the underlying finite-state Markov chain of
$\A$, denoted by $\C_\A$, whose vertices are the control states of $\A$, and
$p \tran{x} q$ in $\C_\A$ iff $p \ltran{\kappa,\ell} q$ in $\A$ and
$x = \ell/T_p > 0$, where $T_p$ is the sum of the weights of all outgoing rules
of~$p$ in~$\A$. Note that every BSCC $S$ of $\C_\A$ can be seen as a strongly
connected finite-state Markov chain, and we use $\mu_S$ to denote the invariant
distribution on the states of~$S$. To each $s \in S$ we associate the vector
\[\change(s) = \sum_{(s,\kappa,t) \in \gamma} \kappa \cdot \frac{W((s,\kappa,t))}{T_s}\]
of expected changes in counter values at~$s$. Further,
we define the \emph{trend} of~$S$, denoted by $t_S$, as the vector
\mbox{$t_S = \sum_{s \in S} \mu_S(s) \cdot \change(s)$}.
%Intuitively, $t_S$ is the long-run average change of the counter values per transition in a
% run initiated in some $s\vec{v}$, where $s \in S$, if the counters can also take
% negative values positive along the run.

%Note that if $d = 0$, then $\Nset^d$ is a singleton and
%$\M_\A$ is a finite-state Markov chain with $|Q|$ states (for every $p \in Q$,
%all outgoing rules of $p$ are trivially enabled in $p\odot$, where $\odot$
%is the only element of $\Nset^0$).

A \emph{pattern} of $\A$ is a pair $q\alpha \in Q \times \{0,*\}^d$, and the
set of all patterns of $\A$ is denoted by $\Pat_\A$.
A configuration $p\vec{v}$ \emph{matches} a pattern $q\alpha \in \Pat_\A$ if
$p = q$ and for every $i \in \{1,\ldots,d\}$ we have that $\vec{v}_i = 0$ or
$\vec{v}_i > 0$, depending on whether $\alpha_i =0$ or $\alpha_i = {*}$,
respectively. Intuitively, a pattern represents exactly the information which
determines the set of enabled rules. For all $w \in \run_{\M_\A}(p\vec{v})$,
we define the \emph{pattern frequency vector}
\mbox{$F_\A(w) : \Pat_\A \rightarrow \Rset$} as follows:
\[
  F_\A(w)(q\alpha) = \lim_{k\rightarrow\infty}
     \frac{\#_{q\alpha}(w(0),\ldots,w(k))}{k+1}
\]
where $\#_{q\alpha}(w(0),\ldots,w(k))$ denotes the total number of all indexes
$i$ such that $0 \leq i \leq k$ and $w(i)$ matches the pattern $q\alpha$.
If the above limit does not exist for some $q\alpha \in \Pat_\A$,
we put $F_\A(w) = {\perp}$.  We say
that $F_\A$ is \emph{well defined} for $w$ if $F_\A(w) \neq {\perp}$.
Note that if $F_\A$ is well defined for~$w$, then the sum of all components of
$F_\A(w)$ is equal to~$1$.

Let $R \subseteq \run(p\vec{v})$ be a measurable subset of runs, and let $\varepsilon >0$.
We say that a sequence $(H_1,P_1),\ldots,(H_n,P_n)$, where
$H_i : \Pat_\A \rightarrow \Qset$ and $P_i \in \Qset$, approximates
the pattern frequencies of $R$ up to the absolute/relative error~$\varepsilon$,
if there are pairwise disjoint measurable subsets $R_1,\ldots,R_n$ of $R$
and vectors $F_1,\ldots,F_n$, where $F_i : \Pat_\A \rightarrow \Rset$,
such that
\begin{itemize}
	\item $\sum_{i=1}^n \calP(R_i) = \calP(R)$;
	\item $F_\A(w) = F_i$ for almost all $w \in R_i$;
	\item $H_i(q\alpha)$ approximates $F_i(q\alpha)$ up to the absolute/relative error
	$\varepsilon$ for every $q\alpha \in \Pat_\A$;
	\item $P_i$ approximates $\calP(R_i)$ up to the absolute/relative error~$\varepsilon$.
\end{itemize}
Note that if $(H_1,P_1),\ldots,(H_n,P_n)$ approximates the pattern frequencies of $R$ up to
some absolute/relative error, then the pattern frequency vector is well defined for almost
all $w \in R$ and takes only finitely many values with positive probability. Also note
that neither $F_1,\ldots,F_n$ nor $H_1,\ldots,H_n$ are required to be pairwise different.
Hence, it may happen that there exist $i \neq j$ such that $H_i \neq H_j$ and $F_i = F_j$ (or 
$H_i = H_j$ and $F_i \neq F_j$).

%%% Local Variables:
%%% mode: latex
%%% TeX-master: "main"
%%% End:

\section{Results for one-counter pVASS}
\label{sec-one-counter}

In this section we concentrate on analyzing the pattern frequency
vector for one-dimensional pVASS. We show that
$F_\A$ is well defined and takes at most $|Q| + b$ distinct
values for almost all runs, where $|Q|$ is the number of control states of~$\A$,
and $b$ is the number of BSCCs of $\C_{\A}$. 
Moreover, these values as well as the
associated probabilities can be efficiently approximated up to an
arbitrarily small positive relative error. More precisely, our aim is to
prove the following:
%
%\begin{theorem}
%	\label{thm-one-counter-main}
%	Let $\A = (Q,\gamma,W)$ be a pVASS of dimension~one. Then there is
%	$m \leq 2|Q|-1$ computable in time polynomial in $\size{\A}$ and a finite family of vectors $F_i : \Pat_\A \rightarrow \Rset$, where $i \in \{1,\ldots,m\}$, such that for every initial configuration $p(k)$ of $\A$
%	we have that  $\sum_{i=1}^m \calP(F_\A = F_i) = 1$. 
%	Further, for every $\varepsilon > 0$ and every $i \in \{1,\ldots,m\}$, there is a vector 
%	$H_i : \Pat_\A \rightarrow \Qset$ computable in time polynomial in $\size{\A}$
%	and $\size{\varepsilon}$, and $P_i \in \Qset$ computable in time polynomial 
%	in $\size{\A}$, $\size{\varepsilon}$, and $k$, such that
%	\begin{itemize}
%		\item $H_i(q\alpha)$ approximates $F_i(q\alpha)$ up to the relative error 
%		$\varepsilon$ for every $q\alpha \in \Pat_\A$;
%		\item $P_i$ approximates $\calP(F_\A = F_i)$ up to the relative error 
%		$\varepsilon$.
%	\end{itemize}
%\end{theorem}
\begin{theorem}
\label{thm-one-counter-main}
	Let $\A = (Q,\gamma,W)$ be a one-dimensional pVASS, and let $b$ be the number of BSCCs of $\C_{\A}$.	
	Then there is $n \leq |Q| + b$ computable in time polynomial in $\size{\A}$
	such that for every $\varepsilon > 0$, there are 
	\mbox{$H_1,\ldots,H_n : \Pat_\A \rightarrow \Qset$} computable
	in time polynomial in $\size{\A}$ and $\size{\varepsilon}$, such that for
	every initial configuration $p(k) \in \conf(\A)$ there are 
	$P_1,\ldots,P_n \in \Qset$ computable in time
	polynomial in $\size{\A}$, $\size{\varepsilon}$, and $k$, such that the sequence
	$(P_1,H_1),\ldots,(P_n,H_n)$ approximates
	the pattern frequencies of $\run(p(k))$ up to the relative error~$\varepsilon$.
\end{theorem}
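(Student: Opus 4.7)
The plan is to implement the region/zone decomposition sketched in points (a) and (b) of the introduction.

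\emph{Step 1.} I would first decompose $\C_\A$ into its $b$ BSCCs $S_1,\ldots,S_b$. Almost every run in $\M_\A$ eventually has its sequence of control states confined to exactly one $S_i$, and the probability of this event, starting from $(p,k)$, reduces to reachability in $\C_\A$ together with the hitting probabilities of counter~$0$ in a one-counter probabilistic Markov chain. The latter are approximable to relative error $\varepsilon$ in polynomial time by \cite{SEY:pOC-poly-Turing}, and the dependence on $k$ enters only through a power of such a probability, contributing a factor polynomial in~$k$.

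\emph{Step 2.} Fix a BSCC $S$ and compute its trend $t_S$. If $t_S>0$, the martingale tail bounds of \cite{BKK:pOC-time-LTL-martingale-JACM} imply that along any run eventually staying in $S$ and avoiding a trapping zero-level configuration, the counter diverges to $+\infty$ and the time spent at any fixed level has exponential tails; combined with the ergodic theorem on the finite chain on $S$ this gives $F_\A$ equal to $\mu_S$ placed on the $s*$-patterns. If $t_S<0$, the counter returns to $0$ with geometric tails on the return time, so the induced discrete-time chain on $(s,0)$-configurations is positive recurrent and yields a single a.s.\ value of $F_\A$ expressible in terms of the excursion lengths and $\mu_S$. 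If $t_S=0$, the null-recurrent case, the same concentration bounds still force the $s*$-frequencies to $\mu_S(s)$ and the $s0$-frequencies to~$0$. This produces one ``main'' zone per BSCC. The remaining zones come from \emph{zero-trapping sub-BSCCs}: subsets $T\subseteq Q$ such that all rules enabled at counter~$0$ for states in $T$ leave the counter at $0$ and stay inside $T$. Once the run enters $T\times\{0\}$ it stays there forever, and $F_\A$ equals the invariant distribution of the induced finite chain, placed on the $t0$-patterns. Since such $T$'s are pairwise disjoint subsets of~$Q$, they contribute at most $|Q|$ additional zones, giving the overall bound $n\le|Q|+b$.

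\emph{Step 3.} The invariant distributions $\mu_S$ and those associated to zero-trapping sub-BSCCs are computable in polynomial time in $\size{\A}$ by linear algebra. Combined with the polynomial-time approximation of the hitting probabilities from Step~1, this yields the $H_i$ in time polynomial in $\size{\A}$ and $\size{\varepsilon}$, and the $P_i$ additionally polynomial in~$k$, matching the complexity bounds in the theorem.

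The principal obstacles I expect are two. First, handling the null-recurrent trend-zero case, where one needs the martingale-based concentration of \cite{BKK:pOC-time-LTL-martingale-JACM} to exclude pathologically long excursions that would otherwise prevent the time averages from converging. Second, a careful combinatorial argument that the only zones beyond the per-BSCC ``main'' one come from pairwise-disjoint zero-trapping sub-BSCCs, so that the total is bounded by $|Q|+b$ rather than something larger such as $|Q|\cdot b$.
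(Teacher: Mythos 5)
Your high-level plan — BSCC decomposition of $\C_\A$, a case split on the sign of the trend $t_S$, and isolated special zones at counter level $0$ — has the right flavor, but the decomposition you describe does not match the structure of the problem and would not yield a correct bound or a correct list of limit frequency vectors. There are three concrete gaps.

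First, your Step~1 claim that almost every run eventually has its control-state sequence confined to a BSCC of $\C_\A$ is false. A run can be trapped forever in a finite strongly connected set of configurations (what the paper calls a type~I region, i.e.\ a finite set $\post(p(0))$ with $\post(p(0))\subseteq\pre(p(0))$) whose control states need not lie in any BSCC of $\C_\A$: rules out of those states may exist in $\C_\A$ but be disabled at the relevant counter values. For example, with $Q=\{p,q\}$, a self-loop $p\tran{0}p$ and a rule $p\tran{-1}q$, a run started at $p(0)$ stays at $p(0)$ forever, yet $\{p\}$ is not a BSCC of $\C_\A$. So the reduction to ``hitting a BSCC plus a one-counter hitting probability'' already misses a whole family of terminal behaviors.

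Second, your notion of a \emph{zero-trapping sub-BSCC} (all rules enabled at counter $0$ keep the counter at $0$ and stay in $T$) is strictly narrower than what is needed. A trapping finite region can have the counter oscillating through $1,2,\dots$ up to $|Q|-1$; e.g.\ $\post(p(0))=\{p(0),q(1)\}$ with a two-step loop. In such a region the limit frequency vector places positive mass on $*$-patterns as well as $0$-patterns, and your invariant-distribution-on-$T\times\{0\}$ formula does not describe it. These regions are also needed to even account for all runs, as the previous paragraph shows.

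Third, for $t_S<0$ you assert that the induced chain on $(s,0)$-configurations ``yields a single a.s.\ value of $F_\A$'' per BSCC $S$. This is wrong in general: the induced zero-level chain can itself have several BSCCs, each giving rise to a distinct type~II region with its own excursion statistics and hence its own limit frequency vector. These must be counted and computed separately; the paper allocates up to $|S|$ such regions to $S$. Because you collapse them into one ``main'' zone per BSCC, your combinatorial argument that the total is bounded by $|Q|+b$ (rather than $|Q|\cdot b$, say) does not actually go through, and the resulting list of $H_i$'s is incomplete. The trend-case analysis in your Step~2 (in particular the $t_S\ge 0$ cases collapsing to $\mu_S$ placed on $*$-patterns) is essentially in line with the paper, but it has to be layered on top of a region/zone decomposition that correctly enumerates the type~I and type~II pieces, which yours does not.
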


\noindent
Let us note that the ``real'' pattern frequency vectors $F_i$ as well as the probabilities $\calP(F_\A {=} F_i)$ may take irrational values, and they cannot be computed precisely in general.
%In this sense, Theorem~\ref{thm-one-counter-main} this is actually the best result achievable. 

 %
% (still, we show that these values are 
%efficiently expressible in the existential fragment of Tarski 
%algebra, which is import for the proof of 
%Theorem~\ref{thm-two-counter-decidable}). 
%In this sense,
%Theorem~\ref{thm-one-counter-main} this is actually the best result achievable. 

\begin{remark}
The $|Q| + b$ upper bound on~$n$ given in Theorem~\ref{thm-one-counter-main} is tight. To see this, 
realize that if $|Q| = 1$, then $b = 1$ and the trivial pVASS with the only rule  $p \ltran{0} p$ witnesses that 
the pattern frequency vector may take two different values.  If $|Q| \geq 2$,
we have that $b \leq |Q|-1$. Consider 
a pVASS where $Q = \{p,q_1,\ldots,q_k\}$ and $\gamma$ contains the rules $p \ltran{-1} p$, $p \ltran{-1} q_i$,
and $q_i \ltran{0} q_i$ for all $1 \leq i \leq k$, where all of these rules have the same weight equal to~$1$. For $p(2)$ as the initial configuration, the vector $F_\A$ takes $2k + 1 = 2|Q|-1$ pairwise different values with positive probability.
\end{remark}

For the rest of this section, we fix a one-dimensional pVASS $\A = (Q,\gamma,W)$.
We start by identifying certain (possibly empty) subsets of configurations called \emph{regions}
that satisfy the following properties:
\begin{itemize}
	\item there are at most $|Q| +b $ non-empty regions;
	\item almost every run eventually stays in precisely one region;
	\item almost all runs that stay in a given region have the same well defined
	   pattern frequency vector. 
\end{itemize}
In principle, we might proceed by considering each region $R$ separately and computing/approximating the 
associated pattern frequency vector and the probability of all runs that stay in~$R$. However, this would
lead to unnecessary technical complications. Instead, we identify situations when multiple regions share the \emph{same} pattern frequency vector, consider unions of such regions (called \emph{zones}), and then compute/approximate the pattern frequency vector and the probability of staying in~$Z$ for each zone~$Z$.
Thus, we obtain Theorem~\ref{thm-one-counter-main}. 

Technically, we distinguish among four \emph{types} of regions determined either by a control
state of $\A$ or a BSCC of $\C_\A$. 
%Later, we show that the total number of non-empty regions
%cannot exceed $2|Q| -1$.

\begin{itemize}
	\item Let $p \in Q$. A \emph{type~I region determined by $p$} is either
	the set  $\post(p(0))$ or the empty set, depending on whether $\post(p(0))$
	is a finite set satisfying $\post(p(0)) \subseteq \pre(p(0))$ or not, respectively.
	\item Let $p \in S$, where $S$ is a BSCC of $\C_\A$. A \emph{type~II region determined
		by $p$} is either
	the set  $\post(p(0))$ or the empty set, depending on whether $\post(p(0))$
	is an infinite set satisfying $\post(p(0)) \subseteq \pre(p(0))$ or not, respectively.
	\item Let $S$ be a BSCC of $\C_\A$. 
	  A \emph{type~III region determined by $S$} consists of all  
	  $p(k) \in S \times \Nset^+$ that cannot reach a configuration with
	  zero counter.
	\item Let $S$ be a BSCC of $\C_\A$, and let $R_{I}(S)$ and $R_{II}(S)$
	   be the unions of all type~I and all type~II regions determined by the control 
	   states of~$S$, respectively. Further, let $D(S)$ be the set
       \[
          \bigg(S {\times} \Nset \ \cap\   \pre(R_{I}(S))\bigg) \smallsetminus 
            \bigg(R_{I}(S) \cup \pre(R_{II}(S))\bigg)
       \]
       A \emph{type~IV region determined by $S$}
       is either the set $D(S)$ or the empty set, depending on whether $D(S)$ is infinite or finite, respectively.
\end{itemize} 

\noindent
Note that if $R_1,R_2$ are regions of $\A$ such that $R_1 \cap R_2 \neq \emptyset$, then
$R_1 = R_2$. Also observe that regions of type~I,~II, and~III are closed under $\post$, and each such
region can thus be seen as a Markov chain. Finally, note that
every configuration of a type~IV region can reach a configuration of a type~I region,
and the size of every type~I region is bounded by $|Q|^2$ (if $R = \post(p(0))$ is
a type~I region and $p(0) \tran{}^* q(j)$, then $j < |Q|$, because otherwise the 
counter could be pumped to an arbitrarily large value; hence, $|R| \leq |Q|^2$).   

Let us note that all regions are \emph{regular} in the following sense:
We say that a set $C \subseteq \conf(\A)$ of configurations is \emph{regular} if there is a 
non-deterministic finite automaton $A$ over the alphabet $\{a\}$ such that the set of control states
of $A$ subsumes $Q$ and for every configuration $p(k) \in \conf(\A)$ we have that $p(k) \in C$ iff
the word $a^k$ is accepted by $A$ with $p$ as the initial state. If follows, e.g., from the results
of \cite{EHRS:MC-PDA} that if $C\subseteq \conf(\A)$ is regular, then $\post(C)$ and
$\pre(C)$ are also regular and the associated NFA are computable in time polynomial in $\size{A}$, where
$A$ is the NFA representing~$C$. Hence, all regions are effectively regular which becomes important
in Section~\ref{sec-two-counters}.

Let $S$ be a SCC of $\C_\A$. If $S$ is not a BSCC of $\C_\A$, then the control states
of $S$ may determine at most $|S|$ non-empty regions (of type~I). If $S$ is a BSCC of $\C_\A$, then the control states of $S$ may determine at most $|S|$ non-empty regions
of type~I or~II, and at most one additional non-empty region which is either of type~III or
of type~IV (clearly, it cannot happen that the type~III and type~IV regions determined
by $S$ are both non-empty). Hence, the total number of non-empty regions cannot
exceed $|Q| + b$, where $b$ is the number of BSCCs of~$\C_\A$ (here we also use the assumption 
that $\C_\A$ is weakly connected).

Now we prove that \emph{every} configuration can reach some region in a bounded number of steps. This fact is particularly important for the analysis of two-counter pVASS in Section~\ref{sec-two-counters}.

\begin{lemma}
\label{lem-bounded}
	Every configuration of $\A$ can reach a configuration of some region in at most $11|Q|^4$ transitions.
\end{lemma}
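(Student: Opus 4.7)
The plan is to decompose the problem into two stages: (i) from the starting configuration $c_0 = p_0(k_0)$, reach either a region or a configuration whose control state lies in some BSCC~$S$ of~$\C_\A$; (ii) from such a BSCC configuration, reach a region of~$S$. I would bound the cost of each stage by $O(|Q|^3)$, so that the total fits into $11|Q|^4$.

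For stage~(i), I would take a shortest path in~$\C_\A$ from $p_0$ to a BSCC state (of length at most $|Q|-1$) and attempt to lift it step-by-step in $\M_\A$. If the counter stays positive this succeeds directly, so the interesting case is that the lifted path reaches some zero-configuration $r(0)$ with $r$ still in a transient SCC. I would then apply a descending-chain argument: either $\post(r(0)) \subseteq \pre(r(0))$, so that $\post(r(0))$ is a type~I region and we are finished; or there is a witness $c \in \post(r(0))\setminus\pre(r(0))$ which either (a) leads to another zero-configuration $r'(0)$ with $\post(r'(0)) \subsetneq \post(r(0))$, or (b) never returns to zero and hence lies in a type~III region. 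The chain of strictly decreasing $\post$-sets has length at most $|Q|$, and each chain-step is realised by a shortest zero-to-zero path which, by the standard one-counter pumping argument, visits only counter values strictly below~$|Q|$ and therefore uses at most $|Q|^2$ transitions. Thus stage~(i) costs $O(|Q|^3)$.

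For stage~(ii), suppose we are at $c = q(j)$ with $q \in S$ and $c$ is not itself in any region of~$S$. Unfolding the region definitions, $c \notin R_{III}(S)$ forces $c$ to reach some zero-configuration, and the condition $c \notin D(S)$ combined with $c \notin R_I(S) \cup R_{II}(S)$ forces $c \in \pre(R_{II}(S) \cup R_{III}(S))$: indeed, the scenario ``$c$ can only reach the finite region $R_I(S)$ without access to $R_{II}(S)$'' is precisely what~$D(S)$ absorbs. Since $R_{II}(S)$ and $R_{III}(S)$ contain configurations with arbitrarily large counter, reaching them does not require decrementing all the way to zero. A pumping argument in $\M_\A$ applied to a shortest witnessing path shows that any path longer than $O(|Q|^3)$ must contain a pumpable cycle that either can be excised (contradicting minimality) or already forces early entry into an infinite region. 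Adding the two stages and absorbing constants produces the bound $\leq 11|Q|^4$.

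\textbf{Main obstacle.} The technical heart is the stage~(ii) bound when~$j$ is large: one cannot afford to decrement all the way to zero within the budget, so the pumping analysis has to exhibit a short path directly into $R_{II}(S)$ or $R_{III}(S)$ irrespective of the initial counter value. Verifying this bound carefully, and checking that the coefficient~$11$ accommodates all contributing factors (transient-SCC traversal, the descending chain of zero-configurations, and the interior-of-BSCC analysis) simultaneously, is the most delicate part of the proof.
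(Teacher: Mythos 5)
There is a genuine gap in stage~(ii). You assert that a BSCC configuration $c = q(j)$ that lies in no region must satisfy $c \in \pre(R_{II}(S) \cup R_{III}(S))$ ``because $D(S)$ absorbs the remaining case,'' but this conflates two different things: $c$ not being in the type~IV region does \emph{not} imply $c \notin D(S)$. When $D(S)$ is \emph{finite} the type~IV region is empty, so $c$ can lie in $D(S)$ while belonging to no region at all. By the very definition of $D(S)$, such a $c$ satisfies $c \in \pre(R_I(S))$ and $c \notin \pre(R_{II}(S))$, and (since $R_{III}(S)$ is $\post$-closed and $c$ can reach zero) $c \notin \pre(R_{III}(S))$ either. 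So the only regions reachable from $c$ are of type~I, and reaching a type~I region from counter value $j$ may cost roughly $j|Q| + 4|Q|^3$ transitions, which is $\Theta(|Q|^4)$ when $j$ is of order $|Q|^3$ --- not the $O(|Q|^3)$ you claim for stage~(ii). Your ``main obstacle'' note senses the difficulty but points at the wrong fix: when $j$ is large you do \emph{not} want to find a short path to $R_{II}(S) \cup R_{III}(S)$, because such a path need not exist.

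The missing ingredient is exactly the paper's Observation~(b): if $D(S)$ contains a configuration with counter $\geq 4|Q|^3 + |Q|^2$, then $D(S)$ is necessarily infinite and hence \emph{is} the type~IV region, so the configuration is already in a region and requires no further transitions. This caps the counter value you ever need to consider in the problematic sub-case, and then the $j|Q|+4|Q|^3$ descent bound (Observation~(c)) contributes the dominant $\Theta(|Q|^4)$ term, which is why the constant in the lemma is $11|Q|^4$ rather than $O(|Q|^3)$. Without an analogue of Observation~(b) your proof cannot close, and the $O(|Q|^3)$ budget for stage~(ii) is incorrect even where the argument does apply. Your stage~(i) descending-chain idea is broadly sound (though a bit more complicated than the paper's direct ``pump to counter $\geq|Q|$ or else the reachable set is bounded'' dichotomy), so the issue is really isolated to the BSCC analysis.
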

%\begin{proof}
%	Let $p(k)$ be a configuration of $\A$. If $p(k) \in \pre(R)$ for some type~II region~$R$,
%	then $p(k)$ can reach $R$ in at most $4|Q|^3$ transitions by Lemma~\ref{lem-typeI-reach}. Otherwise, let us first consider the
%	case when $k \geq |Q|$. Then $p(k)$ can reach a configuration $r(j)$, where $r \in S$
%	for some BSCC $S$ of $\C_\A$, in less than $|Q|$ transitions. If $r(j)$ cannot reach a configuration with zero counter, then $r(j)$ belongs to a type~III region. Otherwise,
%	$r(j)$ belongs to a type~IV region. It remains to consider the case when $k < |Q|$.
%	If $p(k)$ can reach a configuration $t(i)$ where $i \geq |Q|$, then such a $t(i)$
%	is reachable in less than $|Q|^2$ transitions, and by applying the previous observation
%	to $t(i)$, we obtain that a configuration of a type~III or a type~IV region is reachable
%	from $p(k)$ in less than $|Q|^2 + |Q|$ transitions. Finally, if $p(k)$ cannot reach
%	a configuration $t(i)$ where $i \geq |Q|$, then $p(k)$ can reach a configuration of
%	a type~I region in less than $|Q|^2$ transitions.
%\end{proof}
  
By Lemma~\ref{lem-bounded}, the probability of reaching (some) region from an arbitrary initial configuration is at least $\xmin^{11|Q|^4}$, where $\xmin$ is the least positive 
transition probability of $\M_\A$. This implies that almost every $w \in \run(p(k))$ visits some region~$R$. If $R$ is of type~I,~II, or~III, then $w$ inevitably stays in $R$ because these regions are closed under $\post$. If $R$ is a type~IV region, then $w$ either stays in $R$, or later visits a configuration of a type~I region where it stays. Thus, we obtain the following:

\begin{lemma}
\label{lem-stay-in-one}
   Let $p(k)$ be a configuration of $\A$. Then almost every run initiated in $p(k)$
   eventually stays in precisely one region. 
\end{lemma}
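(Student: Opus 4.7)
The plan is to turn the sketch preceding the lemma into a rigorous argument by combining Lemma~\ref{lem-bounded} with the $\post$-closure of type~I,~II,~III regions and the pairwise disjointness of regions noted in the text.

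I would first apply Lemma~\ref{lem-bounded} together with a Borel--Cantelli-style argument to show that almost every run from $p(k)$ visits some region in finite time. Setting $N := 11|Q|^4$ and $p := \xmin^{N}$, where $\xmin$ is the smallest positive transition probability of $\M_\A$, each block of $N$ consecutive transitions contains a configuration belonging to some region with probability at least $p$, so the probability of no region being visited during the first $m$ blocks is at most $(1 - p)^{m}$, which tends to $0$. Let $\tau$ denote the first time the run is in a region, and let $R$ be that region. If $R$ is of type~I,~II, or~III, the run stays in $R$ from time $\tau$ on, since these regions are $\post$-closed.

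The next step handles the case $R = D(S)$ (type~IV). Since $S$ is a BSCC of $\C_\A$, the control state lies in $S$ from time $\tau$ on and every subsequent configuration lies in $S \times \Nset$. From the definition, $D(S)$ is disjoint from $\pre(R_{II}(S))$, and a one-step induction shows that this property is preserved along transitions emanating from $D(S)$; hence the run can never reach $R_{II}(S)$, and every region it may subsequently visit is $D(S)$, $R_I(S)$, or the type~III region determined by $S$, the latter two of which are $\post$-closed. Iterating the Borel--Cantelli argument on the Markov chain restricted to configurations reachable from $D(S)$, I would conclude that almost every run either stays in $D(S)$ forever or eventually enters a $\post$-closed region and stays there.

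Uniqueness follows from the pairwise disjointness of regions: eventually staying in $R_1$ and in $R_2$ forces $R_1 \cap R_2 \neq \emptyset$, hence $R_1 = R_2$. The main obstacle I anticipate is the type~IV case; one must verify that the iterated Borel--Cantelli argument actually forces the run to settle, ruling out the pathological scenario in which the run oscillates indefinitely between $D(S)$ and non-region configurations in $S \times \Nset$ without ever being absorbed. This should follow from the structural fact that successors of $D(S)$-configurations cannot lie in $\pre(R_{II}(S))$, combined with the uniform bound provided by Lemma~\ref{lem-bounded}.
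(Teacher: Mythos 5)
Your proposal follows the same overall strategy as the paper: use Lemma~\ref{lem-bounded} plus a Borel--Cantelli argument to show that almost every run enters some region, then exploit $\post$-closure for types~I,~II,~III, and treat the type~IV case separately. The first three cases and the uniqueness step are fine. The problem is the type~IV case, and you have correctly identified where the difficulty lies but your proposed resolution does not close it.

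The issue is that ``iterating the Borel--Cantelli argument'' only shows that the run visits $D(S)\cup R_I(S)\cup R_{III}(S)$ infinitely often; it does not show that the run eventually \emph{stays} in one of these sets, which is what the lemma demands. To rule out the oscillation you worry about, a probabilistic argument is not the right tool. What is actually needed is the \emph{deterministic} inclusion $\post(D(S)) \subseteq D(S) \cup R_I(S)$, after which the type~IV case becomes trivial: a run that has entered $D(S)$ either hits $R_I(S)$ at some point (and stays, by $\post$-closure of type~I regions) or never does (and is therefore always in $D(S)$). Establishing this inclusion needs three ingredients: (i) $\post(D(S))\cap\pre(R_{II}(S))=\emptyset$, which you cite; (ii) Lemma~\ref{lem-bounded}, which you cite; and (iii) the observation, recorded earlier in the paper, that if $D(S)\neq\emptyset$ then the type~III region determined by $S$ is empty. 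You omit~(iii), and without it one cannot exclude configurations reachable from $D(S)$ that lie in $\pre(R_{III}(S))\smallsetminus(D(S)\cup R_I(S))$, which is exactly the non-region limbo that would allow the oscillation you flag. Once~(iii) is added, every $q\vec{u}\in\post(D(S))$ lies in $S\times\Nset$, avoids $\pre(R_{II}(S))$ by~(i), and by~(ii) can reach a region, which can now only be $R_I(S)$ or $D(S)$; since $D(S)\subseteq\pre(R_I(S))$, this gives $q\vec{u}\in\pre(R_I(S))$ and hence $q\vec{u}\in R_I(S)\cup D(S)$ by the definition of $D(S)$.
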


As we already mentioned, computing the pattern frequency vector and the probability
of staying in $R$ for each region $R$ separately is technically complicated. Therefore,
we also introduce \emph{zones}, which are unions of regions that are guaranteed to share
the same pattern frequency vector. Formally, a \emph{zone of $\A$} is a set $Z \subseteq \conf(\A)$ satisfying one of the following conditions (recall that $t_S$ denotes the trend of a BSCC $S$):
\begin{itemize}
  \item $Z = R$, where $R$ is a region of type~I.
  \item $Z = R$, where $R$ is a type~III region determined by a BSCC $S$ of $\C_\A$ such that
     $t_S \leq 0$.
  \item $Z = R$, where $R$ is a type~II region determined by $p \in S$ where $S$ is a BSCC of $\C_\A$ 
     satisfying $t_S < 0$.
  \item $Z = R_{II}(S)$, where $S$ is a BSCC of $\C_\A$ such that $t_S = 0$ and $R_{II}(S)$
     is the union of all type~II regions determined by the control states of $S$.
  \item $Z = R_{II}(S) \cup R_{III}(S) \cup R_{IV}(S)$, where $S$ is a BSCC of $\C_\A$ such that $t_S > 0$,
     $R_{II}(S)$ is the union of all type~II regions determined by the control states of $S$, and
     $R_{III}(S)$ and $R_{IV}(S)$ are the type~III and the type~IV regions determined by~$S$, respectively.
\end{itemize}
	
\noindent
The next two lemmata are nontrivial and represent the technical core of this 
section (proofs can be found in Appendix~\ref{app-one-counter}). They crucially depend on the results presented recently in \cite{BKK:pOC-time-LTL-martingale-JACM} and~\cite{SEY:pOC-poly-Turing}.
In the proof of Lemma~\ref{lem-reg-same-value}, we also characterize situations when some
elements of pattern frequency vectors take irrational values.

\begin{lemma}
	\label{lem-reg-same-value}
	Let $p(k)$ be a configuration of $\A$ and $Z$~a zone of~$\A$. Then $F_\A$ is well defined 
	for almost all $w \in \run(p(k),Z)$, and there exists $F : \Pat_\A \rightarrow \Rset$
	such that $F_\A(w) = F$ for almost all $w \in \run(p(k),Z)$. Further, for every
	rational $\varepsilon >0$, there is a vector $H : \Pat_\A \rightarrow \Qset$ computable in time polynomial in
	$\size{\A}$ and~$\size{\varepsilon}$ such that $H(q\alpha)$ approximates $F(q\alpha)$ up to
	the relative error~$\varepsilon$ for every $q\alpha \in \Pat_\A$. 
\end{lemma}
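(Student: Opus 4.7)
The plan is to proceed by a case analysis on the five zone types listed in the definition, and in each case (i)~establish that $F_\A$ is almost surely well defined on $\run(p(k),Z)$, (ii)~identify the common value $F$, and (iii)~describe an approximation scheme producing $H$ in time polynomial in $\size{\A}$ and $\size{\varepsilon}$. Note that Lemma~\ref{lem-stay-in-one} reduces the task to analyzing $\run(q(j),Z)$ for a single representative $q(j) \in Z$, because the pattern frequency is a tail event.

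The easy cases are type~I zones, type~III zones with $t_S \le 0$, and the aggregated zones with $t_S \ge 0$. A type~I zone $R = \post(p(0))$ is a finite set closed under $\post$ of size at most $|Q|^2$, so $\M_\A$ restricted to $R$ is a finite Markov chain whose BSCCs yield rational invariant distributions computable by Gaussian elimination; the strong ergodic theorem then delivers the pattern frequencies directly. A type~III zone consists only of configurations whose counter can never return to zero, so every visited pattern has the form $q*$ and the control-state projection is exactly the finite ergodic chain $\C_\A$ restricted to $S$; hence $F(q*) = \mu_S(q)$ and $F(q\,0) = 0$, a rational vector. For the aggregated $t_S > 0$ zone, the martingale tail bounds of \cite{BKK:pOC-time-LTL-martingale-JACM} applied in the upward direction imply that the counter drifts to $+\infty$ almost surely, so zero is visited only finitely often and again $F(q*) = \mu_S(q)$. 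The aggregated $t_S = 0$ zone is the null-recurrent analogue: although the counter returns to zero infinitely often, the fraction of time spent at zero vanishes by the ratio ergodic theorem for null-recurrent chains, so $F$ takes the same form $F(q*) = \mu_S(q)$.

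The substantive case is a type~II zone $Z = \post(p(0))$ determined by $p$ in a BSCC $S$ with $t_S < 0$. Here the restriction of $\M_\A$ to $Z$ is a probabilistic one-counter automaton with strictly negative mean trend, so by the martingale machinery of \cite{BKK:pOC-time-LTL-martingale-JACM} it is positive recurrent with a unique invariant distribution $\pi$ whose counter marginal decays exponentially in the counter value. The strong ergodic theorem then yields a well-defined $F$ equal to the pattern-marginal of $\pi$, and this value is independent of the starting configuration in $Z$ by a standard regeneration argument at $p(0)$ (finite expected return time from every $q(j) \in Z$ is another consequence of the same tail bounds). To approximate $F$ up to relative error~$\varepsilon$, I truncate the counter at a height $N = \Theta(\log(1/\varepsilon)/|t_S|)$, solve the resulting finite global-balance system, and bound the truncation error by the exponential tail; the polynomial-time procedures of \cite{SEY:pOC-poly-Turing} for pOC termination probabilities and conditional expected visit counts are precisely what makes this computation run in $\mathrm{poly}(\size{\A},\size{\varepsilon})$.

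The main obstacle is the $t_S < 0$ type~II case, both because $F$ is in general irrational and because uniqueness of $F$ across all initial configurations in $Z$ must be argued carefully. The irrationality characterization promised in the lemma statement emerges naturally from this analysis: the coordinates of $F$ that depend on pOC return probabilities are generally algebraic irrationals, whereas coordinates that reduce to $\mu_S$ of a finite chain are rational. Combining the four rational cases with the approximation scheme for the one pOC case (taking a common $n \le |Q|+b$ enumeration of zones reached with positive probability) yields the claim.
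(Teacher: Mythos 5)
Your case decomposition (type~I; type~III with $t_S \le 0$; aggregated zone with $t_S>0$; aggregated zone with $t_S=0$; type~II with $t_S<0$) matches the paper's exactly, and your treatment of the four ``easy'' cases is correct and essentially identical to the paper's: the paper likewise sets $F = F_S$ with $F_S(q(*)) = \mu_S(q)$ for the type~III/aggregated zones, and solves the finite strongly connected chain for type~I.

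The gap is in the type~II, $t_S<0$ case, specifically in the \emph{polynomial-time} requirement. You propose to truncate the counter at height $N = \Theta(\log(1/\varepsilon)/|t_S|)$ and solve the finite global-balance system of the truncated chain. But $t_S$ is a ratio of polynomial-bit rationals and can be exponentially small in $\size{\A}$, so $N$ --- and hence the number of states in your truncated system --- can be exponential in $\size{\A}$. The lemma asserts computability in time polynomial in $\size{\A}$ and $\size{\varepsilon}$, and your scheme does not deliver that. (There is also a secondary issue you gloss over: the invariant distribution of a truncated chain with some choice of boundary at level $N$ is not the renormalized restriction of $\pi$ unless you use a censored boundary, and justifying the ``exponential tail $\Rightarrow$ small perturbation of the invariant measure'' step needs an explicit sensitivity bound, which you do not supply.)

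The paper avoids enumerating states up to height $N$ altogether. Instead of truncating, it builds an \emph{embedded} finite-state chain $\D_S$ whose vertices are only $q_0,q_1$ for $q(j)\in R$, $j\in\{0,1\}$ --- at most $2|Q|$ states regardless of $t_S$. The transitions $q_1 \to r_0$ carry the first-passage probabilities $[q{\downarrow}r]$, and the pattern frequencies are recovered from the invariant distribution $\mu_{\D_S}$ together with the conditional expectations $E[L \mid q_1\to r_0]$ and $E[\#_s \mid q_1\to r_0]$ via the renewal-reward identity $F(q(*)) = E[q]/E[L]$, $F(q(0)) = \mu_{\D_S}(q_0)/E[L]$. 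The polynomial-time bound then comes from: (i) explicit polynomial-bit bounds $\alpha = 85000|Q|^6/(\xmin^{5|Q|+|Q|^3} t_S^4)$ on the conditional expectations and $\xmin^{2|Q|^4}/2|Q|$ lower bounds on $\mu_{\D_S}$ entries (from \cite{BKK:pOC-time-LTL-martingale-JACM} and \cite{EWY:one-counter-PE}); (ii) polynomial-time approximation of $[q{\downarrow}r]$ to any polynomial-bit precision from \cite{SEY:pOC-poly-Turing}; and (iii) a quantitative sensitivity bound for $\mu_{\D_S}$ under perturbation of the transition matrix from \cite{CM:MC-stationary-mean-passage-LAA}. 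Crucially, only the \emph{bit-size} of $1/t_S$ (not its magnitude) enters the running time. To repair your argument you would have to replace the height truncation with a similar small embedded chain, or otherwise exploit the matrix-geometric/QBD structure of the one-counter chain to avoid state enumeration.
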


%\noindent
%Note that the vector $F$ of Lemma~\ref{lem-reg-same-value} is unique if $\calP(\run(p(k),Z)) > 0$. 

\begin{lemma}
	\label{lem-one-counter-approx}	
	Let $p(k)$ be a configuration of $\A$. Then almost every run initiated in $p(k)$ eventually stays
	in precisely one zone of~$\A$. Further, for every zone $Z$ and every rational $\varepsilon >0$, there 
	is a $P \in \Qset$ computable in time polynomial in $\size{\A}$, $\size{\varepsilon}$, and $k$ such that
	$P$ approximates $\calP(\run(p(k),Z))$ up to the relative error $\varepsilon$. 
\end{lemma}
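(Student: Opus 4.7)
The plan has two parts corresponding to the two claims of the lemma.

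For the first claim, I would observe that every region of $\A$ is contained in exactly one zone: zones of the first three types each consist of a single region, while zones of types~4 and~5 are disjoint unions of regions of prescribed types attached to a single BSCC of $\C_\A$. Hence, since by Lemma~\ref{lem-stay-in-one} almost every run initiated in $p(k)$ eventually stays in exactly one region, the same run automatically stays in exactly one zone.

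For the computational claim, fix a zone $Z$ and decompose $\calP(\run(p(k),Z)) = \sum_{R \subseteq Z} \calP(\run(p(k),R))$, where the sum ranges over the finitely many regions $R$ contained in $Z$. Each summand reduces to a bounded number of standard probabilistic one-counter observables in $\M_\A$: for a type~I region $R = \post(q(0))$, the summand equals the probability of ever reaching $q(0)$ from $p(k)$; for a type~II region $R = \post(q(0))$ the same identity holds because $R$ is closed under $\post$; for a type~III region determined by a BSCC $S$, the probability equals that of reaching $S$ and afterwards never hitting the zero counter; and a type~IV region contributes $0$ to the sum, because almost every run that enters such a region subsequently leaves it for a type~I region of the same BSCC (using the fact that every type~IV configuration can reach a type~I configuration).

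All of these quantities are classical termination / non-termination probabilities of the probabilistic one-counter automaton underlying $\A$. By the polynomial-time approximation algorithm of Etessami, Stewart and Yannakakis~\cite{SEY:pOC-poly-Turing}, together with the efficiently computable trends of the BSCCs of $\C_\A$ and the tail bounds of~\cite{BKK:pOC-time-LTL-martingale-JACM} (which classify when the counter almost surely diverges to $+\infty$ and when it almost surely returns to zero), each summand can be approximated up to any rational relative error $\varepsilon > 0$ in time polynomial in $\size{\A}$, $\size{\varepsilon}$, and $k$. Since the summands are non-negative, summing rational $\varepsilon$-approximations of each of them preserves the relative error, so their sum is the desired rational $P$ within total time polynomial in $\size{\A}$, $\size{\varepsilon}$, and $k$.

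The main obstacle I anticipate is the composite zone $Z = R_{II}(S) \cup R_{III}(S) \cup R_{IV}(S)$ attached to a BSCC $S$ with $t_S > 0$. There one must carefully verify that (i) the type~IV summand truly vanishes, i.e., almost surely every run that enters $R_{IV}(S)$ leaves it, and (ii) the type~II summands and the type~III summand combine without double counting to give exactly the probability of reaching $S$ from $p(k)$ followed by one of the mutually exclusive long-run behaviours (first returning to zero counter at a specific state of $S$, or the counter diverging). A secondary technical point is converting any absolute-error guarantees provided by the generic pOC approximation algorithms into the required relative-error bound, which can be accomplished by a standard preliminary lower-bound estimate on the probability being approximated.
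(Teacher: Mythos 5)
Your argument for the first claim rests on the assertion that every region is contained in exactly one zone, which is not true. A type-IV region $R_{IV}(S)$ appears in a zone only when $t_S > 0$; when $t_S \le 0$ it belongs to no zone at all. Hence a run that, by Lemma~\ref{lem-stay-in-one}, stays in such an $R_{IV}(S)$ does not thereby stay in any zone, and your inference breaks down exactly there. The missing piece is the observation that when the trend of the associated BSCC is non-positive, almost every run initiated in a type-IV configuration visits zero-counter configurations infinitely often and therefore almost surely enters a type-I region, so the probability of staying in that $R_{IV}(S)$ is zero; the paper's proof inserts precisely this extra step before invoking Lemma~\ref{lem-stay-in-one}.

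The region-by-region decomposition in the second part has a second defect. You claim a type-IV region always contributes $0$ to the sum and flag verifying this for the composite zone with $t_S > 0$ as the main obstacle, but in that case the claim is simply false. As noted in the discussion preceding Lemma~\ref{lem-stay-in-one}, a run in a type-IV region either stays in it forever or eventually enters a type-I region; when $t_S > 0$ there is positive probability that the counter diverges, making the (finite) type-I region unreachable, so the run stays in $R_{IV}(S)$ with positive probability. The paper sidesteps this by not decomposing this zone into regions at all: it identifies $\calP(\run(p(k),Z)) = \calP(\run(p(k),{\uparrow}_S))$, the probability of reaching $S$ and eventually keeping the counter positive, and approximates that single quantity through a finite absorbing Markov chain built from the $[r{\downarrow}s]$ and $[r{\uparrow}]$ probabilities, together with the bounds of \cite{BKK:pOC-time-LTL-martingale-JACM} and the algorithm of \cite{SEY:pOC-poly-Turing}. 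Your closing remark about converting absolute- to relative-error bounds via a preliminary lower-bound estimate is in the right spirit (the paper does exactly this with condition-number estimates on the absorbing chains), but without correcting the type-IV handling the plan does not close.
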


\section{Results for two-counter pVASS}
\label{sec-two-counters}

In this section we analyze the long-run average behavior of two-counter pVASS.
We show that if a given two-counter pVASS is \emph{stable} (see Definition~\ref{def-stable} below), then the pattern frequency vector is well defined takes one of finitely many values for almost all runs. Further, these values and the associated probabilities can be effectively approximated up to an arbitrarily small positive absolute/relative error. 
%Intuitively, this result is proven in several ``phases'' where we reduce the analysis of runs initiated in some type of configurations to the analysis of runs initiated in other types of configurations which are either simple enough or passed to the next 
%``phase''. 
%Technically, we prove that the set of all configuration is \emph{good for $\emptyset$}
%in the sense of Definition~\ref{def-good}. 

Let $\A$ be a two-counter pVASS.
When we say that some object (e.g., a number or a vector) is \emph{computable} for every $\sigma \in \Sigma$,
where $\Sigma$ is some set of parameters, we mean that there exists an algorithm which inputs the encodings 
of~$\A$ and $\sigma$, and outputs the object. Typically, the parameter $\sigma$ is some rational $\varepsilon > 0$,
of a pair $(\varepsilon,p\vec{v})$ where $p\vec{v}$ is a configuration. The parameter can also be void,
which means that the algorithm inputs just the encoding of $\A$.

A \emph{semilinear constraint} $\varphi$ is a function $\varphi : Q \rightarrow \Phi$,
where $\Phi$ is the set of all formulae
of Presburger arithmetic with two free variables $x,y$. Each $\varphi$ determines
a semilinear set $\sem{\varphi} \subseteq \conf(\A)$ consisting of all $p(v_1,v_2)$ such that
$\varphi(p)[x/v_1,y/v_2]$ is a valid formula. Since the reachability relation $\tran{}^*$
of $\A$ is effectively semilinear \cite{LS:two-counter-VASS-flat} and semilinear sets are closed under complement and union, all of the sets of configurations we work with (such as $C[R_1,R_2]$ defined below) are effectively semilinear, i.e., the associated semilinear constraint is computable. In particular, the membership problem for these sets is decidable.

Given $p\vec{v} \in \conf(\A)$ and 
$D \subseteq \conf(\A)$, we use $\run(p\vec{v} \rightarrow^* D)$ to denote the set of
all $w \in \run(p\vec{v})$ that visit a configuration of~$D$, and 
$\run(p\vec{v} \not\rightarrow^* D)$ to denote the set 
$\run(p\vec{v}) \smallsetminus \run(p\vec{v} \rightarrow^* D)$. Note that if $D = \emptyset$,
then $\run(p\vec{v} \not\rightarrow^* D) = \run(p\vec{v})$.

Intuitively, our aim is to prove that the set $C = \conf(\A)$ is ``good''
in the sense that there is a computable $n\in \Nset$ such that for every rational 
$\varepsilon>0$, there exists a computable sequence of rational vectors \mbox{$H_1,\ldots,H_n$} 
such that for every $p\vec{v} \in C$, there are computable rational $P_1,\ldots,P_n$ such that
the sequence $(P_1,H_1),\ldots,(P_n,H_n)$ that approximates the pattern frequencies of $\run(p\vec{v})$ 
up to the absolute/relative error $\varepsilon$.
This is achieved by first showing that certain simple subsets of configurations are good, and then
(repeatedly) demonstrating that more complicated subsets are also good because they can be ``reduced'' to
simpler subsets that are already known to be good. Thus, we eventually prove that the whole set $\conf(\A)$ is good. 

For our purposes, it is convenient to parameterize the notion of a ``good'' subset $C$ by another 
subset of ``dangerous'' configurations $D$ so that the above conditions are required to hold only 
for those runs that do not visit~$D$. Further, we require that every configuration of $C$ can avoid 
visiting $D$ with some positive probability which is bounded away from zero.

%\begin{definition}
%\label{def-good}
%   Let $\A = (Q,\gamma,W)$ be a pVASS of dimension~two. We say that $C \subseteq \conf(\A)$
%   is \emph{good} if the following conditions are satisfied: 
%   \begin{itemize}
%   \item The membership problem for $C$ is decidable.
%   \item There are  
%   $F_1,\ldots,F_m : \Pat_\A \rightarrow \Rset$ such that for each initial configuration 
%   $p\vec{v} \in C$ we have  \mbox{$\sum_{i=1}^m \calP(F_\A {=} F_i) = 1$}.
%   \item There is a computable $n \in \Nset$ such that for every $\varepsilon > 0$, there are computable \mbox{$H_1,\ldots,H_n : \Pat_\A \rightarrow \Qset$} such that for every 
%   $p\vec{v} \in C$ there are computable $P_1,\ldots,P_n \in \Qset$ such that
%   $(P_1,H_1),\ldots,(P_n,H_n)$ approximate the pattern frequencies of $p\vec{v}$ up 
%   to the absolute error~$\varepsilon$.
%   \end{itemize}
%\end{definition}

\begin{definition}
	\label{def-good}
	Let $\A = (Q,\gamma,W)$ be a pVASS of dimension~two, and let $C,D \subseteq \conf(\A)$.
	We say that $C$ is \emph{good for $D$} if the following conditions are satisfied:
	\begin{itemize}
		\item There is $\delta > 0$ such that 
		$\calP(\run(p\vec{v} \rightarrow^* D)) \leq 1 - \delta$ for every
		$p\vec{v} \in C$.
	    \item There is 
%		\item The membership problem for $C$ is decidable. %
%		\item There are  $F_1,\ldots,F_m : \Pat_\A \rightarrow \Rset$ such that for 
%       every initial configuration 
%		$p\vec{v} \in C$ and almost every $w \in \run(p\vec{v} \not\rightarrow^* D)$ we have
%		that $F_\A(w) = F_i$ for some $i \leq m$. 
		a computable $n \in \Nset$ such that for every $\varepsilon > 0$, there are computable \mbox{$H_1,\ldots,H_n : \Pat_\A \rightarrow \Qset$} such that for every 
		$p\vec{v} \in C$ there are computable $P_{p\vec{v},1},\ldots,P_{p\vec{v},n} \in \Qset$ such that
		$(P_{p\vec{v},1},H_1),\ldots,(P_{p\vec{v},n},H_n)$ approximate the 
		pattern frequencies of $\run(p\vec{v}\not\rightarrow^* D)$ up 
		to the absolute error~$\varepsilon$.
%		\item For every $\delta > 0$ there is a computable simple constraint $\varphi_\delta$
%		such that $\sem{\varphi_\delta} \subseteq C$ and 
%       $\calP(\run(p\vec{v} \rightarrow^* D)) \leq \delta$ for all 
%		$p\vec{v} \in \sem{\varphi_\delta}$. 
    \end{itemize}
\end{definition}

\noindent
Note that in Definition~\ref{def-good}, we require that $(P_{p\vec{v},1},H_1),\ldots,(P_{p\vec{v},n},H_n)$ approximate the 
pattern frequencies of $\run(p\vec{v}\not\rightarrow^* D)$ up 
to the \emph{absolute} error~$\varepsilon$. As we shall see, we can always compute 
a lower bound for each positive $P_{p\vec{v},i}$ and $H_i$, which implies that
if $P_{p\vec{v},i}$ and $H_i$ can be effectively approximated up to an arbitrarily
small absolute error~$\varepsilon >0$, they can also be 
effectively approximated up to an arbitrarily small \emph{relative}
error~$\varepsilon >0$. 

The next definition and lemma explain what we mean by reducing the analysis of runs initiated in configurations of $C$ to the analysis of runs initiated in ``simpler'' configurations of $C_1,\ldots,C_k$. 

\begin{definition}
\label{def-reduce} Let $\A$ be a pVASS of dimension~two,
	$C \subseteq \conf(\A)$, and $\calE = \{C_1,\ldots,C_k\}$ a set of pairwise disjoint subsets of 
    $\conf(\A)$. We say that $C$ is \emph{reducible} to $\calE$ 
	if, for every $\varepsilon > 0$, there are computable semilinear constraints 
	$\varphi_1,\ldots,\varphi_k$ such that 
	\begin{itemize}
		\item $\sem{\varphi_i} \subseteq C_i$ for every $1 \leq i \leq k$;
		\item for all $1 \leq i \leq k$ and  $p\vec{v} \in \sem{\varphi_i}$,
		we have that $\calP(\run(p\vec{v} \rightarrow^* D_i)) \leq \varepsilon$, where $D_i = \bigcup_{j\neq i} C_j$.
		\item for every $p\vec{v} \in C$ and every $\delta > 0$, there is a computable
		$\ell \in \Nset$ such that the probability of reaching a configuration
		of $\sem{\varphi_1} \cup \cdots \cup \sem{\varphi_k}$ in at most $\ell$ transitions is at least $1 - \delta$.
	\end{itemize}
\end{definition}

\begin{lemma}
	\label{lem-reduce}
	If $C$ is reducible to $\calE = \{C_1,\ldots,C_k\}$ and every $C_i$ is good for $D_i = \bigcup_{j\neq i} C_j$, 
	then $C$ is good for~$\emptyset$. 
\end{lemma}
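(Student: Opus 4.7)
The plan is to verify both bullets of Definition~\ref{def-good} for $D = \emptyset$. The first bullet is vacuous: since $\run(p\vec{v} \rightarrow^* \emptyset) = \emptyset$, any $\delta \in (0,1]$ works. The bulk of the argument establishes the second bullet by gluing the approximations supplied by the $k$ individual goodness certificates along the ``entry paths'' into $T := \sem{\varphi_1} \cup \cdots \cup \sem{\varphi_k}$ delivered by reducibility.

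Concretely, given the target absolute error $\varepsilon > 0$, I would first pick an auxiliary $\varepsilon' := \varepsilon/3$, apply reducibility of $C$ to $\calE$ at parameter $\varepsilon'$ to compute the semilinear constraints $\varphi_1, \ldots, \varphi_k$, and apply the goodness of each $C_i$ for $D_i$ at parameter $\varepsilon'$ to compute vectors $H_{i,1}, \ldots, H_{i,n_i}$ that are independent of the starting configuration. Setting $n := n_1 + \cdots + n_k$ and indexing the outputs by pairs $(i,j)$, for a given $p\vec{v} \in C$ I would use the third clause of reducibility to compute a threshold $\ell$ so that $T$ is reached from $p\vec{v}$ within $\ell$ steps with probability at least $1 - \varepsilon'$; then enumerate the finitely many first-entry paths $\pi$ of length $\leq \ell$ from $p\vec{v}$ into $T$, compute each path probability $q_\pi$ and its endpoint $p'_\pi\vec{v'}_\pi \in \sem{\varphi_{i(\pi)}}$ (membership is decidable since $\sem{\varphi_i}$ is semilinear), and obtain $P_{p'_\pi\vec{v'}_\pi, i(\pi), j}$ from the goodness of $C_{i(\pi)}$. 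Finally, set
\[
P_{p\vec{v},(i,j)} \;:=\; \sum_{\pi :\ i(\pi) = i} q_\pi \cdot P_{p'_\pi\vec{v'}_\pi,\,i,\,j}.
\]

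For verification, I consider the natural partition where $R_{(i,j)}$ collects the runs whose first visit to $T$ lies in $\sem{\varphi_i}$ at some endpoint $p'\vec{v'}$ and whose continuation from $p'\vec{v'}$ belongs to $R_{p'\vec{v'},i,j}$ (the set witnessing goodness of $C_i$). By the Markov property, $\calP(R_{(i,j)})$ equals $\sum_\pi q_\pi \cdot \calP(R_{p'_\pi\vec{v'}_\pi,i,j})$ summed over \emph{all} first-entry paths; truncating to paths of length $\leq \ell$ loses at most $\varepsilon'$ in probability, and each per-path approximation loses a further $\varepsilon'$, so $P_{p\vec{v},(i,j)}$ approximates $\calP(R_{(i,j)})$ up to absolute error $\leq 2\varepsilon' \leq \varepsilon$; the bound $|H_{i,j} - F_{(i,j)}| \leq \varepsilon' \leq \varepsilon$ on the frequency vector is inherited directly from the goodness of $C_i$.

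The hard part will be reconciling this natural partition with the strict form of ``approximates the pattern frequencies'': the $R_{(i,j)}$'s must be pairwise disjoint subsets of $\run(p\vec{v})$ with probabilities summing \emph{exactly} to $\calP(\run(p\vec{v})) = 1$, and on each $R_{(i,j)}$ the vector $F_\A$ must take a single value $F_{(i,j)}$ almost surely. The ``bad'' runs that enter $\sem{\varphi_i}$ but subsequently visit $D_i = \bigcup_{j \neq i} C_j$ carry total probability at most $\varepsilon'$ per entry point and are not covered by the natural partition. I would fold this residual mass into the error budget by distributing it across the $R_{(i,j)}$'s, exploiting that $H_{i,j}$ approximates the (possibly configuration-dependent) values $F_{p'\vec{v'},i,j}$ uniformly in $p'\vec{v'} \in C_i$ up to $\varepsilon'$, so that any slack introduced by merging close frequency vectors stays within the promised error. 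Once this bookkeeping is in place, the remaining verifications (computability of $n$, $H_{i,j}$, and $P_{p\vec{v},(i,j)}$; pairwise disjointness of the $R_{(i,j)}$'s; the identity $\sum_{i,j} \calP(R_{(i,j)}) = 1$) reduce to routine Markov-chain manipulations.
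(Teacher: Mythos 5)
Your overall construction — reduce $\varepsilon$ to an auxiliary $\varepsilon'$, invoke reducibility to get the constraints $\varphi_i$, invoke goodness of each $C_i$ at $\varepsilon'$, unfold $\M_\A$ from $p\vec{v}$ up to depth $\ell$, stop branches upon first entry into $\bigcup_i\sem{\varphi_i}$, and set $P_{p\vec{v},(i,j)} = \sum_\pi q_\pi\cdot P_{\text{endpoint}(\pi),i,j}$ — is exactly the paper's construction (the paper phrases the unfolding as a finite tree with leaves $L_i$; $\varepsilon/4$ vs.\ your $\varepsilon/3$ is immaterial). The first bullet of Definition~\ref{def-good} for $D=\emptyset$ is indeed vacuous, as you observe.

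The part that does not go through as written is your resolution of the ``hard part.'' You propose to ``fold this residual mass into the error budget by distributing it across the $R_{(i,j)}$'s, \dots\ so that any slack introduced by merging close frequency vectors stays within the promised error.'' That is not available to you: the definition of ``approximates the pattern frequencies'' demands $F_\A(w)=F_i$ for \emph{almost all} $w\in R_i$ — exact a.s.\ constancy of the limit vector on each piece, not $\varepsilon$-closeness. Merging runs whose limit vectors are merely $\varepsilon'$-close into a single $R_{(i,j)}$ violates this, so the defect is not a quantity you can absorb into the error budget; it is a structural constraint on the partition. The paper's actual repair is where the first bullet of Definition~\ref{def-good} earns its keep: since for each $i$ there is a uniform $\delta_i>0$ with $\calP(\run(q\vec{u}\rightarrow^* D_i))\le 1-\delta_i$ for every $q\vec{u}\in C_i$, a standard geometric-trials argument shows that almost every $w\in\run(p\vec{v})$ eventually ``decides'' for a unique $C_i$: there is an $m$ with $w(m)\in C_i$ and $w(m')\notin D_i$ for all $m'>m$. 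Because $F_\A$ is a tail invariant, its value on such a run is then the value it takes on the tail, which lands a.s.\ in exactly one of the classes $R_{w(m),i,j}$ witnessing goodness of $C_i$. This yields a genuine measurable partition of (almost all of) $\run(p\vec{v})$ into $n=\sum_i n_i$ classes on which $F_\A$ is a.s.\ constant, and it is \emph{that} partition whose class probabilities the quantities $P_{p\vec{v},(i,j)}$ approximate. So your bookkeeping plan needs to be replaced by this ``deciding'' argument; the rest of your proposal matches the paper.
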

\begin{proof}
	For every $1 \leq i \leq k$, let $n_i$ be the computable constant for $C_i$ which
	exists by Definition~\ref{def-good}. The constant $n$ for $C$ is defined
	as $n = \sum_{i = 1}^k n_i$. Now let us fix some $\varepsilon > 0$. 
	Since $C$ is reducible to $\{C_1,\ldots,C_k\}$, there are computable
	constraints $\varphi_1,\ldots,\varphi_k$  such that, for every $1 \leq i \leq k$,
	we have that $\sem{\varphi_i} \subseteq C_i$ and 
	\mbox{$\calP(\run(p_i\vec{v}_i \rightarrow^* D_i)) \leq \varepsilon/4$} 
	for every $p_i\vec{v}_i \in \sem{\varphi_i}$. Further, there
	are computable \mbox{$H_{i,1},\ldots,H_{i,n_i} : \Pat_\A \rightarrow \Qset$} such that for every $p_i\vec{v}_i \in \sem{\varphi_i}$, there are computable $P_{p_i\vec{v}_i,1},\ldots,P_{p_i\vec{v}_i,n_i} \in \Qset$ such that
	$(P_{p_i\vec{v}_i,1},H_{i,1}),\ldots,(P_{p_i\vec{v}_i,n_i},H_{i,n_i})$ approximate the pattern frequencies of 
	$\run(p_i\vec{v}_i\not\rightarrow^* D_i)$ up 
	to the absolute error~$\varepsilon/4$. Now let $p\vec{v} \in C$. Then there is a computable $\ell \in \Nset$ such that the probability of reaching a configuration
	of $\sem{\varphi_1} \cup \cdots \cup \sem{\varphi_k}$ in at most $\ell$ transitions is at least $1 - \varepsilon/4$. Hence, we can effectively construct a finite tree $T$ rooted by $p\vec{v}$ which represents the (unfolding of) the part of $\M_\A$ reachable from
	$p\vec{v}$. A branch in this tree is terminated when a configuration of
	$\sem{\varphi_1} \cup \cdots \cup \sem{\varphi_k}$ is visited, or when the length
	of the branch reaches~$\ell$. For every $1 \leq i \leq k$, let $L_i$ be the set
	of all leafs $\alpha$ of $T$  labeled by configurations of $\sem{\varphi_i}$.
	We use $P_\alpha$ to denote the (rational and computable) probability of reaching
	$\alpha$ from the root of $T$, and $\mathit{label}(\alpha)$ to denote the configuration
	which is the label of $\alpha$. For every
	$1 \leq i \leq k$ and every $1 \leq j \leq n_i$, we put 
	$P_{p\vec{v},i,j} = \sum_{\alpha \in L_i} P_\alpha \cdot P_{\mathit{label}(\alpha),j}$. It is 
	straightforward to verify that the sequence
	\begin{align*}
	   &(P_{p\vec{v},1,1},H_{1,1}),\ldots, (P_{p\vec{v},1,n_1},H_{1,n_1}),\\	   
	   &(P_{p\vec{v},2,1},H_{2,1}),\ldots, (P_{p\vec{v},1,n_1},H_{2,n_2}),\\
	   & \quad\vdots\\
	   &(P_{p\vec{v},k,1},H_{k,1}),\ldots, (P_{p\vec{v},k,n_k},H_{k,n_k})
	\end{align*}
	approximates
	the pattern frequencies of $\run(p\vec{v})$ up to the absolute error~$\varepsilon$.
	In particular, realize that almost every $w \in \run(p\vec{v})$ eventually ``decides'' for some $C_i$, i.e., there is $m \in \Nset$ such that $w(m) \in C_i$ and for all $m' > m$ we have $w(m') \not\in D_i$ (this is where we use
	the first condition of Definition~\ref{def-good}). Hence, the pattern frequency
	vector is well defined and approximated up to the absolute error $\varepsilon/4$ 
	by some of the above $H_{i,j}$ for almost all $w \in \run(p\vec{v})$. 
\end{proof}

%Lemma~\ref{lem-reduce} is used repeatedly in the following way. We start with 
%some simple sets of configurations and prove that they are good for~$\emptyset$.
%Then, we consider more complicated sets $C \subseteq \conf(\A)$,
%and show that $C$ is reducible to finitely many $C_1,\ldots,C_k$, where each $C_i$
%is already known to be good for $\emptyset$, or proven to be good for 
%$\bigcup_{j\neq i} C_j$. Thus, we eventually prove that $\conf(\A)$ is good 
%for~$\emptyset$.
%

%In some cases, we can give a reasonable upper bound
%on the size of the $\ell$ of Definition~\ref{def-reduce}, but not always. This is the point
%where we miss a more detailed knowledge about the structure of two-counter VASS (see
%the remarks in Section~\ref{sec-intro}).

%Our proof of Theorem~\ref{thm-two-counter-main} is to some extent similar to the one of
%Theorem~\ref{thm-one-counter-main}. Again, we show that there are finitely many ``regions''
%such that almost every run eventually stays in some region, almost all 
%runs that stay in the same region  have the same well defined pattern frequency vector, 
%and for every region, the associated pattern frequency vector and the probability of staying
%in the region can be effectively approximated up to an arbitrarily small absolute
% error $\varepsilon > 0$. 

For the rest of this section, we fix a two-counter pVASS $\A = (Q,\gamma,W)$ (recall that
$\A$ satisfies Assumption~\ref{asm-simpleVASS}). For $\mi \in \{1,2\}$, we define
a one-counter pVASS $\A_{\mi} = (Q,\gamma_{\mi},W_{\mi})$ and a labeling 
$L_{\mi} : \gamma_{\mi} \rightarrow \{-1,0,1\}$ as follows:
$s \ltran{\kappa(\mi),\ell} t$ in $\A_{\mi}$ and 
$L_\mi ((s,\kappa(\mi),t)) = \kappa(3{-}\mi)$  iff 
$s \ltran{\kappa,\ell} t$ in $\A$.
Note that $\A_\mi$ is obtained by ``preserving'' the $\mi$-th counter; the change 
of the other counter is encoded in $L_\mi$. Also observe that $\C_\A$, $\C_{\A_1}$,
and $\C_{\A_2}$ are the same Markov chains.

The results of Section~\ref{sec-one-counter} are applicable to $\A_1$ and $\A_2$.
Let $R$ be a type~II or a type~IV region of $\A_i$. We claim that there is a unique $\tau_R \in \Rset$ such that for almost all runs $w \in \run(p(k),R)$, where $p(k) \in R$, we have that the limit 
\[
\lim_{n\rightarrow \infty} \frac{\sum_{j=0}^{n-1}L_i(\mathit{rule}(w,j))}{n}
\]
exists and it is equal to $\tau_R$ (here, $\mathit{rule}(w,j)$) is the unique
rule of $\gamma_i$ which determines the transition $w(j) \tran{} w(j{+}1)$; cf.{}
Assumption~\ref{asm-simpleVASS}). In other words, $\tau_R$ is the unique \emph{mean payoff}
determined by the labeling $L_i$ associated to~$R$. To see this, consider the trend $t_S$
of the associated BSCC $S$ of $\C_\A$. If $R$ is a type~IV region,  
then $\tau_R = t_S(3{-}i)$ for almost all $w \in \run(p(k),R)$ (in particular, note that if 
$t_S(i) \leq 0$ then $\calP(\run(p(k),R)) = 0$; see Section~\ref{sec-one-counter}).
If $t_S(i) \geq 0$ and $R$ is a type~II region, 
then $\tau_R = t_S(3{-}i)$, because the frequency of visits to configurations with zero counter is zero for almost all $w \in \run(p(k))$, where $p(k) \in R$ (see \cite{BKK:pOC-time-LTL-martingale-JACM}). Finally,
if $t_S(i) < 0$ and $R$ is a type~II region, then $R$ is ergodic because the mean recurrence time in every configuration of $R$ is finite 
\cite{BKK:pOC-time-LTL-martingale-JACM}, and hence
$\tau_R$ takes the same value for almost all $w \in \run(p(k),R)$, where $p(k) \in R$.

Although the value of $\tau_R$ may be irrational when $R$ is of type~II and $t_S(i) < 0$, there exists a formula $\Phi(x)$ of Tarski algebra with a fixed 
alternation depth of quantifiers computable in polynomial time such that $\Phi[x/c]$ 
is valid iff $c = \tau_R$. Hence, the problem whether $\tau_R$ is zero (or positive, or negative) is decidable in exponential time \cite{Grigoriev:Tarski-exponential-JSC};
and if $\tau_R < 0$ (or $\tau_R > 0$), there is a computable $x \in \Qset$ such that
$x < 0$ (or $x > 0$) and $|x| \leq |\tau_R|$. 

\begin{definition}
\label{def-stable}
   Let $\A = (Q,\gamma,W)$ be a pVASS of dimension~two. We say that $\A$ is \emph{stable}
   if the following conditions are satisfied:
   \begin{itemize}
   	  \item Let $S$ be a BSCC of $\C_{\A}$ such that the type~IV region determined
   	     by $S$ is non-empty in $\A_1$ or $\A_2$, or there is $p \in S$ such that the type~II region determined by $p(0)$ is~non-empty in $\A_1$ or $\A_2$. Then the trend
   	     $t_S$ is non-zero in both components.
   	  \item Let $R$ by a type~II region in $\A_i$ such that $t_S(i) < 0$, where $S$
   	     is the BSCC of  $\C_{\A}$ associated to~$R$. Then $\tau_R \neq 0$.
   \end{itemize}
\end{definition}

\noindent
Note that the problem whether a given two-counter pVASS $\A$ is stable is decidable
in exponential time. Our aim is to prove the following theorem:

\begin{theorem}
	\label{thm-two-counter-main}
	Let $\A = (Q,\gamma,W)$ be a stable pVASS of dimension~two. Then the set $\conf(\A)$
	is good for $\emptyset$.
\end{theorem}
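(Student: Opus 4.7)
The plan is to prove Theorem~\ref{thm-two-counter-main} by a single application of Lemma~\ref{lem-reduce}. Since almost every run of $\A$ projects onto runs of the one-counter pVASS $\A_1$ and $\A_2$, and since by Lemma~\ref{lem-stay-in-one} each such projection almost surely stays in some region, the natural decomposition is by pairs $(R_1,R_2)$ of regions of $\A_1$ and $\A_2$. For each such pair I would define a set $C[R_1,R_2] \subseteq \conf(\A)$ collecting those configurations from which the projection onto $\A_1$ (resp.\ $\A_2$) almost surely stays in $R_1$ (resp.\ $R_2$); effective semilinearity of $C[R_1,R_2]$ follows from the result of \cite{LS:two-counter-VASS-flat} together with the effective regularity of regions noted in Section~\ref{sec-one-counter}. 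The first subgoal is to show that $\conf(\A)$ is reducible to the family $\calE = \{C[R_1,R_2]\}_{R_1,R_2}$; the three conditions of Definition~\ref{def-reduce} then follow, respectively, from the definition of $C[R_1,R_2]$, from the uniqueness of the region a run eventually stays in, and from Lemma~\ref{lem-bounded} applied to $\A_1$ and $\A_2$.

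Next, I would establish that each $C[R_1,R_2]$ is good for its complement, by case analysis on the types of $R_1$ and $R_2$. If both are of type~I, then both counters are uniformly bounded on $C[R_1,R_2]$, the reachable part of $\M_\A$ is effectively finite, and goodness follows from classical finite-state ergodic theory. If exactly one of $R_1,R_2$ is of type~I, then one counter is bounded and can be encoded into the control states; the remaining dynamics is that of a one-counter pVASS and Theorem~\ref{thm-one-counter-main} supplies the approximations $H_i$ and $P_i$, while the boundedness of the ``encoded'' counter yields the uniform $\delta$ in Definition~\ref{def-good}. The stability hypothesis is not needed yet, but it already guarantees that the trends and mean payoffs appearing in $\A_1,\A_2$ are non-zero and hence effectively usable.

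The main work lies in the case where neither $R_1$ nor $R_2$ is of type~I, so both counters diverge on almost every run in $C[R_1,R_2]$. Here I would distinguish sub-cases according to the signs of the mean payoffs $\tau_{R_1},\tau_{R_2}$, which are non-zero by stability. When both are positive, both counters grow robustly and I would construct a finite eager attractor $A \subseteq C[R_1,R_2]$ in the sense of \cite{AHMS:Eager-limit}: almost every run in $C[R_1,R_2]$ visits $A$, and the probability of not revisiting $A$ within $\ell$ steps decays exponentially in $\ell$. The existence and effective construction of $A$ rests on new two-dimensional exponential tail bounds, obtained from a martingale built so that its increments exploit the combined positive drift of both counters while absorbing the boundary interactions at the axes. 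Once $A$ is fixed, the $H_i$ and $P_i$ for $C[R_1,R_2]$ are obtained from the BSCCs of the finite chain induced on $A$ via the abstract approximation algorithm of \cite{AHMS:Eager-limit}. The remaining sub-cases (some $\tau_{R_i}$ negative) are reduced to previously handled ones by showing that the relevant counter almost surely returns to a bounded zone of the companion one-counter pVASS, which re-introduces a type~I situation with positive probability.

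The principal obstacle is the construction of the martingale and the proof of exponential tail bounds in the doubly-divergent case: the two counters interact through the $x{=}0$ and $y{=}0$ axes, so neither a direct one-counter martingale from \cite{BKK:pOC-time-LTL-martingale-JACM} nor a naive product construction works, and one must carefully weight the coordinates using the signs of $t_S$ and $\tau_{R_i}$ so that the drift dominates the boundary fluctuations globally. Once this technical core is in place, Lemma~\ref{lem-reduce} assembles the goodness of all $C[R_1,R_2]$ into the goodness of $\conf(\A)$ for $\emptyset$, proving Theorem~\ref{thm-two-counter-main}.
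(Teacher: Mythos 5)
There is a genuine gap, arising from several misconceptions about the structure of the reduction and of the two‑counter dynamics.

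First, your definition of $C[R_1,R_2]$ (``configurations from which the \emph{projection} onto $\A_i$ almost surely stays in $R_i$'') is not well posed: the coordinate‑$i$ projection of a run of $\M_\A$ is \emph{not} a run of the one‑counter Markov chain $\M_{\A_i}$, because the one‑counter pVASS $\A_i$ has forgotten the other counter, whereas in $\M_\A$ the other counter hitting zero changes the set of enabled rules and hence the transition probabilities. Consequently the probabilistic condition you impose does not reduce to a membership statement one can feed into Definition~\ref{def-reduce}. The paper's $C[R_1,R_2]$ is instead the purely \emph{structural} set $\{p(m_1,m_2) : p(m_1)\in R_1,\ p(m_2)\in R_2\}$, and the family $\calE$ in the initial reduction must also contain the set $B[b]$ of configurations where one counter stays uniformly bounded (Lemma~\ref{lem-general}); without $B[b]$ the third bullet of Definition~\ref{def-reduce} (reaching a configuration of some $\calE$‑member with probability close to one in a bounded horizon) fails for configurations that never enter any $C[R_1,R_2]$.

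Second, your case analysis misreads the role of the regions and the sign of the mean payoffs. If neither $R_1$ nor $R_2$ is of type~I it is \emph{not} true that both counters diverge: type~II regions are precisely those in which the corresponding counter returns to zero infinitely often. The finite eager attractor is used in the \emph{opposite} sign regime from what you propose: when $\tau_{R_1}<0$ and $\tau_{R_2}<0$ (Lemma~\ref{lem-II-II-gtrend-leftdown-trends-left-down}) the run, whenever it is on one axis, drifts back towards the corner, and the attractor is a finite band around the origin. When some $\tau_{R_i}>0$ the run escapes along that axis (Theorem~\ref{thm:tail-bounds-divergence}), one counter becomes irrelevant, and the analysis reduces to the one‑counter case via $E_S[\cdot,\cdot]$‑type sets; no attractor is constructed there. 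Finally, the tail bounds obtained from the paper's two‑counter martingale are \emph{sub}‑exponential, of the form $a\,z^{\sqrt[k]{\ell}}$, not exponential, which is why Definition~\ref{def-attractor} and the citation of \cite{AHMS:Eager-limit} are explicitly adapted; asserting exponential decay as you do would overclaim what Theorems~\ref{thm:tail-bounds-height}--\ref{thm:tail-bounds-convergence} actually give.
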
 

For the rest of this section, we fix a pVASS $\A$ of dimension two a present a sequence of observations
that imply Theorem~\ref{thm-two-counter-main}. Note that $\A$ is not
necessarily stable, i.e., the presented observations are valid for \emph{general} two-dimensional 
pVASS. The stability condition is used to rule out some problematic subcases that are not 
covered by these observations. 

In our constructions, we need to consider the following subsets of configurations:
\begin{itemize}
	\item $C[R_1,R_2]$, where $R_1 \in \Reg(\A_1)$ and $R_2 \in \Reg(\A_2)$, is the set 
	 of all $p(m_1,m_1) \in \conf(\A)$ such that $p(m_1) \in R_1$ and $p(m_2) \in R_2$;
    \item $B[b]$, where $b \in \Nset$, consists of all $p\vec{v} \in \conf(\A)$ such that
     for every $q\vec{u} \in \post(p\vec{v})$ we have that $\vec{u}(1) \leq b$
     or $\vec{u}(2) \leq b$;
	\item $C_S[c_1 {\sim} b_1 \wedge c_2 {\approx} b_2 ]$, where $S \subseteq Q$,
	 $b_1,b_2 \in \Nset$, and ${\sim},{\approx}$ are numerical comparisons (such as
	 $=$ or $\leq$) consists of all $p(m_1,m_2) \in \conf(\A)$ such that $p \in S$, 
	 $m_1 \sim b_1$, and $m_2 \approx b_2$. Trivial constraints of the form $c_i \geq 0$ 
	 can be omitted. For example, $C_Q[c_1 = 0 \wedge c_2 \geq 6]$
	 is the set of all $q(0,m) \in \conf(\A)$ where $m \geq 6$, and $C_S[c_1 \leq 2]$ is the set of
	 all $q(n,m)\in \conf(\A)$ where $q \in S$ and $n \leq 2$.
    \item $Z_S$, where $S \subseteq Q$, consists of all $p(m_1,m_2)$ such that
     $p \in S$ and some counter is zero (i.e., $m_1 = 0$ or $m_2 = 0$).
    \item $E_S[b_1,b_2]$, where $S \subseteq Q$ and $b_1,b_2 \in \Nset$, consists of all $p(m_1,m_2)$ such that
     $p \in S$, some counter is zero, and every 
     $q(n_1,n_2) \in \post(p(m_1,m_2))$ satisfies the following:
     \begin{itemize} 
     	\item if $n_1 = 0$, then $n_2 \leq b_2$;
     	\item if $n_2 = 0$, then $n_1 \leq b_1$.
     \end{itemize} 
\end{itemize}

%various (semilinear) subsets of $\conf(\A)$. To define these subsets, we use expressions with the following syntax:
%\[
%    E \ ::= \   c_i \sim n \ \mid\  E_1 \vartriangle E_2 \ \mid \ *E
%\]
%Here $i \in \{1,2\}$, $\sim$ is a numerical comparison (such as $\leq$ or~$=$),
%$n \in \Nset$, and $\vartriangle$ is a binary Boolean connective. The semantics is
%defined inductively: 
%\begin{itemize}
%	\item $p\vec{v} \models c_i \sim n$ iff $p\vec{v}(i) \sim n$,
%	\item $p\vec{v} \models E_1 \vartriangle E_2$ iff  
%	   $(p\vec{v} \models E_1) \vartriangle (p\vec{v} \models E_2)$
%	\item $p\vec{v} \models *E$ iff $q\vec{u} \models E$ for every 
%	   $q\vec{u} \in \post(p\vec{v})$.
%\end{itemize}
%For every $S \subseteq Q$, we
%use $C_S[E]$ to denote the set $\{p\vec{v} \in S \times \Nset^2 \mid p\vec{v} \models E\}$.
%%For example, $C[c_1 {=} 0 \wedge *(c_1{=}0 \Rightarrow c_2 {\leq} 30)]$ denotes the set of
%%all $p\vec{v}$ such that $\vec{v}(1) = 0$ and every $q\vec{u}$ reachable from $p\vec{v}$
%%where $\vec{u}(1) = 0$ satisfies $\vec{u}(2) \leq 30$. 
%Further, we use $C[R_1,R_2]$, where $R_1 \in \Reg(\A_1)$ and $R_2 \in \Reg(\A_2)$, to denote the set of all $p(v_1,v_2) \in \conf(\A)$ such that $p(v_1) \in R_1$ and $p(v_2) \in R_2$.

\noindent
Note that all of these sets are semilinear and the associated semilinear constraints are
computable.

A direct consequence of Lemma~\ref{lem-bounded} is the following:
\begin{lemma}
\label{lem-general}
	Let $b = 11|Q|^4$, and let $\calE$ be a set consisting of 
	\mbox{$B[b]$} and
	all $C[R_1,R_2]$ where $R_1 \in \Reg(\A_1)$,
	$R_2 \in \Reg(\A_2)$. Then $\conf(\A)$ is reducible to $\calE$.
\end{lemma}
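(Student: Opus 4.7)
For each $\varepsilon>0$ the plan is to define effective semilinear constraints $\varphi_{B[b]}$ and $\varphi_{R_1,R_2}$ and verify the three conditions of Definition~\ref{def-reduce}, with Lemma~\ref{lem-bounded} applied to the one-counter projections $\A_1$ and $\A_2$ as the central input. The set $B[b]=\conf(\A)\setminus\pre(\{q(n_1,n_2):n_1>b\wedge n_2>b\})$ is effectively semilinear since the reachability relation of a two-counter VASS is effectively semilinear~\cite{LS:two-counter-VASS-flat}; each region of $\A_i$ is regular on its single counter (Section~\ref{sec-one-counter}), so every $C[R_1,R_2]$ is effectively semilinear. Distinct $C[R_1,R_2]$'s are pairwise disjoint because the regions of each $\A_i$ are; to enforce disjointness of $\calE$ I would set $\sem{\varphi_{B[b]}}=B[b]\setminus\pre(\bigcup_{R_1,R_2}C[R_1,R_2])$ (still effectively semilinear and closed under $\post$), $\sem{\varphi_{R_1,R_2}}=C[R_1,R_2]$ when both $R_i$ are closed under $\post$ in the respective $\A_i$ (types I, II, III), and $\sem{\varphi_{R_1,R_2}}=\emptyset$ otherwise.

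Conditions (i) and (ii) reduce to closure under $\post$. Condition (i) is immediate. For (ii), $\sem{\varphi_{B[b]}}$ is closed under $\post$ and, being disjoint from $\pre(\bigcup C)$, cannot reach any $C[R_1,R_2]$. For $C[R_1,R_2]$ with both $R_i$ closed, the key observation is that each rule of $\A$ corresponds to a unique rule of $\A_i$ with the same $i$-th-counter effect, and enabledness in $\A$ implies enabledness in $\A_i$; hence the $i$-th-counter projection of any $\A$-step is an $\A_i$-step, $R_i$-membership is preserved, and $C[R_1,R_2]$ is closed under $\post$ in $\A$.

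The bulk of the work is condition (iii). The crucial ingredient is a coupling between $\A$ and its projections $\A_i$: at any configuration $q(n_1,n_2)$ with $n_1,n_2>b$, the next $b$ $\A$-steps cannot exhaust either counter (each step changes a counter by at most one), so throughout this window the rules enabled in $\A$ coincide with the rules enabled in $\A_i$ on the respective projection, and the $i$-th-counter projection of these $b$ $\A$-steps is distributed exactly as $b$ $\A_i$-steps. Applying Lemma~\ref{lem-bounded} to $\A_1$, from $q(n_1)$ there is an $\A_1$-path of length at most $b$ to some region $R_1$, realizable in $\A$ with probability at least $\xmin^b$. From any configuration with both counters exceeding $2b$, two consecutive such phases---the second phase using the symmetric coupling for $\A_2$ with $R_1$-membership preserved throughout by the argument of~(ii)---land the run in some $C[R_1,R_2]$; invoking Lemma~\ref{lem-stay-in-one} for $\A_i$ to route out of type~IV regions when needed, we reach $\sem{\varphi_{R_1,R_2}}$ for some pair of closed regions. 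Configurations $p\vec{v}\notin B[b]$ can reach a configuration with both counters exceeding $2b$ by the definition of $B[b]$, and the effective semilinearity of $\A$-reachability gives a computable upper bound on the length of such a witness. Configurations $p\vec{v}\in B[b]$ either already lie in $\sem{\varphi_{B[b]}}$ or in $B[b]\cap\pre(\bigcup C)$, in which latter case a path in $\A$ leads into some $C[R_1,R_2]$ that is eventually absorbed (via Lemma~\ref{lem-stay-in-one} applied to the projections) into a closed-region $\sem{\varphi_{R_1',R_2'}}$. Concatenating these ingredients yields, for every $p\vec{v}$, an $\A$-path of computable length into $\bigcup_i\sem{\varphi_i}$ of probability bounded below by a positive power of $\xmin$; iterating and using closure under $\post$ of the target gives geometric decay of the miss probability and the required computable $\ell$. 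The main obstacle is the coupling analysis---verifying that the chosen $\A_i$-path is in fact realized in $\A$ when the other counter is bounded away from zero---together with the treatment of type~IV regions, both resolved by the counter-threshold argument and by Lemma~\ref{lem-stay-in-one} applied to the projections.
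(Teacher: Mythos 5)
Your key ingredient---applying Lemma~\ref{lem-bounded} to the projections $\A_1,\A_2$ and lifting the resulting short paths to $\A$ via the counter-threshold argument---is exactly what the paper's terse justification alludes to, and it correctly yields a computable bound $k$ such that every configuration reaches some $C[R_1,R_2]$ or $B[b]$ within $k$ transitions. But the specific choice of semilinear constraints has a genuine gap.

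Setting $\sem{\varphi_{R_1,R_2}}=\emptyset$ whenever $R_1$ or $R_2$ is of type~IV breaks the third requirement of Definition~\ref{def-reduce}. A type~IV region of $\A_i$ can be entered and never left with positive probability precisely when the trend of the associated BSCC $S$ is positive in the $i$-th component (if $t_S(i)\leq 0$, the counter hits zero almost surely and the run enters a type~I region, but this is not the only case). When $t_S(1)>0$ and $R_1$ is of type~IV, from a configuration $p\vec{v}\in C[R_1,R_2]$ with $\vec{v}(1)$ large, the first counter never returns to zero with probability bounded away from $0$; in that event the run stays in $C[R_1,R_2]$ forever (if $R_2$ is closed under $\post$, or if $t_S(2)>0$ as well) and therefore never reaches $\bigcup_i\sem{\varphi_i}$. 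No computable $\ell$ can then certify condition~(iii) for $\delta$ below that probability. The intended choice here is the one carried out in the proof of Lemma~\ref{lem-diverging-trend}: for type~II and type~IV cells, take $\sem{\varphi_{R_1,R_2}}$ to be the configurations with both counters above an $\varepsilon$-dependent threshold $b_\varepsilon$ supplied by Theorem~\ref{thm-zeros}; this does satisfy condition~(ii) because leaving such a cell requires some counter to reach zero.

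Two further remarks. Your appeal to Lemma~\ref{lem-stay-in-one} to ``route out of type~IV regions'' is misplaced: that lemma is an almost-sure eventuality statement with no step bound, and---as argued above---routing out is not even almost sure when the trend is positive. Also, the assertion that condition~(ii) ``reduces to closure under $\post$'' glosses over the overlap with $B[b]$: even if $C[R_1,R_2]$ is closed under $\post$, it can intersect $B[b]$, and for any $p\vec{v}$ in the intersection one has $\calP(\run(p\vec{v}\rightarrow^* B[b]))=1$, violating condition~(ii). One must disjointify $\calE$ (e.g.\ replace $B[b]$ by $B[b]\smallsetminus\bigcup C[R_1,R_2]$, or intersect $\sem{\varphi_{R_1,R_2}}$ with the complement of $B[b]$) before the closure argument can be applied.
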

To prove Lemma~\ref{lem-general}, it suffices to realize that there is a computable
$k \in \Nset$ such that \emph{every} $p\vec{v} \in \conf(\A)$ can reach a configuration
of some $C[R_1,R_2]$ or $B[b]$ in at most $k$~transitions. 

Hence, it suffices to prove that \mbox{$B[b]$}
and all $C[R_1,R_2]$ are good for~$\emptyset$. All cases except for those where
$R_1$ and $R_2$ are of type~II or type~IV follow almost immediately. To handle 
the remaining cases, we need to develop new tools, which we present now.
We start by introducing some notation.

Given a finite path or a run $w$ in $\M_{\A}$ and $\ell\in \Nset$, where \mbox{$\ell \leq \len(w)$}, we denote by $\xs\ell_1(w)$, $\xs\ell_2(w)$, and $\ps\ell(w)$ the value of the first counter, the value of the second counter, and the control state of the configuration $w(\ell)$, respectively.
Further, $T(w)$ denotes either the least $\ell$ such that $\xs\ell_1(w)=0$, or $\infty$ if there is no such $\ell$. For every $i \in \Nset$, 
\mbox{$[p\vec{v}\rightarrow^* q\vec{u},i]$} denotes the probability of all  $w \in \run(p\vec{v})$ such that \mbox{$T(w)\geq i$}, $w(i)=q\vec{u}$, and $w(j)\not = q\vec{u}$ for all $0\leq j<i$. By $[p\vec{v}\rightarrow^* q\vec{u}]=\sum_{i=0}^{\infty} [p\vec{v}\rightarrow^* q\vec{u},i]$ we denote the probability of reaching $q\vec{u}$ from $p\vec{v}$ before time $T$. We also put 
\[
[p\vec{v}\rightarrow^* q(0,*),i] = \sum_{k=0}^{\infty} [p\vec{v}\rightarrow^* q(0,k),i]
\]
and 
\[
 [p\vec{v}\rightarrow^* q(0,*)] = \sum_{k=0}^{\infty} [p\vec{v}\rightarrow^* q(0,k)]\, .
\]
For a measurable function $X$ over the runs of $\M_\A$, we use $\E_{p\vec{v}}[X]$ to denote
the expected value of $X$ over $\run(p\vec{v})$. 

The following theorems are at the very core of our analysis, and represent
new non-trivial quantitative bounds obtained by designing and analyzing a 
suitable martingale. Proofs can be found in \cite{BKKN:pVASS-frequency-arxiv}. 
%(see Appendix~\ref{app-two-counters}).

\begin{theorem}\label{thm:tail-bounds-height}
Let $S$ be a BSCC of $\C_\A$ such that $t_S(2)<0$, and let $R$ be a type~II region of $\A_2$ determined by some state of~$S$.
%Then $E_{p(n,0)} T <\infty$. Moreover, 
Then there are rational $a_1,b_1 > 0$ and $0 < z_1 <1$ computable in polynomial space such that the following holds for all $p(0)\in R$, $n \in \Nset$, and $i\in \Nset^+$:
\[
\calP_{p(n,0)} (T < \infty \wedge 
\xs{T}_2\geq i)\quad \leq\quad a_1\cdot z_1^{b_1 \cdot i}\, .
\]
Moreover, if $\calP_{p(n,0)}(T<\infty) = 1$, then 
\[
\E_{p(n,0)}\left[\xs{T}_2\right]\quad \leq\quad  \frac{a_1\cdot z_1^{b_1}}{1-z_1^{b_1}}.
\]
In particular, \emph{none} of the bounds depends on $n$.
\end{theorem}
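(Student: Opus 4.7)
First, I would reduce the problem to the one-counter pVASS~$\A_2$. On $[0,T)$ the first counter is strictly positive in~$\A$, so the set of rules enabled at $(\ps k, \xs k_2)$ depends only on whether counter~$2$ is zero, exactly as in~$\A_2$. By Assumption~\ref{asm-simpleVASS}, successive control states determine the rule taken and hence the running value $\xs k_1 = n + \sum_{j<k}\Delta_1(\mathit{rule}_j)$; thus $T$ is a stopping time in the filtration generated by the $\A_2$-process starting at $p(0)\in R$, and the entire analysis proceeds inside $\A_2$.

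Second, I would construct an exponential martingale via Perron--Frobenius. Let $P$ be the transition matrix of $\C_\A$ restricted to~$S$ and set $A(z)_{s,t} := P(s,t)\, z^{\Delta_2(s,t)}$. This analytic family has a simple dominant eigenvalue $\lambda(z)$ with positive right eigenvector~$h_z$; at $z=1$, $\lambda(1)=1$, $h_1=\mathbf{1}$, and first-order perturbation gives $\lambda'(1) = \mu_S^{\top} A'(1)\,\mathbf{1} = t_S(2) < 0$. Hence $\lambda(z)<1$ for some rational $z > 1$ close to~$1$; $z$ and a polynomial-size rational approximation of $h:=h_z$ are computable in polynomial space via the existential theory of real closed fields applied to the characteristic polynomial of $A(z)$. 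Put $z_1:=1/z\in(0,1)$, $b_1:=1$, and consider $M_k := z^{\xs k_2}\,h(\ps k)$. On $\{\xs k_2 > 0\}$ every rule of $\A_2$ is enabled, so $\E[M_{k+1}\mid \F_k] = z^{\xs k_2}(A(z)h)(\ps k) = \lambda(z)\,M_k$, and $M$ is a strict supermartingale in the interior.

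Third, I would repair the boundary and conclude. On $\{\xs k_2 = 0\}$ the rules decrementing counter~$2$ are disabled and the conditional expectation of $M_{k+1}$ can exceed $M_k$, so I modify the process to $\tilde M_k := z^{\xs k_2}\,h(\ps k) + f(\ps k)\,\mathbf{1}_{\xs k_2 = 0}$ with a correction $f\colon S \to \R_{\geq 0}$ chosen to restore the supermartingale property. The required inequalities form a finite linear system whose solvability, with $f \geq 0$ and $f_{\max} < \infty$, follows by perturbation around $z=1$ using $t_S(2)<0$. Then $\tilde M$ is a non-negative global supermartingale, and Doob's optional stopping together with Fatou's lemma yield $\E[\tilde M_T\,\mathbf{1}_{T<\infty}] \leq \tilde M_0 \leq h_{\max}+f_{\max} =: C$, uniformly in~$n$. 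Since $\tilde M_T \geq z^i\,h_{\min}$ on $\{\xs T_2 \geq i\}$, Markov's inequality gives
\[
\calP_{p(n,0)}(T<\infty \wedge \xs T_2 \geq i) \leq \frac{C}{z^i\,h_{\min}} = a_1\,z_1^{b_1 i}
\]
with $a_1 := C/h_{\min}$; this is independent of~$n$. The ``Moreover'' clause is then a tail-summation: when $\calP_{p(n,0)}(T<\infty)=1$, $\E[\xs T_2] = \sum_{i \geq 1}\calP(\xs T_2 \geq i) \leq \sum_{i \geq 1} a_1 z_1^{b_1 i} = a_1 z_1^{b_1}/(1-z_1^{b_1})$.

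The main obstacle is constructing the boundary correction~$f$: the Perron--Frobenius eigenvector~$h_z$ alone gives only an interior supermartingale, and making $\tilde M$ global requires a delicate Perron-type inequality on the $-1$-successors of states in~$S$ that genuinely uses $t_S(2) < 0$ rather than just $\lambda(z) < 1$. Once~$f$ is in hand, the optional-stopping/Markov step is routine, and the polynomial-space certification of $z, h, f, a_1$ reduces to standard manipulation of algebraic numbers of polynomial degree via the existential theory of reals.
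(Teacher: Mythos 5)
Your reduction to $\A_2$ and the use of $T$ as an $\A_2$-stopping time are fine, and if the supermartingale $\tilde M$ you postulate existed, the optional-stopping/Markov conclusion would indeed follow uniformly in~$n$. The gap is that the boundary correction $f$ cannot be constructed, and the failure is structural, not a matter of more care.

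Take a single control state $s$ (split via auxiliary states so Assumption~\ref{asm-simpleVASS} holds) with two loops of weights $1$ and $2$ and counter-$2$ increments $+1$ and $-1$; then $t_S(2)=-1/3$ and the $\A_2$-chain is the reflected $\pm1$ walk with $\calP(+1)=1/3$. Here $h\equiv 1$ and $\lambda(z)=z/3+2/(3z)$. Your constraints on $f$ read: at level $0$, $\E[\tilde M_{k+1}]=z\le 1+f$, so $f\ge z-1$; at level $1$, $\tfrac13 z^2+\tfrac23(1+f)\le z$, so $f\le \tfrac{(z-1)(2-z)}{2}$. For every $z\in(1,2)$ these contradict, since $z-1\le\tfrac{(z-1)(2-z)}{2}$ forces $z\le 0$. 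Nor does correcting finitely many levels help: if $V\colon\Nset\to\Rset_{>0}$ satisfies $\tfrac13 V(m{+}1)+\tfrac23 V(m{-}1)\le V(m)$ for $m\ge 1$ and $V(1)\le V(0)$, then with $W(m):=V(m{+}1)-V(m)$ one gets $W(m)\le 2W(m{-}1)$ and $W(0)\le 0$, so $V$ is non-increasing, hence bounded, hence dominates no $z^m$ with $z>1$. The root cause is that the number of visits to $\{c_2=0\}$ before $T$ is \emph{not} uniformly bounded in~$n$; each visit ``refreshes'' the potential, and no exponential Lyapunov function of $(\ps k,\xs k_2)$ alone can absorb that.

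The paper sidesteps this by \emph{not} building a supermartingale in counter~$2$. It conditions on the last configuration $q(k,0)$ with $c_2=0$ visited before~$T$ and on the number $m$ of visits to that \emph{specific} $q(k,0)$, and factors via the strong Markov property into (i) $\calP(\repconf{q(k,0)}\ge m)$, bounded via Azuma applied to the additive martingale $\ms{\ell}=\xs{\ell}_1-\tau_R\ell+\vec{g}(\xs{\ell}_2)[\ps{\ell}]$ that tracks \emph{counter~1} (Lemma~\ref{lem:return-bound}), and (ii) a one-counter excursion bound from \cite{BKK:pOC-time-LTL-martingale-JACM} for the event that counter~$2$ climbs to at least $i$ without returning to zero. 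Because counter~$1$ drifts with $\tau_R\neq 0$, revisits to a \emph{fixed} $(q,k,0)$ are geometrically rare even though revisits to $\{c_2=0\}$ in aggregate are not; your construction projects away counter~$1$ entirely and so cannot exploit this, which is precisely why it fails.
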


%\begin{theorem}\label{thm:tail-bounds-height}
%	Let $S$ be a BSCC of $\C_\A$ such that $t_S(2)<0$, and let $R$ be a type~II 
%	region of $\A_2$.
%	%Then $E_{p(n,0)} T <\infty$. Moreover, 
%	Then there are $a_1,b_1,c_1\in \Qset$, where $a_1,b_1 > 0$ and $0 < c_1 <1$, computable in polynomial space, such that the following holds for all $p\in S$ such that $p(0)\in R$, all $n\in \Nset$, and all $i\in \Nset^+$:
%	\[
%	\calP_{p(n,0)} (T < \infty \wedge 
%	\xs{T}_2\geq i)\quad \leq\quad a_1\cdot c_1^{b_1 \cdot i}
%	\]
%	Moreover, if $\E_{p(n,0)}[T]<\infty$, then 
%	\[
%	\E_{p(n,0)}\left[\xs{T}_2\right]\quad \leq\quad  \frac{a_1\cdot c_1^{b_1}}{1-c_1^{b_1}}.
%	\]
%	In particular, {\em neither} of the bounds depends on $n$.
%\end{theorem}

%\begin{theorem}\label{thm:tail-bounds-divergencerj}
%\begin{theorem}\label{thm:tail-bounds-divergence}
%	Let $S$ be any BSCC of $\C_\A$ such that $t_S(2)<0$, and let $R$ be a type~II region of $\A_2$ such that $\tau_R>0$. Then there are 
%	numbers $a_2,b_2,d_2>0$ and $0<c_2<1$ computable in polynomial space such that for all configurations \mbox{$p(n,0) \in R$} we have that
%	\[
%	[p(n,0)\rightarrow^* q(0,*)]\quad \leq \quad a_2\cdot c_2^{n \cdot b_2}.
%	\]
%\end{theorem}

\begin{theorem}\label{thm:tail-bounds-divergence}
Let $S$ be a BSCC of $\C_\A$ such that $t_S(2)<0$, and let $R$ be a type~II region determined by 
some state of $S$ such that $\tau_R>0$. Then there are rational  
$a_2,b_2>0$ and $0<z_2<1$ computable in polynomial space such that for all configurations $p(n,0)$, where $p(0) \in R$, and all $q\in Q$, the following holds:
\[
[p(n,0)\rightarrow^* q(0,*)]\quad \leq \quad n\cdot a_2\cdot z_2^{n \cdot b_2}
\]
\end{theorem}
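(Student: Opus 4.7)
The plan is to bound the probability that the first counter, starting at $n$ with the second counter at zero, ever returns to zero, by viewing the dynamics through the lens of $\A_2$ restricted to the type~II region $R$. In $\A_2$, the second counter of $\A$ is the ``preserved'' counter and the labeling $L_2$ records the changes to the first counter. Because $t_S(2) < 0$ and $R$ is a type~II region, the chain on $R$ is positive recurrent to $p(0)$ (by the results on one-counter pVASS from \cite{BKK:pOC-time-LTL-martingale-JACM} cited earlier in this section), so it has a natural regenerative structure at successive returns to $p(0)$. Between two consecutive returns, the first counter changes by a random increment $Y$ whose distribution is fixed, with $\E[Y] = \tau_R \cdot \E[\text{cycle length}] > 0$ since $\tau_R > 0$.

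The core of the proof is to exhibit an exponential supermartingale. First I would show, using Theorem~\ref{thm:tail-bounds-height} and positive recurrence, that the cycle length and the increment $Y$ both have exponentially decaying tails (with rates that are effective from the data of $R$). Standard Cram\'er theory then yields a computable $\theta > 0$ such that $\E[e^{-\theta Y}] < 1$, and more generally that $M_t := \rho^{x_1^t}\cdot h(w(t))$ is a supermartingale on the whole process in $R$, where $\rho = e^{-\theta} \in (0,1)$ and $h$ is a bounded positive ``correction'' function on configurations of $\A_2$ that compensates for the within-cycle excursions of the first counter. The boundedness of $h$ away from $0$ and $\infty$ is guaranteed again by the exponential tail bound on excursions.

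Given the supermartingale, I would apply the optional stopping theorem at $T$, the first time the first counter hits zero. Since $x_1^T = 0$ on $\{T<\infty\}$, we get
\[
h_{\min}\cdot \calP_{p(n,0)}(T<\infty) \;\leq\; \E[M_T\,\mathbf{1}_{T<\infty}] \;\leq\; M_0 \;=\; \rho^n\cdot h(p(0)),
\]
giving $\calP_{p(n,0)}(T<\infty) \leq C\rho^n$ for a computable $C$. Refining the sum over possible exit states $q(0,*)$ and over the times of exit (which are at least $n$ since the first counter moves by $\pm 1$) produces the announced $n\cdot a_2\cdot z_2^{n\cdot b_2}$ with only a polynomial overhead. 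Polynomial-space computability of $a_2,b_2,z_2$ follows because all quantities involved (cycle-length moment generating function, $\tau_R$, the optimal $\theta$) are definable by formulas of Tarski algebra of fixed alternation depth over the transition probabilities and structural data of $R$, exactly as for $\tau_R$ in the discussion preceding Definition~\ref{def-stable}.

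The hardest step will be the construction of the correction function $h$ and the supermartingale property on the \emph{whole} region $R$, including configurations with the second counter at zero, where the dynamics and the enabled rules of $\A$ can differ from the stationary ``bulk'' behaviour. The key is that under $t_S(2)<0$, visits to the second-counter-zero boundary have controlled frequency, so the contribution of these boundary steps to the tilted transition operator is bounded and can be absorbed into $h$ by a fixed-point (Perron-type) argument on the quotient of $R$ by the regeneration times; Theorem~\ref{thm:tail-bounds-height} provides precisely the exponential tail on the second-counter height needed to make this fixed-point argument converge and to make the resulting $h$ effectively bounded.
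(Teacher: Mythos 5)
Your approach is genuinely different from the paper's. The paper proves this theorem by applying Azuma's inequality to the \emph{additive} martingale $\ms{\ell} = \xs\ell_1 - \tau_R\ell + \vec{g}(\xs\ell_2)[\ps\ell]$ (which was shown in Lemma~\ref{lem:bounded-differences} to have bounded differences), obtaining via Lemma~\ref{lem:crucial-bound} and Corollary~\ref{cor:generic-time-bound} a bound on each $[p(n,0)\rightarrow^* q(0,*),i]$, and then summing a geometric series over $i\geq n$. You instead propose a \emph{multiplicative} (exponential) supermartingale $M_t = \rho^{\xs{t}_1} h(w(t))$ together with the optional stopping theorem, which, if it worked cleanly, would yield the hitting-probability bound directly without time-slicing. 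Both are standard tools for establishing geometric decay of hitting probabilities in drift chains, and it is plausible that your route could be made to work.

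However, there is a genuine gap at exactly the spot you flag as hardest. You assert that the correction function $h$ can be taken \emph{bounded away from~$0$ and~$\infty$} on all of~$R$. This is very unlikely to hold for the natural (Perron-eigenfunction) construction: since the dynamics along the $x_2$-direction have nonzero drift, the tilted transition operator restricted to a fixed control state is itself a drifting one-counter kernel, and positive harmonic/superharmonic eigenfunctions of such kernels typically behave like $c_q\lambda^{m}$ in the counter value $m=\xs{}_2$ for some $\lambda\neq 1$, hence are \emph{not} bounded away from~$0$ (or from~$\infty$) as $m\to\infty$. The paper's own compensator $\vec{g}$ is the additive analogue and it is not bounded either — only its increments are (Lemma~\ref{lem:bounded-differences}). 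So your optional-stopping step, which relies on $h_{\min}>0$ uniformly over exit configurations $q(0,m)$, does not go through as written. The argument can very likely be repaired by combining the supermartingale with the exit-height control of Theorem~\ref{thm:tail-bounds-height} (splitting on ``$\xs{T}_2$ small'' versus ``$\xs{T}_2$ large'' and paying an exponentially small extra term for the latter), but this is a separate argument and must be made explicit; it is not a consequence of $h$ being bounded, as claimed. Until the construction of $h$ — including what it actually does along the $x_2$-axis and at the $x_2=0$ boundary — and the repaired optional-stopping step are carried out, the proposal has a real hole. The polynomial-space computability claim also inherits this gap, since it presupposes that the (still unconstructed) eigenfunction and tilt parameter are definable by a fixed-size Tarski formula, which would need to be checked once the construction is pinned down.
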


%Now for a region $R$ let $T_R$ be a random variable denoting the first point in time when either the first counter is zeroed or a configuration $q(k,\ell)$ such that $q(\ell)\not\in R$ is encountered. Note that $T_R(w)=T(w)$ whenever $R$ has type II and the initial configuration $p(m,n)$ of $w$ satisfies $p(n)\in R$.

\begin{theorem}\label{thm:tail-bounds-convergence}
Let $R$ be a type~II region of $\A_2$ such that $\tau_R<0$. Then there are rational
$a_3,b_3,d_3>0$ and $0<z_3<1$ computable in polynomial space such that for all configurations 
$p(n,0)$, where $p(0) \in R$, and all $q\in Q$, the following holds for all
$i\geq \frac{H\cdot n}{-{\tau_R}}$, where $H$ is computable in polynomial space:
\[
[p(n,0)\rightarrow^* q(0,*),i]\quad \leq \quad i\cdot a_3\cdot 
    z_3^{\sqrt{n\cdot\tau_R\cdot b_3 + i\cdot d_3}}\, .
\]
\end{theorem}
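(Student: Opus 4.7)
The plan is to exploit the strict negative mean payoff $\tau_R<0$ of the first counter along trajectories of $\A_2$ inside $R$, by an exponential martingale concentration argument in the same spirit as the proofs of Theorems~\ref{thm:tail-bounds-height} and~\ref{thm:tail-bounds-divergence}. First I would reduce the event to $\A_2$: for every $k\le T(w)$ the first counter stays strictly positive, so the rules available at $w(k)$ in $\A$ coincide with those of $\A_2$ at the corresponding configuration, and
\[
\xs{k}_1(w) \;=\; n + \sum_{j=0}^{k-1} L_2(\mathit{rule}(w,j)).
\]
Thus the event under study is measurable with respect to the trajectory of $\A_2$ started at $p(0)\in R$, and $\tau_R$ is by definition the asymptotic slope of this partial sum along $\A_2$-typical paths.

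Next I would construct a potential function $g:R\to\Rset$ by solving a Poisson equation on the Markov chain induced by $\A_2$ on $R$, so that
\[
M^{(k)} \;=\; \xs{k}_1(w) - k\tau_R + g(w(k))
\]
is a martingale. When $t_S(2)<0$ the region $R$ is ergodic with exponentially decaying return tails by Theorem~\ref{thm:tail-bounds-height}, so $g$ is representable as a convergent sum of excursion contributions; when $t_S(2)\ge 0$ one has $\tau_R=t_S(1)$ and $g$ may be chosen to depend only on the control state in the finite BSCC $S$ of $\C_\A$.

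With $M$ in hand I would apply an exponential moment inequality. The event $\{T\ge i,\ w(i)\in q(0,k)\text{ for some }k\}$ forces $\xs{i}_1\ge 0$, hence $M^{(i)}-M^{(0)} \;\ge\; i|\tau_R|-n-O(1)$; since $M$ has zero drift and $i\ge Hn/|\tau_R|$ (with $H$ chosen large enough to absorb the fluctuation of $g$), this is a linear-in-$i$ positive deviation. A Bernstein/Freedman-type estimate on $M$ then yields a factor of the form $z_3^{\sqrt{id_3 - n|\tau_R|b_3}}$, the square root emerging because the exponential moments of the excursion-level increments of $M$ are controlled only up to a sub-Gaussian parameter that itself grows like $\sqrt{\cdot}$. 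The extra factor of $i$ in the conclusion then comes from summing over the at most $i+1$ possible second-counter values $k\in\{0,\ldots,i\}$ at the hitting time, since the second counter starts at $0$ and changes by at most one per step.

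The hard part will be keeping the potential $g$ and the martingale increments under control when the second counter is large, since $R$ is infinite-state and a naive Poisson solution need not be uniformly bounded. My planned remedy is an excursion decomposition with respect to a distinguished recurrent set (for instance, configurations of small second counter) combined with the exponential-height tails of Theorem~\ref{thm:tail-bounds-height}, which allows bounding moment generating functions of excursion-level contributions to $M$ and then reassembling them into the stated tail estimate on $[p(n,0)\rightarrow^* q(0,*),i]$. This reassembly, and the corresponding choice of the constants $a_3,b_3,d_3,z_3,H$ so that everything remains computable in polynomial space, is in my estimation the technically most demanding step of the proof.
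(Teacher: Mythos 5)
Your approach for the case $t_S(2)<0$ is essentially the paper's: the martingale $M^{(k)}=\xs{k}_1 - k\tau_R + g(w(k))$ you describe is the process $\ms{\ell}$ of~\eqref{eq-mart-m}, and the Azuma-type concentration is the content of Lemma~\ref{lem:crucial-bound}. Two remarks, though. First, the key difficulty is not that $g$ is unbounded — the paper's $\vec{g}$ \emph{is} unbounded, growing linearly in $\xs{\ell}_2$ (Lemma~\ref{lem:bounded-weight}); what matters is that $\ms{\ell}$ has \emph{bounded differences}, and establishing this (Lemma~\ref{lem:bounded-differences}) requires a non-trivial coupling argument for geometric sums of stochastic matrices (Lemma~\ref{lem-mart-coupling}). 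Your excursion-decomposition sketch of controlling $g$ does not obviously yield this. Second, in this subcase the paper actually obtains a bound that is exponential in $i$, not merely in $\sqrt{i}$ (see the end of Appendix~B, proof for $t_S(2)<0$). Your explanation of the square root via a Bernstein/Freedman estimate with a sub-Gaussian parameter growing like $\sqrt{\cdot}$ is therefore a red herring: with deviations and variance both $\Theta(i)$, any Bernstein-type bound gives $\exp(-\Theta(i))$, not $\exp(-\Theta(\sqrt{i}))$.

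The genuine gap is the case $t_S(2)>0$, which is where the $\sqrt{\cdot}$ in the theorem statement actually comes from. Your proposal to pick $g$ depending only on the control state (so that $M$ is the one-counter martingale for $\A_1$) does \emph{not} produce a martingale: $M$ has zero drift only while the second counter is strictly positive, because at configurations with $\xs{\ell}_2=0$ the enabled rules change and the conditional drift of $\xs{\ell}_1$ need not be $\tau_R=t_S(1)$. Since the run starts at $p(n,0)$, the initial phase near $c_2=0$ must be handled, and this is precisely what introduces the sub-exponential rate. The paper handles this with a two-stage argument: Lemma~\ref{lem:divergence-bound} shows that with probability $1-a_5 c_5^{b_5\sqrt{i}}$ the second counter escapes to level $\Theta(i)$ within $i/2$ steps while returning to level $1$ at most $O(\sqrt{i})$ times (the bound on the number of returns is where $\sqrt{i}$ enters, via the divergence gap theorem and Azuma on the one-counter martingale for $\A_2$), and Lemma~\ref{lem:crucial-bound2} then gives exponential time-to-zero bounds for $c_1$ once $c_2$ is large, using one-counter tail bounds from~\cite{BKK:pOC-time-LTL-martingale-JACM}. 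Your sketch, which says the "hard part" is controlling $g$ when the second counter is \emph{large}, has the concern inverted: in this subcase the problem is entirely in the regime where $c_2$ is \emph{small}, and your single-martingale plan does not account for it.
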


%\begin{theorem}\label{thm:tail-bounds-convergence}
%	Let $R$ be a type~II region of $\A_2$ such that $\tau_R<0$. Then there are
%	numbers $a_3,b_3,d_3>0$ and $0<c_3<1$, computable in polynomial space, such that for all configurations $p(n,0)$ where $p(0) \in R$, all $q\in Q$, and all 
%	$i\geq \frac{2n}{-{\tau_R}}$ we have that
%	\[
%	[p(n,0)\rightarrow^* q(0,*),i]\quad \leq \quad a_3\cdot c_3^{n\cdot\tau_R\cdot b_3 + i\cdot d_3}
%	\]
%\end{theorem}

The above theorems are use to prove that certain configurations are 
\emph{eagerly attracted} by certain sets of configurations in the following sense:

%Another useful tool is borrowed from~\cite{AHMS:Eager-limit}. We say that
%$p\vec{v}$ is \emph{eagerly attracted} by a set $C \subseteq \conf(\A)$ if
%$\calP(p\vec{v} \tran{}^* C) = 1$ and there are computable constants 
%$a,b \in \Qset$, where $a > 0$ and $0<b<1$,
%such that for  every $\ell \in \Nset$, the probability of visiting $C$ from 
%$q\vec{u}$ in at most $\ell$ transitions is at least  $1 - a\cdot b^\ell$.

\begin{definition}
\label{def-attractor}
   Let $C,D \subseteq \conf(\A)$. We say that 
   $p\vec{v} \in C$ is \emph{eagerly attracted} by 
   $D$ if $\calP(\run(p\vec{v} \tran{}^* D)) = 1$ and there are computable constants 
   $a,z \in \Qset$, $\ell \in \Nset$, and $k\in \Nset^+$ (possibly dependent on $p\vec{v}$), where $a > 0$ and $0<z<1$,
   such that for every $\ell' \geq \ell$, the probability of visiting $D$ from 
   $p\vec{v}$ in at most $\ell'$ transitions is at least  $1 - a\cdot z^{\sqrt[k]{\ell'}}$.
   Further, we say that $C$ is eagerly attracted by $D$ if all configurations of $C$
   are eagerly attracted by $D$, and  $D$ is a \emph{finite eager attractor} if $D$
   is finite and $\post(D)$ is eagerly attracted by~$D$.
\end{definition}

Markov chains with finite eager attractors were studied in~\cite{AHMS:Eager-limit}.
The only subtle difference is that in \cite{AHMS:Eager-limit}, the probability of revisiting the attractor in at most $\ell$ transitions is at least $1 - z^\ell$. However,
all arguments of \cite{AHMS:Eager-limit} are valid also for the sub-exponential bound $1 - a\cdot z^{\sqrt[k]{\ell'}}$
adopted in Definition~\ref{def-attractor} (note that some quantitative bounds given in~\cite{AHMS:Eager-limit}, such as the bound on $K$ in Lemma~5.1 of~\cite{AHMS:Eager-limit}, need to be slightly adjusted to accommodate the sub-exponential bound). In \cite{AHMS:Eager-limit}, it was shown that
various limit properties of Markov chains with finite eager attractors can be effectively
approximated up to an arbitrarily small absolute error $\varepsilon > 0$.
A direct consequence of these results is the following:

\begin{proposition}
\label{prop-attractor}
	Let $D \subseteq \conf(\A)$ be a finite eager attractor. Then $D$ is good 
	for~$\emptyset$.
\end{proposition}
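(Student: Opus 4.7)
The plan is to reduce directly to the framework of~\cite{AHMS:Eager-limit}. The first condition of Definition~\ref{def-good} is vacuous when the dangerous set is $\emptyset$, since $\run(p\vec{v}\rightarrow^*\emptyset)=\emptyset$ and any $\delta\in(0,1]$ works. So it suffices to exhibit the computable list $H_1,\ldots,H_n$ of candidate pattern-frequency vectors and, for each $p\vec{v}\in D$, the computable approximate probabilities $P_{p\vec{v},1},\ldots,P_{p\vec{v},n}$.

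First I would pass to the standard \emph{first-return chain} $\widehat{\M}$ on the finite set~$D$: from $d\in D$, the transition probability to $d'\in D$ is the probability that a trajectory of $\M_\A$ starting at~$d$ first re-enters $D$ (after at least one step) precisely at~$d'$. The sub-exponential bound of Definition~\ref{def-attractor} guarantees that return times to $D$ have finite, and effectively bounded, moments of every order, and that each entry of the transition matrix of $\widehat{\M}$ can be approximated to any prescribed absolute error by truncating excursions at a sufficiently large horizon~$\ell$. Since $D$ is finite, I would then compute the BSCCs $B_1,\ldots,B_n$ of $\widehat{\M}$; the number $n$ is thereby computable.

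For each BSCC $B_j$, almost every run of $\M_\A$ that enters $\post(B_j)$ returns to $B_j$ infinitely often, so by a standard regenerative argument (exactly as in~\cite{AHMS:Eager-limit}) the pattern frequency vector is almost surely constant on this event and equals
\[
F_j(q\alpha)\;=\;\frac{\sum_{d\in B_j}\widehat{\mu}_{B_j}(d)\cdot\E_d[\text{steps in $q\alpha$ before return to }D]}{\sum_{d\in B_j}\widehat{\mu}_{B_j}(d)\cdot\E_d[\text{return time to }D]},
\]
where $\widehat{\mu}_{B_j}$ is the invariant distribution of $\widehat{\M}$ restricted to~$B_j$. Both the numerator and the denominator depend only on a finite Markov chain plus truncatable tail sums, so each $F_j(q\alpha)$ admits a rational approximation $H_j(q\alpha)$ up to absolute error~$\varepsilon$; this produces the $H_j$. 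Finally, for each $p\vec{v}\in D$ the probability $P_{p\vec{v},j}$ of eventually entering $B_j$ is the standard absorption probability in $\widehat{\M}$, obtained by solving the finite linear system associated to the approximated transition matrix.

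The only obstacle I foresee is bookkeeping: I need to verify that the approximation schemes of~\cite{AHMS:Eager-limit}, which are stated for a geometric attractor tail $1-z^\ell$, still go through with the weaker sub-exponential tail $1-a\cdot z^{\sqrt[k]{\ell}}$ adopted in Definition~\ref{def-attractor}. As the remark preceding the proposition indicates, this is a matter of reworking the quantitative constants (for instance, the bound on~$K$ in Lemma~5.1 of~\cite{AHMS:Eager-limit}) so that the truncation error at horizon~$\ell$ still decays fast enough to sum to~$\varepsilon$; no essentially new technique is required, and the overall architecture of the approximation scheme is preserved.
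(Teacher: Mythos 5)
Your proposal is correct and matches the paper's intent exactly: the paper gives no explicit proof, stating only that the proposition is ``a direct consequence'' of the results of~\cite{AHMS:Eager-limit} for Markov chains with finite eager attractors, and in the remark immediately preceding the proposition it raises precisely the same caveat you identify about adapting the constants in~\cite{AHMS:Eager-limit} from the geometric tail $1-z^\ell$ to the sub-exponential tail of Definition~\ref{def-attractor}. Your reconstruction via the first-return chain on $D$, its BSCCs, the regenerative (renewal-reward) formula for the pattern frequency vectors, and truncation-based approximation of both the excursion rewards and the absorption probabilities is the standard argument that the citation points to, so no further comparison is needed.
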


\noindent
Let us also formulate one simple consequence of Theorem~\ref{thm:tail-bounds-divergence}.

\begin{corollary}
	\label{cor-trend-positive}
	For every BSCC $S$ of $\C_\A$ we have  the following:
	\begin{itemize}
		\item If $t_S$ is negative in some component, then every configuration $p\vec{v}$ where
		$p \in S$ is eagerly attracted by~$Z_S$.
		\item If both components of $t_S$ are positive, then for every $\varepsilon >0$ there
		is a computable $b_\varepsilon$ such that for every configuration $p\vec{v}$ where
		$p \in S$ and $\vec{v} \geq (b_\varepsilon,b_\varepsilon)$ we have that
		$\calP(\run(p\vec{v} \tran{}^* Z_S)) < \varepsilon$. 
	\end{itemize}
\end{corollary}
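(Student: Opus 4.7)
The plan is to address the two items separately, in each case projecting the two-dimensional dynamics onto one coordinate and applying a drift argument of the same flavour as those underlying Theorems~\ref{thm:tail-bounds-height}--\ref{thm:tail-bounds-divergence}. The key observation I would use in both items is that the first hit of $Z_S$ occurs at the stopping time $T^Z := \min(T_1,T_2)$, where $T_i$ is the first time counter $i$ reaches $0$; and strictly before $T^Z$ both counters are positive, so every rule of $\gamma$ is enabled and the control state evolves exactly as in the finite BSCC chain on $S$, for which the per-step mean change of counter $i$ under the invariant distribution $\mu_S$ is $t_S(i)$ by definition.

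For item~(1), I would assume without loss of generality that $t_S(1)<0$. On the event $\{t<T^Z\}$, the process $\xs{t}_1$ is a walk with bounded increments and mean drift $t_S(1)<0$. I would then take a centred and corrected martingale built from $\xs{t}_1$, exactly as in the proof of Theorem~\ref{thm:tail-bounds-convergence} but for a negative-drift coordinate whose dynamics agree with $\C_\A$ up to $T^Z$. The resulting tail inequality gives $\calP(T_1>\ell)\leq a\cdot z^{\sqrt{\ell}}$ for all $\ell\geq \ell_0$, where $a>0$, $z\in(0,1)$ and $\ell_0\in\Nset$ are computable from $\A$ and $\vec{v}$. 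Since $T^Z\leq T_1$, this establishes the eager attraction of $p\vec{v}$ to $Z_S$ in the sense of Definition~\ref{def-attractor}.

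For item~(2), with $t_S(1),t_S(2)>0$, I would run a positive-drift argument coordinate-wise. For each $i\in\{1,2\}$, up to $T^Z$ the projection onto counter $i$ is a bounded-increment walk with strictly positive mean drift $t_S(i)>0$. Using an exponential supermartingale of the form $\rho_i^{\xs{t}_i}$, with $\rho_i\in(0,1)$ chosen so that $\rho_i$ dominates the generating function of the one-step increment of counter $i$ at every state of $S$ — this is the same kind of martingale as underlies Theorem~\ref{thm:tail-bounds-divergence} — I would obtain $\calP(T_i<\infty\mid p\vec{v})\leq c_i\cdot\rho_i^{\vec{v}_i}$ for some effectively computable $c_i>0$. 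A union bound then yields $\calP(\run(p\vec{v}\tran{}^* Z_S))\leq c_1\rho_1^{\vec{v}_1}+c_2\rho_2^{\vec{v}_2}$, which can be forced below any prescribed $\varepsilon>0$ by choosing $b_\varepsilon := \max_{i\in\{1,2\}}\lceil \log(\varepsilon/(2c_i))/\log\rho_i\rceil$, a quantity effectively computable from $\A$ and $\varepsilon$.

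The main obstacle will be the concrete construction of these one-dimensional martingales — in particular, designing the additive corrector that absorbs the finite initial transient of the BSCC chain on $S$ and verifying the required bounded-increment and generating-function estimates. However, this is precisely what the martingale framework developed for Theorems~\ref{thm:tail-bounds-height}--\ref{thm:tail-bounds-divergence} already supplies; instantiating it with the appropriate sign of $t_S(i)$ and, where relevant, swapping the roles of the two counters should deliver the corollary without any genuinely new ingredient.
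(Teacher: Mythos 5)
Your proposal is essentially the paper's own route: before the first visit to $Z_S$ all rules are enabled, so the run projects onto either one-counter pVASS $\A_i$, and the one-counter drift/martingale results from \cite{BKK:pOC-time-LTL-martingale-JACM} --- which the paper invokes for this exact purpose in the guise of Theorem~\ref{thm-zeros} --- already supply the subexponential hitting-time tail for item (1) and the exponential ruin-probability bound for item (2).

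One technical step in your item (2) would fail as literally written: the plain process $\rho_i^{\xs{t}_i}$ is not a supermartingale state-by-state, because in a general BSCC $S$ there may be individual states $s$ for which the expected one-step change of counter $i$ is negative (only the $\mu_S$-weighted average $t_S(i)$ is assumed positive), and for such an $s$ one has $\E\bigl[\rho_i^{\xs{t+1}_i-\xs{t}_i}\mid \ps{t}=s\bigr]>1$ for every $\rho_i\in(0,1)$, so no single $\rho_i$ can ``dominate the generating function of the one-step increment at every state of $S$.'' The standard repair is a state-dependent corrector --- multiplicatively $\vec{h}(\ps{t})\cdot\rho_i^{\xs{t}_i}$, or equivalently the linear/Azuma form $\xs{t}_i - t_S(i)\cdot t + \vec{z}(\ps{t})$ that the cited one-counter analysis actually uses --- and since you do flag the need for such a corrector at the end of your sketch, this is an imprecision rather than a conceptual gap; once fixed, your union bound over $i\in\{1,2\}$ and the choice of $b_\varepsilon$ are exactly right.
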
  

The following theorem follows from the results about one-counter pVASS presented 
in \cite{BKK:pOC-time-LTL-martingale-JACM}.
% (an alternative proof can be given
%using the results presented in Appendix~\ref{app-two-counters}). 
\begin{theorem}
	\label{thm-zeros}
	For every BSCC $S$ of $\C_\A$ we have  the following:
	\begin{itemize}
		\item If $t_S$ is negative in some component, then every configuration $p\vec{v}$ where
		$p \in S$ is eagerly attracted by~$Z_S$.
		\item If both components of $t_S$ are positive, then for every $\varepsilon >0$ there
		is a computable $b_\varepsilon$ such that for every configuration $p\vec{v}$ where
		$p \in S$ and $\vec{v} \geq (b_\varepsilon,b_\varepsilon)$ we have that
		$\calP(\run(p\vec{v} \tran{}^* Z_S)) \leq \varepsilon$. 
	\end{itemize}
\end{theorem}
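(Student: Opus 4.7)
The plan is to reduce both claims to the corresponding one-counter statements for $\A_1$ and $\A_2$ via a coupling argument. Recall that $\A_i$ is obtained from $\A$ by keeping the rules and weights identical while remembering only the $i$-th counter; its underlying finite Markov chain $\C_{\A_i}$ coincides with $\C_\A$, and the projection of $t_S$ onto the $i$-th coordinate equals the trend of $S$ in $\A_i$.

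The key observation is the following coupling. Run $\M_\A$ from $p\vec{v}$, and in parallel run $\M_{\A_1}$ from $p(\vec{v}(1))$ and $\M_{\A_2}$ from $p(\vec{v}(2))$ using the same sequence of rule choices at each visited control state. Let $T_i$ be the first time counter $i$ becomes zero in $\M_\A$, and let $T^{(i)}$ be the first time the counter becomes zero in $\M_{\A_i}$. As long as both counters are positive in $\M_\A$, every rule of $\gamma$ is enabled there, and the same rule is enabled (and selected) in each $\M_{\A_i}$; hence the value of counter $i$ in $\A$ agrees with the value of the counter in $\A_i$ up to time $\min(T_1,T_2)$. In particular, on the event $\{T_1 \leq T_2\}$ we have $T^{(1)} = T_1$, and on $\{T_2 \leq T_1\}$ we have $T^{(2)} = T_2$, so
\[
\{\min(T_1,T_2) < \infty\} \;\subseteq\; \{T^{(1)} < \infty\} \cup \{T^{(2)} < \infty\},
\]
and $\min(T_1,T_2) \leq \min(T^{(1)},T^{(2)})$ in the sense that whichever of $T^{(1)},T^{(2)}$ equals $\min(T_1,T_2)$ dominates $\min(T_1,T_2)$.

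For the first claim, suppose $t_S(i) < 0$. Then in $\A_i$ the BSCC containing $p$ has negative trend, so by the martingale tail bounds for one-counter pVASS in \cite{BKK:pOC-time-LTL-martingale-JACM}, starting from $p(\vec{v}(i))$ the hitting time $T^{(i)}$ is almost surely finite and admits an exponential (hence, sub-exponential) tail bound of the form $\calP(T^{(i)} > \ell) \leq a z^{\sqrt{\ell}}$ for computable constants $a,z$. The coupling gives $\min(T_1,T_2) \leq T^{(i)}$ on the event $\{T^{(i)} < \infty\}$, so the same bound holds for the first time $\A$ enters $Z_S$. This furnishes the constants $a,z,\ell,k$ required by Definition~\ref{def-attractor} and shows that $p\vec{v}$ is eagerly attracted by $Z_S$.

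For the second claim, suppose $t_S(1),t_S(2) > 0$. Again by \cite{BKK:pOC-time-LTL-martingale-JACM}, for $\A_i$ the positive-trend one-counter result yields, for every $\varepsilon>0$, a computable $b_\varepsilon^{(i)}$ such that $\calP(T^{(i)} < \infty) \leq \varepsilon/2$ whenever the initial counter value is at least $b_\varepsilon^{(i)}$. Setting $b_\varepsilon = \max\{b_{\varepsilon}^{(1)}, b_{\varepsilon}^{(2)}\}$ and applying the containment of events from the coupling together with a union bound yields
\[
\calP(\run(p\vec{v} \tran{}^* Z_S)) \;=\; \calP(\min(T_1,T_2) < \infty) \;\leq\; \varepsilon/2 + \varepsilon/2 \;=\; \varepsilon,
\]
as required. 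The main obstacle is the book-keeping in the coupling argument: one must verify that the random choices in $\A$ and $\A_i$ agree exactly up to time $\min(T_1,T_2)$, and that the invoked quantitative one-counter bounds from \cite{BKK:pOC-time-LTL-martingale-JACM} apply uniformly in $\vec{v}$ and have the right computable form; both are essentially routine once the coupling is set up, so the technical heart of the result lies in the cited one-counter martingale analysis.
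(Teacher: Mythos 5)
Your proof is correct and matches the paper's intent: the paper does not spell out a proof of Theorem~\ref{thm-zeros}, stating only that it follows from the one-counter results of \cite{BKK:pOC-time-LTL-martingale-JACM}, and the coupling/projection onto $\A_1$ and $\A_2$ that you describe is the natural way to make that reduction precise (the key facts being that $\C_{\A_i}=\C_\A$, that the trend of $S$ in $\A_i$ is $t_S(i)$, that the coupled chains agree up to $\min(T_1,T_2)$ since all rules remain enabled in both while both counters are positive, and hence $\min(T_1,T_2)\le T^{(i)}$). One small remark: your parenthetical ``$\min(T_1,T_2)\leq\min(T^{(1)},T^{(2)})$ in the sense that\dots'' is circularly phrased and not actually what you use; the inequality you need and do use is simply $\min(T_1,T_2)\le T^{(i)}$ for the chosen $i$, and the event containment $\{\min(T_1,T_2)<\infty\}\subseteq\{T^{(1)}<\infty\}\cup\{T^{(2)}<\infty\}$ for the union bound, both of which you justify correctly.
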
  
%\noindent
%Also observe the following:
%\begin{lemma}
%\label{lem-attract}
%   Let $C,D,E \subseteq \conf(\A)$ such that $C$ is eagerly attracted 
%   by $D \cup E$. We have the following:
%   \begin{itemize}
%   	 \item If $D$ is eagerly attracted by $E$, then $C$ is eagerly attracted
%   	 by~$E$. 
%   	 \item If there is a computable $k \in \Nset$ such that every configuration
%   	 of $D$ can reach a configuration of $E$ in at most $k$ transitions and
%   	 $\post(D)$ is attracted by $D \cup E$, then $C$ is attracted by~$C$. 
%   \end{itemize}
%\end{lemma}

\noindent
In the next lemmata, we reduce the study of pattern frequencies
for certain runs in $\M_\A$ to the study of pattern frequencies for runs in
one-counter pVASS (i.e., to the results of Section~\ref{sec-one-counter}). 
This is possible because in each of these cases,
one of the counters is either bounded or irrelevant. Proofs of the following lemmata are
straightforward.

\begin{lemma}
\label{lem-bounded-counter}
   For every $b \in \Nset$, the set \mbox{$B[b]$} is good for $\emptyset$.
\end{lemma}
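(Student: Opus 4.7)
The plan is to reduce the analysis of runs starting in $B[b]$ to the one-counter result of Section~\ref{sec-one-counter}. Since $D=\emptyset$, the event $\run(p\vec{v}\rightarrow^* D)$ is empty for every $p\vec{v}\in B[b]$, so the first clause of Definition~\ref{def-good} is trivially satisfied (take $\delta=1$). Moreover $B[b]$ is forward-closed by its very definition, so $\post(B[b])\subseteq B[b]$ and $\run(p\vec{v}\not\rightarrow^* \emptyset)=\run(p\vec{v})$. It therefore suffices to produce, for every $\varepsilon>0$, computable $H_1,\ldots,H_n:\Pat_\A\to\Qset$ and, for every $p\vec{v}\in B[b]$, computable rational $P_{p\vec{v},i}$ approximating the pattern frequencies of $\run(p\vec{v})$ up to absolute error $\varepsilon$.

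The key observation is that every configuration reachable from $B[b]$ has at least one counter bounded by $b$, so the bounded counter can be encoded into the control state and the remaining one treated as the sole counter. Concretely, I would construct a one-counter pVASS $\A^B$ with control states $Q\times\{0,\ldots,b\}\times\{1,2\}$ (plus finitely many auxiliary states described below). A configuration $((q,i,j),n)$ of $\A^B$ encodes $q(i,n)$ of $\A$ when $j=1$ and $q(n,i)$ when $j=2$; the ambiguity on the finite set where both counters are $\le b$ is resolved by fixing a canonical representation. Each rule $p\ltran{(\kappa_1,\kappa_2),\ell}q$ of $\A$ whose firing keeps the encoded small counter in $\{0,\ldots,b\}$ translates directly into a single rule of $\A^B$ with the same weight, so conditional transition probabilities are preserved.

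The delicate case, and the main obstacle, is a \emph{switching step} in which the encoded small counter would exceed $b$: this can only happen at a border encoding (the small-counter slot equals $b$ and the rule increments it), and by forward-closedness of $B[b]$ the other counter must be $\le b$ at that moment, so its value lies in the finite set $\{0,\ldots,b+1\}$. The finitely many switch scenarios are handled by enumerating the border configurations $q(b{+}1,k)$ (and their symmetric counterparts) reachable from $B[b]$ and introducing, for each, a one-step switch rule in $\A^B$ that re-encodes the configuration with $j$ flipped; legality within the pVASS formalism is maintained because the pre-switch $\A^B$-counter value is bounded by $b+1$, so the rule can be applied from a dedicated border state into which the chain can only be driven with that specific counter value. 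Verifying that each switch preserves the original transition probability exactly, and that no spurious timing steps are introduced, is where the technical care lies.

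Once $\A^B$ is in place, the probability measure on runs of $\M_{\A^B}$ starting at the encoding of $p\vec{v}\in B[b]$ is isomorphic to the measure of $\M_\A$ starting at $p\vec{v}$, and the pattern of any configuration of $\A$ is recoverable from the pair (control state of $\A^B$, zeroness of $\A^B$'s counter); hence pattern frequency vectors of $\A$ are a fixed computable linear function of those of $\A^B$. Applying Theorem~\ref{thm-one-counter-main} to $\A^B$ yields, for any rational $\varepsilon>0$, rational vectors $H_1,\ldots,H_n$ over patterns of $\A^B$ and, for every encoded initial configuration, rational probabilities approximating the pattern frequencies of its runs up to relative (hence absolute) error $\varepsilon$. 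Translating back through the aforementioned linear map produces the sequences required by Definition~\ref{def-good}, establishing that $B[b]$ is good for $\emptyset$.
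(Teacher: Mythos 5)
The setup is fine: $B[b]$ is forward-closed, and for $D=\emptyset$ the first clause of Definition~\ref{def-good} is vacuous. But the step you flag as ``where the technical care lies'' is a genuine obstruction, not a technicality. A one-counter pVASS cannot branch on its counter value: from a control state $s$ at counter $n\ge 1$, the outgoing distribution is the same for every $n\ge 1$, up to the uniform $\pm1$ shift of $n$; the only fact the chain can observe is whether $n=0$. Your switch step needs exactly the branching this forbids. Take a border configuration $p(b,m_2)\in B[b]$ with a rule $p\ltran{(1,\kappa_2),\ell}q$ enabled; forward-closedness of $B[b]$ forces $m_2\le b{+}1$, but $m_2$ is not determined by the $\A^B$-control state, and every such configuration is mapped by your encoding to the \emph{same} control state $(p,b,1)$ with counter value $m_2$. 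After the switch the image must be $\bigl((q,m_2{+}\kappa_2,2),\,b{+}1\bigr)$: both the target control state and the counter jump $b{+}1-m_2$ depend on $m_2$. No rule of a one-counter pVASS can produce different target states or different counter increments from one source state as the positive counter value varies, and a ``dedicated border state into which the chain can only be driven with that specific counter value'' does not exist: a control state carries no invariant on its counter, and the only way for the chain to learn the counter value is to decrement it to zero over many steps, which destroys the step-for-step correspondence and hence the pattern frequencies.

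The obstruction survives re-encoding. If instead one tracks $\min(m_1,b{+}1)$ and $\min(m_2,b{+}1)$ in the control state and stores the overflow of the large component in the counter, then the move at counter value~$0$ must lead to a different control state (tracking becomes exact again) than the move at counter value~$1$; but the rule needed at~$0$ has $\kappa\ge 0$ and therefore remains enabled at every positive counter value, corrupting the transition probabilities there. In short, $\M_\A$ restricted to $B[b]$ is not the Markov chain of any one-counter pVASS. Your (correct) intuition that one counter is always bounded has to be exploited by a decomposition in the spirit of Definition~\ref{def-reduce}: split $B[b]$ into the two ``deep'' one-counter regimes, where which counter is small is stable and Theorem~\ref{thm-one-counter-main} applies, plus a finite border region, and handle the runs that switch regimes infinitely often---which necessarily revisit that finite border infinitely often---by a separate argument, rather than by a single syntactic reduction to a one-counter pVASS.
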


\begin{lemma}
	\label{lem-typeI}
	The set $C[R_1,R_2]$, where $R_1$ or $R_2$ is a type~I or a type~III region, is good for $\emptyset$.
\end{lemma}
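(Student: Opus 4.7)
The plan is to split into two conceptual cases (with symmetric duals): (a) one of $R_1, R_2$ is of type~I, reducing to Lemma~\ref{lem-bounded-counter}; (b) one of $R_1, R_2$ is of type~III, reducing to the one-counter analysis of Section~\ref{sec-one-counter} applied to the opposite dimension. The essential structural tool in both cases is the observation that the $\A$-enabling condition at $(p, m_1, m_2)$ is strictly stronger than the $\A_i$-enabling condition at $p(m_i)$, so any $\A$-run from $(p, m_1, m_2)$, projected onto the $i$-th coordinate (with control states retained), gives a valid $\A_i$-run from $p(m_i)$. Since type~I, II, and III regions of $\A_i$ are $\A_i$-post-closed, if $p(m_i) \in R_i$ belongs to any of these types, then every $\A$-successor $q(m_1', m_2')$ of a configuration in $C[R_1, R_2]$ satisfies $q(m_i') \in R_i$.

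For case (a), say $R_1$ is of type~I. As noted after the definition of regions, all counter values in a type~I region are below $|Q|$, so $R_1$ is finite and its first-coordinate values are bounded by some computable $N < |Q|$. The projection observation above then gives \mbox{$C[R_1, R_2] \subseteq B[N]$}, and Lemma~\ref{lem-bounded-counter} directly yields the desired goodness, since goodness for $\emptyset$ is trivially inherited by computable subsets (Definition~\ref{def-good} is pointwise in the initial configuration $p\vec{v}$, and membership in $C[R_1, R_2]$ is decidable from its semilinear representation). The case $R_2$ of type~I is identical by symmetry.

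For case (b), say $R_1$ is of type~III. By definition of type~III, no configuration reachable from $p(m_1)$ in $\A_1$ has zero first counter; combined with the projection observation, every $\A$-run from $(p, m_1, m_2) \in C[R_1, R_2]$ keeps the first counter strictly positive forever. Thus every visited pattern has the form $q(*, \alpha)$. Moreover, whenever the first counter is positive, the enabled rules and their probabilities in $\A$ at $(p, m_1, m_2)$ are in exact correspondence (via $L_2$) with those in $\A_2$ at $p(m_2)$, so projection onto the second coordinate gives a measure-preserving bijection between $\run_{\M_\A}(p(m_1, m_2))$ and $\run_{\M_{\A_2}}(p(m_2))$. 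This bijection identifies $F_\A(w)(q(*, \alpha))$ with $F_{\A_2}(w')(q\alpha)$ and forces $F_\A(w)(q(0, \alpha)) = 0$. I would then enumerate the $n \leq |Q| + b$ zones $Z_1, \ldots, Z_n$ of $\A_2$, use Lemma~\ref{lem-reg-same-value} to obtain approximations $\tilde F_j$ of the (uniform) pattern frequency vector on $Z_j$, define $H_j(q(*, \alpha)) := \tilde F_j(q\alpha)$ and $H_j(q(0, \alpha)) := 0$, and use Lemma~\ref{lem-one-counter-approx} to obtain the approximations $P_{p\vec{v}, j}$ of $\calP(\run_{\M_{\A_2}}(p(m_2), Z_j))$. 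The only mild obstacle is confirming that the $H_j$ can be chosen uniformly in $p\vec{v}$, as required by Definition~\ref{def-good}; this is immediate from Lemma~\ref{lem-reg-same-value}, where the limiting vector depends only on the zone, not on the initial configuration. The case $R_2$ of type~III is again symmetric.
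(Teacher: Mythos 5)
Your proof is correct and is exactly what the paper intends: the authors declare these proofs ``straightforward'' and explain just above the lemma that ``one of the counters is either bounded or irrelevant,'' which is precisely your two-case split—type~I bounds the relevant counter so that $C[R_1,R_2]$ embeds in some $B[N]$ handled by Lemma~\ref{lem-bounded-counter}, and type~III keeps it permanently positive so that the second-coordinate projection is a measure-preserving bijection onto $\M_{\A_{3-i}}$, after which Lemmas~\ref{lem-reg-same-value} and~\ref{lem-one-counter-approx} supply the uniform $H_j$ and the per-configuration $P_{p\vec{v},j}$.
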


%\begin{lemma}
%	The set $C[R_1,R_2]$, where $R_1$ or $R_2$ is a type~III region, is good for $\emptyset$.
%\end{lemma}

So, it remains to consider sets of the form $C[R_1,R_2]$, where the regions $R_1,R_2$
are of type~II or type~IV. We start with the simple case when the trend $t_S$ of
the associated BSCC is positive in both components.

\begin{lemma}
	\label{lem-diverging-trend}
	Let $C[R_1,R_2]$ be a set such that $R_1,R_2$ are regions of type~II or type~IV, and
	the trend $t_S$ of the associated BSCC $S$ of $\C_\A$ is positive in both components.
	Then $C[R_1,R_2]$ is good for $\emptyset$.
\end{lemma}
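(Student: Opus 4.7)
The plan is to show that starting from any $p\vec{v} \in C[R_1,R_2]$, almost every run eventually stays in the set where both counters are strictly positive. In this regime every visited configuration matches a pattern of the form $q(*,*)$ with $q\in S$, and the sequence of control states evolves exactly according to the finite Markov chain $\C_\A$ restricted to~$S$. The strong ergodic theorem from Section~\ref{sec-prelim} then gives that the frequency of visits to each $q \in S$ converges almost surely to $\mu_S(q)$, so $F_\A(w) = F$ for almost every $w \in \run(p\vec{v})$, where $F$ is the pattern vector defined by $F(q(*,*)) = \mu_S(q)$ for $q\in S$ and $F(q\alpha) = 0$ on every other pattern.

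To argue that $Z_S$ is visited only finitely often almost surely, I would fix an arbitrary $\varepsilon > 0$ and take $b_\varepsilon$ from Theorem~\ref{thm-zeros}. Set $B = \{q\vec{u} \in \conf(\A) : q\in S,\ \vec{u} \geq (b_\varepsilon, b_\varepsilon)\}$. The key sublemma is that $\calP(\run(p\vec{v} \tran{}^* B)) = 1$ for every $p\vec{v} \in C[R_1,R_2]$. Combined with Theorem~\ref{thm-zeros}, which says that from any configuration in $B$ the run avoids $Z_S$ forever with probability at least $1 - \varepsilon$, the number of returns to $Z_S$ is stochastically dominated by a geometric random variable with success parameter $1-\varepsilon$; in particular it is finite almost surely. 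After the last visit to $Z_S$ both counters are strictly positive for all subsequent times, which is exactly the setting in which the ergodic argument from the first paragraph applies.

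For the goodness conclusion in the sense of Definition~\ref{def-good}, the first condition is vacuous because $D = \emptyset$ so any $\delta \in (0,1]$ works. For the second condition, take $n = 1$ and let $H_1 : \Pat_\A \rightarrow \Qset$ be a rational approximation of $F$ up to absolute error~$\varepsilon$; this is computable since $\mu_S$ is the unique rational solution of the finite linear system describing the invariant distribution of $\C_\A$ restricted to~$S$. Set $P_{p\vec{v},1} = 1$ for every $p\vec{v} \in C[R_1,R_2]$, which trivially approximates the probability of the single almost sure event up to any absolute error.

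The main obstacle is the sublemma $\calP(\run(p\vec{v} \tran{}^* B)) = 1$, i.e., that both counters jointly reach arbitrarily large values. Intuitively this follows from the SLLN applied to the additive counter increments along the trajectory of $\C_\A$ restricted to~$S$, whose mean per step is $t_S$ with both components strictly positive; however a careful argument must handle the fact that the true dynamics in $\M_\A$ deviates from $\C_\A|_S$ precisely at configurations in $Z_S$ where some rules are disabled. One route is to observe that up to the first visit to $B$ such deviations only occur on a finite boundary strip, and then invoke Theorem~\ref{thm:tail-bounds-height} applied to each one-counter pVASS $\A_\mi$ (where $R_\mi$ is of type~II or type~IV with positive BSCC trend in the relevant component) to preclude the scenario where one of the counters remains bounded forever. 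Given this sublemma, the remaining steps are routine applications of Theorem~\ref{thm-zeros} and the ergodic theorem.
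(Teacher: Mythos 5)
Your proof rests on the sublemma $\calP(\run(p\vec{v}\tran{}^* B)) = 1$ and, more crucially, on the implicit claim that almost every run from $C[R_1,R_2]$ eventually stays in the set of configurations with both counters strictly positive; from this you conclude $n=1$ with $P_{p\vec{v},1}=1$. This is not what the paper proves, and the claim is false in general. The paper's proof only establishes, for configurations with both counters at least $b$, that the probability of \emph{avoiding} $B[b]$ is bounded below by some $\delta>0$; it does \emph{not} claim that $B[b]$ is avoided almost surely. The set $B[b]$ is the set of configurations from which one of the counters stays bounded by $b$ forever, and it can be reached with positive probability from $C[R_1,R_2]$. In particular, when one of $R_1,R_2$ is of type~IV, the one-counter pVASS $\A_i$ can, with positive probability, drive counter $i$ into a finite type~I region along a short path; if the other counter stays positive during that path (which happens with positive probability when $t_S(3{-}i)>0$), the run enters $B[b]$ in $\M_\A$ and its frequency vector is governed by the bounded-counter analysis of Lemma~\ref{lem-bounded-counter}, not by $\mu_S$. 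Your proposed repair via Theorem~\ref{thm:tail-bounds-height} does not work: that theorem is stated under the hypothesis $t_S(2)<0$, the opposite of the present situation, so it gives no information about escaping the boundary when the trend is positive.

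Structurally, the paper's proof takes a different and necessary route. Instead of almost-sure escape to large counters, it shows that $C[R_1,R_2]$ is \emph{reducible} (in the sense of Definition~\ref{def-reduce}) to the two-element family $\{B[b],\ C_S[c_1\geq b \wedge c_2\geq b]\}$, using Lemma~\ref{lem-bounded-counter} to certify that $B[b]$ is good for $\emptyset$ and Theorem~\ref{thm-zeros} together with the pumpable-path bound $b$ to certify that the large-counter set is good for $B[b]$; Lemma~\ref{lem-reduce} then assembles the two. Consequently the paper's output constant is $n_1+1$ (with $n_1$ coming from $B[b]$), not $1$, and $P_{p\vec{v},1}$ is in general strictly below $1$. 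Your ergodic-theorem identification of $F_S$ on the runs that do remain in the both-positive region agrees with the paper's argument for the second member of the family, but by itself it accounts for only part of the probability mass; without the reducibility scaffolding and the treatment of $B[b]$, the claimed goodness for $\emptyset$ does not follow.
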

\begin{proof}
	Let $b \in \Nset$ be a bound such that for every $p\vec{v} \in \conf(\A)$ where
	$p \in S$ and $\vec{v}\geq (b,b)$ we have that there
	exists a ``pumpable path'' of the form $p\vec{v} \tran{}^* p(\vec{v}{+}{\vec{u}})$
	where $\vec{u}$ is positive in both components. Note that such a $b$ exists and
	it is computable (in fact, one can give an explicit upper bound on $b$ in the
	size of $S$; see, e.g., \cite{BJK:eVASS-games}). 
	
	By Lemma~\ref{lem-bounded-counter}, $B[b]$ is good for $\emptyset$. We show that 
	\mbox{$C_S[c_1 \geq b \wedge c_2 \geq b]$} is good for $B[b]$. By our choice of $b$
	and Theorem~\ref{thm-zeros}, there is $\delta > 0$ such that 
	\mbox{$\calP(\run(p\vec{v} \not\rightarrow^*  B[b])) \geq \delta$} for every 
    \mbox{$p\vec{v} \in C_S[c_1 \geq b \wedge c_2 \geq b]$}. Further, almost
	all runs of $\run(p\vec{v} \not\rightarrow^*  B[b])$ have the same pattern frequency vector
	$F_S$ where $F_S(q(*,*)) = \mu_S(q)$ for all $q \in S$, and $F_S(\alpha) = 0$
	for the other patterns. 
	
	Now we prove that $C[R_1,R_2]$ is reducible to 
	$\{B[b], C_S[c_1 \geq b \wedge c_2 \geq b]\}$. By Theorem~\ref{thm-zeros},
	we obtain that for every $\varepsilon > 0$ there is a computable $b_\varepsilon$ such that 
	for every configuration of $q\vec{u}$ where 
	$\vec{u} \geq (b_\varepsilon,b_\varepsilon)$ we have that 
	$\calP(\run(q\vec{u} \tran{}^* Z_S)) \leq \varepsilon$. Let $\varphi$ be a semilinear constraint
	where $\varphi(s) = x {\geq} b{+}b_\varepsilon \wedge y {\geq} b{+}b_\varepsilon$ for all
	$s \in S$, and  $\varphi(s) = \mathit{false}$ for all $s \in Q \smallsetminus S$. 
	Then $\sem{\varphi} \subseteq C_S[c_1 \geq b \wedge c_2 \geq b]$ and for every
    $q\vec{u} \in \sem{\varphi}$ we have that
	$\calP(\run(q\vec{u} \tran{}^* B[b])) \leq \varepsilon$. Further, there exists a computable $k \in \Nset$
	such that every configuration of $C[R_1,R_2]$ can reach a configuration of 
	\mbox{$B[b] \cup \sem{\varphi}$} in at most $k$~transitions.
	This implies that for every $\delta > 0$, there is a computable $\ell \in \Nset$
	such that every configuration of $C[R_1,R_2]$ reaches a configuration of \mbox{$B[b] \cup \sem{\varphi}$} in at most $\ell$ steps with probability
	at least $1-\delta$.
\end{proof}

To prove Theorem~\ref{thm-two-counter-main}, it suffices to show that the following 
sets of configurations are good for~$\emptyset$, where we disregard the subcases ruled out
by the stability condition. In particular, due to Lemma~\ref{lem-diverging-trend} we can
safely assume that some component of $t_S$ is negative.

\begin{itemize}
	\item[(a)] $C[R_1,R_2]$, where both $R_1$ and $R_2$ are of type~II.
	\item[(b)] $C[R_1,R_2]$, where $R_1$ is of type~IV and $R_2$ is of type~II, or
	$R_1$ is of type~II and $R_2$ is of type~IV.
	\item[(c)] $C[R_1,R_2]$, where both $R_1$ and $R_2$ are of type~IV.
\end{itemize}

The most interesting (and technically demanding) is the following subcase of Case~(a).
Here we only sketch the main ideas, a full proof can be found in 
\cite{BKKN:pVASS-frequency-arxiv}.
%Appendix~\ref{app-two-counters}.  

\begin{lemma}
\label{lem-II-II-gtrend-leftdown-trends-left-down}
	Let $C[R_1,R_2]$ be a set of configurations where 
	$R_1$ and $R_2$ are of type~II, $t_S(2) < 0$, $\tau_{R_1}<0$, and $\tau_{R_2} < 0$. 
	Then $C[R_1,R_2]$ is good for~$\emptyset$.
\end{lemma}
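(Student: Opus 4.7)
The plan is to exhibit a finite eager attractor $D$ (in the sense of Definition~\ref{def-attractor}) such that every configuration of $C[R_1,R_2]$ is eagerly attracted to $D$; Proposition~\ref{prop-attractor}, combined with Lemma~\ref{lem-reduce} applied to $\calE=\{D\}$ (with the semilinear constraint describing the finite set $D$), then yields that $C[R_1,R_2]$ is good for $\emptyset$.

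Under our hypotheses, stability forces $t_S(1)\neq 0$. In any case Theorem~\ref{thm-zeros} applies to $S$ because $t_S(2)<0$, and therefore every configuration $p\vec{v}$ with $p\in S$ is eagerly attracted to the zero-set $Z_S$; so every run starting in $C[R_1,R_2]$ reaches $Z_S$ with sub-exponential hitting-time tails. Once inside $Z_S$, a run alternates between excursions in which counter~1 is zero (governed by the recurrent one-counter dynamics of $\A_2$ near $R_2$) and excursions in which counter~2 is zero (governed by $\A_1$ near $R_1$). My candidate attractor is
\[
  D \;=\; Z_S \,\cap\, \{\,q(m_1,m_2) \mid m_1 \leq N_1,\ m_2 \leq N_2\,\}
\]
for suitable computable constants $N_1,N_2$.

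The heart of the proof is showing that $D$ is a finite eager attractor. Consider a configuration $q(n,0)\in \post(D)\setminus D$ with $q(0)\in R_2$ and $n > N_1$. Because $\tau_{R_2}<0$, Theorem~\ref{thm:tail-bounds-convergence} gives a stretched-exponential tail on the first time $T$ at which counter~1 returns to $0$, while Theorem~\ref{thm:tail-bounds-height} bounds how high counter~2 can spike before that return. Moreover, at time $T$ the value of counter~1 has shrunk by approximately $n\cdot|\tau_{R_2}|$ in expectation, because $\tau_{R_2}$ is precisely the one-counter mean payoff of the labelling $L_2$ of $\A_2$. Symmetrically, the excursions starting from $\{c_2=0\}$ are handled using $R_1$, $\A_1$, and $\tau_{R_1}<0$. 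Combining these bounds via a supermartingale built from the potential $\Phi(q(m_1,m_2))=m_1+m_2$, and evaluating it at successive visits to $Z_S$, I obtain computable $N_1,N_2$ and a sub-exponential tail on the first-return time to $D$ from any configuration of $\post(D)$. The very same bounds, prepended to the first hitting of $Z_S$, show that every configuration of $C[R_1,R_2]$ is itself eagerly attracted to $D$.

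The main technical obstacle is the bookkeeping in this supermartingale/iterated-tail-bounds argument: Theorem~\ref{thm:tail-bounds-convergence} delivers only a stretched-exponential tail of the form $z^{\sqrt{\cdot}}$, so the exponential-tail eager-attractor framework of~\cite{AHMS:Eager-limit} has to be used in the relaxed form explicitly permitted by Definition~\ref{def-attractor}, and the constants produced by Theorems~\ref{thm:tail-bounds-height}--\ref{thm:tail-bounds-convergence} must be combined carefully to yield effectively computable $N_1,N_2$. A further complication is separating the subcase $t_S(1)>0$, in which the generic dynamics pushes counter~1 upwards and the returns to $\{c_1=0\}$ needed for the supermartingale rely solely on the structural recurrence guaranteed by $R_1$ being type~II, from the easier subcase $t_S(1)<0$ in which both counters have negative drift already in the interior of $\conf(\A)$.
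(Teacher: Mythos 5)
Your high-level plan matches the paper's: identify a finite subset of $C[R_1,R_2]$ that is an eager attractor, then combine Proposition~\ref{prop-attractor} with Lemma~\ref{lem-reduce} (the paper takes $C = \{q(0,m) \in C[R_1,R_2] : m \leq a_1 z_1^{b_1}/(1-z_1^{b_1})\}$ entirely on the $c_1 = 0$ axis; your $D \subseteq Z_S$ is an equally legitimate candidate). Two local slips worth flagging before the main issue: the excursion attributions are reversed --- when counter~1 is near zero the local dynamics are governed by $\A_1$ near $R_1$ (counter~1 is what $\A_1$ preserves), and when counter~2 is near zero by $\A_2$ near $R_2$; and ``at time $T$ counter~1 has shrunk by approximately $n\cdot|\tau_{R_2}|$'' cannot be right, since $T$ is by definition the first hit of $\{c_1 = 0\}$, so counter~1 has dropped to exactly $0$. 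The payoff $\tau_{R_2}$ governs the \emph{rate}, giving $E[T] \approx n/|\tau_{R_2}|$, and the quantity to control at time $T$ is counter~2, whose expectation Theorem~\ref{thm:tail-bounds-height} bounds by a constant \emph{uniformly in $n$}.

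The substantive gap is the supermartingale. The potential $\Phi(q(m_1,m_2)) = m_1 + m_2$ evaluated at successive visits to $Z_S$ is \emph{not} a supermartingale when $t_S(1) > 0$: from $q(0,m)$, the first counter at the next hit of $C_S[c_2{=}0]$ only satisfies a stretched-exponential tail $\calP\big(\xs{T_2}_1 \geq \ell\big) \leq a\, b^{\sqrt{\ell - m}}$ for $\ell \geq m$, hence $E\big[\xs{T_2}_1\big] = m + O(1)$ --- the potential can actually \emph{increase} in expectation across such a half-round. The drop occurs only over a \emph{full alternating round} (from $\{c_1 = 0\}$ back to $\{c_1 = 0\}$ through a $\{c_2 = 0\}$ visit), precisely because the bound on counter~2 at the first return to $\{c_1 = 0\}$ from Theorem~\ref{thm:tail-bounds-height} is uniform in the starting value of the first counter. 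This uniformity is what the paper's alternating stopping-time sequence $\Theta_0 < \Theta_1 < \cdots$ (first hit of $C_S[c_2{=}0]$, then $C_S[c_1{=}0]$, etc.) is built to exploit: one proves $\calP(\Theta_k - \Theta_{k-1} \geq \ell) \leq \hat{a}\,\hat{b}^{\sqrt{\ell}}$ uniformly in $k$, shows the per-round probability of landing in the attractor is bounded below by a constant $1-\bar{c}$, and multiplies out. Repairing your sketch would require abandoning the one-step supermartingale and arguing over pairs of rounds, at which point you would essentially be reconstructing the paper's argument.
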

\begin{proof}[Proof Sketch]
Let $C$ be the set of all configurations of the form $q(0,m)\in C[R_1,R_2]$ satisfying 
$m\leq (a_1\cdot z_1^{b_1})/(1-z_1^{b_1})$, where $a_1,b_1,z_1$ are the computable constants of Theorem~\ref{thm:tail-bounds-height}. We prove that $C[R_1,R_2]$ is eagerly attracted by $C$. This immediately implies that $C$ is a~finite eager attractor, hence $C$ is good for $\emptyset$ by Proposition~\ref{prop-attractor}. We also immediately obtain that $C[R_1,R_2]$ is reducible to $\{C\}$, 
which means that  $C[R_1,R_2]$ is good for $\emptyset$ by Lemma~\ref{lem-reduce}.

%For $i\in \{1,2\}$, let $C_S[c_i=0]$ be the set of all configurations $q\vec{u}$ such that 
%$q \in S$ and $\vec{u}(i) = 0$.
%% (recall $Z_S = C_S[c_1=0] \cup C_S[c_2=0]$).
Let $p\vec{v} \in C[R_1,R_2]$.
Since $t_S(2)<0$ and $\tau_{R_1}<1$, almost every run $w \in \run(p\vec{v})$  eventually visits 
a configuration of $C_S[c_2=0]$, and, from that moment on, visits configurations of both 
$C_S[c_2=0]$ and $C_S[c_1=0]$ infinitely often.

Denote by $\Theta_0(w)$ the least $\ell$ such that $w(\ell)\in C_S[c_2=0]$.
Given $k\geq 1$, denote by $\Theta_k(w)$ the least $\ell\geq \Theta_{k-1}(w)$ such that the following holds:
\begin{itemize}
	\item If $k$ is odd, then $w(\ell)\in C_S[c_1=0]$.
	\item If $k$ is even, then $w(\ell)\in C_S[c_2=0]$.
\end{itemize}
We use Theorems~\ref{thm-zeros},~\ref{thm:tail-bounds-height}, and~\ref{thm:tail-bounds-convergence} to show that there are computable constants $\hat{a}>0$ and $0<\hat{z}<1$ such that for all $k\geq 0$ and all $\ell\in \Nset$ we have that 
\[
\calP_{p\vec{v}}(\Theta_k-\Theta_{k-1}\geq \ell)\quad \leq \quad \hat{a}\cdot (\hat{z})^{\sqrt \ell}
\]
Here $\Theta_{-1}=0$.
Observe that $\Theta_0$ is the sum of the number of transitions needed to visit $Z_S$ for the first time (the first phase) and the number of transitions need to reach $C_S[c_2=0]$ subsequently (the second phase).
Due to Theorem~\ref{thm-zeros}, the probability that the first phase takes more than $\ell$ transitions is bounded by $a\cdot z^{\ell}$ for some computable $a>0$ and \mbox{$0<z<1$}. Note that the length of the second phase depends on the value of $c_2$ after the first phase. However, the probability that this value will be larger than $\ell$ can be bounded by $a\cdot z^{\ell}$ as well. Finally, assuming that the first phase ends in a configuration $q(0,m)$, Theorem~\ref{thm:tail-bounds-convergence} gives a bound $a'\cdot (z')^{\sqrt{\ell-m}}$ on the probability of reaching $C_S[c_2=0]$ in at least $\ell$ transitions. By combining these bounds appropriately, we obtain the above bound on $\Theta_0$.

Now let us consider $\Theta_{k}-\Theta_{k-1}$ for $k>0$. Let us assume that $k$ is even (the other case follows
similarly). The only difference from the previous consideration (for $\Theta_0$) is that now the first phase consists of the part of the run up to the $\Theta_{k-1}$-th configuration, and the second phase from there up to the $\Theta_k$-th configuration. Using Theorem~\ref{thm:tail-bounds-height} and induction hypothesis, we derive a bound $a\cdot z^{\ell}$ on the probability that the height of the second counter in the
$\Theta_{k-1}$-th configuration will be at least $\ell$. Then, as above, we combine this bound with the bound on the probability of reaching $C_S[c_2=0]$ in $\ell$ steps from a fixed configuration of $C_S[c_1=0]$.

In order to finish the proof, we observe that the probability of reaching a configuration of $C$ between
the $\Theta_{k-1}$-th and $\Theta_k$-th configuration is bounded away from zero by a computable constant.
This follows immediately from Theorem~\ref{thm:tail-bounds-height} which basically bounds the expected value of $c_2$ in the $\Theta_k$-th configuration.
Denoting by $\mathit{Rounds}(w)$ the least number $k$ such that $w(\Theta_k(w))\in C$, we may easily
show that $\calP_{p\vec{v}}(\mathit{Rounds}\geq \ell)\leq \bar{c}^{\ell}$ for a 
computable constant $0\leq \bar{c}<1$. 

Finally, we combine the bound on the number of rounds (i.e., the bound on  
$\calP_{p\vec{v}}(\mathit{Rounds}\geq \ell)$)  with the bound on the length of each round
(i.e., the bound on $\calP_{p\vec{v}}(\Theta_k-\Theta_{k-1}\geq \ell)$), and thus obtain the desired 
bound on the number of steps to visit $C$.
\end{proof}

For the other cases (incl.{} Cases~(b) and~(c)), we show that the set of configurations
$C$ we aim to analyze is eagerly attracted by computable semilinear sets of configurations
$C_1,\ldots,C_k$, where each $C_i$ is either good for~$\emptyset$ or good for
$\bigcup_{i\neq j}C_j$. In all these cases, it is easy to see that the configurations
of $C$ reach a configuration of $\bigcup_{i=1}^k C_i$ with probability one, and the
argument that $C$ is \emph{eagerly} attracted $\bigcup_{i=1}^k C_i$ is a simplified
version of the proof of Lemma~\ref{lem-II-II-gtrend-leftdown-trends-left-down}
(in some cases, the proof is substantially simpler than the one of Lemma~\ref{lem-II-II-gtrend-leftdown-trends-left-down}).
Therefore, in these cases we just list the sets $C_1,\ldots,C_k$ and add some intuitive
comments which explain possible behaviour of the runs initiated in configurations
of~$C$.

%Let $S$ be the BSCC of $\C_\A$ associated to $R_1,R_2$, and
%let $t_S$ be the trend of~$S$. 
When defining the aforementioned sets $C_1,\ldots,C_k$, we use the following computable constants $B_{II}, B_{IV}, D_{II} \in \Nset$, which are numbers 
(not necessarily the least ones) satisfying the following conditions:
\begin{itemize}
	\item if $p(0) \in R$, where $R$ is a type~IV region of $\A_i$ for some $i \in \{1,2\}$,
	then $p(0)$ can reach a configuration a type~I region in at most $B_{IV}$ 
	transitions.
	\item if $p(0) \in R$, where $R$ is a type~II region of $\A_i$ such that $t_S(i) < 0$
	and $\tau_R > 0$, then there is a finite path $w$ from $p(0)$ to $p(0)$ of length
	smaller than $B_{II}$ such that
	the total $L_i$-reward of all transitions executed in $w$ is positive. 
	\item for every $p\vec{v} \in \conf(\A)$ and every $i \in \{1,2\}$, if $\vec{v}(i)= 0$, $\vec{v}(3{-}i) \geq D_{II}$, and $p(0) \in R$ for some type~II region of $\A_i$ such that either $t_S(i) > 0$ and $t_S(3{-}i) < 0$, or 
	$t_S(i) < 0$ and $\tau_R < 0$, then there exists $q\vec{u} \in \post(p\vec{v})$
	such that $\vec{u}(i) \geq \max\{B_{II},B_{IV}\}$ and $\vec{u}(3{-}i) = 0$.
\end{itemize}
The existence and computability of $B_{II}$, $B_{IV}$, and $D_{II}$ follows from simple observations
about the transition structure of $\M_\A$ (these constants are in fact small and
their size can be explicitly bounded in $\size{\A}$). 
%Observe that for all $m \in \Nset$ and  $p(0,n) \in C[R_1,R_2]$ such that $n \geq B_{II}$, $R_1$ is of type~II where $t_S(1) < 0$ and $\tau_{R_1} > 0$, and  $R_2$ is of type~II~or~IV, there exists a configuration $p(0,m') \in C[R_1,R_2]$ reachable from $p(0,n)$ where $m' \geq m$. Further, for every $p(0,n) \in C[R_1,R_2]$ such that
%$n \geq B_{IV}$, $R_1$ is of type~IV, and $R_2$ is of type~II or~IV, there is 
%a finite path of length at most $B_{IV}$ to a configuration of $C[R'_1,R_2]$, where 
%$R'_1$ is a type~I region. 

%Now we present two auxiliary observations.

\begin{lemma}
\label{lem-E-good}
	For all $m,n \in \Nset$ and a BSCC $S$ of $\C_\A$ such that $t_S$ is negative
	in some component, the set $E_s[m,n]$ is good for~$\emptyset$.
\end{lemma}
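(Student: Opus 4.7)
The plan is to show that $E_S[m,n]$ is a \emph{finite eager attractor} in the sense of Definition~\ref{def-attractor}, and then invoke Proposition~\ref{prop-attractor} to conclude that it is good for~$\emptyset$. Two facts must be established: that $E_S[m,n]$ is finite, and that $\post(E_S[m,n])$ is eagerly attracted by $E_S[m,n]$ with a (sub)exponential tail bound and computable constants.

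For finiteness, I would argue directly from the definition. A configuration $p(m_1,m_2) \in E_S[m,n]$ itself belongs to $\post(p(m_1,m_2))$ and has some counter equal to zero; the closure condition then forces the other counter to be at most $m$ or $n$. Hence $E_S[m,n] \subseteq \{p(i,0),\, p(0,j) : p\in S,\ i\leq m,\ j\leq n\}$, which is a set of size at most $|S|\cdot(m+n+2)$. This also immediately makes $E_S[m,n]$ semilinear and computable.

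For eager attraction, the crucial observation is that $E_S[m,n]$ is closed under "entering through a zero-counter configuration from its own post-set". Concretely, if $q\vec{u} \in \post(E_S[m,n]) \cap Z_S$, then $q \in S$ (since $S$ is a BSCC), $q\vec{u}$ has a zero counter, and every further descendant $r\vec{u}' \in \post(q\vec{u})$ also lies in $\post(E_S[m,n])$ and so inherits the boundedness condition defining $E_S[m,n]$. Thus $\post(E_S[m,n]) \cap Z_S \subseteq E_S[m,n]$. Combining this with Theorem~\ref{thm-zeros} (applicable because $t_S$ is negative in some component and every configuration in $\post(E_S[m,n])$ has control state in $S$), every $q\vec{u} \in \post(E_S[m,n])$ is eagerly attracted by $Z_S$ and therefore by $E_S[m,n]$. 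The constants $a,z,\ell,k$ required by Definition~\ref{def-attractor} can be taken uniformly over the finitely many configurations in $E_S[m,n]$ and extended to $\post(E_S[m,n])$, since the guarantee from Theorem~\ref{thm-zeros} holds for all $q\vec{v}$ with $q\in S$ with computable parameters that depend only on $\A$ and $S$.

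The main obstacle is verifying that the attractor constants in Theorem~\ref{thm-zeros} (or equivalently Corollary~\ref{cor-trend-positive}) provide the \emph{uniform} sub-exponential bound of the form $1 - a\cdot z^{\sqrt[k]{\ell'}}$ demanded by Definition~\ref{def-attractor}, starting from an arbitrary configuration in $\post(E_S[m,n])$ (whose non-zero counter can be arbitrarily large). This is where Theorems~\ref{thm:tail-bounds-height} and~\ref{thm:tail-bounds-convergence} play a role: the return-to-$Z_S$ bound is uniform in the starting counter value when the trend is negative in that counter, and in the other case the opposite counter is already bounded (so we are in a one-counter situation covered by Theorem~\ref{thm-zeros}). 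Once this uniform bound is in place, $E_S[m,n]$ satisfies Definition~\ref{def-attractor}, and Proposition~\ref{prop-attractor} yields the lemma.
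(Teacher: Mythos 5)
Your proof follows the same route as the paper's: establish that $E_S[m,n]$ is a finite eager attractor using Theorem~\ref{thm-zeros} (attraction to $Z_S$ combined with the closure property $\post(E_S[m,n]) \cap Z_S \subseteq E_S[m,n]$), and then apply Proposition~\ref{prop-attractor}. The paper states this in one line; you supply the supporting details — finiteness via the defining boundedness condition, closure under reaching a zero-counter descendant — and these are correct.
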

\begin{proof}
	From the definition of $E_S[m,n]$ and Theorem~\ref{thm-zeros}, we immediately
	obtain that $E_S[m,n]$ is a finite eager attractor (even if $E_S[m,n] = \emptyset$).
	Hence, the claim follows from Proposition~\ref{prop-attractor}. 
\end{proof}

%The next observation is a simple consequence of Theorem~\ref{thm:tail-bounds-divergence}.
%
%\begin{corollary}
%	\label{cor-trend-positive}
%	Let $S$ be a BSCC of $\C_\A$ such that $t_S(i) < 0$, and let $R \in \Reg(\A_i)$ be a type~II region such that $\tau_R > 0$. Then the set $C_S[c_i {=} 0 \wedge c_{3{-}i} {\geq}  B_{II}]$ is good for $C_S[c_{3{-}i} \leq B_{II}]$. Further, 
%%	due to Theorem~\ref{thm:tail-bounds-divergence}, 
%	for every $\varepsilon > 0$, there is a computable $b_\varepsilon \in \Nset$ such that the probability of visiting a configuration
%	of $C_S[c_{3{-}i} \leq B_{II}]$ from every \mbox{$p\vec{u} \in C_S[c_i {=} 0 \wedge c_{3{-}i} {\geq}  B_{II}]$} where $\vec{u}(3{-}i) \geq b_\varepsilon$ is bounded by $\varepsilon$.
%\end{corollary}  
%
%\noindent
%Indeed, if for every $p\vec{v} \in C_S[c_i {=} 0 \wedge c_{3{-}i} {\geq}  B_{II}]$
%we have that almost all runs of $\run(p\vec{v})$ that do not visit a configuration of 
%$C_S[c_{3{-}i} \leq B_{II}]$ behave as if the $3{-}i$-th counter did not exist,
%which means that the long-run behaviour of almost all of these runs is the same
%as the behavior of the runs of $\A_i$ initiated in $\run(p(0))$.  

Now we consider the remaining subcases of Case~(a).

\begin{lemma}
	\label{lem-II-II-gtrend-rightrown-trend-right}
	Let $C[R_1,R_2]$ be a set of configurations where 
	$R_1$ and $R_2$ are of type~II, $t_S(2) < 0$, $t_S(1) > 0$, 
	and~$\tau_{R_2} > 0$. Then $C[R_1,R_2]$ is good for~$\emptyset$.
\end{lemma}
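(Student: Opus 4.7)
The plan is to reduce $C[R_1,R_2]$ to the pair $\{B[b],\hat C\}$ in the sense of Definition~\ref{def-reduce} and apply Lemma~\ref{lem-reduce}. I set $b=11|Q|^4$, so that $B[b]$ is good for $\emptyset$ by Lemma~\ref{lem-bounded-counter}, and define $\hat C=\{p(n,0) : p\in S,\ p(0)\in R_2,\ n\geq N\}$ as a semilinear set of ``launch'' configurations. The constant $N$ will be chosen via Theorem~\ref{thm:tail-bounds-divergence} (applicable because $t_S(2)<0$ and $\tau_{R_2}>0$) so that from every configuration in $\hat C$ the probability of counter~1 ever returning to zero is bounded by an arbitrarily small $\varepsilon'>0$.

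The central step is to show that $\hat C$ is good for $B[b]$. First observe that whenever counter~1 is positive at a configuration $q(n,k)$, the set of enabled rules is exactly the set of rules of $\A_2$ enabled at $q(k)$, so the two-counter process projects transition-by-transition onto the one-counter pVASS $\A_2$ acting on its type~II region $R_2$. Because $t_S(2)<0$, the zone associated with $R_2$ is just $R_2$ itself, and Lemma~\ref{lem-reg-same-value} applied to $\A_2$ yields a well-defined, approximable pattern frequency $F^{R_2}$; the corresponding two-counter vector is $F(q(*,\alpha_2))=F^{R_2}(q\alpha_2)$ and $F(q(0,\alpha_2))=0$. Next I argue that counter~1 visits zero only finitely many times almost surely: combining the positive drift $t_S(1)>0$ of counter~1 in the $c_1>0$ regime with the positive drift $\tau_{R_2}$ at the boundary $c_2=0$, the successive values of counter~1 at returns of counter~2 to zero form a random walk on $\Nset$ with positive mean increment, so they grow to infinity almost surely; combined with Theorem~\ref{thm:tail-bounds-divergence}, this ensures that once counter~1 is large enough it escapes forever. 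Hence $F_\A(w)=F$ for almost all $w\in\run(p\vec v)$ with $p\vec v\in\hat C$, and the ergodicity of $R_2$ forces counter~2 along these runs to visit arbitrarily large values, so they never enter $B[b]$. Definition~\ref{def-good} is then satisfied with $n=1$, $H_1$ obtained from Lemma~\ref{lem-reg-same-value}, and $P_{p\vec v,1}$ approximating $\calP(\run(p\vec v\not\rightarrow^* B[b]))\in[1-\varepsilon',1]$.

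Finally, for the reducibility of $C[R_1,R_2]$ to $\{B[b],\hat C\}$, I take $\varphi_1$ to carve out the semilinear subset of $B[b]$ in which counter~1 is permanently bounded by~$b$ (from which $\hat C$, requiring $c_1\geq N>b$, is unreachable), and $\varphi_2\subseteq\hat C$ defined by $c_2=0\wedge c_1\geq N_\varepsilon$ for $N_\varepsilon$ chosen large enough via Theorem~\ref{thm:tail-bounds-divergence} that the probability of reaching $B[b]$ from $\sem{\varphi_2}$ is at most $\varepsilon$. The main obstacle is the remaining bounded-step reachability condition: from any $p\vec v\in C[R_1,R_2]$, the run must reach $B[b]\cup\sem{\varphi_1}\cup\sem{\varphi_2}$ within a computable number of steps with probability at least $1-\delta$. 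Recurrence of counter~2 to zero (Theorem~\ref{thm-zeros}, using $t_S(2)<0$) combined with the positive drift $\tau_{R_2}$ of counter~1 at each such return makes this intuitively clear; formalizing it requires a tail-bound argument analogous to the one in the proof of Lemma~\ref{lem-II-II-gtrend-leftdown-trends-left-down}, now viewing the successive values of counter~1 at returns to $c_2=0$ as a random walk with positive drift whose excursion lengths are controlled by Theorems~\ref{thm:tail-bounds-height} and~\ref{thm:tail-bounds-divergence}. Once this sub-exponential bound is established, Lemma~\ref{lem-reduce} delivers the goodness of $C[R_1,R_2]$ for $\emptyset$.
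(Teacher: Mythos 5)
Your proposal is a genuinely different decomposition from the paper's, and the difference matters. The paper reduces $C[R_1,R_2]$ to $\{E[B_{II},D_{II}],\; C_S[c_2{=}0\wedge c_1{\geq}B_{II}]\}$, where the ``trap'' $E[B_{II},D_{II}]$ is a \emph{finite eager attractor} of boundary configurations (good for $\emptyset$ via Lemma~\ref{lem-E-good} and Proposition~\ref{prop-attractor}). You instead take the trap to be $B[b]$ (good for $\emptyset$ by Lemma~\ref{lem-bounded-counter}) and the launch set to be $\hat C$. The reducibility part of the proof is then parallel in spirit.

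The gap lies in showing that $\hat C$ is good for $B[b]$. For the second clause of Definition~\ref{def-good} with $n=1$ you need: for almost all $w\in\run(p\vec v\not\rightarrow^* B[b])$ with $p\vec v\in\hat C$, counter~1 diverges, so that $F_\A(w)$ is the lift of the $\A_2$-frequency. You argue this via ``the successive counter-1 values at returns of $c_2$ to $0$ form a random walk with positive mean increment,'' but this is not a random walk: the increment distribution depends on the control state and on whether counter~1 has touched $0$ in the meantime, and the claim that counter~1 ``escapes forever once large enough'' requires a Borel--Cantelli-style argument using explicit tail bounds that you do not establish. More seriously, you never address runs from $\hat C$ that visit $c_1=0$ and then \emph{do not} diverge in counter~1. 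In a cell $C[R_1,R_2]$ where both $R_1,R_2$ are of type~II, a configuration near the origin can typically still reach configurations with both counters $>b$, hence it is \emph{not} in $B[b]$. So a run that dips to $c_1=0$ and then hovers near the boundaries with bounded excursions is not forced into $B[b]$, yet it also does not have the $\A_2$ frequency vector; your single-pair $(P_{p\vec v,1},H_1)$ approximation would then be wrong on this set. The paper's trap $E[B_{II},D_{II}]$ is tailored exactly to catch such hovering runs: it is a finite eager attractor among boundary configurations, so any run that keeps producing small boundary values lands in it almost surely, and the constants $B_{II}$ and $D_{II}$ (the pumpable-loop bound and the reach-a-launch-config bound, respectively) let the paper show that runs \emph{avoiding} $E[B_{II},D_{II}]$ must eventually leave every boundary with a large counter-1 value. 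Replacing $E[B_{II},D_{II}]$ by $B[b]$ forfeits that absorption property, so the argument does not go through as stated. You would have to either prove directly (with a quantitative martingale/Borel--Cantelli argument) that almost all runs from $\hat C$ with counter~1 not diverging do enter $B[b]$, or switch the trap back to the finite eager attractor.
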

\begin{proof}
	Let $\calE = \{E[B_{II},D_{II}], C_S[c_2 {=} 0 \wedge c_1 {\geq}  B_{II}] \}$.
	Observe that $E[B_{II},D_{II}]$ is good for $\emptyset$
	by Lemma~\ref{lem-E-good}. We show that $C_S[c_2 {=} 0 \wedge c_1 {\geq}  B_{II}]$ is good
	for $E[B_{II},D_{II}]$ and that $C[R_1,R_2]$ reducible to $\calE$. Hence, 
	$C[R_1,R_2]$ is good for~$\emptyset$ by Lemma~\ref{lem-reduce}.
	
	To see that $C_S[c_2 {=} 0 \wedge c_1 {\geq}  B_{II}]$ is good
	for $E[B_{II},D_{II}]$, realize that for every $p\vec{v} \in C_S[c_2 {=} 0 \wedge c_1 {\geq}  B_{II}]$
	we have that almost all runs of $\run(p\vec{v})$ that do not visit a configuration of 
	$E[B_{II},D_{II}]$ eventually behave as if the first counter did not exist,
	which means that the long-run behaviour of almost all of these runs is the same
	as the behavior of the runs of $\A_2$ initiated in $p(0)$ (here we also use the defining property
	of $D_{II}$). Further, it follows
	from the definition of $B_{II}$ and Theorem~\ref{thm:tail-bounds-divergence} that there
	exists a $\delta > 0$ such that \mbox{$\calP(\run(p\vec{v} \not\rightarrow^* E[B_{II},D_{II}])) > \delta$}
	for every \mbox{$p\vec{v} \in C_S[c_2 {=} 0 \wedge c_1 {\geq}  B_{II}]$}.
	
	By Theorem~\ref{thm:tail-bounds-divergence}, for every $\varepsilon > 0$ there exists a 
	computable semilinear constraint $\varphi$ such that 
	\mbox{$\sem{\varphi} \subseteq C_S[c_2 {=} 0 \wedge c_1 {\geq}  B_{II}]$} and for every $q\vec{u} \in \sem{\varphi}$ we have that
	the probability of visiting $C[c_1{=}B_{II}]$ (and hence also $E[B_{II},D_{II}]$) 
	is bounded by~$\varepsilon$.
	
	Now let $p\vec{v} \in C[R_1,R_2]$ and $\delta > 0$. We need to show that there
	is a computable $\ell \in \Nset$ such that the probability of reaching a configuration
	of $E[B_{II},D_{II}] \cup \sem{\varphi}$ in at most $\ell$ transitions is at least $1-\delta$.
%	Observe that there is computable $k$ such that for \emph{every} configuration of $C_S[c_2 {=} 0]$ 
%	there is a finite path of length at most $k$ to a configuration of 
%	$E[B_{II},D_{II}] \cup \sem{\varphi}$. 
	Since $t_S(2) < 0$, every
	$p\vec{v} \in C[R_1,R_2]$ is eagerly attracted by $Z_S$. Similarly as in
	the proof of Lemma~\ref{lem-II-II-gtrend-leftdown-trends-left-down}, we show
	that almost every run visits  $C_S[c_2 {=} 0]$ infinitely many times, and that the
	probability that the length between two consecutive visits to $C_S[c_2 {=} 0]$
	exceeds $\ell$ decays sub-exponentially in $\ell$. Further, the probability of
	vising a configuration of $E[B_{II},D_{II}] \cup \sem{\varphi}$ from a configuration
	of $C_S[c_2 {=} 0]$ is bounded away from zero by a fixed constant. Hence, we can
	argue as in the proof of Lemma~\ref{lem-II-II-gtrend-leftdown-trends-left-down}.
\end{proof}

\begin{lemma}
	\label{lem-II-II-gtrend-leftdown-trends-right-up}
	Let $C[R_1,R_2]$ be a set of configurations where 
	$R_1$ and $R_2$ are of type~II, $t_S(2) < 0$, $t_S(1) < 0$, $\tau_{R_1} > 0$, 
	and~$\tau_{R_2} > 0$. Then $C[R_1,R_2]$ is good for~$\emptyset$.
\end{lemma}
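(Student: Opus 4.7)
The strategy mirrors Lemma~\ref{lem-II-II-gtrend-rightrown-trend-right}, but symmetrically accounts for the fact that both boundaries $c_1 = 0$ and $c_2 = 0$ are reachable (since $t_S$ is negative in both components, the run is eagerly attracted to $Z_S$ by Theorem~\ref{thm-zeros}, but may hit either side). Define
\[
E = E_S[B_{II},B_{II}], \ C^1 = C_S[c_2 {=} 0 \wedge c_1 {\geq} B_{II}], \ C^2 = C_S[c_1 {=} 0 \wedge c_2 {\geq} B_{II}].
\]
The plan is to show that $C[R_1,R_2]$ is reducible to $\calE = \{E,C^1,C^2\}$ and that each member of $\calE$ is good for the union of the others (respectively for $\emptyset$); the result then follows from Lemma~\ref{lem-reduce}.

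First I would dispatch the individual goodness claims. $E$ is good for $\emptyset$ by Lemma~\ref{lem-E-good}, since $t_S$ is negative in some component. For $C^1$, observe that almost every run in $\run(p\vec{v}\not\rightarrow^* E \cup C^2)$ starting at $p\vec{v} \in C^1$ never visits $c_1 = 0$, so it behaves exactly as a run of the one-counter pVASS $\A_2$ starting at $p(0) \in R_2$, with counter~1 tracking the cumulative $L_2$-reward. Since $\tau_{R_2} > 0$, counter~1 diverges to $+\infty$ almost surely on these runs; the defining property of $B_{II}$ combined with Theorem~\ref{thm:tail-bounds-divergence} gives a uniform positive lower bound on $\calP(\run(p\vec{v}\not\rightarrow^* E \cup C^2))$. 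The pattern frequency vector of such runs is determined by the long-run behaviour of $\A_2$ on the zone containing $R_2$, and is therefore approximable by Lemma~\ref{lem-reg-same-value}, so $C^1$ is good for $E \cup C^2$. The argument that $C^2$ is good for $E \cup C^1$ is symmetric (use $\A_1$, $R_1$, $\tau_{R_1}>0$).

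The main work is the reducibility of $C[R_1,R_2]$ to $\calE$. Fix $\varepsilon > 0$. Using Theorem~\ref{thm:tail-bounds-divergence} for both $\A_1$ and $\A_2$, compute $N_\varepsilon \geq B_{II}$ such that whenever $c_i = 0$ and $c_{3-i} \geq N_\varepsilon$ (with the appropriate control state), the probability of the non-zero counter ever reaching zero is at most $\varepsilon$. Take $\sem{\varphi_E} = E$, $\sem{\varphi_1} = C_S[c_2 {=} 0 \wedge c_1 {\geq} N_\varepsilon]$, and $\sem{\varphi_2} = C_S[c_1 {=} 0 \wedge c_2 {\geq} N_\varepsilon]$; all three are effectively semilinear, included in the respective members of $\calE$, and satisfy the ``probability of leaving to a different $C_j$ is $\leq \varepsilon$'' condition of Definition~\ref{def-reduce}.

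To verify the last condition of reducibility, I would run an oscillation argument in the spirit of (but simpler than) Lemma~\ref{lem-II-II-gtrend-leftdown-trends-left-down}. By Theorem~\ref{thm-zeros}, every $p\vec{v}\in C[R_1,R_2]$ is eagerly attracted by $Z_S$; let $\Theta_0,\Theta_1,\ldots$ be the successive times the run visits $Z_S$ on alternating axes. Combining Theorem~\ref{thm-zeros} (tail on time-to-$Z_S$) with Theorem~\ref{thm:tail-bounds-height} (tail on the height of the other counter at the next visit, applied to $\A_1$ or $\A_2$ with trend negative in the tracked component), one obtains $\calP_{p\vec{v}}(\Theta_{k+1}-\Theta_k \geq \ell) \leq \hat a\, \hat z^{\sqrt\ell}$ for computable $\hat a, \hat z$, uniformly in the state reached at $\Theta_k$. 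At each $\Theta_k$, either the configuration lies in $E$ (and we stop), or the non-zero counter is some value $m$; using Theorem~\ref{thm:tail-bounds-divergence} once more, the probability that subsequently this counter grows past $N_\varepsilon$ before returning to $Z_S$ is bounded below by a computable positive constant. Hence the number $K$ of ``bounces'' before reaching $\sem{\varphi_E}\cup\sem{\varphi_1}\cup\sem{\varphi_2}$ is geometrically distributed, and convolving with the sub-exponential bound on each $\Theta_{k+1}-\Theta_k$ yields the required computable $\ell$ for any $\delta>0$.

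The main obstacle is the last paragraph: carefully coupling the sub-exponential height bounds of Theorem~\ref{thm:tail-bounds-height} across successive boundary visits with the lower bound on the ``settling'' probability extracted from Theorem~\ref{thm:tail-bounds-divergence}, while keeping all constants computable and uniform over $C[R_1,R_2]$. Once this is in place, Lemma~\ref{lem-reduce} delivers that $C[R_1,R_2]$ is good for $\emptyset$.
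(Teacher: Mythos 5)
Your proposal is correct and follows essentially the same route as the paper: the same decomposition $\calE = \{E_S[B_{II},B_{II}],\, C_S[c_2{=}0 \wedge c_1 \geq B_{II}],\, C_S[c_1{=}0 \wedge c_2 \geq B_{II}]\}$, the same appeal to Lemma~\ref{lem-E-good} and (symmetrized) Lemma~\ref{lem-II-II-gtrend-rightrown-trend-right} for goodness of the members of $\calE$, the same thresholded semilinear constraints via Theorem~\ref{thm:tail-bounds-divergence}, and the same oscillation-type argument for reducibility. You have merely unfolded the oscillation part that the paper disposes of by remarking that ``the rest of the argument is even simpler than in Lemma~\ref{lem-II-II-gtrend-rightrown-trend-right}''; the paper additionally notes the structural fact that every configuration of $Z_S$ has a bounded-length path into $E[B_{II},B_{II}] \cup \sem{\varphi_1} \cup \sem{\varphi_2}$, which is the cleanest way to get the uniform positive settling probability you invoke.
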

\begin{proof}
	Let $\calE$ be the set consiting of  $E[B_{II},B_{II}]$, $C_S[c_2 {=} 0 \wedge c_1 {\geq}  B_{II}]$,
	and  $C_S[c_1 {=} 0 \wedge c_2 {\geq}  B_{II}]$. 
	Clearly, each $C \in \calE$ is either good for~$\emptyset$ or good for 
	the union of all sets in $\calE \smallsetminus \{C\}$ (see Lemma~\ref{lem-E-good}
	and the proof of Lemma~\ref{lem-II-II-gtrend-rightrown-trend-right}). For every $\varepsilon >0$, there
	are computable semilinear constraint $\varphi_1,\varphi_2$ such that
	\mbox{$\sem{\varphi_1} \subseteq C_S[c_2 {=} 0 \wedge c_1 {\geq}  B_{II}]$},
	\mbox{$\sem{\varphi_2} \subseteq C_S[c_1 {=} 0 \wedge c_2 {\geq}  B_{II}]$}
	satisfying the requirements of Definition~\ref{def-reduce}. Note that there
	is a $k \in \Nset$ such that for every configuration of $Z_S$ there is a finite
	path of length at most $k$ to a configuration of $E[B_{II},B_{II}] \cup \sem{\varphi_1}
	\cup \sem{\varphi_2}$. The rest of the argument is even simpler than in 
	Lemma~\ref{lem-II-II-gtrend-rightrown-trend-right}.   
\end{proof}

\begin{lemma}
	\label{lem-II-II-gtrend-leftdown-trends-right-down}
	Let $C[R_1,R_2]$ be a set of configurations where 
	$R_1$ and $R_2$ are of type~II, $t_S(2) < 0$, $t_S(1) < 0$, $\tau_{R_1} < 0$, 
	and~$\tau_{R_2} > 0$. Then $C[R_1,R_2]$ is good for~$\emptyset$.
\end{lemma}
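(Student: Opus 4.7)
The plan is to follow the template of Lemma~\ref{lem-II-II-gtrend-rightrown-trend-right} and reduce $C[R_1,R_2]$ to the two-element set $\calE = \{E_S[B_{II},D_{II}],\, C_S[c_2 {=} 0 \wedge c_1 {\geq} B_{II}]\}$. I would establish three facts in order: (i) $E_S[B_{II},D_{II}]$ is good for $\emptyset$, which is immediate from Lemma~\ref{lem-E-good} because $t_S(2)<0$; (ii) $C_S[c_2 {=} 0 \wedge c_1 {\geq} B_{II}]$ is good for $E_S[B_{II},D_{II}]$, which depends only on $t_S(2)<0$ and $\tau_{R_2}>0$ (both in our hypotheses), so the corresponding step in the proof of Lemma~\ref{lem-II-II-gtrend-rightrown-trend-right} transfers verbatim: Theorem~\ref{thm:tail-bounds-divergence} supplies a uniform positive escape probability, and any run avoiding $E_S[B_{II},D_{II}]$ keeps $c_1\geq 1$ forever so its long-run frequencies agree with those of $\A_2$ restricted to~$R_2$ (here the defining property of $D_{II}$ is used, which applies for $i{=}1$ since $t_S(1)<0$ and $\tau_{R_1}<0$); and (iii) $C[R_1,R_2]$ is reducible to $\calE$. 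Lemma~\ref{lem-reduce} then closes the proof.

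For (iii), Theorem~\ref{thm:tail-bounds-divergence} yields, for each $\varepsilon>0$, a computable threshold $b_\varepsilon$ such that the semilinear constraint $\varphi$ defined by $\varphi(s) = (y{=}0 \wedge x {\geq} B_{II}+b_\varepsilon)$ for $s\in S$ and $\varphi(s)=\mathit{false}$ otherwise satisfies $\sem{\varphi}\subseteq C_S[c_2{=}0 \wedge c_1{\geq} B_{II}]$ together with $\calP(\run(q\vec{u} \rightarrow^* E_S[B_{II},D_{II}])) \leq \varepsilon$ for every $q\vec{u}\in\sem{\varphi}$. What remains, and is the substantial part, is to produce for each $\delta>0$ a computable bound $\ell\in\Nset$ such that every configuration of $C[R_1,R_2]$ reaches $E_S[B_{II},D_{II}]\cup\sem{\varphi}$ within $\ell$ transitions with probability at least $1-\delta$.

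The eager-attraction argument would mirror the one in Lemma~\ref{lem-II-II-gtrend-leftdown-trends-left-down}. Theorem~\ref{thm-zeros} with $t_S(2)<0$ gives eager attraction of $C[R_1,R_2]$ to $Z_S$. I would then show that almost every run subsequently visits $C_S[c_2{=}0]$ infinitely often: from $c_1{=}0$ the hypothesis $\tau_{R_1}<0$ with Theorem~\ref{thm:tail-bounds-convergence} applied to $\A_1$ gives a sub-exponential tail on the time to reach $c_2{=}0$, while from $c_2{=}0$ we are already in the target column. Interleaving Theorem~\ref{thm:tail-bounds-height} (to bound the height of $c_2$ just after leaving $c_2{=}0$) with Theorem~\ref{thm:tail-bounds-convergence} (to bound the time to fall back), via the telescoping over the stopping times $\Theta_k$ used in Lemma~\ref{lem-II-II-gtrend-leftdown-trends-left-down}, yields a sub-exponential tail on the time between successive visits to $C_S[c_2{=}0]$. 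Finally, from every configuration of $C_S[c_2{=}0]$ the probability of reaching $E_S[B_{II},D_{II}]\cup\sem{\varphi}$ in a bounded number of steps is uniformly positive: configurations inside $E_S[B_{II},D_{II}]$ are already in the target, and those outside admit, by the defining properties of $B_{II}$ (and of $D_{II}$ when an excursion lands on the $c_1{=}0$ axis with $c_2\geq D_{II}$), a short path that pushes $c_1$ above $B_{II}+b_\varepsilon$ while returning $c_2$ to $0$. Combining the sub-exponential per-round length bound with a geometric bound on the number of rounds needed to first hit the target supplies the required~$\ell$. The main obstacle is precisely this per-round tail bound: because $t_S(1)$, $t_S(2)$, and $\tau_{R_1}$ are all negative, the interior dynamics and the dynamics at $c_1{=}0$ must be controlled simultaneously, and the sub-exponential estimate is obtained only by the interleaving of Theorems~\ref{thm:tail-bounds-height} and~\ref{thm:tail-bounds-convergence} that drives the proof of Lemma~\ref{lem-II-II-gtrend-leftdown-trends-left-down}.
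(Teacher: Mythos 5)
Your proof uses exactly the same reduction set $\calE = \{E_S[B_{II},D_{II}],\, C_S[c_2{=}0 \wedge c_1 {\geq} B_{II}]\}$ that the paper uses, and the paper's own proof is literally a one-line pointer to Lemma~\ref{lem-II-II-gtrend-rightrown-trend-right}; your write-up is a correct and careful expansion of that pointer, correctly noting that the defining property of $D_{II}$ now applies via the second disjunct ($t_S(1)<0$, $\tau_{R_1}<0$) rather than the first, and that Theorem~\ref{thm:tail-bounds-convergence} applied to $\A_1$ replaces the positive-drift argument used when $t_S(1)>0$. No gaps: this is essentially the paper's approach.
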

\begin{proof}
	Let $\calE = \{E[B_{II},D_{II}], C_S[c_2 {=} 0 \wedge c_1 {\geq}  B_{II}]\}$.
	We show that $C[R_1,R_2]$ reducible to $\calE$ similarly as in 
	Lemma~\ref{lem-II-II-gtrend-rightrown-trend-right}.
\end{proof}

The case when $R_1$ and $R_2$ are of type~II, $t_S(2) < 0$, \mbox{$t_S(1) < 0$}, $\tau_{R_1} > 0$, 
and~$\tau_{R_2} < 0$ is symmetric to the case considered in Lemma~\ref{lem-II-II-gtrend-leftdown-trends-right-down}.

Now we continue with Case~(b)

\begin{lemma}
	\label{lem-IV-II-trend-right}
	Let $C[R_1,R_2]$ be a set of configurations where 
	$R_1$ is of type~IV and $R_2$ is of type~II such that $t_S(2) < 0$ and $\tau_{R_2} > 0$.
	Then $C[R_1,R_2]$ is good for~$\emptyset$.
\end{lemma}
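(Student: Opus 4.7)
The plan is to follow the template of Lemma~\ref{lem-II-II-gtrend-rightrown-trend-right} and show that $C[R_1,R_2]$ is reducible to $\calE = \{E[B_{II},D_{II}],\, C_S[c_2{=}0 \wedge c_1{\geq}B_{II}]\}$, after which Lemma~\ref{lem-reduce} concludes. Goodness of the members of $\calE$ is already in place: $E[B_{II},D_{II}]$ is good for $\emptyset$ by Lemma~\ref{lem-E-good} (applicable since $t_S(2)<0$), and the argument of Lemma~\ref{lem-II-II-gtrend-rightrown-trend-right} showing that $C_S[c_2{=}0\wedge c_1{\geq}B_{II}]$ is good for $E[B_{II},D_{II}]$ transfers verbatim, depending only on $\tau_{R_2}>0$ and Theorem~\ref{thm:tail-bounds-divergence}.

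For reducibility I would split on the sign of $t_S(1)$, which is nonzero by stability applied to the non-empty type~IV region $R_1$. If $t_S(1)>0$, I would copy the proof of Lemma~\ref{lem-II-II-gtrend-rightrown-trend-right} almost verbatim: by Theorem~\ref{thm-zeros} the run reaches $Z_S$ eagerly, and since $c_1$ drifts up while $c_2$ drifts down, it enters $C_S[c_2{=}0\wedge c_1{\geq}B_{II}]$ with high probability, while rare descents with $c_1<B_{II}$ are absorbed into $E[B_{II},D_{II}]$. If $t_S(1)<0$ I would instead follow the alternating stopping-time argument from the proof of Lemma~\ref{lem-II-II-gtrend-leftdown-trends-left-down}: define $\Theta_0,\Theta_1,\ldots$ alternating visits to $C_S[c_1{=}0]$ and $C_S[c_2{=}0]$, derive sub-exponential tail bounds on inter-visit times using Theorems~\ref{thm:tail-bounds-height} and~\ref{thm:tail-bounds-convergence}, and combine them with a positive per-round probability of landing in $\calE$ to obtain the computable $\ell$ required by Definition~\ref{def-reduce}.

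The main obstacle is the $t_S(1)<0$ subcase: since $R_1$ is of type~IV rather than type~II, the tail bounds and the constants $B_{II},D_{II}$ are not applied directly to $R_1$ in the earlier lemmas. I would bridge this by exploiting the type~IV characterisation $R_1\subseteq \pre(R_{I}(S))\smallsetminus\pre(R_{II}(S))$ in $\A_1$: from every configuration on the $c_1{=}0$ axis reachable from $C[R_1,R_2]$, the $\A_1$ projection a.s.\ enters a bounded type~I region of $\A_1$ in sub-exponentially many steps (via the martingale underlying Theorems~\ref{thm:tail-bounds-height}--\ref{thm:tail-bounds-convergence} applied to $\A_1$, with $t_S(1)<0$ now playing the role of $t_S(2)<0$), whereupon the $\A$ run either enters $E[B_{II},D_{II}]$ directly or re-emerges at $C_S[c_2{=}0]$ with $c_1\geq B_{II}$ and hence in $C_S[c_2{=}0\wedge c_1{\geq}B_{II}]$. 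Combined with the positive per-round success probability contributed on the $c_2{=}0$ side by $\tau_{R_2}>0$ via Theorem~\ref{thm:tail-bounds-divergence}, this yields reducibility, and the lemma follows from Lemma~\ref{lem-reduce}.
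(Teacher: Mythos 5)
Your plan diverges from the paper's in one essential way, and the divergence is a genuine gap. The paper chooses $\calE$ to consist of $E[B_{II},B_{IV}]$, $C_S[c_2{=}0 \wedge c_1{\geq}B_{II}]$, \emph{and all sets $C[R'_1,R_2]$ where $R'_1$ is a type~I region of $\A_1$ reachable from $R_1$}. Your $\calE$ omits the $C[R'_1,R_2]$ sets entirely (and uses $D_{II}$ where $B_{IV}$ is wanted, though that is secondary). The $C[R'_1,R_2]$ sets cannot be dropped: because $R_1$ is of type~IV, the $\A_1$-projection of a run almost surely enters some type~I region $R'_1$ of $\A_1$, and type~I regions are closed under $\post$ with counter values bounded by $|Q|$. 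So once the $\A_1$-projection is in $R'_1$, the first counter is permanently bounded by a small constant; it does \emph{not} ``re-emerge at $C_S[c_2{=}0]$ with $c_1\geq B_{II}$'' as your bridge claims, unless $B_{II}$ happened to be below that small bound. Nor must such a run ever reach an $E$-set: with the first counter confined to $R'_1$, the second counter is governed by a fresh one-counter dynamics whose trend may be positive, in which case $c_2$ diverges and no bounded-axis $E$-set is ever visited. These runs stay in $C[R'_1,R_2]$ forever, and their long-run behaviour must be analysed via the one-counter reduction of Lemma~\ref{lem-typeI}, which is exactly why the paper puts $C[R'_1,R_2]$ into $\calE$.

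Your opening reduction to the template of Lemma~\ref{lem-II-II-gtrend-rightrown-trend-right} also silently assumes $D_{II}$ can play the role it plays there, but $D_{II}$ is calibrated to type~II regions: its defining property guarantees, from a configuration $p\vec{v}$ with $\vec{v}(i)=0$ and $p(0)$ in a type~II region of $\A_i$, the existence of a successor back on the opposite axis with $c_i$ large. That guarantee is false for a type~IV region, where the $\A_1$-dynamics are transient and feed into type~I regions rather than recurring to the axis with a large $c_1$. The appropriate constant here is $B_{IV}$ (the bound on the number of steps a type~IV configuration needs to reach a type~I region), which is what the paper uses in $E[B_{II},B_{IV}]$. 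The correct proof proceeds as you sketch for the eager-attraction part, but with the enlarged $\calE$: runs that enter the $c_1{=}0$ axis either quickly land in $E[B_{II},B_{IV}]$, or the $\A_1$-projection reaches some type~I region $R'_1$ and the run decides for $C[R'_1,R_2]$, or the run returns to $C_S[c_2{=}0]$ with $c_1\geq B_{II}$; each of the last two outcomes happens with probability bounded away from zero, and Lemma~\ref{lem-reduce} then applies.
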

\begin{proof}
	Let $\calE$ be the set consisting of $E[B_{II},B_{IV}]$, $C_S[c_2 {=} 0 \wedge c_1 {\geq}  B_{II}]$, 
    and all $C[R'_1,R_2]$, where $R'_1$ is a type~I region reachable from $R_1$ in $\A_1$.
	We show that $C[R_1,R_2]$ reducible to $\calE$ similarly as in previous lemmata.
\end{proof}	

\begin{lemma}
	\label{lem-IV-II-trend-left}
	Let $C[R_1,R_2]$ be a set of configurations where 
	$R_1$ is of type~IV and $R_2$ is of type~II such that $t_S(2) < 0$ and $\tau_{R_2} < 0$.
	Then $C[R_1,R_2]$ is good for~$\emptyset$.
\end{lemma}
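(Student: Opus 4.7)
The plan is to apply Lemma~\ref{lem-reduce} with $\calE$ consisting of the finite-attractor set $E_S[B_{IV},D_{II}]$ together with every $C[R'_1,R_2]$ in which $R'_1$ is a type~I region of $\A_1$ reachable from $R_1$. Each member of $\calE$ is already good for~$\emptyset$: Lemma~\ref{lem-E-good} covers $E_S[B_{IV},D_{II}]$ because $t_S(2)<0$, while Lemma~\ref{lem-typeI} covers every $C[R'_1,R_2]$ because $R'_1$ is of type~I. What remains is to verify that $C[R_1,R_2]$ is reducible to $\calE$ in the sense of Definition~\ref{def-reduce}.

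The qualitative picture is a mirror of the one used in Lemma~\ref{lem-IV-II-trend-right}, but with the sign of $\tau_{R_2}$ reversed: now the unconditional drift of the second counter ($t_S(2)<0$) and the ergodic drift of the first counter while $c_2=0$ ($\tau_{R_2}<0$) both point downwards, so both counters tend to zero rather than one of them escaping to infinity. Consequently there is no need to include a set of the form $C_S[c_2{=}0\wedge c_1{\geq}B_{II}]$ in $\calE$; almost every run from $C[R_1,R_2]$ eventually visits a configuration with $c_1=0$, whereupon the type~IV structure of $R_1$ delivers a type~I region in at most $B_{IV}$ further transitions, placing the run either directly in some $C[R'_1,R_2]$ or, if it occurred earlier along the way, in the finite attractor $E_S[B_{IV},D_{II}]$.

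Quantitatively, the reduction is assembled along the same lines as in the proof of Lemma~\ref{lem-II-II-gtrend-leftdown-trends-left-down}. Theorem~\ref{thm-zeros} provides a sub-exponential tail bound on the first hitting time of $Z_S$ from $C[R_1,R_2]$; Theorem~\ref{thm:tail-bounds-height} controls the value of the first counter at the moment $c_2$ first becomes zero, independently of the starting configuration; Theorem~\ref{thm:tail-bounds-convergence} then gives a sub-exponential tail on the number of ergodic $R_2$-steps needed to descend from there to $c_1=0$. Concatenating these phases yields, for every $\delta>0$, a computable $\ell\in\Nset$ such that $\bigcup_i\sem{\varphi_i}$ is visited within $\ell$ transitions with probability at least $1-\delta$, where for $E_S[B_{IV},D_{II}]$ the constraint $\varphi_i$ is taken to cut out the whole (finite) set and for each $C[R'_1,R_2]$ it is taken to be the constraint supplied by Lemma~\ref{lem-typeI}.

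The main technical obstacle, as in the earlier lemmata of this case analysis, is keeping the first counter sub-exponentially controlled through the repeated alternation of "reach $c_2=0$" and "descend to $c_1=0$" phases; this is precisely what the martingale-based tail bounds of Theorems~\ref{thm:tail-bounds-height} and~\ref{thm:tail-bounds-convergence} are designed to provide, so no new ingredient beyond those already developed in this section is needed.
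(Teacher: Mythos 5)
Your overall structure matches the paper's: reduce $C[R_1,R_2]$ to $\calE$ consisting of a finite axis-attractor together with all $C[R'_1,R_2]$ for type~I regions $R'_1 \in \A_1$ reachable from $R_1$, invoke Lemma~\ref{lem-E-good} and Lemma~\ref{lem-typeI} for goodness, and apply Lemma~\ref{lem-reduce}. You are also right that, because both drifts point downward, the escape set $C_S[c_2{=}0\wedge c_1{\geq}B_{II}]$ used in Lemma~\ref{lem-IV-II-trend-right} is not needed here.

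However, you have the arguments of $E_S$ reversed: you write $E_S[B_{IV},D_{II}]$ where the paper uses $E_S[D_{II},B_{IV}]$, and this swap is not cosmetic. In $E_S[b_1,b_2]$, the first parameter $b_1$ bounds the \emph{first} counter on the $c_2{=}0$ axis and $b_2$ bounds the \emph{second} counter on the $c_1{=}0$ axis. In the present case $R_2$ is the type~II region of $\A_2$ (which lives on the $c_2{=}0$ axis, so its constant $D_{II}$ must be $b_1$), while $R_1$ is the type~IV region of $\A_1$ (which lives on the $c_1{=}0$ axis, so its constant $B_{IV}$ must be $b_2$). Indeed, the defining property of $D_{II}$ says exactly that from $c_2{=}0$, $c_1 \geq D_{II}$ one can escape to $c_1{=}0$ with $c_2$ large, so the attractor must cut $c_1$ at $D_{II}$ on that axis; and the defining property of $B_{IV}$ is that from $c_1{=}0$ the type~IV structure delivers a type~I region within $B_{IV}$ steps, so the attractor must cut $c_2$ at $B_{IV}$ on that axis. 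You can check this orientation against Lemma~\ref{lem-IV-II-trend-right} (set $E[B_{II},B_{IV}]$, same $R_1$ on the $c_1{=}0$ axis, hence $b_2 = B_{IV}$) and Lemma~\ref{lem-II-IV-gtrend-rightdown} (set $E[B_{IV},D_{II}]$, with the roles of type~IV and type~II interchanged).

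A related confusion appears in your description of Theorem~\ref{thm:tail-bounds-height}: you say it controls the first counter at the moment $c_2$ first becomes zero, but the theorem in fact bounds $\xs{T}_2$, the \emph{second} counter at the moment $T$ when the \emph{first} counter hits zero, starting from $p(n,0)$ — and it is essential that this bound holds uniformly in $n$. The correct sequencing is: Theorem~\ref{thm-zeros} controls the time to hit $Z_S$ (here $c_2{=}0$ since $t_S(2)<0$); Theorem~\ref{thm:tail-bounds-convergence} (with $\tau_{R_2}<0$) bounds the further time to reach $c_1{=}0$; Theorem~\ref{thm:tail-bounds-height} then bounds $c_2$ when $c_1$ reaches zero; finally the $B_{IV}$ bound for the type~IV region $R_1$ places the run either in $E_S[D_{II},B_{IV}]$ or, after at most $B_{IV}$ further steps, in some $C[R'_1,R_2]$. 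With these two corrections your argument lines up with the paper's.
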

\begin{proof}
	Let $\calE$ be the set consisting of $E[D_{II},B_{IV}]$ and
	all $C[R'_1,R_2]$, where $R'_1$ is a type~I region reachable from $R_1$ in $\A_1$.
	Then $C[R_1,R_2]$ reducible to $\calE$ and each $C \in \calE$ is good for~$\emptyset$.
\end{proof}	

\begin{lemma}
	\label{lem-II-IV-gtrend-rightdown}
	Let $C[R_1,R_2]$ be a set of configurations where 
	$R_1$ is of type~II and $R_2$ is of type~IV such that $t_S(2) < 0$ and $t_S(1) > 0$.
	Then $C[R_1,R_2]$ is good for~$\emptyset$.
\end{lemma}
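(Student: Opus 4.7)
The plan is to mirror the strategy of Lemma~\ref{lem-IV-II-trend-right} with the roles of the two counters swapped. Using that $R_2$ is of type~IV in $\A_2$ with $t_S(2) < 0$, the fact (noted in Section~\ref{sec-one-counter}) that $t_S(2) \leq 0$ forces $\calP(\run(p\vec{v}, R_2)) = 0$ for $p\vec{v} \in R_2$ combined with Lemma~\ref{lem-stay-in-one} implies that almost every $\A_2$-run started in $R_2$ eventually enters some type~I region $R'_2$ of $\A_2$ reachable from $R_2$. Moreover, $\tau_{R_2} = t_S(1) > 0$ and $\tau_{R_1} = t_S(2) < 0$. I would take
\[
\calE \;=\; \bigl\{E_S[B_{II},B_{IV}],\ C_S[c_2{=}0 \wedge c_1{\geq} B_{II}]\bigr\} \cup \bigl\{C[R_1,R'_2] : R'_2 \text{ is a type~I region of } \A_2 \text{ reachable from } R_2\bigr\},
\]
pairwise disjoint after standard boundary adjustments in the defining inequalities.

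Each element of $\calE$ is then good for the union of the remaining ones: $E_S[B_{II},B_{IV}]$ by Lemma~\ref{lem-E-good}, since $t_S(2) < 0$ makes it a finite eager attractor; each $C[R_1,R'_2]$ with $R'_2$ of type~I is closed under $\post$ (since both $R_1$ and $R'_2$ are), so Lemma~\ref{lem-typeI} applies and the set is good for any disjoint target; and $C_S[c_2{=}0 \wedge c_1{\geq} B_{II}]$ is handled exactly as in the proof of Lemma~\ref{lem-II-II-gtrend-rightrown-trend-right}: conditional on never visiting the dangerous set, the first counter stays large and the long-run behaviour is governed by the one-counter chain $\A_2$ started at $q(0) \in R_2$, while $\tau_{R_2} = t_S(1) > 0$ together with the $B_{II}$-pumping property and Theorem~\ref{thm:tail-bounds-divergence} delivers the required positive lower bound on the escape probability.

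Finally, to show that $C[R_1,R_2]$ is reducible to $\calE$, I would observe that $t_S(2) < 0$ implies via Theorem~\ref{thm-zeros} that every configuration of $C[R_1,R_2]$ is eagerly attracted to $Z_S$; on each visit to $Z_S$ the run enters a set of $\calE$ with probability bounded away from zero, according to whether the first counter has already dropped below $B_{II}$ (placing the run in $E_S[B_{II},B_{IV}]$), still exceeds $B_{II}$ at $c_2 = 0$ (placing it in $C_S[c_2{=}0 \wedge c_1{\geq}B_{II}]$), or the second counter has been absorbed into a reachable type~I region $R'_2$ (placing it in $C[R_1,R'_2]$). A two-phase, geometric-number-of-rounds estimate in the style of Lemma~\ref{lem-II-II-gtrend-leftdown-trends-left-down}, invoking the sub-exponential tails of Theorems~\ref{thm:tail-bounds-height} and~\ref{thm:tail-bounds-convergence}, then produces the computable stopping time required by Definition~\ref{def-reduce}, and Lemma~\ref{lem-reduce} concludes. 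The main obstacle is establishing the good-for status of $C_S[c_2{=}0 \wedge c_1{\geq} B_{II}]$, since Theorem~\ref{thm:tail-bounds-divergence} is stated only for type~II regions; the plan is to decompose a run into the $\A_2$-phase that escapes $R_2$ into a reachable type~I region $R'_2$ (after which the system reduces to the situation of Lemma~\ref{lem-typeI}) and to bound the probability of the first counter collapsing in the meantime by combining the $B_{II}$-pumpable cycle with Theorem~\ref{thm:tail-bounds-convergence}.
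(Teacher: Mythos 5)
Your set $\calE$ contains a member that cannot satisfy Definition~\ref{def-good}, and this is the crux of the problem. You propose to mirror Lemma~\ref{lem-IV-II-trend-right}, where $C_S[c_2{=}0 \wedge c_1{\geq}B_{II}]$ is a legitimate absorbing set because there $R_2$ is of type~II with $\tau_{R_2}>0$: once the second counter hits zero with the first counter large, Theorem~\ref{thm:tail-bounds-divergence} gives a positive probability that the second counter \emph{never returns to zero}, so the run escapes the other members of $\calE$ with probability bounded away from zero. In the present lemma, however, $R_2$ is of \emph{type~IV} in $\A_2$ with $t_S(2)<0$, so $\calP(\run_{\A_2}(q(0),R_2))=0$: the $\A_2$-projection of the run almost surely leaves $R_2$ in finite time and enters (and, being post-closed, remains in) some type~I region $R'_2$. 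Consequently, every run started in $C_S[c_2{=}0 \wedge c_1{\geq}B_{II}]$ with first counter large almost surely visits some $C[R_1,R'_2]$, which is contained in the dangerous set $D_i=\bigcup_{j\neq i}C_j$. Hence there is no $\delta>0$ with $\calP(\run(p\vec{v}\rightarrow^*D_i))\leq 1-\delta$, and $C_S[c_2{=}0 \wedge c_1{\geq}B_{II}]$ is \emph{not} good for $D_i$. Your own final paragraph signals awareness that Theorem~\ref{thm:tail-bounds-divergence} only covers type~II regions, but the patch you sketch (decomposing the run into the $\A_2$-phase that reaches $R'_2$) actually \emph{establishes} that the dangerous set is almost surely visited, rather than bounding the probability of avoiding it. The $B_{II}$-pumping property likewise does not apply: $B_{II}$ is defined only for type~II regions of $\A_i$ with $t_S(i)<0$ and $\tau_R>0$, which is satisfied neither by $R_1$ (here $t_S(1)>0$) nor by $R_2$ (here type~IV).

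The fix is to drop the set $C_S[c_2{=}0 \wedge c_1{\geq}B_{II}]$ entirely. The correct $\calE$ consists of one finite eager attractor $E_S[B_{IV},D_{II}]$ (note the parameters: $B_{IV}$ bounds the first counter at $c_2{=}0$ so that the $\A_2$-projection can reach a type~I region, and $D_{II}$ bounds the second counter at $c_1{=}0$ so that the $D_{II}$-property applies to $R_1$ with $t_S(1)>0$, $t_S(2)<0$) together with the sets $C[R_1,R'_2]$ for the type~I regions $R'_2$ of $\A_2$ reachable from $R_2$; the latter absorb the generic behaviour once the second counter settles into a type~I region while the first counter, governed by $R_1$ and $t_S(1)>0$, stays large. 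Your reasoning that each $C[R_1,R'_2]$ is post-closed and hence good for any disjoint target is correct and is the right way to verify the second ingredient of Lemma~\ref{lem-reduce}; the error is purely in the inclusion of the extra, transient set and its associated escape argument.
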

\begin{proof}
	Let $\calE$ be the set consisting~of $E[B_{IV},D_{II}]$ and all $C[R_1,R'_2]$, where $R'_2$ is a type~I region reachable from $R_2$ in $\A_2$.
	Then $C[R_1,R_2]$ reducible to $\calE$. Further, all elements of $\calE$ are good for $\emptyset$. 
\end{proof}

Note that the case when $R_1$ is of type~II and $R_2$ is of type~IV such that 
$t_S(2) < 0$ and $t_S(1) < 0$ is symmetric to the cases covered in Lemma~\ref{lem-IV-II-trend-right} and Lemma~\ref{lem-IV-II-trend-left}.

Finally, in the next lemma we consider Case~(c).

\begin{lemma}
\label{lem-typeIV}
	Let $C[R_1,R_2]$ be a set of configurations where both 
	$R_1$ and $R_2$ are type~IV regions, and the trend $t_S$ of the associated BSCC $S$ of 
	$\C_\A$ is negative in some component. Then $C[R_1,R_2]$ is good for~$\emptyset$. 
\end{lemma}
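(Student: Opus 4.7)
My plan is to mirror the template established in Lemmata~\ref{lem-IV-II-trend-right}--\ref{lem-II-IV-gtrend-rightdown}: I will construct a finite set $\calE$ of auxiliary subsets which are already known to be good for~$\emptyset$, show that $C[R_1,R_2]$ is reducible to $\calE$, and invoke Lemma~\ref{lem-reduce}. Without loss of generality I assume $t_S(2)<0$; the case $t_S(1)<0$ is symmetric.

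Let $\calE$ consist of the finite attractor $E_S[B_{IV},B_{IV}]$ together with (i) every set $C[R'_1,R_2]$ where $R'_1$ is a type~I region of $\A_1$ reachable (in $\A_1$) from some configuration of $R_1$, and (ii) every set $C[R_1,R'_2]$ where $R'_2$ is a type~I region of $\A_2$ reachable (in $\A_2$) from some configuration of $R_2$. Finiteness and effectiveness of $\calE$ follow from the fact that $\A_1,\A_2$ have only finitely many regions, each of which is effectively regular. The set $E_S[B_{IV},B_{IV}]$ is good for~$\emptyset$ by Lemma~\ref{lem-E-good} (using $t_S(2)<0$), while the remaining sets are good for~$\emptyset$ by Lemma~\ref{lem-typeI} because each of them has a type~I component.

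It remains to establish that $C[R_1,R_2]$ is reducible to $\calE$. The semilinear constraints $\varphi_C$ for $C \in \calE$ are obtained exactly as in the proofs of Lemmata~\ref{lem-II-II-gtrend-rightrown-trend-right} and~\ref{lem-IV-II-trend-right}: for each $C$ I further restrict the non-bounded counter(s) to values at least $b_\varepsilon$, where $b_\varepsilon$ is the computable threshold from Theorem~\ref{thm-zeros} that makes $\calP(\run(q\vec{u} \tran{}^* Z_S)) \leq \varepsilon$ whenever both counters exceed $b_\varepsilon$. This yields $\sem{\varphi_C} \subseteq C$ and the required $\varepsilon$-bound on the probability of moving from $\sem{\varphi_C}$ into another element of $\calE$, covering conditions~1 and~2 of Definition~\ref{def-reduce}.

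The main obstacle is condition~3: for every $p\vec{v} \in C[R_1,R_2]$ and every $\delta>0$ I need a computable bound $\ell$ on the number of steps within which $\bigcup_{C \in \calE} \sem{\varphi_C}$ is reached with probability at least $1-\delta$. Because $t_S(2)<0$, Theorem~\ref{thm-zeros} gives eager attraction of $C[R_1,R_2]$ to $Z_S$. Each hit of $Z_S$ leaves some counter at zero, and by the defining property of $B_{IV}$ within at most $B_{IV}$ further transitions the run either settles inside $E_S[B_{IV},B_{IV}]$ or causes the zero-hitting counter's $\A_i$-projection to enter a type~I region, placing the configuration in some $C[R'_1,R_2]$ or $C[R_1,R'_2]$ of $\calE$. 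A short pumping argument then drives the remaining counter above $b_\varepsilon$ into the corresponding $\sem{\varphi_C}$ with constant positive probability in a bounded number of steps. Combining the sub-exponential tail bound on hitting $Z_S$ with this bounded-step iteration, in the spirit of the proof of Lemma~\ref{lem-II-II-gtrend-leftdown-trends-left-down}, yields the required computable $\ell$, and Lemma~\ref{lem-reduce} completes the proof.
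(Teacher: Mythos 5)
Your overall plan matches the paper's: define a small collection $\calE$, show each member is good, and invoke reducibility via Lemma~\ref{lem-reduce}. However, there are two concrete discrepancies with the paper's proof, one of which is an outright error.

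First, your $\calE$ omits the sets $C[R'_1,R'_2]$ where \emph{both} $R'_1$ and $R'_2$ are type~I regions reachable from $R_1$ and $R_2$ in $\A_1$ and $\A_2$ respectively. The paper's $\calE$ is \(\{E_S[B_{IV},B_{IV}]\}\cup\{C[R_1,R'_2]\}\cup\{C[R'_1,R_2]\}\cup\{C[R'_1,R'_2]\}\). Since both $R_1$ and $R_2$ are type~IV, almost every run eventually sees \emph{both} one-counter projections lock into type~I regions, landing the run in some $C[R'_1,R'_2]$. The sets you do include, $C[R'_1,R_2]$ and $C[R_1,R'_2]$, are transient for such runs (the second, respectively first, counter projection still inevitably leaves the type~IV region), so the disjoint covering required by Definition~\ref{def-reduce} is not available without those product sets, and condition~2 cannot hold with non-empty constraints on the transient sets. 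You would need to argue explicitly that the runs ending in $C[R'_1,R'_2]$ are absorbed into $E_S[B_{IV},B_{IV}]$; including $C[R'_1,R'_2]$ directly, as the paper does, sidesteps this.

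Second, your construction of the semilinear constraints $\varphi_C$ appeals to the threshold $b_\varepsilon$ from the second item of Theorem~\ref{thm-zeros}, which requires \emph{both} components of $t_S$ to be positive. But the hypothesis of Lemma~\ref{lem-typeIV} is that $t_S$ is \emph{negative} in some component, so that item does not apply: the counter with negative trend is eagerly attracted to zero and will not stay above $b_\varepsilon$. The analogous constraints in Lemmata~\ref{lem-II-II-gtrend-rightrown-trend-right} and~\ref{lem-IV-II-trend-right} are instead obtained from Theorem~\ref{thm:tail-bounds-divergence} under a $\tau_R>0$ hypothesis that has no counterpart here. For the pure type~IV case the correct constraint construction is simpler, but it is not the one you propose.
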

\begin{proof}
    Let $\calE$ be the set consisting of $E[B_{IV},B_{IV}]$ and
    all $C[R_1,R'_2]$, $C[R'_1,R_2]$, $C[R'_1,R'_2]$, where $R'_i$ is a type~I
    region reachable from $R_i$ in $\A_i$ (for $i \in \{1,2\}$).
    We show that $C[R_1,R_2]$ reducible to $\calE$.
\end{proof}

\section{Some notes on three-counter pVASS}
\label{sec-three-counters}

%In Section~\ref{sec-two-counters}, we have seen that the
%pattern frequency vector may take the $\perp$ value
%with positive probability even in a $2$-dimensional pVASS.
%According to Lemma~\ref{lem-two-counter-down-down},
%this can happen only under rather specific circumstances when
%the two counters have a tendency to decrease when they
%are both positive, but when one of them becomes zero,
%then the other starts to ``oscillate''. Since even a small change
%in rule weights breaks this oscillation, one may argue
%that such situations are rare in practice.

In this section we give an example of a $3$-dimensional pVASS $\A$
such that $\M_\A$ is strongly connected, and the pattern frequency
vector seems to take the $\perp$ value with probability one
(this intuition is confirmed by Monte Carlo simulations,
see below). Further, the example is insensitive to small changes in rule weights,
and it also shows that the method of Section~\ref{sec-two-counters}
based on constructing pVASS of smaller dimension by ``forgetting''
one of the counters and then studying the ``trend'' of this counter
in the smaller pVASS is insufficient for three (or more) counters.

\newcommand{\wsmall}{P}
\newcommand{\wmedium}{Q}
\newcommand{\wlarge}{R}

\begin{figure}
\def\ss{3.8}
\begin{center}
\begin{tikzpicture}[scale=.6,font=\scriptsize,>=stealth']
every node/.style={draw,minimum size=4mm,}
\node[place] (s) at (0,0) {$p$};
\node[place] (t1) at (-5,-3) {$t_1$};
\node[place] (t2) at (0,-3) {$t_2$};
\node[place] (t3) at (5,-3) {$t_3$};

\draw[->] (s) to node[left] {(-1,-1,0);\wmedium} (t1);
\draw[->,bend right] (s) to node[near end,left] {(0,-1,-1);\wmedium} (t2);
\draw[->] (s) to node[right] {(-1,0,-1);\wmedium} (t3);

\draw[->,rounded corners=8pt] (t1) -- node[left]
     {(0,3,0);1}(-5,0) node {} -- (s);
\draw[->,bend right] (t2) to node[near start,right] {(0,0,3);1} (s);
\draw[->,rounded corners=8pt] (t3) -- node[right]
     {(3,0,0);1}(5,0) node {} -- (s);

\tikzset{every loop/.style={min distance=10mm,in=120,out=60,looseness=10}}
\path[->] (s) edge  [loop above] node
{(2,0,0);\wsmall,\quad (0,2,0);\wsmall,\quad (0,0,2);\wsmall,\quad
 (-1,-1,-1);\wlarge} ();

\end{tikzpicture}

\end{center}
\caption{A $3$-dimensional pVASS $\A$. For suitable weights $P,Q,R>0$,
we have that $F_{\A} = {\perp}$ almost surely.}
\label{fig-three-counters}
\end{figure}
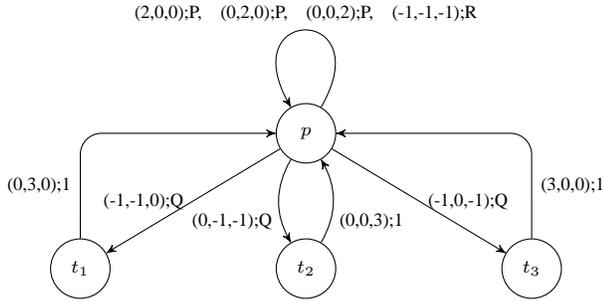

The pVASS $\A$ is shown in Fig.~\ref{fig-three-counters}. Some rules
increase the counter
by more that $1$, so these should be formally replaced by several rules using
auxiliary control states. Intuitively, $\A$ behaves
in the following way. Suppose we start in an initial
configuration $p(m,0,0)$, where $m$ is ``large''. Then, $\A$ starts
to decrease the first counter and increase the second one. On average,
the value of the second counter becomes $2m$ when the first
counter is decreased to zero, and the third counter is kept ``small''.
So, ``on average'' we eventually reach a configuration
$p(0,2m,0)$ in about $2m$ transitions. Then, the second counter is
decreased and the third counter is increased, where the value is
again doubled ``on average'', using $4m$ transitions.
Thus, we reach a configuration $p(0,0,4m)$. Then, we ``pump'' the tokens
from the third counter to the first one, reaching $p(8m,0,0)$ in about
$8m$ transitions. And so on. Observe that the $k$-th phase takes
about $2^k$ transitions, and so at the end of each phase,
about half of the time was spent in configurations with the
``current'' pattern. Hence, the pattern frequency oscillates.

A precise formulation of this phenomenon, and a formal proof that
almost all runs really behave in the above indicated way,
are technically demanding and we do not provide them in this paper.
For the reader's convenience, we have implemented a simple Maple sheet
which can be used to perform Monte Carlo simulations of $\A$ and
observe the above described phenomenon in practice\footnote{Available at \mbox{http://www.cs.ox.ac.uk/people/stefan.kiefer/pVASS-simulation.txt}}.

Note that the oscillation of $\A$ is insensitive to small changes
in rule weights. However, if we modify $\A$ into $\A'$ so that the counter
value is \emph{decreased}
on average in each phase (e.g., we start in $p(m,0,0)$, and then reach
$p(0,m-1,0)$, $p(0,0,m-2)$, $p(m-3,0,0)$, etc., on average), then the \emph{sum}
of the counters has a tendency to decrease and $\M_{\A'}$ has a finite
attractor. This means that the pattern frequency vector is well
defined for almost all runs of $\A'$. Still, the behaviour of
all two-counter machines $\B_1$, $\B_2$, $\B_3$ obtained from
$\A$ by ``forgetting'' the first, the
second, and the third counter, is essentially similar
to the behaviour of $\B_1'$, $\B_2'$, and $\B_3'$ obtained from $\A'$
in the same way (for example, both in $\B_1$ and $\B_1'$, the second
counter has a tendency to increase and the third has a tendency
to decrease). Hence, we cannot distinguish between the behaviour
of $\A$ and $\A'$ just by studying the ``trends'' in the two-counter
pVASS obtained by ``forgetting'' one of the counters. This indicates
that the study of $3$-dimensional pVASS requires different (and perhaps
more advanced) methods than those presented in this paper.

%%% Local Variables:
%%% mode: latex
%%% TeX-master: "main"
%%% End:

\section*{Acknowledgement}
\noindent
Tom\'{a}\v{s} Br\'{a}zdil and Anton\'{\i}n Ku\v{c}era are supported
by the Czech Science Foundation, Grant \mbox{No.~15-17564S}. The research leading to these results has received funding from the People Programme (Marie
Curie Actions) of the European Union's Seventh Framework Programme (FP7/2007-2013) under
REA grant agreement no [291734].

\bibliographystyle{abbrv}
\bibliography{petri_net,str-short,concur,stefan}

\onecolumn
\appendices
\section{Proofs of Section~\ref{sec-one-counter}}
\label{app-one-counter}

%\begin{reftheorem}{Lemma}{\ref{lem-number-regions}}
%	There are at most $2|Q| - 1$ non-empty regions.
%\end{reftheorem}
%\begin{proof}
%	Let $S$ be a SCC of $\C_\A$. If $S$ is not a BSCC of $\C_\A$, then the control states
%	of $S$ may determine at most $|S|$ non-empty regions (of type~I). If $S$ is a BSCC of $\C_\A$, then the control states of $S$ may determine at most $|S|$ non-empty regions
%	of type~I or~II, and at most additional non-empty region which is either of type~III or
%	of type~IV (clearly, it cannot happen that the type~III and type~IV regions determined
%	by $S$ are both non-empty). Hence, the total number of non-empty regions cannot
%	exceed $2|Q| -1$ (here we use the assumption that $\C_\A$ is weakly connected).	
%\end{proof}

\begin{lemma}\label{lem-typeI-reach}
	Let $R \neq \emptyset$ be a type~II region. Then every configuration of $\pre(R)$ can reach a configuration of $R$ in at most $4|Q|^3$ transitions.
\end{lemma}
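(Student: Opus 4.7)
Plan: My approach is a shortest-path analysis exploiting the structural properties of the type~II region $R = \post(p(0))$. Fix $q(k) \in \pre(R)$ and let $\pi = w(0), w(1), \ldots, w(L)$ be a shortest path with $w(0) = q(k)$ and $w(L) \in R$ (only the last configuration lies in $R$). Since $\pi$ is shortest, no two configurations along it coincide. I aim to show $L \leq 4|Q|^3$ by contradiction.

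First, I would collect the structural features of $R$ that drive the argument. Since $p$ lies in a BSCC $S$ of $\C_\A$ and $R = \post(p(0))$, every configuration of $R$ has its control state in $S$, so $R \subseteq S \times \Nset$ with $|S| \leq |Q|$. Moreover, $R = \post(p(0)) \subseteq \pre(p(0))$ together with $p(0) \in R$ makes the subgraph of $\M_\A$ induced by $R$ strongly connected. Finally, $R$ is effectively semi-linear, and for each $s \in S$ the counter profile $R_s = \{n : s(n) \in R\}$ is a finite union of arithmetic progressions with periods bounded by $|Q|$; in particular, if $R_s$ is infinite it contains a full progression $\{n_0 + i\cdot\delta : i \geq 0\}$ with $n_0,\delta \leq |Q|$.

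The heart of the plan is a pumping argument on $\pi$. If $L > 4|Q|^3$, then distributing $L{+}1$ positions over $|Q|$ states yields some state $s$ visited at more than $4|Q|^2$ positions $i_1 < \cdots < i_N$, with pairwise distinct counter values $c_1,\ldots,c_N$. Letting $r$ be the state of $w(L)$ and $\delta \leq |Q|$ the period of the progression in $R_r$ witnessing $w(L) \in R$, I would seek indices $a < b$ satisfying: (i) $c_b > c_a$; (ii) $c_b - c_a$ is a multiple of $\delta$; and (iii) the counter along the suffix $w(i_b),\ldots,w(L)$ never dips below $c_b - c_a$, so that the same rule sequence executed from $s(c_a)$ remains enabled. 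A double pigeonhole on residues modulo $\delta$ (at most $|Q|$ classes) and on the suffix-minimum counter truncated at $|Q|$ (again at most $|Q|$ classes) secures such a pair among $N > 4 \cdot |Q| \cdot |Q|$ visits. Splicing out $w(i_a),\ldots,w(i_b)$ and re-executing the suffix then yields a strictly shorter path from $q(k)$ to a configuration still in $R$ (by the period invariance of $R_r$), contradicting shortness.

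The main obstacle is precisely this splice step, which must guarantee both counter-non-negativity along the shifted suffix and that the shifted endpoint still lands in $R$; both demands are discharged via the period structure of $R_r$ combined with the choice of $(a,b)$. The constant $4|Q|^3$ is sized to absorb the two pigeonhole layers, with the factor $4$ providing the slack needed for the double-pigeonhole step. If a fully direct pumping analysis turns out too intricate, a plausible alternative is to invoke the regularity of $\pre(R)$ (as used in the paper via \cite{EHRS:MC-PDA}): represent $\pre(R)$ by an NFA over $\{a\}$ of size polynomial in $|Q|$, and derive the bound from the length of a shortest accepting run of that automaton.
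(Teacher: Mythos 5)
Your overall strategy (pumping on a shortest path to~$R$, using the eventual periodicity of the sets $R_s$ with period $\delta=\tau\le|Q|$) is the same as the paper's, but the concrete pigeonhole you propose does not deliver the pair $(a,b)$ you need, and this is a genuine gap. First, putting two visits $i_a<i_b$ to state $s$ into the same class ``(residue of $c$ mod $\delta$, suffix-min truncated at $|Q|$)'' says nothing about the magnitude of $c_b-c_a$: you can perfectly well have $c_a=10$, $c_b=10+90\delta$, both residues equal and both suffix-mins equal to $5<|Q|$, and then condition~(iii), which asks $\mu_b\ge c_b-c_a$, fails badly. Second, you only look for pairs with $c_a<c_b$; but nothing forces such a pair to exist among the $>4$ visits in a pigeonhole class --- the counter values at visits to $s$ could be strictly decreasing along the path. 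Ironically, the decreasing case $c_a>c_b$ is the \emph{easy} one (delete the segment and shift the suffix \emph{up} by $c_a-c_b$, a multiple of $\delta$; non-negativity is automatic and the endpoint stays in $R$ by the upward closure $L_i\subseteq L_{i+\tau}$), yet your argument omits it entirely. Finally, the structural claim that the progression in $R_r$ starts at some $n_0\le|Q|$ is not justified; the paper only establishes that periodicity stabilizes from level $|Q|^2$ onward, which matters exactly for the ``shift down and still land in~$R$'' step you defer.

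The paper avoids your difficulty by not trying to find a deletable pair directly. It first shows the counter along a shortest path is confined to a window of width $\sim 4|Q|^2$: it can never drop more than $|Q|^2$ below an earlier value (otherwise a descending cycle $v(n{+}\ell\tau)\to^*v(n)$ can be spliced out and the tail shifted up, landing in $R$ by upward closure), and it can never rise more than $3|Q|^2$ above the start (otherwise an ascending cycle $v(n)\to^*v(n{+}\ell\tau)$ can be spliced out; crucially, the previous bound guarantees the tail then lives above $k+|Q|^2\ge|Q|^2$, so after shifting down by $\ell\tau\le|Q|^2$ both non-negativity and $R$-membership of the endpoint $r(m{-}\ell\tau)$ are secured by the stabilized periodicity). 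Only once the range is bounded does it conclude that the shortest path visits at most $\sim 4|Q|^3$ distinct configurations. That two-step ordering --- handle the descending case first, then use it to make the ascending case safe --- is precisely the extra structure your single pigeonhole is missing; to repair your argument you would essentially have to reproduce it.
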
  
\begin{proof}
	Let $R = \post(p(0))$ be a type~II region. Recall that $R$ is strongly connected.
	For every $i \geq 0$, let
	$L_i = \{q \in Q \mid q(i) \in \post(p(0))\}$. We start by showing that there
	is $\tau \leq |Q|$ such that $L_i \subseteq L_{i+\tau}$ for every $i \in \Nset$,
	and if $i \geq |Q|^2$, then $L_i = L_{i+\tau}$. Hence, the structure of 
	$R$ is ``ultimately periodic'' and the period $\tau$ is small. 
	
	Let $\tau$ be the least $j > 0$ such that $p \in L_j$. We claim that \mbox{$\tau \leq |Q|$}.
	Since $R$ is infinite, all $L_i$ are non-empty, and hence there are $0 \leq i < j \leq |Q|$ such that
	$L_i \cap L_j \neq \emptyset$. Let $r \in L_i \cap L_j$. Then $r(i) \tran{}^* p(0)$, hence also $r(j) \tran{}^* p(j{-}i)$, which means
	$p \in L_{j{-}i}$. Now we show that $L_i \subseteq L_{i+\tau}$ for every $i \in \Nset$; to 
	see this, first realize that $p(0) \tran{}^* p(\tau)$. If $r \in L_i$, then $p(0) \tran{}^* r(i)$, and hence also $p(\tau) \tran{}^*r(\tau{+}i)$. This means that $r(\tau{+}i)$ is reachable from $p(0)$, i.e., $r \in L_{i{+}\tau}$. It remains to prove that if $i \geq |Q|^2$, then $L_i = L_{i+\tau}$. Clearly, there is \mbox{$k \leq |Q|$} such that $L_{k\cdot\tau} = L_{(k{+}1)\cdot\tau}$. Since $k\cdot\tau \leq |Q|^2$,
	it suffices show that for every $i \geq k\cdot\tau$ we have that  
	\mbox{$L_{i} \supseteq  L_{i+\tau}$}. Let $s \in L_{i{+}\tau}$, and let
	$r \in L_{(k{+}1)\cdot\tau}$. Then  $r((k{+}1)\cdot\tau) \tran{}^* s(i{+}\tau)$, and this finite path
	inevitably contains a suffix which is a finite path from $t((k{+}1)\cdot\tau)$ to $s(i{+}\tau)$ such that
	$t \in L_{(k{+}1)\cdot\tau}$ and the counter is never decreased below $(k{+}1)\cdot\tau$ 
	along this suffix. Hence, there is also a finite path from $t(k\cdot\tau)$ to $s(i)$, and since $t \in L_{k\cdot\tau}$, we obtain $s \in L_i$.
	
	Now let $q(k) \in \pre(R)$, and let $w$ be a path of minimal length from $q(k)$ to a
	configuration of $R$. Suppose that the last configuration of $w$ is $r(m) \in R$.
	
	First we show that $w$ cannot contain a subpath of the form $t(i) \tran{}^* s(j)$ where
	$i-j \geq |Q|^2$. Suppose that $w$ contains such a subpath. Then $w$ also contains a subpath of the form $v(n{+}\ell\tau) \tran{}^* v(n)$, where $\ell \geq 1$, which can be safely removed from $w$ and the suffix of $w$ after the configuration $v(n)$ can be trivially adjusted so that the resulting path $w'$ leads from $q(k)$ to $r(m{+}\ell\tau)$. Since $r(m{+}\ell\tau) \in R$ (see above), we obtain a contradiction with our choice of~$w$.
	
	Further, we prove that the counter stays bounded by $k + 3|Q|^2$ in every configuration visited by~$w$. Suppose the converse. Then $w$ contains a subpath of the form
	$t(k{+}2|Q|^2) \tran{}^* s(k{+}3|Q|^2)$. By applying the observation of the previous
	paragraph, we obtain that the counter stays above $k+|Q|^2$ in all configurations
	visited by $w$ after $t(k{+}2|Q|^2)$, and above $k+2|Q|^2$ in all configurations
	visited by $w$ after $s(k{+}3|Q|^2)$. In particular, the last configuration $r(m)$
	of $w$ satisfies $m \geq k+2|Q|^2$. Further, the subpath  $t(k{+}2|Q|^2) \tran{}^* s(k{+}3|Q|^2)$ must contain a subpath of the form $v(n) \tran{}^* v(n{+}\ell\tau)$ where $1 \leq \ell \leq |Q|$.
	If we delete this subpath from $w$ and adjust the configurations visited after 
	$v(n{+}\ell\tau)$, we obtain a path $w'$ from $q(k)$ to $r(m{-}\ell\tau)$. Since 
	$r(m)\in R$, $m \geq k+2|Q|^2$, and $\ell\tau \leq |Q|^2$, we obtain that $r(m{-}\ell\tau) \in R$ (see above).
	Thus, we obtain a contradiction with our choice of~$w$.
	
	To sum up, $w$ can visit at most $4|Q|^3$ different configurations, and hence
	its length cannot exceed $4|Q|^3$.
\end{proof}

\begin{reftheorem}{Lemma}{\ref{lem-bounded}}
	Every configuration of $\A$ can reach a configuration of some region in at most $11|Q|^4$ transitions.
\end{reftheorem}
\begin{proof}
	We start with three auxiliary observations. Let $S$ be a BSCC of $\C_\A$, and let $R_{I}(S)$ and $R_{II}(S)$ be the unions of all type~I regions and all type~II regions determined by all $q \in S$, respectively.
	Further, let $R_{III}(S)$ be the type~III region determined by~$S$. We have the following:
	\begin{itemize}
		\item[(a)]  If $R_{III}(S) \neq \emptyset$,
		then for all $q \in S$ and all $\ell \geq |Q|$ we have that $q(\ell) \in R_{III}(S)$.
		\item[(b)] Let $D_S$  be the set
		\[ 
		\bigg(S {\times} \Nset \ \cap\   \pre(R_{I}(S))\bigg) \smallsetminus 
		\bigg(R_{I}(S) \cup \pre(R_{II}(S))\bigg).
		\]
		If $D_S$ contains a configuration $q(\ell)$ where $\ell \geq 4|Q|^3 + |Q|^2$,
		then $D_S$ is infinite (i.e., the type~IV region determined by $S$ is exactly $D_S$).
		\item[(c)] If $q(\ell) \in \pre(R_{I}(S))$, then $q(\ell)$ can reach a configuration of $R_{I}(S)$
		in at most $\ell|Q| + 4|Q|^3$ transitions.
	\end{itemize}
	Observation~(a) follows by observing that if $q(\ell) \tran{}^* s(0)$ where $r,s \in S$ and $\ell \geq |Q|$,
	than the path from $q(\ell)$ to $s(0)$ contains a subpath of the form $t(i) \tran{}^*t(j)$, where
	$i > j$. This means that \emph{every} configuration of $S \times \Nset$ can reach a configuration with
	zero counter, because $S$ is strongly connected.
	
	To prove Observation~(b), assume that $D_S$ contains a configuration $q(\ell)$ where 
	$\ell \geq 4|Q|^3 + |Q|^2$ and $D_S$ is finite. Then there is the largest $\ell'$ such that
	$q(\ell') \in D_S$. Obviously, $\ell' \geq 4|Q|^3 + |Q|^2$. We show that 
	$q(\ell') \in \pre(R_{II}(S))$, which is a contradiction. Recall that every non-empty type~II region
	determined by a control state of $S$ is ultimately periodic and its period $\tau$ is bounded by $|Q|$
	(see the proof of Lemma~\ref{lem-typeI-reach}). Let $\kappa$ be the product of the periods of all non-empty
	type~II regions determined by the control states of~$S$. Then 
	$q(\ell' {+} \kappa) \in \pre(R_{II}(S))$ (otherwise, we have a contradiction with the maximality 
	of $\ell'$). Hence, $q(\ell' {+} \kappa)$ can reach a configuration $v(m)$ of some type~II region
	in at most $4|Q|^3$ transitions, which means that $m \geq \kappa + |Q|^2$, and the 
	configuration $q(\ell')$ can reach the configuration $v(m{-}\kappa)$. By our choice of $\kappa$ 
	and the fact that $m{-}\kappa \geq |Q|^2$, we obtain that $v(m{-}\kappa)$ belongs to the same type~II
	region as $v(m)$ (see the proof of Lemma~\ref{lem-typeI-reach}). 
	
	Observation~(c) is obtained in two steps. We show that 
	\begin{itemize}
		\item[(A)] $q(\ell)$ can reach (some) configuration with zero counter in at most $\ell|Q|+|Q|^2$ transitions;
		\item[(B)] if $s(0) \tran{}^* t(0)$ where $s,t \in S$, then $s(0)$ can reach $t(0)$ in at most $|Q|^3 + |Q|$ transitions.
	\end{itemize}
	Note that Observation~(c) follows immediately from~(A) and~(B). To prove~(A), we distinguish two possibilities.
	If there is a decreasing cycle, i.e., a path of length less than $|Q|$ of the form $t(i) \tran{}^* t(j)$ where $j < i$, then $q(\ell)$ needs at most $|Q|-1$ transitions to reach a configuration with zero counter or a configuration $t(\ell+c)$ where $c < |Q|$. In the second case, at most 
    $(\ell+c)(|Q|-1) \leq \ell|Q| + |Q|^2 - |Q|$ transitions are needed to reach a configuration with zero
    counter from $t(\ell+c)$, and hence $q(\ell)$ can reach a configuration with zero counter in at most $\ell|Q|+|Q|^2$ transitions. If there is no decreasing cycle, then $q(\ell)$ can still reach a configuration with zero counter (because $q(\ell) \in \pre(R_{I}(S))$), and hence there is a path $w$ of minimal length
    from $q(\ell)$ to a configuration with zero counter. It follows easily that if the length of this path
    exceeds $\ell|Q|$, then $w$ either contains a decreasing cycle or can be shortened.
	
    To prove part~(B), consider a path $w$ of minimal length from $s(0)$ to $t(0)$. One can easily show
    that the counter value must be bounded by $|Q|^2$ along $w$, because $w$ could be shortened otherwise.
    Hence, $w$ can visit at most $|Q|^3 + |Q|$ configuration, which means that its length is bounded
    by $|Q|^3 + |Q|$.
	
	Now we can finish the proof of Lemma~\ref{lem-bounded}.
	Let $p(k)$ be a configuration of $\A$. If $p(k) \in \pre(R)$ for some type~II region~$R$,
	then $p(k)$ can reach $R$ in at most $4|Q|^3$ transitions by Lemma~\ref{lem-typeI-reach}. 
	Otherwise, let us first consider the case when $p(k) \tran{}^* t(i)$ for some configuration
	$t(i)$ where $i \geq |Q|$. Then such a $t(i)$ is reachable from $p(k)$ in at most $|Q|^2$ transitions. 
	Further, $t(i)$ can reach a configuration $r(j)$, where $r \in S$
	for some BSCC $S$ of $\C_\A$, in at most $|Q|$ transitions. If $r(j) \in R_{I}(S)$, we are done.
	Otherwise, if $r(j)$ can reach a configuration of
	$R_{III}(S)$, then such a configuration is reachable from $r(j)$ in at most $|Q|^2$ transitions (here we
	use~Observation~(a)). Otherwise, $r(j) \in D_S$. If $j \geq 4|Q|^3 + |Q|^2$, then $D_S$ is the type~IV region determined by $S$ by Observation~(b) and we are done. If $j < 4|Q|^3 + |Q|^2$, then $r(j)$ can reach a
	configuration of $R_{I}(S)$ in at most $9|Q|^4$ transitions by Observation~(c). Hence, $p(k)$ can reach a configuration of some region in at most $11|Q|^4$ transitions. 
	
	It remains to consider the case when  $p(k)$ cannot reach a configuration $t(i)$ such that
	$i \geq |Q|$. Then the total number of configurations reachable from $p(k)$ is bounded by $|Q|^2$.
	Each of these configurations is reachable in at most $|Q|^2$ transitions,
	and some of them must belong to a type~I or a type~III region. 
\end{proof}

%\begin{reftheorem}{Lemma}{\ref{lem-stay-in-one}}
%	Let $p(k)$ be a configuration of $\A$. Then almost every run initiated in $p(k)$
%	eventually stays in precisely one region. 
%\end{reftheorem}
%\begin{proof}
%	By Lemma~\ref{lem-bounded}, the probability of reaching (some) region from an arbitrary initial configuration is at least $\xmin^{4|Q|^3}$, where $\xmin$ is the least positive 
%	transition probability of $\M_\A$. This implies that almost every $w \in \run(p(k))$ visits some region~$R$. If $R$ is a type~I, a type~II, or a type~III region, we are done immediately, because these regions are closed under $\post$. If $R$ is a type~IV region, then $w$ either stays in $R$, or later visits a configuration of a type~I region where it stays. 
%\end{proof}
\bigskip

Before proving the next lemmata, we need to introduce some notation. For all configurations
$p(k)$ and $q(\ell)$, we use $\run(p(k) \tran{}^* q(\ell))$ to denote the set of all
$w \in \run(p(k))$ that visit $q(\ell)$. We also use $\run(p(k),{\uparrow})$ to denote the
set of all $w \in \run(p(k))$ such that the counter stays positive in some suffix of $w$,
and $\run(p(k),{\uparrow}_S)$, where $S$ is a BSCC of $\C_\A$, to denote those $w \in \run(p(k),{\uparrow})$ which visit a configuration with control state in~$S$. For all
$p,q \in Q$, we use $[p{\downarrow}q]$ to denote the probability of all $w \in \run(p(1))$
that visit $q(0)$ and the counter stays positive in all configurations preceding this 
visit. Finally, we use $[p{\uparrow}]$ to denote $1 - \sum_{q \in Q} [p{\downarrow}q]$.

%\begin{reftheorem}{Lemma}{\ref{lem-reg-same-value}}
%	Let $p(k)$ be a configuration of $\A$ and $R$~a region of $\A$. Then $F_\A$ takes
%	the same value $F$ for almost all $w \in \run(p(k),R)$. Further, for every
%	rational $\varepsilon >0$, there is a 
%	vector $H : \Pat_\A \rightarrow \Qset$ computable in time polynomial in
%	$\size{\A}$ and~$\size{\varepsilon}$ such that $H_p$ approximates $F_p$ up to
%	the relative error~$\varepsilon$ for every $p \in \Pat_\A$. 
%\end{reftheorem}

\begin{reftheorem}{Lemma}{\ref{lem-reg-same-value}}
	Let $p(k)$ be a configuration of $\A$ and $Z$~a zone of~$\A$. Then $F_\A$ is well defined 
	for almost all $w \in \run(p(k),Z)$, and there exists $F : \Pat_\A \rightarrow \Rset$
	such that $F_\A(w) = F$ for almost all $w \in \run(p(k),Z)$. Further, for every
	rational $\varepsilon >0$, there is a vector $H : \Pat_\A \rightarrow \Qset$ computable in time polynomial in
	$\size{\A}$ and~$\size{\varepsilon}$ such that $H(q\alpha)$ approximates $F(q\alpha)$ up to
	the relative error~$\varepsilon$ for every $q\alpha \in \Pat_\A$. 
\end{reftheorem}
\begin{proof}
	For every BSCC $S$ of $\C_\A$, we define a vector $F_S : \Pat_\A \rightarrow \Qset$
	as follows: $F_S(q(0)) = 0$ for all $q \in Q$, $F_S(q(*)) =0$ for all $q \in Q \smallsetminus S$, and $F_S(q(*)) = \mu_S(q)$ for all $q \in S$ (recall that 
	$\mu_S$ is the invariant distribution of~$S$). Note that $F_S$ is a rational vector
	that can be computed in time polynomial in~$\size{\A}$. 
	
	Let $Z$ be a zone of~$\A$. If $Z = \emptyset$, then the claim follows trivially (according to
	the definitions adopted in Section~\ref{sec-prelim}, $F$ can be chosen arbitrarily, and we can
	put $H = F$). Now let $Z$ be a non-empty zone. We proceed by considering possible forms of~$Z$.
	
	Let us first assume that $Z = R$, where $R$ is a type~I region. Then $R$ can be seen as a strongly connected Markov chain with at most $|Q|^2$ vertices (see the remarks before Lemma~\ref{lem-typeI-reach}), and the corresponding invariant distribution $\mu_R$ is computable in time polynomial in~$\size{\A}$. Hence, for almost all $w \in \run(p(k),R)$ we have that $F_\A(w) = F_R$, where $F_R(q(0)) = \mu_R(q(0))$ for all
	$q(0) \in R$, $F_R(q(0)) = 0$ for all $q(0) \not\in R$, and 
	$F_R(q(*)) = \sum_{q(k) \in R, k>0} \mu_R(q(k))$ for all $q \in Q$ (the empty sum is 
	equal to $0$). 
	
	If $Z = R$ where $R$ is a type~III region determined by a BSCC $S$ of $\C_\A$ such that
	$t_S \leq 0$, then the actual counter value does not influence the limit behaviour of runs staying in~$R$, which
	means that $F_\A(w) = F_S$ for almost all $w \in \run(p(k),R)$. 
	
	If $Z = R_{II}(S) \cup R_{III}(S) \cup R_{IV}(S)$, where $S$ is a BSCC of $\C_\A$ such that $t_S > 0$,
	then almost all runs of $\run(p(k),Z))$ are \emph{diverging}, i.e., for every
	$\ell \in \Nset$ and almost every $w \in \run(p(k),Z))$ there exists $m \in \Nset$
	such that the counter value is at least~$\ell$ in every configuration $w(m')$ where $m' \geq m$. Consequently, $F_\A(w) = F_S$ for almost all $w \in \run(p(k),Z)$.
	
	If $Z = R_{II}(S)$, where $S$ is a BSCC of $\C_\A$ such that $t_S = 0$, then the configurations of $Z$ 
	with zero counter are visited infinitely often by almost all runs of $\run(p(k),Z)$, but 
	the expected number of transitions between two consecutive visits to such configurations is infinite (see \cite{BKK:pOC-time-LTL-martingale-JACM}). In other words, visits to
	configurations with zero counter have zero frequency for almost all runs of $\run(p(k),Z)$.
	Consequently, $F_\A(w) = F_S$ for almost all $w \in \run(p(k),Z)$.
	
	Finally, consider the case when $Z = R$, where $R$ is a type~II region determined by $p \in S$ where $S$ is a BSCC of $\C_\A$ satisfying $t_S < 0$. Let $\D_S$ be a finite-state Markov chain where
	the set of vertices is \mbox{$\{q_j \mid j \in \{0,1\}, q(j) \in R\}$}
	and the transitions are defined as follows:
	\begin{itemize}
		\item $q_0 \tran{x} r_j$ in $\D_S$ iff $q(0) \tran{x} r(j)$ in $\M_\A$ (where $j \in \{0,1\}$);
		\item $q_1 \tran{x} r_0$ in $\D_S$ iff $x = [q{\downarrow}r] > 0$.
	\end{itemize}
	Note that the sum of the probabilities of all outgoing transitions
	of every vertex $r_1$ of $\D_S$ is equal to one, because almost all runs of
	$\run_{\M_{\A}}(q(1))$ visit a configuration with zero counter (see \cite{BKK:pOC-time-LTL-martingale-JACM}). Also note that $\D_S$ is strongly connected.
	%	The transition probabilities of $\D_S$ 
	%	may be irrational, but they can be efficiently approximated up to an arbitrarily small given precision \cite{EWY:one-counter,Kousha}.
	For every transition of the form $q_1 \tran{} r_0$ in $\D_S$, we define 
	the following conditional expectations: 
	\begin{itemize}
		\item $E[L \mid q_1 \tran{} r_0]$, the \emph{conditional expected length} of a path from $q(1)$ to $r(0)$ in $\M_\A$, under the condition
		that $q(1)$ reaches $r(0)$ via a path where the counter stays positive
		in all configurations except for the last one;
		\item $E[\#_s \mid q_1 \tran{} r_0]$, the \emph{conditional expected number of visits to configurations with control state \mbox{$s \in S$}}
		along a path from $q(1)$ to $r(0)$ in $\M_\A$ (where the visit to $r(0)$
		does not count), under the condition
		that $q(1)$ reaches $r(0)$ via a path where the counter stays positive
		in all configurations except for the last one.
	\end{itemize}
	Let $\mu_{\D_S}$ be the invariant distribution of $\D_S$. Then
	\[
	E[L] = \sum_{q_0 \mathrm{ in } \D_S} \mu_{\D_S}(q_0) + 
	\sum_{q_1 \tran{x} r_0 \mathrm{ in } \D_S} \mu_{\D_S}(q_1) \cdot x \cdot E[L \mid q_1 \tran{} r_0]
	\] 
	is the average number of transitions between two consecutive visits to configurations
	$q(i)$, $r(j)$ of $R$, where $i+j \leq 1$, in a run initiated in a configuration 
	of~$R$. Similarly,
	\[
	E[s] =  
	\sum_{q_1 \tran{x} r_0 \mathrm{ in } \D_S} \mu_{\D_S}(q_1) \cdot x \cdot E[\#_s \mid q_1 \tran{} r_0]
	\]     
	is the average number of visits to a configuration with control state $s$ between two consecutive visits to configurations $q(i)$, $r(j)$ of $R$, where $i+j \leq 1$, in a run initiated in a configuration of~$R$ (the visit to $r(j)$ does not count).
	Now we define a vector $F_{\D_S} : \Pat_\A \rightarrow \Rset$
	as follows: 
	\begin{itemize}
		\item $F_{\D_S}(q(0)) = 0$ for all $q(0) \not\in R$,
		\item $F_{\D_S}(q(*)) = 0$ for all $q(*)$ such that $R$ does not contain any configuration matching $q(*)$,
		\item $F_{\D_S}(q(0)) = \mu_{\D_S}(q_0)/E[L]$ for all $q(0) \in R$,
		\item $F_{\D_S}(q(*)) = E[q]/E[L]$ for all $q(*)$ such that $R$ contains a configuration matching $q(*)$.
	\end{itemize}
	By applying strong ergodic theorem (see, e.g., \cite{Norris:book}), we obtain that 
	$F_\A(w) = F_{\D_S}$ for almost all $w \in \run(p(k),R)$. 
	
	Since the transition probabilities of $\D_S$ may take irrational values, 
	the numbers involved in the definition of $F_{\D_S}$ cannot be computed
	precisely. By Theorem~3.2~(B.b.1) of \cite{BKK:pOC-time-LTL-martingale-JACM}, we
	have that both $E[L \mid q_1 \tran{} r_0]$ and $E[\#_s \mid q_1 \tran{} r_0]$ 
	are bounded by $\alpha = 85000|Q|^6/(\xmin^{5|Q|+|Q|^3}\cdot t_S^4)$, where 
	$\xmin$ is the least transition probability of $\M_\A$. Note that $\size{\alpha}$ is
	polynomial in~$\size{\A}$. Using this bound, a simple error propagation
	analysis reveals that if the transition probabilities of $\D_S$, all components of the invariant distribution $\mu_{\D_S}$, and the conditional expectations 
	$E[L \mid q_1 \tran{} r_0]$,
	$E[\#_s \mid q_1 \tran{} r_0]$ are computed up to a relative error 
	$\varepsilon/(42\cdot|Q|^2 \cdot \alpha)$, then the relative error of every
	component in the approximated $F_{\D_S}$ is bounded by~$\varepsilon$. By \cite{SEY:pOC-poly-Turing},
	the transition probabilities of $\D_S$ can be approximated up to an arbitrarily
	small positive relative error in polynomial time. The values of the conditional
	expectations can be efficiently approximated by applying the results of 
	\cite{BKK:pOC-time-LTL-martingale-JACM} (in \cite{BKK:pOC-time-LTL-martingale-JACM}, the results are formulated just for $E[L \mid q_1 \tran{} r_0]$, but their extension to $E[\#_s \mid q_1 \tran{} r_0]$ is trivial). The invariant distribution
	$\mu_{\D_S}$ can be efficiently approximated by, e.g., applying the result
	of \cite{CM:MC-stationary-mean-passage-LAA} (see also \cite{CM:MC-stationary-overview-LAA} for a more comprehensive overview) which says that if the transition matrix $M_{\D_S}$ of $\D_S$ is
	approximated by $M_{\D_S}'$, then $\norm{\mu_{\D_S} - \mu_{\D_S}'} \leq \varrho \cdot 
	\norm{M_{\D_S} - M_{\D_S}'}$, where $\mu_{\D_S}'$ is the invariant distribution of $M_{\D_S}'$ and
	\[
	\varrho = \frac{1}{2} \cdot \max_j \left\{ \frac{\max_{i \neq j} m_{ij}}{m_{jj}} \right\} 
	\]  
	Here $m_{ij}$, $i \neq j$, is the mean first passage time from state~$i$ to
	state~$j$, and $m_{jj}$ is the mean return time to state~$j$, where all of these
	values are considered for $M_{\D_S}$. Since the least
	transition probability of $\D_S$ is at least $\xmin^{|Q|^3}$ (see 
	\cite{EWY:one-counter-PE}) and $\D_S$ has at most $2|Q|$ states, we have 
	that $m_{ij}$ and $m_{ii}$ are bounded by $2|Q|/\xmin^{2|Q|^4}$.
	This means that every component of $\mu_{\D_S}$ is bounded 
	by $\xmin^{2|Q|^4}/2|Q|$ from below, and $\varrho$ is bounded by 
	$|Q|/\xmin^{2|Q|^4}$ from above. Hence, it suffices to approximate
	the transition probabilities of $\D_S$ up to the absolute error
	$\varepsilon \xmin^{4|Q|^4} / (168|Q|^5 \alpha)$ and compute the invariant distribution for the approximated transition matrix $M_{\D_S}'$.  
\end{proof}

%\begin{reftheorem}{Lemma}{\ref{lem-one-counter-approx}}
%	Let $p(k)$ be a configuration of $\A$ and $R \neq \emptyset$~a region of $\A$.
%	For every rational $\varepsilon >0$, there is a $P \in \Qset$ computable 
%	in time polynomial in $\size{\A}$, $\size{\varepsilon}$, and $k$ such that
%	$P$ approximates $\calP(\run(p(k),R))$ up to the relative error $\varepsilon$. 	
%\end{reftheorem}
%

\begin{reftheorem}{Lemma}{\ref{lem-one-counter-approx}}	
	\label{lem-one-counter-approx}	
	Let $p(k)$ be a configuration of $\A$. Then almost every run initiated in $p(k)$ eventually stays
	in precisely one zone of~$\A$. Further, for every zone $Z$ and every rational $\varepsilon >0$, there 
	is a $P \in \Qset$ computable in time polynomial in $\size{\A}$, $\size{\varepsilon}$, and $k$ such that
	$P$ approximates $\calP(\run(p(k),Z))$ up to the relative error $\varepsilon$. 
\end{reftheorem}

\begin{proof}
	First, observe that every region is a part of some zone, except for non-empty type~IV
	regions determined by BSCCs of $\C_\A$ with negative trend. If $R$ is such a type~IV region
	and $S$ the associated BSCC where $t_S < 0$, then almost all runs initiated in a configuration
	of $R$ visit a configuration with zero counter infinitely often. Consequently, almost
	all runs initiated in a configuration of $R$ visit a type~I region, which means that
	$\calP(p(k),R) = 0$. Thus, by applying Lemma~\ref{lem-stay-in-one}, we obtain that almost every 
	run initiated in $p(k)$ eventually stays in precisely one zone of~$\A$.

    Let $Z$ be a zone of $\A$. We proceed by considering possible forms of~$Z$.
    \begin{itemize}
	   \item Let $Z = R$, where $R$ is a type~I region determined by a control state~$q$. Note that 
	      the problem whether $R = \emptyset$ is decidable in time polynomial in $\size{\A}$.
	      If $R = \emptyset$, then $\calP(\run(p(k),Z)) = 0$ and we are done. Otherwise,
	      $\calP(\run(p(k),Z)) =  \calP(\run(p(k)\tran{}^* q(0)))$.
	   \item Let $Z = R$ where $R$ is a type~III region determined by a BSCC $S$ of $\C_\A$ such that
	      $t_S \leq 0$. Then $\calP(\run(p(k),Z)) =  \calP(\run(p(k),{\uparrow}_S))$ (note that the
	      equality holds even if $R = \emptyset$).
	   \item Let $Z = R_{II}(S) \cup R_{III}(S) \cup R_{IV}(S)$, where $S$ is a BSCC of $\C_\A$ such 
	      that $t_S > 0$. Then $\calP(\run(p(k),Z)) =  \calP(\run(p(k),{\uparrow}_S))$.
       \item $Z = R_{II}(S)$, where $S$ is a BSCC of $\C_\A$ such that $t_S = 0$.
          If the type~III region determined by $S$ is non-empty (which can be checked in time
          polynomial in $\size{\A}$, then $R_{II}(S) = \emptyset$ and we are done. Otherwise,
          every configuration of $S \times \Nset$ can reach a configuration with zero counter. 
          Let ${\hookrightarrow} \subseteq S \times S$ be a binary relation such that
          $s \hookrightarrow t$ iff $s(0) \tran{} t(0)$ in $\M_\A$. Note that ${\hookrightarrow}$
          is computable in time polynomial in $\size{\A}$. Hence, we can also efficiently compute the BSCCs
          of $(S,{\hookrightarrow})$, and determine all \emph{non-trivial}
          BSCCs $K$ of $(S,{\hookrightarrow})$ such that for some (and hence all) $q \in K$ we have that
          $\post(q(0))$ is infinite. Each non-trivial BSCCs of $(S,{\hookrightarrow})$ corresponds to
          a type~II region determined by a control state of~$S$, and vice versa.
          Let us fix some control state $q_K \in K$ for each non-trivial BSCC $K$, and let $\mathcal{K}$
          be the set of all non-trivial BSCCs of $(S,{\hookrightarrow})$. Then
          $\calP(\run(p(k),Z)) =  \sum_{K\in \mathcal{K}} \calP(\run(p(k)\tran{}^* q_K(0)))$.
       \item Let $Z = R$, where $R$ is a type~II region determined by $q \in S$ where $S$ is a BSCC
          of $\C_\A$ such that $t_S < 0$. If $q$ belongs to a non-trivial BSCC of $(S,{\hookrightarrow})$
          (see the previous item), then $\calP(\run(p(k),Z)) = \calP(\run(p(k)\tran{}^* q(0)))$.
          Otherwise, $\calP(\run(p(k),Z)) = 0$.          
	\end{itemize}
	Hence, it suffices to show how to efficiently approximate $\calP(\run(p(k),{\uparrow}_S))$ and $\calP(\run(p(k)\tran{}^* q(0)))$ where we may further assume that 
	$\calP(\run(r(\ell) \tran{}^* q(0))) = 1$ for every $r(\ell) \in \post(q(0))$.
	In the following we assume that $k = 1$ and we prove that these probabilities can
	be approximated up to a relative error $\varepsilon>0$ in time polynomial in 
	$\size{\A}$ and $\size{\varepsilon}$ (for $k > 1$, we simply introduce 
	$k-1$ fresh control states that are used to increase the counter from $1$ to $k$).  
	
	Let us fix $q \in Q$ such that $\calP(\run(r(\ell) \tran{}^* q(0))) = 1$ for every $r(\ell) \in \post(q(0))$. Hence, 
	if $p(1) \in \post(q(0))$ then $\calP(\run(p(1)\tran{}^* q(0))) = 1$, and
	if $p(1) \not\in \pre(q(0))$ then $\calP(\run(p(1)\tran{}^* q(0))) = 0$.
	Now assume $p(1) \in \pre(q(0)) \smallsetminus \post(q(0))$.
	We construct
	a finite-state Markov chain $\calE_q$ where the set of vertices consists of
	all $r_j$ where $j \in \{0,1\}$ and $r(j) \in \pre(q(0)) \smallsetminus \post(q(0))$, and two fresh vertices $\mathit{good}$, $\mathit{bad}$. The outgoing transitions of a vertex $r_j$ are determined as follows:
	\begin{itemize}
		\item $r_0 \tran{x} s_j$ iff $s_j$ is a vertex of $\calE_q$ and 
		$r(0) \tran{x} s(j)$ in $\M_\A$;
		\item $r_0 \tran{x} \mathit{good}$ iff $x >0$ is the total probability
		of all transitions $r(0) \tran{y} s(j)$ in $\M_\A$ such that 
		$s(j) \in \post(q(0))$;
		\item $r_0 \tran{x} \mathit{bad}$  iff $x >0$ is the total probability
		of all transitions $r(0) \tran{y} s(j)$ in $\M_\A$ such that 
		$s(j) \not\in \pre(q(0))$;
		\item $r_1 \tran{x} s_0$ iff $s_0$ is a vertex of $\calE_q$ 
		and $x = [r{\downarrow}s] > 0$
		\item $r_1 \tran{x} \mathit{good}$ iff $x = \sum_{s(0) \in \post(q(0))} [r{\downarrow}s] > 0$;
		\item $r_1 \tran{x} \mathit{bad}$ iff $x = [r{\uparrow}] + \sum_{s(0) \not\in \pre(q(0))} [r{\downarrow}s] > 0$;
		\item $\mathit{good}\tran{1}\mathit{good}$,  
		$\mathit{bad}\tran{1}\mathit{bad}$.   	
	\end{itemize}
	It follows directly from the construction of $\calE_q$ that 
	$\calP(\run(p(1)\tran{}^* q(0)))$ is equal to the probability of reaching
	$\mathit{good}$ from $p_1$ in $\calE_q$. Note that $\calE_q$ is an absorbing
	finite-state Markov chain with two absorbing vertices $\mathit{good}$ and $\mathit{bad}$. Let $E_q$ be the other (transient) vertices of $\calE_q$,
	and let $U : E_q \rightarrow \Rset$ be the unique vector such that
	$U_v$ is the probability of reaching $\mathit{good}$ from $v$ in $\calE_q$.
	Then $U$ is the unique solution of the system $\vec{x} = A \vec{x} + C$, where
	$A$ is the $|E_q| \times |E_q|$ transition matrix for the transient part of
	$\calE_q$ and $C_v$ is the probability of the transition $v \tran{} \mathit{good}$
	in $\calE_q$ (if there is no such transition, then $C_v =0$). This system
	can be rewritten to the standard form $(I-A)\vec{x} = C$. Note that 
	$(I-A)^{-1}$ (i.e., the \emph{fundamental matrix} of $\calE_q$) satisfies
	$\norm{(I-A)^{-1}} \leq \max_{v \in E_q} m_v$, where $m_v$ is the  
	mean time of reaching an absorbing state from~$v$. Since every vertex
	of $E_q$ can reach $\mathit{good}$ in at most $2|Q|$ transitions and the 
	probability of each of these transitions is at least $\xmin^{|Q|^3}$ (see 
	\cite{EWY:one-counter-PE}), we obtain that $m_v$
	is bounded by $2|Q|/\xmin^{2|Q|^4}$. Since $\norm{I-A} \leq 1$, we obtain that
	the \emph{condition number} of $I-A$, i.e.,  $\norm{I-A} \cdot \norm{(I-A)^{-1}}$
	is bounded by $2|Q|/\xmin^{2|Q|^4}$ from above. By applying the standard result of
	numerical analysis (see, e.g., \cite{IK:book}), we obtain that if
	the coefficients of $A$ and $C$ are approximated so that the resulting 
	matrix $A'$ and vector $C'$ satisfy 
	$\norm{A-A'}/\norm{I-A} \leq \varepsilon \xmin^{2|Q|^4}/8|Q|$ and
	$\norm{C-C'}/\norm{C} \leq \varepsilon \xmin^{2|Q|^4}/8|Q|$, then the unique
	solution $U'$ of $(I-A')\vec{x} = C'$ satisfies 
	$\norm{U-U'}/\norm{U} \leq \varepsilon$. Since such $A'$ and $C'$ are computable
	in time polynomial in $\size{\A}$ and $\size{\varepsilon}$ 
	\cite{SEY:pOC-poly-Turing,BKK:pOC-time-LTL-martingale-JACM}, we are done.
	
	Now let $S$ be a BSCC of $\C_\A$. First, realize that if we change every rule
	$(r,\kappa,s) \in \gamma$, where $r$ belongs to a BSCC of $\C_\A$ different from 
	$S$, to $(r,-1,s)$, then the resulting pVASS $\A'$ satisfies 
	$\calP(\run_{\M_\A}(p(1),{\uparrow}_S)) = \calP(\run_{\M_{\A'}}(p(1),{\uparrow}))$.
	To simplify our notation, we directly assume that  $\calP(\run(p(1),{\uparrow}_S)) = \calP(\run(p(1),{\uparrow}))$, and we show
	how to approximate $\calP(\run(p(1),{\uparrow}))$. Let $\mathit{Diverge}$ be the set of
	all configurations $q(1)$ such that $q \in Q$ and $[q{\uparrow}] > 0$. If
	$p(1) \not\in \pre(\mathit{Diverge})$, we have that $\calP(\run(p(1),{\uparrow}))=0$. Otherwise, we construct a finite-state
	Markov chain $\G$ where the set of vertices consists of
	all $r_j$ where $j \in \{0,1\}$ and $r(j) \in \pre(\mathit{Diverge})$, and two fresh vertices $\mathit{good}$, $\mathit{bad}$. The transitions of
	$\G$ are determined as follows:
	\begin{itemize}
		\item $r_0 \tran{x} s_j$ iff $s_j$ is a vertex of $\G$ and 
		$r(0) \tran{x} s(j)$ in $\M_{\A}$;
		\item $r_0 \tran{x} \mathit{bad}$  iff $x >0$ is the total probability
		of all transitions $r(0) \tran{y} s(j)$ in $\M_{\A}$ such that 
		$s(j) \not\in \pre(\mathit{Diverge})$;
		\item $r_1 \tran{x} s_0$ iff $s_0$ is a vertex of $\G$ 
		and $x = [r{\downarrow}s] > 0$;
		\item $r_1 \tran{x} \mathit{good}$ iff $x = [r{\uparrow}] > 0$;
		\item $r_1 \tran{x} \mathit{bad}$ iff $x = \sum_{s(0) \not\in \pre(\mathit{Diverge})} [r{\downarrow}s] > 0$;
		\item $\mathit{good}\tran{1}\mathit{good}$,  
		$\mathit{bad}\tran{1}\mathit{bad}$.   	
	\end{itemize}
	It is easy to check that $\calP(\run(p(1),{\uparrow}))$ is equal to the probability
	of reaching $\mathit{good}$ from $p_1$ in $\G$. The rest of the argument is the
	same as above, the only difference is that now we also employ the
	$\xmin^{4|Q|^2}\cdot t^3/(7000 \cdot |Q|^3)$ lower bound on positive probability of
	the form $[r{\uparrow}]$, where $t$ is the trend of~$S$ (see Theorem~4.8 in
	\cite{BKK:pOC-time-LTL-martingale-JACM}).
\end{proof}

%\begin{reftheorem}{Lemma}{\ref{lem-one-counter-BSCC-DZ}}
%There are $n \leq |D|$ and rational vectors \mbox{$\vec{F}_1,\ldots,\vec{F}_n : 
%\Pat_\A \rightarrow \Rset$} computable in 
%time polynomial in $|\A|$ such that 
%\[
%   \calP(\run(p(v),D_{Z})) = 
%     \calP(F_\A{=}\vec{F}_1) + \cdots + \calP(F_\A{=}\vec{F}_n)
%\]
%Further, for every rational $\varepsilon > 0$ and every $1 \leq i \leq n$,
%the value of $\calP(F_\A{=}\vec{F}_i)$ can be efficiently 
%approximated up to the absolute error $\varepsilon$ 
%in time polynomial in $|\A|$, $|p(v)|$, and $|\varepsilon|$.
%\end{reftheorem}

%%% Local Variables:
%%% mode: latex
%%% TeX-master: "main"
%%% End:

%\newcommand{\rti}[1]{\mathit{rt}_{#1}}
\newcommand{\twot}{t}
\newcommand{\octrend}{\tau}
\newcommand{\authNote}[1]{}
\newcommand{\trenddef}{\mathit{OC\text{-}D}}

\newcommand{\rep}{\mathit{Repeat}}
\newcommand{\reptime}{TRepeat}
\newcommand{\repconf}[1]{\#_{#1}}
\newcommand{\last}{\mathit{last}}
\newcommand{\msalt}[1]{\widehat{m}^{(#1)}}

\section{Proofs of Section~\ref{sec-two-counters}}
\label{app-two-counters}

The whole Appendix~\ref{app-two-counters} is devoted to proofs of the following three theorems.

%\textcolor{blue}{P: For the following theorem we need just the negative global trend of the second counter. Regions should be irrelevant.}

%\textcolor{red}{T: Takto to nemuze platit; protiprikladem je to chozeni po diagonale, kdy jdu s vetsi pravdepodobnosti doprava dolu nez doleva nahoru.}

%We say that a region $R$ of $\A_2$ has 

\begin{reftheorem}{Theorem}{\ref{thm:tail-bounds-height}}
Let $S$ be a BSCC of $\C_\A$ such that $t_S(2)<0$. Moreover, let $R$ be any type-II region of $\A_2$ determined by some state of $ S$.
%Then $E_{p(n,0)} T <\infty$. Moreover, 
Then there are $a_1,b_1\in \Rset_{>0}$ and $z_1\in (0,1)$ computable in polynomial space such that the following holds for all $p\in S$ such that $p(0)\in R$, all $n\in \Nset$, and all $i\in \Nset^+$:
\[
\calP_{p(n,0)} (T < \infty \wedge 
\xs{T}_2\geq i)\quad \leq\quad a_1\cdot z_1^{b_1 \cdot i}
\]
Moreover, if $E_{p(n,0)}<\infty$, then it holds
\[
E_{p(n,0)}\left(\xs{T}_2\right)\quad \leq\quad  \frac{a_1\cdot z_1^{b_1}}{1-z_1^{b_1}}.
\]
In particular, {\em neither} of the bounds depends on $n$.
\end{reftheorem}

%\textcolor{blue}{P: For the following two theorems we need to have the ``axis trend'' defined. So for the initial configuration $p(m,n)$ it must hold:
%\begin{itemize}
%\item
%$p\in S$ for some BSCC $S$, and
%\item
%in $\A_2$, the configuration $p(n)$ belongs to region $R_2$ that satisfies one of these conditions:
%\begin{itemize}
%\item Either $\twot_S(2)<0$ and $R_2$ has type II, or
%\item $\twot_S(2)>0$ and $R_2$ has type II or IV.
%\end{itemize}
%\end{itemize}
%For every such region $R_2$ of $\A_2$ I denote by $\octrend_{R_2}$ the corresponding ``one-counter'' axis trend. I also denote by $\trenddef_{2}$ (``one-counter trend defined'') the set of all regions $R_2$ of $\A_2$ satisfying the aforementioned condition.
%}

\begin{reftheorem}{Theorem}{\ref{thm:tail-bounds-divergence}}
Let $S$ be any BSCC of $\C_\A$ such that $t_S(2)<0$. Furthermore, let $R$ be any type II region of $\A_2$ determined by some state of $S$ and satisfying $\octrend_R>0$. Then there are 
numbers $a_2,b_2>0$ and $0<z_2<1$ computable in space bounded by a polynomial in $\size{\A}$ such that for all configurations $p(n,0)$, where $p(0)$ belongs to $R$, the following holds:
\[
[p(n,0)\rightarrow^* q(0,*)]\quad \leq \quad a_2\cdot z_2^{n \cdot b_2}
\]
\end{reftheorem}

%Now for a region $R$ let $T_R$ be a random variable denoting the first point in time when either the first counter is zeroed or a configuration $q(k,\ell)$ such that $q(\ell)\not\in R$ is encountered. Note that $T_R(w)=T(w)$ whenever $R$ has type II and the initial configuration $p(m,n)$ of $w$ satisfies $p(n)\in R$.

\begin{reftheorem}{Theorem}{\ref{thm:tail-bounds-convergence}}
%Let $R$ be any type II region of $\A_2$ such that $\octrend_R<0$. Then there are
%numbers $a_3,b_3,d_3>0$ and $0<z_3<1$, computable in space bounded by a polynomial in $\size{\A}$, such that for all configurations $p(n,0)$, where $p(0)$ belongs to $R$, and all $q\in Q$ the following holds:
%\[
%[p(n,0)\rightarrow^* q(0,*),i]\quad \leq \quad i\cdot a_3\cdot z_3^{n\cdot\octrend_R\cdot b_3 + i\cdot d_3}\qquad \qquad\text{ for all } i\geq \frac{2n}{-{\octrend_R}}.
%\]
Let $R$ be any type II region of $\A_2$ such that $\tau_R<0$. Then there are
numbers $a_3,b_3,d_3>0$ and $0<z_3<1$, computable in space bounded by a polynomial in $\size{\A}$, such that for all configurations $p(n,0)$, where $p(0)$ belongs to $R$, and all $q\in Q$ the following holds:
\[
[p(n,0)\rightarrow^* q(0,*),i]\quad \leq \quad i\cdot a_3\cdot z_3^{\sqrt{n\cdot\tau_R\cdot b_3 + i\cdot d_3}}
\]
for all $i\geq \frac{H\cdot n}{-{\tau_R}}$ where $H$ is a computable constant.

%Let $R$ be any type II region of $\A_2$. Then there are
%numbers $a_3,b_3,d_3,D>0$ and $0<c_3<1$, computable in space bounded by a polynomial in $\size{\A}$, such that for all configurations $p(n,0)$ of $\A$, where $p(0)$ belongs to $R$, the following holds:
%\[
%\calP_{p(n,0)}(T_R = i)\quad \leq \quad a_3\cdot c_3^{n\cdot\octrend_R\cdot b_3 + i\cdot d_3}\qquad \qquad \text{ for all } i\geq D
%\]

%$p\in Q$ and $n\in \Nset$ :
%\begin{enumerate}
%\item If $\octrend_R<0$ and $i\geq \frac{2n}{-{\octrend_R}}$, then
%\[
%[p(n,0)\rightarrow^* q(0,*),i]\quad \leq \quad a_2\cdot c_2^{n\cdot\octrend_R\cdot b_2 + i\cdot d_2}
%\]
%\item If $\octrend_R>0$, then
%\[
%[p(n,0)\rightarrow^* q(0,*)]\quad \leq \quad a_2\cdot c_2^{n \cdot b_2}
%\]
%\end{enumerate}
\end{reftheorem}

We use the following additional notation: we denote by $ \run(p\vec{v}\rightarrow^* q\vec{u},i)$ the set of all runs $w \in \run(p\vec{v})$ such that $T(w)\geq i$, $w(i)=q\vec{u}$, and for all $0\leq j<i$ we have $w(j)\not = q\vec{u}$. Note that the probability of $ \run(p\vec{v}\rightarrow^* q\vec{u},i)$ is exactly the number $[p\vec{v}\rightarrow^* q\vec{u},i]$.

%\subsection{Proof of Theorem~\ref{thm:tail-bounds-divergence}}
\subsection{Martingale Techniques}
\label{subsec:twoc-martingale}

In this subsection we use the techniques of martingale theory to prove several technical lemmas that are crucial for the analysis of those sets of configurations $C[R_1,R_2]$ such that $R_2$ is a type II region in which the corresponding counter has a tendency to decrease.

To this end, fix a region $R$ of $\A_2$ whose type  is II  and which is determined by some $p\in S$, where $S$ satisfies $\twot_S(2)<0$. 
%Assume that $\twot_S(2)<0$ and fix a region $R\in \trenddef_2$ of type II.
%In order to prove Theorem~\ref{thm:tail-bounds-divergence}, 
We use the stochastic process $\{\ms{\ell}\}_{\ell=0}^{\infty}$ defined in~\cite{BKKNK:pMC-zero-reachability} as follows: for every $\ell \in \Nset$ we put
\begin{equation}
  \ms{\ell} := \begin{cases}\xs\ell_1 - \octrend_R\cdot \ell + \vec{g}\big(\xs\ell_2\big)[\ps\ell] & 
%  \text{for all $\ell \in \Nset$.} 
\text{if $\xs{j}_1 > 0 $ for all $0\leq j < \ell$,}\\
\ms{\ell-1} & \text{otherwise.}
  \end{cases} \label{eq-mart-m}
\end{equation}
(Here $\vec{g}$ is a suitable function, defined precisely in~\cite{BKKNK:pMC-zero-reachability}, assigning numerical weights to configurations of $\A_2$.)
In other words, the value $\ms{\ell}(w)$ of the $\ell$-th configuration of a run $w$ is obtained by adding the value of the first counter $\xs\ell_1(w)$ to the value of the second counter $\xs\ell_2(w)$ weighted by the function $\vec{g}$ and by subtracting $\ell$ times the trend $\octrend_R$.

In~\cite{BKKNK:pMC-zero-reachability} we defined the function $\vec{g}$ in such a fay that the process $\{\ms{\ell}\}_{\ell = 0}^{\infty}$ satisfies several important properties: First of all, it is a \emph{martingale} \cite{Williams:book}, which intuitively means that the expected value of the $\ell$-th configuration is always equal to the observed value of the $(\ell-1)$-th configuration, even if we are given the knowledge of values of all configurations up to the $(\ell-1)$-th step.  %(formally, for any $\ell\geq 1$ and any initial configuration $p(m,n)$ the conditional expectation $\E_{p(m,n)}\left(\ms{\ell}\mid\ms{\ell-1}\right)$ is equal to $\ms{ell-1}$).

 A second crucial observations proved in~\cite{BKKNK:pMC-zero-reachability} is that grows more or less linearly with $\xs\ell_2$.
\begin{lemma}[\cite{BKKNK:pMC-zero-reachability}]\label{lem:bounded-weight}
There is a number $C>0$ computable in space bounded by a polynomial in $\size{\A}$
%\authNote{LICS14 gives only poly space}
 such that for all $p\in Q$ and $n\geq 1$ it holds $|\vec{g}(0)[p]|\leq C$ and $|\vec{g}(n)[p]|\leq C \cdot n$.
\end{lemma}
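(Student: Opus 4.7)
The plan is to unfold the explicit construction of $\vec{g}$ from~\cite{BKKNK:pMC-zero-reachability} and bound the coefficients of the resulting closed form. Since $\{\ms{\ell}\}$ must be a martingale, the function $\vec{g}$ satisfies, for every $p \in S$ and every $n \geq 1$, the one-step balance
\[
\change(p)[1] + \sum_{(p,\kappa,q) \in \gamma} \frac{W((p,\kappa,q))}{T_p}\bigl(\vec{g}(n{+}\kappa(2))[q] - \vec{g}(n)[p]\bigr) \;=\; \tau_R.
\]
Crucially, because every rule of $\A$ changes each counter by at most one and $n \geq 1$, every rule is enabled in this ``interior'' regime, so the coefficients of the equation do not depend on~$n$.

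First I would verify that on $\{n \geq 1\}$ the balance equation admits a solution of the affine form $\vec{g}(n)[p] = \alpha \cdot n + \beta[p]$ with $\alpha \in \Rset$ a single scalar (constant over $S$). Plugging this ansatz in, all $n$-dependent terms cancel, leaving the linear system $(P_S - I)\beta = \tau_R\mathbf{1} - \change(\cdot)[1] - \alpha \cdot \change(\cdot)[2]$, where $P_S$ denotes the stochastic matrix of the BSCC $S$ of $\C_\A$. Multiplying by $\mu_S^\top$ on the left and using $\mu_S^\top P_S = \mu_S^\top$ makes the left-hand side vanish, forcing $\alpha = (\tau_R - t_S(1))/t_S(2)$; the denominator is nonzero by the assumption $t_S(2) < 0$. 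By the discussion preceding Definition~\ref{def-stable}, $|\tau_R|$ is bounded by a constant computable in polynomial space (via a Tarski-algebra formula of fixed alternation depth), and hence so is $|\alpha|$.

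Next I would bound $\beta$ and the boundary values $\vec{g}(0)[p]$. Since $S$ is a BSCC, the kernel of $P_S - I$ is one-dimensional (spanned by $\mathbf{1}$), so the linear system for $\beta$ has a unique solution modulo an additive constant; we may fix it by imposing $\mu_S^\top \beta = 0$. A condition-number estimate on the group inverse of $I - P_S$, entirely analogous to the fundamental-matrix argument in the proof of Lemma~\ref{lem-one-counter-approx} and relying on the $\xmin^{|Q|^3}$ lower bound from \cite{EWY:one-counter-PE} on positive entries of $P_S$, then gives $\max_p |\beta[p]| \leq C_\beta$ for a $C_\beta$ computable in polynomial space. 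The values $\vec{g}(0)[p]$ are in turn pinned down by the martingale balance at configurations with zero second counter, which yields a further finite linear system of size~$|Q|$ whose coefficients and right-hand side inherit the same type of polynomial-space bounds, giving $\max_p |\vec{g}(0)[p]| \leq C_0$.

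Setting $C = \max\{C_0,\ |\alpha| + C_\beta\}$ then gives both $|\vec{g}(0)[p]| \leq C$ and, for every $n \geq 1$, $|\vec{g}(n)[p]| \leq |\alpha|\cdot n + C_\beta \leq C\cdot n$, proving the lemma. The main obstacle I expect is making the polynomial-space complexity claim fully rigorous: because $\tau_R$ may be irrational, $\alpha$ and the entries of $\beta$ cannot be computed exactly, so one has to extract the upper bound~$C$ by combining the Tarski-algebra representation of $\tau_R$ with a polynomial-space analysis of the two finite linear systems above, rather than by computing $\vec{g}$ outright.
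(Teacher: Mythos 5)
Your proposal rests on the assumption that the function $\vec{g}$ from~\cite{BKKNK:pMC-zero-reachability} is affine in~$n$, i.e.\ $\vec{g}(n)[p] = \alpha n + \beta[p]$, and proceeds to derive $\alpha$ and bound $\beta$. That is not the object the lemma is about, and the mismatch is the core of the gap. As the paper itself recalls in Appendix~\ref{sub-bounded-difference}, the function $\vec{g}$ is built from a specific base vector $\vec{g}(0)$ via the first-order recurrence $\vec{g}(n+1) = \vec{r}_\downarrow + G\vec{g}(n)$, where $G$ is the stochastic matrix of first-return probabilities between levels. Its closed form is $\vec{g}(n) = G^n\vec{g}(0) + \sum_{i=0}^{n-1}G^i\vec{r}_\downarrow$, which is only \emph{asymptotically} linear: the term $G^n\vec{g}(0)$ and the transient part of the geometric sum contribute $O(1)$ fluctuations that vary with~$n$ and with the state, so $\vec{g}$ is generically not affine. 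Put differently, the second-order balance equation you write down (which $\vec{g}$ does satisfy on the interior) has a two-parameter family of solutions; the affine one is a \emph{particular} solution, and nothing in your argument identifies it with the specific solution the reference fixes via $\vec{g}(0)$ and the return-probability matrix~$G$. Your scheme for ``pinning down'' $\vec{g}(0)$ by a boundary equation at zero second counter only compounds the problem: once the affine ansatz for $n \ge 1$ is in place, the interior balance at $n = 1$ already forces $\vec{g}(0) = \beta$ (else the equation fails), so imposing an additional, structurally different boundary equation at $n=0$ is in general over-determined and has no reason to be consistent.

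The intended argument is substantially more elementary. Starting from $\vec{g}(n) = G^n\vec{g}(0) + \sum_{i=0}^{n-1}G^i\vec{r}_\downarrow$ and using that $G$ is stochastic, one gets $\|G^n\vec{g}(0)\|_\infty \le \|\vec{g}(0)\|_\infty$ and $\|\sum_{i=0}^{n-1}G^i\vec{r}_\downarrow\|_\infty \le n\,\|\vec{r}_\downarrow\|_\infty$, hence $|\vec{g}(n)[p]| \le \gmax + n\,\rmax \le (\gmax + \rmax)\cdot n$ for $n \ge 1$, with $C = \gmax + \rmax$. The polynomial-space computability of~$C$ then comes from the fact that $\vec{g}(0)$ and $\vec{r}_\downarrow$ are expressible in the existential theory of the reals (as the paper remarks when proving Lemma~\ref{lem:bounded-differences}); no detour through $\tau_R$, $\mu_S$, or condition numbers of $(I-P_S)$ is needed. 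Your derivation of $\alpha = (\tau_R - t_S(1))/t_S(2)$ and the group-inverse bound on~$\beta$ are therefore answering a different question, and the issue you flag at the end (irrationality of~$\tau_R$) is an artifact of having pursued the wrong decomposition of~$\vec{g}$; the correct quantities $\|\vec{g}(0)\|_\infty$ and $\|\vec{r}_\downarrow\|_\infty$ do not route through~$\tau_R$ at all.
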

%It has already been shown in~\cite{BKKNK:pMC-zero-reachability} that $\{\ms{\ell}\}_{\ell = 0}^\infty$ is a~martingale. 
%
%We show that Theorem~\ref{thm:tail-bounds-divergence} is a consequence of the~Azuma's inequality applied to $\{\ms{\ell}\}_{\ell = 0}^\infty$. However, to apply the inequality we need to show that $\{\ms{\ell}\}_{\ell = 0}^\infty$ has {\em bounded-differences}.

However, to apply powerful tools of martingale theory, such as the Azuma's inequality, we need to show that $\{\ms{\ell}\}_{\ell = 0}^\infty$ has {\em bounded-differences}. This is covered in the following lemma, whose proof combines several facts shown in~\cite{BKKNK:pMC-zero-reachability} with rather involved techniques from the theory of stochastic matrices. For better readability, we prove this lemma separately in Appendix~\ref{sub-bounded-difference}.

%~\footnote{Bounded-differences property was not proved in~\cite{BKKNK:pMC-zero-reachability} where $\{\ms{\ell}\}_{\ell = 0}^\infty$ was defined for a broader class of probabilistic multi-counter automata that do not always admit bounded differences.} 
%We prove that the following holds for pVASS.
\begin{lemma}\label{lem:bounded-differences}
%The stochastic process $\{\ms{\ell}\}_{\ell = 0}^\infty$ is a martingale. 
%Write $\calE$ for the expectation with respect to~$\calP$.
% We have for all $\ell \in \Nset$:
% \[
%  E \left( \ms{\ell+1} \;\middle\vert\; \ps{\ell}, \  \xs\ell_1, \ \xs\ell_2 \right) = \ms{\ell}\,.
%  E \left( \ms{\ell+1} \;\middle\vert\; w(\ell) \right) = \ms{\ell}\,.
% \]
There is a bound $B\geq 1$ computable in in space bounded by a polynomial in $\size{\A}$ such that $|\ms{\ell+1}-\ms{\ell}|\leq B$ for every $\ell\in \Nset$.
%
%(Note that, in particular, the bound $B$ does not depend on $n$.)
\end{lemma}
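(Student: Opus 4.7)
The plan is to bound $|\ms{\ell+1}-\ms{\ell}|$ by a short case analysis followed by a structural unpacking of the weight function $\vec{g}$. The easy observation comes first: whenever there is some $j\leq \ell$ with $\xs{j}_1=0$, the ``otherwise'' branch of the definition of $\ms{\cdot}$ has already been triggered at step $j$ and never un-triggers, so $\ms{\ell+1}=\ms{\ell}=\ms{j}$ and the increment is zero. The same holds when $\xs{j}_1>0$ for all $j\leq \ell$ but $\xs{\ell+1}_1=0$, because the definition at step $\ell+1$ again falls into the ``otherwise'' branch. Thus the only nontrivial case is when $\xs{j}_1>0$ throughout $j\in\{0,\ldots,\ell+1\}$, and only that case needs to be analysed.

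In the nontrivial case the increment expands as
\[
\ms{\ell+1}-\ms{\ell}\;=\;(\xs{\ell+1}_1-\xs{\ell}_1)\;-\;\octrend_R\;+\;\bigl(\vec{g}(\xs{\ell+1}_2)[\ps{\ell+1}]-\vec{g}(\xs{\ell}_2)[\ps{\ell}]\bigr).
\]
Each counter changes by at most one per step, so the first summand lies in $\{-1,0,+1\}$; $\octrend_R$ is an average of per-rule counter changes of absolute value at most $1$, so $|\octrend_R|\leq 1$. The task therefore reduces to producing a polynomial-space-computable bound $B'$ on the discrete derivative $|\vec{g}(n+\kappa)[q]-\vec{g}(n)[p]|$ for any transition $p\to q$ in $\C_\A$ with $p,q\in S$ and any $\kappa\in\{-1,0,+1\}$. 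The crude estimate $|\vec{g}(n)[p]|\leq C\cdot n$ given by Lemma~\ref{lem:bounded-weight} is of no use here: it only yields an $O(n)$ bound on this difference.

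To obtain a \emph{constant} bound I will unpack the explicit construction of $\vec{g}$ from~\cite{BKKNK:pMC-zero-reachability}, whose purpose is precisely to enforce the drift-cancellation identity implicit in the martingale property. That construction produces $\vec{g}$ as the solution of a linear system derived from the transition probabilities of $\A$ and from one-step counter-change expectations in $S$; in particular $\vec{g}(n)[p]$ admits a decomposition $n\cdot\vec{v}[p]+\vec{b}[p]+\vec{h}(n)[p]$ in which the slope $\vec{v}$ is \emph{constant} across the BSCC~$S$ (precisely the property that kills the otherwise-linear contribution to one-step differences), the correction vector $\vec{b}$ is bounded, and $\vec{h}(n)$ decays geometrically and is uniformly bounded in $n$. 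A single-step increment therefore cancels the unbounded linear part and leaves only bounded contributions, whose sizes are controlled by closed-form expressions in $\mu_S$ and in the conditional zero-return expectations of $\A_1$ analysed in~\cite{BKK:pOC-time-LTL-martingale-JACM}. Each such constant is definable in Tarski algebra with bounded quantifier-alternation depth and hence, by Grigoriev's bound, admits an effective upper bound computable in polynomial space. Adding the contribution of the first two summands, bounded by $1+|\octrend_R|\leq 2$, then gives the required $B$.

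The main obstacle is the last paragraph: one cannot simply quote the absolute bound from Lemma~\ref{lem:bounded-weight}, because the bounded-differences property is a finer structural fact that the linear bound cannot see. Verifying that the linear slope $\vec{v}$ is indeed constant on $S$, and producing effective polynomial-space estimates on $\vec{b}$, $\vec{h}$, and the per-step variation of the transient part, is the delicate step where the detailed construction of $\vec{g}$ from~\cite{BKKNK:pMC-zero-reachability} must be invoked in full rather than as a black box.
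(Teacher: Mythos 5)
Your outline correctly isolates the difficulty: the trivial cases (once the ``otherwise'' branch triggers, the process is frozen, and a step leaving $\xs{\ell+1}_1=0$ also freezes it) are indeed trivial, the increment splits as you write, and the crude bound $|\vec{g}(n)[p]| \le Cn$ from Lemma~\ref{lem:bounded-weight} genuinely cannot give bounded differences. You also identify the right target: one needs the \emph{cross-state} variation of $\vec{g}(n)$ on $S$ to be uniformly bounded in $n$, i.e., the leading linear part of $\vec{g}(n)$ must have a slope constant on $S$.

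However, your proposal does not prove this; it defers the entire content to ``unpacking the explicit construction from~\cite{BKKNK:pMC-zero-reachability} in full.'' That is not what the reference supplies, and it is not what the paper does either. What the reference supplies is the \emph{recurrence} $\vec{g}(n+1) = \vec{r}_\downarrow + G\,\vec{g}(n)$, where $G$ is the ``return'' matrix of the one-counter pVASS restricted to $S$, and the facts that $G$ is stochastic (since $t_S(2)<0$) and, as the paper proves separately (Lemma~\ref{lem-G-one-BSCC}), has a single BSCC. From this, $\vec{g}(n)=G^n\vec{g}(0)+\sum_{i=0}^{n-1}G^i\vec{r}_\downarrow$, and the genuinely new work in the paper is Lemma~\ref{lem-mart-coupling}: a coupling argument on finite Markov chains showing that $\max_{p_1,p_2\in S}\bigl|\bigl(\sum_{i=0}^{n-1}G^i\vec{r}\bigr)[p_1]-\bigl(\sum_{i=0}^{n-1}G^i\vec{r}\bigr)[p_2]\bigr|\le C\cdot\max_{p_1,p_2}|\vec{r}[p_1]-\vec{r}[p_2]|$ uniformly in $n$, with $C$ explicitly computable in polynomial space from $|S|$ and the smallest nonzero entry of $G$ (which in turn is bounded below by $\xmin^{|S|^3}$ via \cite{EWY:one-counter}). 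This lemma is precisely the uniform cross-state bound you need, stated for arbitrary stochastic matrices with one BSCC. Your proposed decomposition $\vec{g}(n)=n\vec{v}+\vec{b}+\vec{h}(n)$ with geometrically decaying $\vec{h}$ implicitly assumes aperiodicity of $G$, which is not given; the coupling argument sidesteps this because it bounds the partial sums directly rather than through a spectral decomposition, and it also handles the $G^n\vec{g}(0)$ term (which your decomposition omits). So the gap is concrete: the key uniform bound on the cross-state variation of $\vec{g}$ must be proved, not cited, and the tool that makes this work — the geometric-sums coupling lemma with an explicit, polynomial-space-computable constant — is missing from your proposal.
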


%
%
%Denote by $\run(p\vec{u}\rightarrow^* q\vec{v},i)$ the set of all runs initiated in $p\vec{u}$ that visit $q\vec{v}$ in exactly $i$ steps. Then $\run(p\vec{u}\rightarrow^* q\vec{v})=\bigcup_{i=0}^{\infty} \run(p\vec{u}\rightarrow^* q\vec{v},i)$. We define
%\[
%[p\vec{u}\rightarrow^* q\vec{v},i]:=\calP(\run(p\vec{u}\rightarrow^* q\vec{v},i))
%\]
%and
%\[
%[p\vec{u}\rightarrow^* q\vec{v}]:=\calP(\run(p\vec{u}\rightarrow^* q\vec{v}))=\sum_{i=0}^{\infty} [p\vec{u}\rightarrow^* q\vec{v},i]
%\]
%Given a run $w$, we define a function $T$ on runs by
%\[
%T(w)=\min\{k\mid \xs{k}_1(w)=0\}
%\]
%Intuitively, $T$ is the time in which $w$ reaches zero in the first counter.

The power of martingale techniques is illustrated in the following lemma, which will be handy in the proof of Theorem~\ref{thm:tail-bounds-height}.

\begin{lemma}
\label{lem:return-bound}
Let $R$ be a type II region of $\A_2$ determined by some $p\in S$, where $S$ is a BSCC of $\C_\A$ such that $t_S(2)<0$.
%satisfying the assumptions of Theorem~\ref{thm:tail-bounds-height} 
Moreover, let
$p(n,0)$ be any configuration such that $p(0)\in R$. Denote by $\rep$ the set of all runs $w$ such that $w(0)=w(i)$ for some positive $i$ such that $4B^2/\octrend_R^2\leq i \leq T(w)$. Then
\[
\calP_{p(n,0)}(\rep) < \frac{1}{2}.
\] 
\end{lemma}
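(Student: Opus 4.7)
My plan is to use an exponential (Chernoff-type) supermartingale together with Doob's maximal inequality, applied to the bounded-difference martingale $\{\ms{\ell}\}$ from equation~(\ref{eq-mart-m}).

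The crucial deterministic observation is the following: if $w(i) = w(0) = p(n,0)$ and $i \le T(w)$, then, unfolding the definition of $\ms{i}$ for $i \le T$, the $n$ and $\vec{g}(0)[p]$ terms cancel, leaving $\ms{i} - \ms{0} = -\octrend_R \cdot i$. So every return to $p(n,0)$ forces the martingale to drift by \emph{exactly} $|\octrend_R|\cdot i$ from its starting value.

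Setting $\theta = -\octrend_R/B^2$, I would then introduce the nonnegative supermartingale
\[
W_\ell \;=\; \exp\bigl(\theta(\ms{\ell}-\ms{0}) - \ell\,\theta^2 B^2/2\bigr),
\]
whose supermartingale property is a standard consequence of Hoeffding's lemma applied to the bounded-difference martingale of Lemma~\ref{lem:bounded-differences}; it continues to hold past the stopping time $T$ since there $\ms{\ell}$ is frozen and the increments simply vanish, so the Hoeffding bound holds trivially. A one-line calculation using the identity above shows that on the return event at any time $i \le T$,
\[
W_i \;=\; \exp\!\bigl(\octrend_R^2 \cdot i / B^2 \;-\; i\,\octrend_R^2/(2B^2)\bigr) \;=\; \exp\!\bigl(i\,\octrend_R^2/(2B^2)\bigr),
\]
so whenever $i \ge 4B^2/\octrend_R^2$ one has $W_i \ge e^2$. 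This yields the inclusion $\rep \subseteq \{\sup_{\ell\ge 0} W_\ell \ge e^2\}$, and Doob's maximal inequality for nonnegative supermartingales gives
\[
\calP_{p(n,0)}(\rep) \;\le\; \calP_{p(n,0)}\!\left(\sup_{\ell\ge 0} W_\ell \ge e^2\right) \;\le\; W_0 / e^2 \;=\; e^{-2} \;<\; \tfrac{1}{2}.
\]

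The main care is in setting up the exponential supermartingale so that it is valid for all $\ell \ge 0$ (not merely up to $T$), so that Doob's inequality can be applied on the full time axis despite the $i \le T(w)$ restriction inside the definition of $\rep$; this hinges on the observation that the frozen part of $\ms{\ell}$ contributes zero-size, hence Hoeffding-admissible, increments. The specific threshold $4B^2/\octrend_R^2$ in the lemma is calibrated precisely so that the Chernoff exponent $i\,\octrend_R^2/(2B^2)$ reaches the value $2$, delivering the clean bound $e^{-2} < \tfrac{1}{2}$ without any additional slack.
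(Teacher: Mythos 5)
Your proof is correct, and it takes a genuinely different route from the paper's. Both arguments hinge on the same deterministic observation---a return to $p(n,0)$ at a time $i \le T$ forces $\ms{i}-\ms{0} = -\octrend_R\cdot i$---but the paper then decomposes $\rep$ over the first return time $i$, applies Azuma's inequality separately to each event $|\ms{i}-\ms{0}|\ge i|\octrend_R|$, and sums the resulting geometric series, whereas you fold everything into a single application of Ville's (Doob's) maximal inequality for the nonnegative supermartingale $W_\ell = \exp\bigl(\theta(\ms{\ell}-\ms{0})-\ell\theta^2B^2/2\bigr)$ with $\theta = -\octrend_R/B^2$. Your approach is cleaner on two counts. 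First, it gives the sharper constant $e^{-2}$ directly, with no slack. Second, it sidesteps a real weakness in the paper's geometric-series step: the sum there equals $2 r^{N}/(1-r)$ with $r = \exp(-\octrend_R^2/(2B^2))$ and $N = \lceil 4B^2/\octrend_R^2\rceil$, and while $r^N\le e^{-2}$ holds, the claimed bound $1/(1-r)\le 1/(1-e^{-2})$ would require $\octrend_R^2\ge 4B^2$, which never holds since $|\octrend_R|\le 1\le B$; the maximal inequality absorbs the $1/(1-r)$ factor automatically. You also rightly flag and resolve the one subtle point in your argument, namely that $W_\ell$ must be a supermartingale on the full time axis and not merely up to $T$; the observation that the increments of $\{\ms{\ell}\}$ past time $T$ are identically zero, and hence trivially Hoeffding-admissible, closes that gap correctly.
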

\begin{proof}
For any run $w\in \rep$ let $\reptime(w)$ be the smallest $i\geq 4B^2/\octrend_R^2$ such that $w(i)=w(0)$ (for such $w$ we have $\reptime(w)\leq T(w)$). We have 
\begin{equation}
\label{eq:repeat-partition}
\calP_{p(n,0)}(\rep) = \sum_{i=\lceil 4B^2/\octrend_R^2 \rceil}^{\infty}\calP_{p(n,0)}(\rep \wedge \reptime=i)
\end{equation}

Now any run $w$ initiated in $p(n,0)$ satisfies $\ms{0}(w)=n+\vec{g}(0)[p]$. Similarly, any run $w\in\run(p(n,0))\cap\rep$ that satisfies $\reptime(w)=i$ satisfies $\ms{i}(w)=n+\vec{g}(0)[p]-i\cdot\octrend_R$. Hence, 
\begin{equation}
\label{eq:repeat-to-mart}
\calP_{p(n,0)}(\rep \wedge \reptime=i) \quad \leq \quad \calP_{p(n,0)}(|\ms{i}-\ms{0}| \geq i\cdot|\octrend_R|).
\end{equation}
From Azuma's inequality we get
\[
P_{p(n,0)}(|\ms{i}-\ms{0}| \geq i\cdot|\octrend_R|) \quad \leq\quad 2\exp\left(\frac{-\octrend_R^2 \cdot i^2}{2iB^2} \right) \quad = \quad 2\exp\left(\frac{-\octrend_R^2 \cdot i}{2B^2} \right)
\]
Combining this with~\eqref{eq:repeat-to-mart} and~\eqref{eq:repeat-partition} we get
\begin{align*}
\calP_{p(n,0)}(\rep) \quad&\leq\quad \sum_{i=\lceil 4B^2/\octrend_R^2 \rceil}^{\infty} 2\exp\left(\frac{-\octrend_R^2 \cdot i}{2B^2} \right) \quad \leq\quad \frac{2}{\exp\big(\lceil\frac{4B^2}{\octrend_R^2}\rceil \cdot \frac{\octrend_R^2}{2B^2}\big)\cdot\left(1-\exp\big(-\octrend_R^2/2B^2 \big)\right)}\\
&\leq \quad \frac{2}{e^2\cdot(1-\frac{1}{e^2})} \quad < \quad \frac{1}{2}.
\end{align*}
\end{proof}

%Now we use Azuma's inequality to prove the following crucial technical claim.
The following lemma, which is crucial for the proof of Theorems~\ref{thm:tail-bounds-divergence} and~\ref{thm:tail-bounds-convergence}, is also proved using Azuma's inequality.

\begin{lemma}\label{lem:crucial-bound}
%Assume that $t_S(2)<0$. For every region $R$ 
%Let $R$ be any region of $\A_2$ satisfying the assumptions of Theorem~\ref{thm:tail-bounds-height}.
Let $R$ be a type II region of $\A_2$ determined by some $p\in S$, where $S$ is a BSCC of $\C_\A$ such that $t_S(2)<0$.
Then there are $a',b',b''\in \Rset_{>0}$ and $c'\in (0,1)$ such that for all $p,q\in Q$, where $p(0)\in R$, and all $i,n,n'\in \Nset$ satisfying $i\geq (2\cdot (C+1)\cdot n')/|\octrend_R|$ the following holds:
%Let $p,q\in Q$, let $i\in \Nset$, let $n\in \Nset$, and let $n'\in \Nset$ satisfy 
%$i\geq (2\cdot (C+1)\cdot n')/|t|$.
If either $\tau_R>0$, or $-\tau_R\cdot \frac{i}{2}\geq n$, then
\[
[p(n,0)\rightarrow^* q(0,n'),i]\leq a'\cdot (c')^{ n\cdot \tau_R\cdot b' + i\cdot b''}
\]
%(Here $\tau_R$ is  is the region of $\A_2$ containing $p(0)$.)
Moreover, $a',b',b'',c'$ are effectively computable in polynomial space.
%(Here $a',b_1',b_2'\in \Rset_{>0}$ and $c'\in (0,1)$ are fixed constants independent of $p,q,i,n,n'$ and effectively expressible in first-order existential theory of the reals.)
\end{lemma}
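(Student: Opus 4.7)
The plan is to apply Azuma's inequality to the martingale $\{\ms{\ell}\}_{\ell=0}^{\infty}$, exploiting the bounded-differences property from Lemma~\ref{lem:bounded-differences}. Any run $w$ in the event $\run(p(n,0)\rightarrow^* q(0,n'),i)$ keeps $\xs{\ell}_1>0$ for all $\ell<i$ and ends at $q(0,n')$ at step $i$, so by the definition of $\ms{\ell}$ the endpoint values are completely determined:
\[
\ms{0}(w) = n + \vec{g}(0)[p], \qquad \ms{i}(w) = -\tau_R\cdot i + \vec{g}(n')[q].
\]
Hence $\Delta := \ms{i}(w) - \ms{0}(w)$ is a \emph{fixed} number on the whole event, and by Lemma~\ref{lem:bounded-weight} it satisfies $|\Delta| \geq \big|{-}\tau_R\cdot i - n\big| - C(n'+1)$. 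Consequently $[p(n,0)\rightarrow^* q(0,n'),i] \leq \calP_{p(n,0)}(|\ms{i}-\ms{0}|\geq |\Delta|) \leq 2\exp(-|\Delta|^2/(2iB^2))$ by Azuma, so it remains to show that the right-hand exponent has the shape $n\tau_R b' + i b''$ claimed in the lemma.

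In Case~1 ($\tau_R>0$) both $-\tau_R i$ and $-n$ are nonpositive, so $|\Delta|\geq \tau_R i + n - C(n'+1)$. The hypothesis $i\geq 2(C+1)n'/\tau_R$ gives $\tau_R i/2\geq (C+1)n'\geq Cn'$, hence $|\Delta|\geq \tau_R i/2 + n - C$. Squaring and dividing by $2iB^2$ produces a lower bound containing the terms $\tau_R^2 i/(8B^2)$ and $\tau_R n/(2B^2)$, together with an $O(1)$ additive correction, which is exactly the form $i\, b'' + n\tau_R b' - O(1)$ required.

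In Case~2 ($\tau_R<0$ with $-\tau_R i/2\geq n$), the naive estimate $|\tau_R| i - n$ only gives $\geq |\tau_R| i/2$, and one must additionally absorb the $Cn'$ error. Splitting $-\tau_R i = \alpha(-\tau_R i) + (1-\alpha)(-\tau_R i)$ with $\alpha = 1/(2(C+1))$ and using the two hypotheses $-\tau_R i\geq 2n$ and $-\tau_R i\geq 2(C+1)n'$ jointly yields $(1-\alpha)(-\tau_R i)\geq n + Cn'$, so that
\[
|\Delta|\;\geq\; \frac{|\tau_R|\,i}{2(C+1)} - C.
\]
Azuma therefore provides a bound of the form $a\cdot(c')^{i\, b''}$ for suitable $a$, $b''$, $c'$. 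Since $n\tau_R b'\leq 0$ whenever $\tau_R<0$, this automatically implies an upper bound of the shape $a'\cdot(c')^{n\tau_R b' + i b''}$ for the same $b''$ and any $b'\geq 0$.

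Finally one chooses a single tuple $(a',b',b'',c')$ that works in both cases, for instance $c' = 1/e$, $b' = 1/(2B^2)$, $b'' = \tau_R^2/(32(C+1)^2 B^2)$, and $a'$ large enough to absorb the leading factor of $2$, the $O(1)$ corrections, and the bounded regime $i\leq 4C(C+1)/|\tau_R|$ in which Azuma yields nothing and one resorts to the trivial bound $\calP\leq 1$. All these constants are computable in polynomial space because $C$ and $B$ are (Lemmas~\ref{lem:bounded-weight} and~\ref{lem:bounded-differences}) and $\tau_R$ is described by a constant-alternation Tarski formula of polynomial size (Section~\ref{sec-two-counters}). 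The main obstacle is the Case~2 split: at the boundary $-\tau_R i/2 = n$ the error terms $n$ and $Cn'$ together consume nearly all of $|\tau_R|\, i$, and only the precise apportionment $\alpha = 1/(2(C+1))$ leaves a residual linear in $|\tau_R|\, i$; any looser split collapses the bound.
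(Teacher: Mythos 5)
Your proof is correct and follows essentially the same route as the paper: fix the endpoint values of the martingale $\ms{\ell}$ on the event, bound the weight-difference term via Lemma~\ref{lem:bounded-weight} and the hypothesis $i\geq 2(C+1)n'/|\tau_R|$, apply Azuma's inequality with the bounded-difference constant $B$ from Lemma~\ref{lem:bounded-differences}, and split on the sign of $\tau_R$. The only cosmetic differences are that you use the two-sided bound $\calP(|\ms{i}-\ms{0}|\geq|\Delta|)$ and subtract the $C(n'+1)$ error explicitly (with the $\alpha=1/(2(C+1))$ apportionment in the negative-drift case), whereas the paper exploits the sign of $Z-\tau_R i/2$ to get a one-sided estimate with no $-C$ correction; both yield the required exponent structure $n\tau_R b'+ib''$, and both implicitly rely on $[p(n,0)\rightarrow^* q(0,n'),i]=0$ for $i<n$ to dispose of the trivial small-$i$ regime.
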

\begin{proof}
Denote $\vec{v}=(n,0)$ and $\vec{u}=(0,n')$. For better readability, we denote by $\tau$ the number $\tau_R$, where $R$ is the region of $\A_2$ containing $p(0)$. Let $w\in \run(p\vec{v}\rightarrow^* q\vec{u},i)$.\authNote{neni def} Then
\begin{align*}
(\ms{i}-\ms{0})(w) & =\xs{i}_1(w) - \octrend\cdot i + \vec{g}\big(\xs{i}_2(w)\big)[\ps{i}(w)] \\
  & \quad -\xs{0}_1(w) + \octrend\cdot 0 - \vec{g}\big(\xs{0}_2(w)\big)[\ps{0}(w)]\\
  & = -\octrend i + \vec{g}(n')[q]-n-\vec{g}(0)[p]\\
  & = \vec{g}(n')[q]-\vec{g}(0)[p]-n-\octrend i.
\end{align*}
Thus,
\[
[p\vec{v}\rightarrow^* q\vec{u},i]\leq \calP(\ms{i}-\ms{0}=\vec{g}(n')[q]-\vec{g}(0)[p]-n-\octrend i).
\]
Note that if $i<n$, then $[p\vec{v}\rightarrow^* q\vec{u},i]=0$. Assume $i\geq n$. Since $i\geq (2\cdot (C+1)\cdot n')/|\octrend|\geq (2\cdot |\vec{g}(n')[q]-\vec{g}(0)[p]|)/|\octrend|$, we have 
\begin{equation}\label{eq:twoc-weight-diff-bound}|\vec{g}(n')[q]-\vec{g}(0)[p]|\leq |\octrend|\frac{i}{2}. \end{equation}

\noindent
Thus, denoting $Z=\vec{g}(n')[q]-\vec{g}(0)[p]$, the following holds:

\begin{itemize}
\item If $\octrend<0$, then from~\eqref{eq:twoc-weight-diff-bound} we have $Z\geq \octrend\frac{i}{2}$ and thus $Z-\octrend\frac{i}{2}\geq 0$. Hence,
\begin{equation}\label{eq:crucial-bound-eq1}
[p\vec{v}\rightarrow^* q\vec{u},i]\quad \leq\quad  \calP(\ms{i}-\ms{0}\geq -n-\octrend\frac{i}{2}).
\end{equation}
\item If $\octrend>0$, then from \eqref{eq:twoc-weight-diff-bound} we have $Z\leq |\octrend|\frac{i}{2} = \octrend\frac{i}{2}$, and thus $Z-\octrend\frac{i}{2}\leq 0$. Hence,
\begin{equation}\label{eq:crucial-bound-eq2}
[p\vec{v}\rightarrow^* q\vec{u},i]\quad \leq\quad  \calP(\ms{i}-\ms{0}\leq -n-\octrend\frac{i}{2}).
\end{equation}
\end{itemize}

\noindent
Now we apply the Azuma's inequality.
First consider $\octrend>0$. Then, by Azuma's inequality, for all $i$ it holds
\begin{align*}
\calP(\ms{i}-\ms{0} \leq -n-\octrend\frac{i}{2})&\leq 2\exp\left(\frac{-(n+\frac{\octrend}{2} i)^2}{2\cdot B\cdot i}\right)\\
& =2\exp\left(\frac{-n^2-n \octrend i-\frac{\octrend^2 i^2}{4}}{2\cdot B\cdot i}\right)\\
& =2\exp\left(\frac{1}{2B}\left(\frac{-n^2}{i}-n \octrend-i\frac{\octrend^2}{4}\right)\right)\\
& \leq 2\exp\left(\frac{1}{2B}\left(-n\octrend-i\frac{\octrend^2}{4}\right)\right)\\
& \leq 2\exp\left(\frac{-\octrend^2}{8B}\left(i+n\octrend\right)\right).
\end{align*}
(For the last inequality we used the fact that $\octrend^2/4 <1$.) 
%\end{align*}
Combining this inequality with (\ref{eq:crucial-bound-eq2}) we obtain
\[
[p\vec{v}\rightarrow^* q\vec{u},i]\quad \leq \quad 2 \exp\left(\frac{-\octrend^2}{8B}\left(i+n\octrend\right)\right)
\]

It is now easy to compute $a',b',b'', c'$ from the statement of the lemma: it suffices to put $a'=2$, $c'=1/2$ and $b'=b''=\frac{x^2}{8B}$, where $x$ is a number, computable in polynomial space, such that $\octrend \geq x>0$ ($x$ can be computed in polynomial space since $\octrend$ can be encoded in Tarski's algebra).

Now consider $\octrend<0$. To apply Azuma's inequality in this case, we need to assume that $-\octrend \frac{i}{2}\geq n$. Then, as above,
\begin{align*}
\calP(\ms{i}-\ms{0} & \geq -n-\octrend\frac{i}{2})\quad \leq \quad 2 \exp\left(\frac{1}{2B}\left(-n\octrend-i\frac{\octrend^2}{4}\right)\right),%\exp\left(\frac{-\octrend^2}{8B}\left(i+nt\right)\right)
\end{align*}

\noindent
and combining this with~\eqref{eq:crucial-bound-eq1} yields
\[
[p\vec{v}\rightarrow^* q\vec{u},i]\quad \leq\quad 2 \exp\left(\frac{1}{2B}\left(-n\octrend-i\frac{\octrend^2}{4}\right)\right).
\]
Numbers $a',b',b'', c'$ can be now easily computed  (we put $b'=1/2B$ and $b''=x^2/8B$, where $x$ is as above).
\end{proof}

%\subsection{Proof of Theorem~\ref{thm:tail-bounds-divergence}}
%
%Fix a region $R$ satisfying the assumptions of the theorem. Let $p,q\in Q$ be such that $p(0)\in R$ and $n\in \Nset$ be arbitrary. Finally, let $K = \lfloor \rfloor$
%\begin{align*}
%[p(n,0)\rightarrow^* q(0,*)]& =\sum_{i=1}^{\infty} \sum_{n'=0}^{\infty}[p(n,0)\rightarrow^* q(0,n'),i]\\
%& \leq \sum_{i=1}^{\infty} \sum_{n'=0}^{\infty} a'\cdot (c')^{ n\cdot \tau_R\cdot b' + i\cdot b''}\\
%& = \sum_{n'=0}^{\infty} \sum_{i=1}^{\infty} a'\cdot (c')^{ n\cdot \tau_R\cdot b' + i\cdot b''}\\
%& = \sum_{n'=0}^{\infty} \sum_{i=n'}^{\infty} a'\cdot (c')^{ n\cdot \tau_R\cdot b' + i\cdot b''}\\
%& = a'\cdot (c')^{ n\cdot \tau_R\cdot b'}\sum_{n'=0}^{\infty} \sum_{i=n'}^{\infty} (c')^{ i\cdot b''}\\
%& = a'\cdot (c')^{ n\cdot \tau_R\cdot b'}\sum_{n'=0}^{\infty} (c')^{ n'\cdot b''} \sum_{i=0}^{\infty} (c')^{ i\cdot b''}\\
%& = \frac{a'\cdot (c')^{ n\cdot \tau_R\cdot b'}}{\left(1-(c')^{b''}\right)^2} \\
%\end{align*}
%where the last inequality follows from $\octrend_R>0$, which implies $n\cdot \tau_R\cdot b'>0$. Hence, 
%\[
%[p(n,0)\rightarrow^* q(0,n')] \quad\leq\quad \frac{a'}{1-(c')^{b''}}\cdot (c')^{n\cdot b''}.
%\]
%It thus suffices to put $a_1= \frac{a'}{1-(c')^{b''}}$, $b_1=b''$ and $c_1 =c'$.

\subsection{Proof of Theorem~\ref{thm:tail-bounds-height}}
%For this proof we assume that that $\twot_S(2)<0$. 

Fix a region $R$ of $\A_2$ that satisfies the assumptions of Theorem~\ref{thm:tail-bounds-height}. 
%For better readability, we denote $\octrend_R$ by $\octrend$ within this proof.

%The theorem clearly holds for $n=0$. So assume that $n>0$.

%The finiteness of $E_{p(n,0)}(T)$ easily follows from Lemma~\ref{lem:crucial-bound}. Indeed, we have
%\[
%E_{p(n,0)}(T)=\sum_{q\in Q}\sum_{n'\in \Nset} [p(n,0)\rightarrow^* q(0,*)]\cdot \sum_{i=1}^{\infty} i\cdot [p(n,0)\rightarrow^* q(0,*),i].
%\]
%Since $[p(n,0)\rightarrow^* q(0,*),i]\leq a'\cdot (c')^{n\cdot \octrend\cdot b'} \cdot (c')^{i\cdot b''} $, where $a',b',b''>0$ and $c'\in(0,1)$ the sum on the right-hand side of the above equality must be finite.
%\begin{lemma}\label{lem:vertical-bound}
%There are $a_1,b_1\in \Rset_{>0}$ and $c_1\in (0,1)$ such that for every $p\in Q$ and {\em every} $n\in \Nset$ we have that
%\[
%\calP_{p(n,0)} (\xs{T}_2\geq i)\quad \leq\quad a_1\cdot c_1^{-b_1 \cdot i}
%\]
%\end{lemma}
%\begin{proof}
%W.l.o.g, we assume that in every step only one counter can change its value (by at most one). Note that every pVASS can be easily transformed to satisfy this condition by, possibly, decomposing each transition that modifies both counters into two successive transitions.
%

%We now proceed to the crucial part of the proof, where we prove tail bounds on the height of the second counter in the exact moment of zeroing the first counter for the first time.

%Let $q\vec{u}$ be any configuration. For a run $w$ we denote by $\visit{q\vec{u}}(w)$ the number of occurrences of $q\vec{u}$ on $w$.
\authNote{Pouzivam $\len{(beh)}=\infty$}
For the purpose of this proof we define the value $T(w)$ also for finite paths $w$: we put $T(w)=\inf\{i\mid 0\leq i \leq \len(w) \wedge \xs{i}_2 = 0 \}$.
Given a~finite path or a~run $w$, we denote by $\mathit{LVisit}(w)$ the largest number $k$ such that $\xs{k}_2(w)=0$ and for all $0\leq i\leq k$ we have $\xs{i}_1(w)>0$ (if there are infinitely many such $k$, which is possible only if $\len(w)=T(w)=\infty$, we put $\mathit{LVisit}(w)=\infty$). Further, for $w$ such that $\mathit{LVisit}(w)\in \Nset$ we let $\mathit{Last}(w)$ be the configuration $w(\mathit{LVisit}(w))$. Finally, for a finite path or a run $w$ and a configuration $q\vec{u}$ we denote by $\repconf{q\vec{u}}(w)$ the number of occurrences of $q\vec{u}$ on $w$ \emph{before} zeroing the first counter. Formally, we put $$\repconf{q\vec{u}}(w)\quad=\quad|\{i\in \Nset \mid 0\leq i \leq T(w) \wedge w(i)=q\vec{u}\}|.$$

We have
\begin{align}
& \calP_{p(n,0)} (T<\infty \wedge \xs{T}_2\geq i) 
 \quad = \quad\sum_{q\in Q}\sum_{k=1}^{\infty} \calP_{p(n,0)}(T<\infty \wedge \xs{T}_2\geq i\wedge \mathit{Last}=q(k,0)) \nonumber\\
%&\quad = \quad\sum_{q\in Q}\sum_{k=1}^{\infty} \calP_{p(n,0)}(T<\infty \wedge \xs{T}_2\geq i\wedge \mathit{Last}=q(k,0))
%\wedge \bigwedge_{j=\mathit{LVisit}+1}^{T} \xs{j}_2>0
 \nonumber\\
%&\quad = \sum_{q\in Q}\sum_{k=1}^{\infty} \calP_{q(k,0)}(T<\infty \wedge\xs{T}_2\geq i\wedge \bigwedge_{j=1}^{T} \xs{j}_2>0)\cdot \calP_{p(n,0)}(T<\infty \wedge \mathit{Last}=q(k,0))\label{eq:oc-height-bound}\\
&\quad = \quad\sum_{q\in Q}\sum_{k=1}^{\infty}\sum_{m=1}^{\infty} \calP_{p(n,0)}(T<\infty \wedge \xs{T}_2\geq i\wedge \mathit{Last}=q(k,0)
%\wedge \bigwedge_{j=\mathit{LVisit}+1}^{T} \xs{j}_2>0 \wedge 
\wedge \repconf{q(k,0)}=m) \label{eq:oc-height-bound}
\end{align}
Now denote by $A_{q,k}^{i,m}$ the event $T<\infty \wedge \xs{T}_2\geq i\wedge \mathit{Last}=q(k,0)
%\wedge \bigwedge_{j=\mathit{LVisit}+1}^{T} \xs{j}_2>0
\wedge \repconf{q(k,0)}=m$. It holds
\[
A_{q,k}^{i,m} = \bigcup_{\substack{w\in\fpath(p(n,0)) \\ \repconf{q(k,0)}=m \\ w(\len(w))=q(k,0)}} \bigcup_{\substack{w'\in\run(q(k,0)) \\T(w')<\infty\\ \xs{T}_2(w')\geq i \\ \bigwedge_{j=1}^{T(w')-1}\xs{j}_2(w')>0  }} \run(w\cdot w').
\]
(Here $w\cdot w' = w(0),w(1),\dots,w(\len(w)-1),w'(0),w'(1),\dots$.) It follows that 
\begin{align}
&\calP_{p(n,0)}(A_{q,k}^{i,m}) \quad=  \sum_{\substack{w\in\fpath(p(n,0)) \\ \repconf{q(k,0)}=m \\ w(\len(w))=q(k,0)}} \calP_{p(n,0)}(\run( w))\cdot\Big(\sum_{\substack{w'\in\run(q(k,0)) \\T(w')<\infty\\ \xs{T}_2(w')\geq i \\ \bigwedge_{j=1}^{T(w')-1}\xs{j}_2(w')>0  }} \calP_{p(n,0)}(\run( w'))\Big) \nonumber\\
&\quad = \quad\calP_{p(n,0)} (\repconf{q(k,0)}\geq m)\cdot \calP_{q(k,0)}(T<\infty \wedge \xs{T}_2\geq i\wedge \bigwedge_{j=1}^{T-1}\xs{j}_2 >0)
.\label{eq:height-split}
\end{align}

\noindent
Note that every run $w$ initiated in $q(k,0)$ that satisfies ${\xs{T}_2(w)\geq i}$ and $\bigwedge_{j=1}^{T} \xs{j}_2(w)>0$ must have a prefix of length at least $i+k$ such that for every $0<j< i+k$ we have $\xs{j}_2(w)>0$. It follows that
\begin{equation}
\label{eq:height-time-bound}
\calP_{q(k,0)}(T<\infty\wedge\xs{T}_2\geq i\wedge \bigwedge_{j=1}^{T-1} \xs{j}_2>0)\quad \leq \quad \calP_{q(0)}(L\geq i+k).
\end{equation}
Here $\calP_{q(0)}(L\geq i+k)$ is measured in $\A_2$ and $L$ assigns to a given run $w$ of $\mathcal{A}_2$ either the least $k>0$ such that the counter is zero in $k$-th step of $w$, or $\infty$ if there is no such $k$ (intuitively, $L(w)$ is the number of steps in which $w$ (re)visits a configuration with zero counter value for the first time).

By~\cite[Section 3.1]{BKK:pOC-time-LTL-martingale-JACM}, we can compute, in polynomial time, a number $z$ such that 
\begin{equation*} %\label{eq:bound-one-counter}
\calP_{q(0)}(L\geq i+k)\quad \leq\quad 1+2|z|/|t_2(s)| + |Q|\cdot \frac{2d^{i+k}}{1-d},%a_1\cdot c_1^{-b_1 \cdot i}
\end{equation*}
where $d=\exp\big(-\frac{(\twot_S(2))^2}{8(z+|\twot_s(2)|+1)^2}\big)\in(0,1)$. Using this knowledge, we can easily compute, in polynomial time, numbers $a,b \in \Rset_{>0}$, and $c \in (0,1)$ such that  
\begin{equation*}\label{eq:bound-one-counter}
\calP_{q(0)}(L\geq i+k)\quad \leq\quad a\cdot c^{b \cdot (i+k)}.
\end{equation*}
%(Note that $c$ can be chosen as any number greater than or equal to $1/e$, such as $1/2$.)
% for suitable $a_1,b_1>0$ and $0<c_1<1$ that can be computed in polynomial time.
Plugging this into~\eqref{eq:height-time-bound},~\eqref{eq:height-split} and~\eqref{eq:oc-height-bound} we get
\begin{equation}
\label{eq:height-midway}
\calP_{p(n,0)} (T<\infty \wedge \xs{T}_2\geq i) \quad \leq \quad \sum_{q\in Q}\sum_{k=1}^{\infty}a \cdot c^{b\cdot (i+k)}\cdot \Big(\sum_{m=1}^{\infty}\quad\calP_{p(n,0)} (\repconf{q(k,0)}\geq m)\Big).
\end{equation}

\noindent
We now turn our attention to bounding $\calP_{p(n,0)} (\repconf{q(k,0)}\geq m)$. From Lemma~\ref{lem:return-bound} it follows that for any $h\in \Nset$ it holds $$\calP_{p(n,0)} (\repconf{q(k,0)}\geq 4hB^2/\octrend_R^2)\leq 2^{-h}.$$

\noindent
From this it follows that
\begin{equation*}
%\label{eq:return-exp-bound}
\sum_{m=1}^{\infty}\calP_{p(n,0)} (\repconf{q(k,0)}\geq m)\quad\leq\quad \frac{4B^2}{\octrend_R^2} \cdot\sum_{h=0}^{\infty}2^{-h} \quad=\quad \frac{8B^2}{\octrend_R^2}.
\end{equation*}

\noindent
Plugging this bound into~\eqref{eq:height-midway} we get
\[
\calP_{p(n,0)} (T<\infty \wedge \xs{T}_2\geq i) \quad \leq \quad \sum_{q\in Q}\sum_{k=1}^{\infty}a \cdot c^{b\cdot (i+k)}\cdot \frac{8B^2}{\octrend_R^2} \quad = \quad |Q|\cdot \frac{a\cdot c^b\cdot 8B^2}{(1-c^b)\cdot \octrend_R^2}\cdot c^{b\cdot i},
\]

\noindent
and from this form the numbers $a_1,b_1,c_1$ in the statement of Theorem~\ref{thm:tail-bounds-height} can be easily computed.

%Combining~\eqref{eq:bound-one-counter} with~\eqref{eq:height-time-bound} and~\eqref{eq:oc-height-bound} gives
%\begin{align*}
%\calP_{p(n,0)} (T<\infty\wedge \xs{T}_2\geq i) & \quad \leq \quad 
%\sum_{q\in Q}\sum_{k=1}^{\infty} a_1\cdot c_1^{b_1 \cdot i}\cdot \calP_{p(n,0)}(\mathit{Last}=q(k,0))\quad =\quad a_1\cdot c_1^{b_1 \cdot i}
%\end{align*}
Finally, when $\calP_{p(n,0)}(T<\infty)=1$, we have
\begin{align*}
E_{p(n,0)} \left(\xs{T}_2\right)\quad =\quad \sum_{i\geq 1} \calP_{p(n,0)} (\xs{T}_2\geq i)
\quad \leq\quad  \sum_{i\geq 1}a_1\cdot c_1^{b_1 \cdot i}\quad = \quad\frac{a_1\cdot c_1^{b_1}}{1-c_1^{b_1}}
\end{align*}

%Before we move on to the proofs of Theorems~\ref{thm:tail-bounds-divergence} and~\ref{thm:tail-bounds-convergence}, we formulate a simple corollary of facts that we proved so far, which will be useful for proving tail bounds on the time of zeroing of the first counter.
%
%\begin{corollary}
%\label{cor:generic-time-bound}
%
%\end{corollary}

\subsection{Proof of Theorem~\ref{thm:tail-bounds-divergence}}
\smallskip

We start with a simple corollary of the facts that we proved so far. The corollary will be also useful in the proof of Theorem~\ref{thm:tail-bounds-convergence}. 

\begin{corollary}
\label{cor:generic-time-bound}
Let $R$ be a type II region of $\A_2$ determined by some $p\in S$, where $S$ is a BSCC of $\C_\A$ such that $t_S(2)<0$. Moreover, let $p(n,0)$ be any configuration of $\A$ such that $p(0)\in R$. Denote  $k= \frac{| \octrend_R|}{2(C+1)}$, where $C$ is as in Lemma~\ref{lem:bounded-weight}. Then for arbitrary $i\in \Nset$ such that either $i\geq 2\cdot n/|\octrend_R|$ or $\octrend_R<0$ it holds
\[
[p(n,0) \rightarrow^* q(0,*),i] \leq 
%i\cdot k\cdot a'\cdot (c')^{ n\cdot\octrend_R\cdot b' + i\cdot b''}+(i-i\cdot k)\cdot a_1\cdot c_1^{b_1\cdot i},
i\cdot k\cdot a'\cdot (c')^{ n\cdot\octrend_R\cdot b' +  i\cdot k\cdot b''}   + \frac{a_1}{1-c_1^{b_1}}\cdot c_1^{i\cdot b_1 \cdot k} ,
\]
where $a',b',b'',c'$ are as in Lemma~\ref{lem:crucial-bound} and $a_1,b_1,c_1$ are as in Theorem~\ref{thm:tail-bounds-height}.
\end{corollary}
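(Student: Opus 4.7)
The plan is to decompose the event by the value $n'$ of the second counter when the first counter is zeroed for the first time at step $i$, and to split the resulting sum at the threshold $n' = ik$. The value $k = |\tau_R|/(2(C+1))$ is chosen precisely so that $n' \leq ik$ is equivalent to $i \geq (2(C+1)n')/|\tau_R|$, which is exactly the hypothesis of Lemma~\ref{lem:crucial-bound}. Observe also that a run $w$ contributing to $[p(n,0) \rightarrow^* q(0, n'), i]$ satisfies $\xs{i}_1(w) = 0$ and $T(w) \geq i$, and therefore $T(w) = i$ and $\xs{T}_2(w) = n'$; this is the connection that will let us invoke Theorem~\ref{thm:tail-bounds-height} in the tail of the sum.

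First I would write
\[
[p(n,0) \rightarrow^* q(0,*), i] \;=\; \sum_{n'=0}^{\lfloor ik \rfloor} [p(n,0) \rightarrow^* q(0, n'), i] \;+\; \sum_{n' > ik} [p(n,0) \rightarrow^* q(0, n'), i].
\]
For the small-$n'$ part, the corollary's hypothesis (``$i \geq 2n/|\tau_R|$ or the appropriate sign condition on $\tau_R$'') guarantees the alternative second precondition of Lemma~\ref{lem:crucial-bound}, so each summand is bounded by $a' \cdot (c')^{n \tau_R b' + i b''}$. Since $k \leq 1$ (as $|\tau_R| \leq 2$ and $C \geq 1$) and $c' \in (0,1)$, we have $(c')^{i b''} \leq (c')^{i k b''}$, so the first sum is at most $(\lfloor ik \rfloor + 1) \cdot a' \cdot (c')^{n \tau_R b' + i k b''}$, which gives the first summand $i \cdot k \cdot a' \cdot (c')^{n \tau_R b' + i k b''}$ after absorbing the additive $+1$ into the constant $a'$.

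For the large-$n'$ part, the observation above gives
\[
[p(n,0) \rightarrow^* q(0, n'), i] \;\leq\; \calP_{p(n,0)}\!\left( T < \infty \;\wedge\; \xs{T}_2 \geq n' \right) \;\leq\; a_1 \cdot c_1^{b_1 n'},
\]
where the second inequality is Theorem~\ref{thm:tail-bounds-height}. Summing the geometric tail from $n' = \lfloor ik \rfloor + 1$ yields
\[
\sum_{n' > ik} a_1 \cdot c_1^{b_1 n'} \;\leq\; \frac{a_1}{1 - c_1^{b_1}} \cdot c_1^{b_1 \cdot i \cdot k},
\]
which is exactly the second summand of the target bound. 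Adding the two contributions establishes the claim.

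There is no substantive obstacle here: the corollary is a clean assembly of two previously established tail bounds, one martingale-based (Lemma~\ref{lem:crucial-bound}) and one from the height analysis (Theorem~\ref{thm:tail-bounds-height}), glued together by the threshold $n' = ik$ which was chosen exactly to make the two bounds applicable on complementary ranges. The only minor bookkeeping point is the reconciliation of the hypothesis stated in the corollary with the precondition of Lemma~\ref{lem:crucial-bound}; this is handled by a routine case distinction on the sign of $\tau_R$, since $-\tau_R \cdot i/2 \geq n$ is exactly $i \geq 2n/|\tau_R|$ in the $\tau_R < 0$ case.
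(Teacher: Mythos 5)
Your proof is correct and follows essentially the same route as the paper's: decompose $[p(n,0)\to^* q(0,*),i]$ by the terminal height $n'$, split at the threshold $n'=ik$, apply Lemma~\ref{lem:crucial-bound} termwise on the low range (using $k\le 1$ to weaken $(c')^{ib''}$ to $(c')^{ikb''}$), and bound the high range by a geometric tail via Theorem~\ref{thm:tail-bounds-height} using the identification $\xs{T}_2=n'$ when $T=i$. Your rendering is in fact slightly cleaner than the paper's, which appears to contain a typo in the tail sum (writing $c_1^{b_1 i}$ where $c_1^{b_1 n'}$ is needed for the geometric series to produce the stated $\frac{a_1}{1-c_1^{b_1}}c_1^{ikb_1}$); and you correctly flag that the corollary's stated hypothesis (``$i\ge 2n/|\tau_R|$ or $\tau_R<0$'') should read ``$i\ge 2n/|\tau_R|$ or $\tau_R>0$'' to align with the precondition of Lemma~\ref{lem:crucial-bound}.
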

\begin{proof}
We have
\begin{align*}
&[p(n,0) \rightarrow^* q(0,*),i]  \quad =\quad\sum_{n'=0}^{i} [p(n,0)\rightarrow^* q(0,n'),i] %\nonumber \\
\quad = \quad \sum_{n'=1}^{\lfloor i\cdot k\rfloor } [p(n,0)\rightarrow^* q(0,n'),i]+\sum_{n'=\lfloor i\cdot k+1 \rfloor}^{i} [p(n,0)\rightarrow^* q(0,n'),i]\nonumber \\
&\quad  \leq \quad i\cdot k\cdot a'\cdot (c')^{ n\cdot\octrend_R\cdot b' + i\cdot b''} +\sum_{n'=\lfloor i\cdot k+1 \rfloor}^{i} [p(n,0)\rightarrow^* q(0,n'),i] \quad \leq\quad i\cdot k\cdot a'\cdot (c')^{ n\cdot\octrend_R\cdot b' +  i\cdot k\cdot b''}+ \sum_{n'=\lfloor i\cdot k+1 \rfloor}^{i} a_1\cdot c_1^{b_1\cdot i} \nonumber\\
&\quad \leq \quad i\cdot k\cdot a'\cdot (c')^{ n\cdot\octrend_R\cdot b' +  i\cdot k\cdot b''}   + \frac{a_1}{1-c_1^{b_1}}\cdot c_1^{i\cdot b_1 \cdot k} 
.\label{eq:axis-time-bound} 
\end{align*}
Here the first inequality on the second line follows from Lemma~\ref{lem:crucial-bound} (note that any $n' \leq i\cdot k$ satisfies $i\geq (2(C+1)n')/|\octrend_R|$, so the assumptions of this Lemma are satisfied), while the second inequality follows from Theorem~\ref{thm:tail-bounds-height}.
\end{proof}

We now proceed with the proof of Theorem~\ref{thm:tail-bounds-divergence}. Since $\octrend_R>0$, from Corollary~\ref{cor:generic-time-bound} we get 
\[
[p(n,0) \rightarrow^* q(0,*),i] \leq i\cdot k\cdot a'\cdot (c')^{ i\cdot k\cdot b''}   + \frac{a_1}{1-c_1^{b_1}}\cdot c_1^{i\cdot b_1 \cdot k} 
\]

\noindent
But then
\[
[p(n,0) \rightarrow^* q(0,*)] = \sum_{i=n}^{\infty} [p(n,0) \rightarrow^* q(0,*),i] \leq \frac{n\cdot k\cdot a'}{(1-(c')^{k\cdot b''})^2}\cdot(c')^{n\cdot k\cdot b''}+\frac{a_1}{(1-c_1^{b_1})\cdot(1-c_1^{b_1\cdot k})}\cdot c_1^{n\cdot b_1\cdot k}.
\]
%To obtain Theorem~\ref{thm:tail-bounds-divergence} it suffices to put $a_2 = 2\cdot\max\{\frac{n\cdot k\cdot a'}{(1-(c')^{k\cdot b''})^2},\frac{a_1}{(1-c_1^{b_1})\cdot(1-c_1^{b_1\cdot k})}\}$, $b_2 = k\cdot \min\{b'',b_1\}$, and $c_2=\max\{c',c_1\}$.
The numbers $a_2$, $b_2$, $z_2$ in Theorem~\ref{thm:tail-bounds-divergence} can be straightforwardly computed from this bound.
%To obtain Theorem~\ref{thm:tail-bounds-divergence} it suffices to put $a_2 = \max\{k\cdot a',a_1/(1-c_1^{b_1})\}$, $b_2 = k\cdot \max\{b'',b_1\}$, and $c_2=\max\{c',c_1\}$.
\subsection{Proof of Theorem~\ref{thm:tail-bounds-convergence} for $t_S(2)<0$}
\smallskip
In this subsection we prove Theorem~\ref{thm:tail-bounds-convergence} under the assumption that the region $R$ is determined by some $p\in S$, where $S$ is a BSCC of $\C_\A$ with $t_S(2)<0$. The case when $t_S(2)>0$ is handled separately in the next subsection.
%Assume that $\octrend_R<0$ and $-\octrend_R\cdot \frac{i}{2}\geq n$. Denote $k=\big\lfloor \frac{i\cdot \octrend_R}{2(C+1)}\big\rfloor$.
%\begin{align}
%[p(n,0) \rightarrow^* q(0,*),i] &=\sum_{n'=0}^{i} [p(n,0)\rightarrow^* q(0,n'),i]\nonumber \\
%& = \sum_{n'=1}^{k} [p(n,0)\rightarrow^* q(0,n'),i]+\sum_{n'=k+1}^{i} [p(n,0)\rightarrow^* q(0,n'),i]\nonumber \\
%& \leq k\cdot a'\cdot (c')^{ n\cdot\octrend_R\cdot b' + i\cdot b''}+(i-k)\cdot a_1\cdot c_1^{b_1\cdot i},\label{eq:axis-time-bound} 
%\end{align}
%where the last inequality follows from Lemma~\ref{lem:crucial-bound} and Theorem~\ref{thm:tail-bounds-height}.

We put $\tilde{a}_3 = 2\cdot\max\{k\cdot a',a_1/(1-c_1^{b_1})\}$, $\tilde{b}_3 =b'$, $\tilde{z}_3=\max\{c',c_1\}$ and $\tilde{d}_3= k\cdot \min\{b'',b_1\}$.
From Corollary~\ref{cor:generic-time-bound} it immediately follows that
% We put $\bar{a}_3=\max\{a,a_1\}$, $\bar{c}_2=\max\{c',c_1\}$, $\bar{b}_2=b'$, and $\bar{d}_2=\min\{b'',b_1\}$. From~\eqref{eq:axis-time-bound} it follows that
\begin{align*}
[p(n,0) \rightarrow^* q(0,*),i]\quad \leq \quad i\cdot  \tilde{a}_3\cdot (\tilde{z}_3)^{n\cdot \octrend_R\cdot \tilde{b}_3 + i\cdot \tilde{d}_3}
\end{align*} whenever $i\geq \frac{2n}{-\octrend_R}$. From this, the numbers $a_3$, $b_3$, $z_3$ in the statement of the theorem can be easily.
\subsection{Proof of Theorem~\ref{thm:tail-bounds-convergence} for $\twot_S(2)>0$}
Intuitively, if the second counter starts high enough and $t_S(2)>0$, then the probability of reaching zero in the second counter is negligibly small, and hence we may basically ignore the value of the second counter. We obtain a one counter pVASS, preserving the behaviour of the first counter, on which we may easily bound time to zeroing this counter using the previous results. %Formally, we first prove the following lemma.

Within this subsection we often operate with several probability measures within a single expression. To differentiate between them, we denote by $\calP^{\B}$ the probability measure associated to a pVASS $\B$.

In the proof we use some known results on one counter pVASS. Let $\B$ be a one counter pVASS and let 
%$R$ be a type II or type IV region of $\B$ such that the corresponding BSCC 
$S$ be a BSCC of $\C_\B$ such that $t_S\neq 0$. As shown in the proof of Lemma~5.6 in~\cite{BKK:pOC-time-LTL-martingale-JACM} (see also Proposition~7 in~\cite{BKK:pOC-time-LTL-martingale}) one can compute, in time polynomial in the size of $\B$, a bound $h_S\in\Nset$ and numbers  $a_S>0,c_S\in(0,1)$ such that for all configurations $p(k)$, where $p\in S$, all states $q$ of $\B$, and all $i\geq h_S \cdot k$ it holds 
\begin{equation}
\label{eq:oc-time-bound}
\calP^{\B}_{p(k)}\big(\run(p(k)\rightarrow^* q(0),i )\big)\leq a_S\cdot c_S^{-i}.
\end{equation}\authNote{Poradne zadefinovat na zacatek app}

Another crucial tool from the world of one-counter pVASS is the \emph{divergence gap theorem}~\cite[Theorem 4.8]{BKK:pOC-time-LTL-martingale-JACM}, which bounds a positive non-termination probability away from zero. The theorem states that for any BSCC $S$ of $\B$ with $t_{S}>0$ there is a number $\delta_S\in (0,1)$ computable in polynomial time such that for every $p\in S$ for which the probability of avoiding zero when starting in $p(1)$ is positive it holds
\[
\sum_{q\in Q}\calP^{\B}_{p(1)}\big(\run(p(1)\rightarrow^* q(0))\big) \leq 1-\delta_S.
\]
%If $R$ is in a type II region of $\B$ with $t_S(2)>0$ for the corresponding BSCC $S$, then there is a $k\in \Nset$ such that for every $p\in S$ the probability is positive for every $p\in S$ and 
In~\cite{BKKNK:pMC-zero-reachability} we proved that for any $S$ with $t_S>0$ there is $\ell \in \Nset$ such that for every $p\in S$ the probability of avoiding zero counter from $p(\ell)$ is positive. Now if $R$ is in a type II region of $\B$ determined by some $p\in S$, then from every configuration $q\in S $ such that $q(|S|)\in R$ there is a zero-avoiding path from $q(|S|)$ to some $r(\ell)$, so the probability of avoiding zero from $q(|S|)$ is also positive. 
From this it  follows that for any $p(k)\in R$, where $k\geq |S|$, it holds
\[
\sum_{q\in Q}\calP^{\B}_{p(k)}\big(\run(p(k)\rightarrow^* q(0))\big) \leq (1-\delta_S)^{k-|S|}.
\]
%where or computable in time polynomial in the size of $\B$ 

Both of the above results on one-counter pVASS were originally proved using a suitable martingale $\{\msalt{\ell}\}_{\ell=0\infty}$. We will need this martingale in this subsection as well, to prove some additional auxiliary results. The process $\{\msalt{\ell}\}_{\ell=0}^{\infty}$ can be seen as the one-counter analogue of the (substantially more complex) two-counter martingale $\{\ms{\ell}\}_{\ell=0}^{\infty}$ which was defined in one of the previous sections. Formally, let us fix a one-counter pVASS $\B$ and a BSCC $S$ of $\C_\B$. 
We define a stochastic process $\{\msalt{\ell}\}_{\ell=0}^{\infty}$ on runs of $\B$ by putting, for every $\ell \in \Nset$,
%A construction of the following stochastic process $\{\msalt{\ell}\}_{\ell=0}^{\infty}$ on runs of $\B$ was shown in~\cite{BKK:pOC-time-LTL-martingale-JACM}:
\[
\msalt{\ell}=\begin{cases}\xs\ell_1 - t_S\cdot \ell + \vec{z}(\ps{\ell}) & 
%  \text{for all $\ell \in \Nset$.} 
\text{if $\xs{j}_1 > 0 $ for all $0\leq j < \ell$,}\\
\ms{\ell-1} & \text{otherwise,}
  \end{cases} %\label{eq-mart-m}
\]
where $\vec{z}$ is a suitable function  assigning numerical weights to states in $S$. It was shown in~\cite{BKK:pOC-time-LTL-martingale-JACM} that one can compute, in time polynomial in size of $\B$, a function $\vec{z}$ such that 
\begin{itemize}
\item
$\max_{p,q\in S}|\vec{z}(p)-\vec{z}(q)|\leq h_S$, and
\item
the stochastic process $\{\msalt{\ell}\}_{\ell=0}^{\infty}$ is a martingale whenever the initial state belongs to $S$.
\end{itemize}

Now fix a two counter pVASS $\A$ and let $\hat{a}$, $\hat{c}$, and $\hat{h}$ be the maximal $a_S$ and $c_S$ and $h_S$, respectively, among all BSCCs $S$ of $\A_1$ and $\A_2$. Note that $\hat{a}$, $\hat{c}$, and $\hat{h}$ can be computed in time polynomial in the size of $\A$.

\begin{lemma}\label{lem:crucial-bound2}
Let $R$ be a region of $\A_2$ satisfying the assumptions of Theorem~\ref{thm:tail-bounds-convergence} whose corresponding BSCC $S$ satisfies $t_S(2)>0$. Then there are numbers $a_4,b_4>0$, $c_4\in (0,1)$ computable in polynomial time such that 
for every $p\in S$ with $p(0)\in R$, every $n\in\Nset$, every $i\geq \max\{2\hat{h}n,4|S|\}$, and every $k\geq i/8$ it holds
%Let $p,q\in Q$, let $i\in \Nset$, and let $n\in \Nset$.
%If either $t>0$, or 
%If $-t\frac{i}{2}\geq n$, then for $k\geq i/4$ we have
\[
[p(n,k)\rightarrow^* q(0,*),i/2]\leq a_4\cdot (c_4)^{i \cdot b_4}
\]
%(Here $a',b'\in \Rset_{>0}$ and $c'\in (0,1)$ are fixed constants independent of $p,q,i,n$ and expressible in $\mathbf{ExTh(\Rset)}$.)
\end{lemma}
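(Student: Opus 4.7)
The plan is to decompose $\run(p(n,k)\rightarrow^* q(0,*), i/2)$ according to the behavior of the second counter during the first $i/2$ steps. Let Case~A be the set of runs $w$ in this event with $\xs{\ell}_2(w)=0$ for some $\ell \in [0, i/2]$, and let Case~B be the complementary set, i.e., runs with $\xs{\ell}_2(w)>0$ for all $\ell \in [0, i/2]$. The crucial coupling observation is that on any prefix along which \emph{both} counters remain positive, the set of enabled rules depends only on the control state and coincides with the rule set of the appropriate one-counter projection. Hence the marginal distribution of the first-counter trajectory restricted to Case~B equals the distribution of the corresponding trajectory in $\A_1$ started at $p(n)$, and the marginal distribution of the second-counter trajectory up to the first hitting of $0$ in Case~A equals the corresponding trajectory in $\A_2$ started at $p(k)$.

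For Case~B, the coupling bounds its probability by the probability in $\A_1$ of hitting $q(0)$ at step exactly $i/2$ starting from $p(n)$. Because $\tau_R<0$ and $t_S(2)>0$, the classification given in Section~\ref{sec-two-counters} forces $\tau_R = t_S(1)$, so $t_S(1)<0$ and $t_S$ is non-zero for $S$ viewed inside $\C_{\A_1}$. The one-counter time-to-zero bound~\eqref{eq:oc-time-bound} then yields $\calP(\text{Case B}) \leq \hat{a}\cdot \hat{c}^{\,i/2}$, where the hypothesis $i/2 \geq \hat{h} n$ of that bound is exactly our assumption $i \geq 2\hat{h} n$.

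For Case~A, the coupling gives $\calP(\text{Case A}) \leq \calP^{\A_2}_{p(k)}(\exists \ell \leq i/2 : \xs{\ell}_2 = 0)$, and I would bound this using the one-counter martingale $\{\msalt{\ell}\}$ for $\A_2$. On any such run, the bound $|\vec{z}(\cdot)-\vec{z}(\cdot)| \leq \hat{h}$ and $t_S(2)>0$ give $\msalt{\ell} - \msalt{0} \leq -(k-\hat{h}) - t_S(2)\cdot\ell \leq -(k - \hat{h})$. Since $k \geq i/8$ (and, by enlarging $a_4$, one may restrict attention to $i$ large enough that $\hat{h} \leq k/2$) we may assume $k - \hat{h} \geq i/16$. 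Azuma's inequality applied to $\{\msalt{\ell}\}$ with its bounded-differences constant $B'$ bounds the probability of this deviation at a fixed $\ell$ by $2\exp(-(i/16)^2/(2\ell (B')^2))$, and a union bound over $\ell \in [0, i/2]$ yields $\calP(\text{Case A}) \leq (i/2 + 1) \cdot 2\exp(-i/(512 (B')^2))$, which after absorbing the polynomial factor has the form $a \cdot c^{i b}$.

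Summing the two bounds gives the required expression $a_4 \cdot c_4^{\,i \cdot b_4}$ with polynomial-space computable constants, using that $\hat{a}, \hat{c}, \hat{h}, B'$ and the polynomial-space-computable bound on $t_S(2)$ are all effective. The main obstacle will be writing out the two couplings rigorously and tracking constants so that the bounded-differences constant $B'$ of $\{\msalt{\ell}\}$ together with the residual corrections from $\hat{h}$ and $|S|$ are correctly absorbed (this is where the assumption $i \geq 4|S|$ gets used, via $k \geq i/8 \geq |S|/2$, to guarantee that the martingale tail regime applies). Once these details are in place, the remainder of the proof is a routine combination of the established one-counter tail bound~\eqref{eq:oc-time-bound} and Azuma's inequality.
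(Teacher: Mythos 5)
Your proposal is correct, and its core decomposition is the same as the paper's. You split $[p(n,k)\rightarrow^* q(0,*),i/2]$ according to whether the second counter returns to zero within $i/2$ steps (your Case~A) or stays positive throughout (your Case~B); this is precisely the paper's inequality \eqref{eq:divergence-split}, justified by the same coupling observation. For Case~B you invoke the one-counter time-to-zero bound \eqref{eq:oc-time-bound} applied to $\A_1$ with $i/2\geq\hat h n$, which is also what the paper does — and your explicit check that $\tau_R = t_S(1)<0$ (so that $t_S\neq 0$ in $\C_{\A_1}$, as required by that bound) is a point the paper leaves implicit. Where you diverge is Case~A: you bound $\calP^{\A_2}_{p(k)}(\exists\,\ell\leq i/2:\text{counter}=0)$ by applying Azuma's inequality to the stopped one-counter martingale $\{\msalt{\ell}\}$ and union-bounding over the hitting time $\ell$, then absorbing the polynomial prefactor into the exponential. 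The paper instead bounds this by the probability of \emph{ever} hitting a zero-counter configuration from $p(k)$ in $\A_2$ and applies the divergence gap theorem (Theorem~4.8 of \cite{BKK:pOC-time-LTL-martingale-JACM}, recalled at the start of the subsection), which gives $\leq(1-\delta_S)^{k-|S|}$ in one line; with $k\geq i/8$ this yields exponential decay in $i$ after absorbing $(1-\delta_S)^{-|S|}$ into $a_4$. The paper's route is shorter and directly exploits that $R$ is of type~II (needed to ensure the divergence probability is positive); your route is heavier on constant-tracking ($B'$, $\hat h$, the union bound) but is more self-contained, relying only on $t_S(2)>0$ and bounded differences of $\{\msalt{\ell}\}$. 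Both yield the required form $a_4\cdot c_4^{\,i\cdot b_4}$ with effectively computable constants.
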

\begin{proof}
Note that 
\begin{equation}
\label{eq:divergence-split}
[p(n,k)\rightarrow^* q(0,*),i/2] \quad\leq \quad\calP^{\A_1}_{p(n)}\big(\run(p(n)\rightarrow^* q(0),i/2)\big) + \sum_{r\in Q}\sum_{j=0}^{i/2}\calP^{\A_2}_{p(k)}\big(\run(p(k)\rightarrow^* r(0),j)\big) .
\end{equation}
This is because the set of runs initiated in $p(n,k)$ that visits a configuration with zero value in the second counter before reaching zero in the first counter has the same probability as the corresponding set of those runs of $\A_1$ initiated in $p(n)$ whose accumulated payoff (i.e. the change of the second counter which is encoded in labels of $\A_1$) does not drop below $-k$ before a configuration with a zero counter is reached. We will bound both summands in~\eqref{eq:divergence-split} separately.

For the first summand, the discussion at the beginning of this subsection shows that \authNote{upravit bound na $i/2$}
\begin{equation}
\label{eq:split-1}
\calP^{\A_1}_{p(n)}\big(\run(p(n)\rightarrow^* q(0),i/2)\big) \leq \hat{a}\cdot\hat{c}^{\frac{i}{2}} \leq  \hat{a}\cdot\hat{c}^{\frac{i}{2}}.
\end{equation}

For the second summand we use the divergence gap theorem, see above. We have
\begin{equation}
\label{eq:split-2}
\sum_{r\in Q}\sum_{j=0}^{i/2}\calP^{\A_2}_{p(k)}\big(\run(p(k)\rightarrow^* r(0),j)\big)  \quad \leq \quad \sum_{r\in Q} \calP^{\A_2}_{p(k)}\big(\run(p(k)\rightarrow^* r(0))\big)\quad \leq \quad (1-\delta)^{k-|S|} \quad \leq \quad (1-\delta)^{\frac{i}{8}}.
\end{equation}
(The last two inequalities follow from our lower bounds on $i$ and $k$.)
Combining~\eqref{eq:split-2},~\eqref{eq:split-1}, and~\eqref{eq:divergence-split} we can easily compute the numbers $a_4,b_4,c_4$ from the statement of the lemma.
%Since $\vec{t}[2]>0$, the probability that a run initiated in $p(n,k)$ visits a configuration with zero value in the second counter before reaching zero in the first counter is bounded by $a'\cdot \left(c'\right)^{-b'\cdot i}$. Now we may consider a modification of $\A$ where the effect of each transition on the second counter is $0$ (i.e., we consider the one counter $\B_1$). Note that
%\[
%[p(n,k)\rightarrow^* q(0,*),i/2]\leq [p(n)\rightarrow^* q(0),i/2]+a'\cdot \left(c'\right)^{-b'\cdot i}
%\]
%Now we may apply Theorem~\ref{thm:tail-bounds} on $\A'$ and obtain
%\[
%[p(n)\rightarrow^* q(0),i/2]\leq a''\cdot \left(c''\right)^{-b''\cdot i}
%\]
%which together with the above inequality gives
%\[
%[p(n,k)\rightarrow^* q(0,*),i/2]\leq a''\cdot \left(c''\right)^{-b''\cdot i}+a'\cdot \left(c'\right)^{-b'\cdot i}
%\]
%which can easily be rewritten to the desired form.
\end{proof}

\newcommand{\Over}{\mathit{Over}}
\newcommand{\Cover}{\mathit{COver}}

Now starting in $p(n,0)$ and assuming that $t_S(2)>0$, we may easily show that the second counter quickly grows with high probability.
\begin{lemma}\label{lem:divergence-bound}
%R type II
%\[
%\sum_{k\leq i/4} [p(n,0)\rightarrow^* r(*,k),i/2]\leq a'''\cdot (c''')^{-b'''\cdot i}
%\]\
%$i\geq 16\cdot(\frac{\hat{h}}{t_S(2)})^2$
Let $R$ be a region of $\A_2$ satisfying the assumptions of Theorem~\ref{thm:tail-bounds-convergence} whose corresponding BSCC $S$ satisfies $t_S(2)>0$. Then there are numbers $a_5,b_5>0$, $c_5\in (0,1)$ computable in polynomial time such that 
for every $p\in S$ with $p(0)\in R$, every $n\in\Nset$, every $ i\geq 16\cdot(\frac{\hat{h}}{t_S(2)})^2$, and every $k\leq  \frac{i}{8t_S(2)}$ it holds
\[
\calP^{\A}_{p(n,0)}(T_k=i/2 \wedge \Over\leq i/4) \leq a_5\cdot c_5^{\sqrt{i}\cdot b_5}.
\]
\end{lemma}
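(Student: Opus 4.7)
The plan is to bound this atypical event via Azuma's inequality applied to the one-counter martingale $\{\msalt{\ell}\}_{\ell=0}^{\infty}$ associated with the second counter (viewed through the one-counter pVASS $\A_2$), exploiting the positive drift $t_S(2) > 0$. Intuitively, $\{T_k = i/2 \wedge \Over \leq i/4\}$ says that the first counter hits zero by time $i/4$, and yet the second counter, which has drift $t_S(2) > 0$, only reaches level $k$ at the comparatively late time $i/2$; since $k \leq i/(8 t_S(2))$, the typical hitting time for $k$ is only of order $k/t_S(2) \leq i/8$, so $T_k = i/2$ is anomalously large.

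The proof naturally splits into two ingredients. First, I would condition on $\Over = \tau \leq i/4$ and on the value $m = \xs{\tau}_2$ of the second counter at that moment; Theorem~\ref{thm:tail-bounds-height}, applied to $T = \tau$, bounds the probability that $m \geq j$ exponentially in $j$ uniformly in $n$, so it suffices to treat the case where $m$ is not much larger than $k$. Second, from a configuration $q(0,m)$ with $m < k$, the remaining evolution of the second counter is governed by $\A_2$, and $T_k = i/2$ forces the counter to remain strictly below $k$ for at least $i/4$ further steps. Applying Azuma's inequality to $\{\msalt{\ell}\}_{\ell=0}^{\infty}$ over this window, using its bounded-differences property and the bound $|\vec{z}(p)-\vec{z}(q)| \leq \hat{h}$ recalled before Lemma~\ref{lem:crucial-bound2}, the required downward deviation of the martingale from its drift is at least $t_S(2) \cdot i/4 - (k - m) - \hat{h}$; the assumption $i \geq 16(\hat{h}/t_S(2))^2$ is used precisely to ensure that the weight-correction term $\hat{h}$ does not swamp the drift contribution $t_S(2) \cdot i/4$. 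A union bound over $\tau \in \{0,\ldots,i/4\}$ and admissible $m$ costs only polynomial factors that can be absorbed into the constants.

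The main obstacle will be balancing the two tails so that their product retains the claimed sub-exponential form: Azuma contributes an $\exp(-\Omega((t_S(2)\cdot i)^2/i)) = \exp(-\Omega(i))$ bound for deviations of linear order, but the starting value $m$ from Theorem~\ref{thm:tail-bounds-height} must be absorbed against that exponent, and the worst-case tradeoff is attained at $m = \Theta(\sqrt{i})$, which forces a square-root rather than a linear dependence in the final exponent. Verifying that the constants $a_5, b_5, c_5$ extracted from this balancing are indeed computable in polynomial time (from the data already identified in Lemma~\ref{lem:bounded-differences}, Theorem~\ref{thm:tail-bounds-height}, and the polynomial-time computability of $\hat{h}$, $B$, and $t_S$) is the most delicate bookkeeping step.
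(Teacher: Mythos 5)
Your proposal correctly identifies Azuma's inequality on the one-counter martingale $\{\msalt{\ell}\}$ of $\A_2$, driven by the positive drift $t_S(2)>0$ and the bound $k \le i/(8\,t_S(2))$, as the engine of the argument; this is the right instinct. However, it rests on a misreading of the event and invokes an inapplicable tool. The random variable $\Over$ is \emph{not} the hitting time of zero by the first counter: it is defined as the largest $\ell \le T_k$ with $\xs{\ell}_2 = 1$, i.e., the last visit before $T_k$ to a configuration whose \emph{second} counter equals~$1$. Hence the ``value $m = \xs{\Over}_2$'' you propose to condition on is identically~$1$, so that conditioning is vacuous. More seriously, Theorem~\ref{thm:tail-bounds-height} has as a standing hypothesis that $t_S(2) < 0$, which is the opposite of the assumption $t_S(2)>0$ in this lemma; it cannot be invoked here, and your subsequent ``balancing'' of its exponential tail against the Azuma tail---offered as the explanation for the $\sqrt{i}$ exponent---describes a calculation that never takes place.

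The paper's proof avoids this entirely. It conditions on the configuration $\Cover = q(j,1)$ at time~$\Over$ (whose second component is necessarily~$1$), on the number $m \le i/2$ of visits to $q(j,1)$ before~$T_k$, and on the value $o = \Over \le i/4$, and then applies a crude polynomial union bound over $j$, $m$, and~$o$; no exponential tail on any of these three parameters is used or needed. The Azuma step, applied from $q(1)$ over a window of length $i/2 - o \ge i/4$ during which the second counter stays strictly between $1$ and~$k$, forces a martingale deviation of order $t_S(2)\cdot i/4 - k - \hat{h}$ and yields a bound of the form $\exp(-\Omega(i))$, which already absorbs the polynomial factors. Thus the event in this lemma is actually bounded exponentially in~$i$; the $\sqrt{i}$ in the statement is a deliberate loosening, chosen to match the sub-exponential rate that appears in the companion bound on $\calP(T_k = i/2 \wedge \Over > i/4)$ later in the same proof, not the outcome of any tradeoff of the kind you propose. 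To repair your write-up you would need to switch the conditioning to the last-visit-to-second-counter-one configuration and drop the appeal to Theorem~\ref{thm:tail-bounds-height} altogether.
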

\begin{proof}
%In this proof we use the following technical assumption: in $\A$ there are no transitions of the form $q\tran{\kappa,(i,0)}t$, for any $q,t\in Q$, and any $\kappa, m,m'\in \Nset$. A pVASS $\A$ which 

%For a run $w$ initiated in $p(n,0)$ we denote by $T_{k}(w)$ the first point in time when either the first counter becomes zero or the second counter becomes $k$. Since R is a type II region, from every configuration $q(\ell)\in R$ of $\A_2$ such that $\ell \leq k$ there is a path of length at most $|Q|\cdot k$ to a configuration with counter height $k$. It follows that 
%\begin{equation}
%\label{eq:inc-bound}
%\sum_{j\leq k} [p(n,0)\rightarrow^* r(*,j),i/2] \leq \calP_{p(n,0)}(T_k > i/2 \wedge T_k < \infty \wedge \xs{T_k}_2 = k) \leq (1-p_{\min}^{|Q|\cdot k})^{\lfloor \frac{i}{2|Q|k} \rfloor },
%\end{equation}

For a run $w\in \run(p(n,0)\rightarrow^* r(*,k))$ we denote by 
\begin{itemize}
\item $T_k(w)$ the smallest $\ell$ such that $\xs{\ell}_2(w)=k$,
\item $\Over(w)$ the largest $\ell\leq T_k(w)$ such that $\xs{\ell}_2(w)=1$,
\item $\Cover(w)$ the configuration $w(\Over(w))$,
\item $\repconf{q(m)}$ the cardinality of the set $\{\ell\in \Nset\mid \ell\leq T_k(w) \wedge w(\ell)=q(m)\}$.
\end{itemize}
%$\Over(w)$ the smallest $\ell$ such that $\xs{\ell}_2(w)=1$ (or $\infty$ if no such $\ell$ exists). 
We have
\begin{equation}
\label{eq:increase-split}
[p(n,0)\rightarrow^* r(*,k),i/2] \quad \leq \quad 
%\sum_{q\in S} \calP^{\A}_{p(n,0)}(\Over\geq i/) +   
%\sum_{q\in S}\sum_{j=k-1} \calP^{\A_2}_{q(1)}\big(\run(q(1)\rightarrow^*r(k) ,j)\big)
\calP^{\A}_{p(n,0)}(T_k=i/2 \wedge \Over\leq i/4) + \calP^{\A}_{p(n,0)}(T_k=i/2 \wedge \Over> i/4).
\end{equation}
To prove the lemma it suffices to give tail bounds on both summands in~\eqref{eq:increase-split}.

Let us start with the second summand. Denote by $A$ the set of all runs $w\in \run(p(n,0)\rightarrow^* r(*,k))$ such that
\begin{itemize}
\item $T_k(w)=i/2$
\item
 there are at most $\sqrt{i/4}$ indexes $\ell \leq T_k(w)$ such that the second counter is equal to $0$ in $w(\ell)$.
 \item
 $w$ contains a sub-path $w'(0),\dots,w'(\ell)$ with the following properties:
 \begin{itemize}
 \item $i/2 \geq \ell \geq \sqrt{i/4}-1$,
 \item the second counter is equal to 1 in $w'(0)$ and $w'(\ell)$, 
 \item the second counter is positive in all configurations of $w'$
 %\item whenever the second counter is equal to zero in some $w'(j)$, then it is equal to zero also in all $w'(j')$, $j\leq j'\leq \ell$
 \end{itemize}
\end{itemize}
Note that the last item in the definition of $A$ is implied by the previous items.
 Next, denote by $B$ the set of all runs $w\in \run(p(n,0)\rightarrow^* r(*,k))$ such that there are at least $\sqrt{i/4}$ indexes $\ell \leq T_k(w)$ such that the second counter is equal to $0$ in $w(\ell)$. Then
\[
\calP^{\A}_{p(n,0)}(T_k=i/2 \wedge \Over> i/4) \leq \calP^{\A}_{p(n,0)}(A) + \calP^{\A}_{p(n,0)}(B).
\]
Again, it suffices to give tail bounds for both summands on the right-side of the previous equation.

First we focus on the probability of $\A$. Let $w$ be any run initiated in some configuration $q(1)$ of $\A_2$, here $q\in S$, such that the least $\ell$ for which $w(\ell)$ has a zero counter satisfies $\ell \geq (\sqrt{i}/2)-1$. Then 
%$\msalt{\ell}(w)-\msalt{0}(w)\leq \frac{-\sqrt{i}}{2}\cdot t_{S}(2) + \max_{p,q\in S}|\vec{z}(p)-\vec{z}(q)| \leq \frac{-\sqrt{i}}{4}\cdot t_{S}(2) + \hat{h} \leq \frac{-\sqrt{i}}{2}\cdot t_{S}(2)$
$\msalt{\ell}(w)-\msalt{0}(w)\leq -\ell\cdot t_{S}(2) + \max_{p,q\in S}|\vec{z}(p)-\vec{z}(q)| \leq -\ell\cdot t_{S}(2) + \hat{h} \leq \frac{-\ell}{2}\cdot t_{S}(2)$, the last inequality following from our assumption that $i\geq 16\cdot(\frac{\hat{h}}{t_S(2)})^2$. Using this fact and Azuma's inequality we get that
\[
\calP^{\A}_{p(n,0)}(A)\leq \frac{\sqrt{i}}{2}\cdot 2 \cdot \sum_{\ell=\lfloor\sqrt{i/4}-1\rfloor}^{i/2} \exp\left(-\frac{\ell}{8((t_S(2))^2 + \hat{h}+ 1)} \right) \leq (a')\cdot(c')^{\sqrt{i}\cdot b'},
\]
for suitable polynomially computable numbers $a',b'>0$, $c'\in(0,1)$.

Now we turn our attention to $B$. Since $R$ is a type II region, from every configuration $q(0)$ of $\A_2$ that is reachable from $p(0)$ there is a finite path of length at most $|Q|^2$ ending in a configuration of the form $t(|S|)$, from which the probability of zeroing the counter is at most $1-\hat{\delta}$, as argued above. Hence, denoting $p_{\min}$ the minimal non-zero transition probability in $\C_{\A_2}$ we get
\[
\calP^{\A}_{p(n,0)}(A) \leq \big((1-p_{\min}^{|Q|^2})\cdot(1-\hat{\delta})\big)^{\lfloor \frac{\sqrt{i}}{|Q|^2 \cdot \sqrt{4}}\rfloor} \leq a'' \cdot (c'')^{\sqrt{i}\cdot b''},
\]
for suitable polynomially computable numbers $a'',b''>0$, $c''\in(0,1)$.

Now we go back to~\eqref{eq:increase-split} and bound the number $\calP^{\A}_{p(n,0)}(T_k=i/2 \wedge \Over\leq i/4)$. We can easily show, using similar arguments as in the proof of Theorem~~\ref{thm:tail-bounds-height}, that
\begin{align*}
&\calP^{\A}_{p(n,0)}(A) \leq \sum_{q\in Q}\sum_{j=n-i/2}^{n+i/2}\sum_{m=1}^{i/2}\sum_{o=0}^{i/4} \calP_{p(n,0)}(T_k = \frac{i}{2}\wedge\xs{i/2}_2= k\wedge \Over = o \wedge \Cover=q(j,1) 
%\wedge \bigwedge_{j=\mathit{LVisit}+1}^{T} \xs{j}_2>0 \wedge 
\wedge \repconf{q(j,1)}=m)  \\
&\quad \leq \sum_{m=1}^{i/2}\calP_{p(n,0)} (\repconf{q(j,1)}\geq m) \cdot \Big(\sum_{q\in Q}\sum_{j=n-i/2}^{n+i/2}\sum_{o=0}^{i/4}\cdot \calP_{q(j,1)}(T_k = \frac{i}{2}-o \wedge \xs{\frac{i}{2} - o}_2 = k \wedge \bigwedge_{\ell=1}^{\frac{i}{2}-o}\xs{\ell}_2 >0) \Big)\\
&\quad \leq \frac{i}{2}\sum_{q\in Q}\sum_{j=n-i/2}^{n+i/2}\sum_{o=0}^{i/4}\cdot \calP_{q(j,1)}(\underbrace{T_k = \frac{i}{2}-o \wedge \xs{\frac{i}{2} - o}_2 = k \wedge \bigwedge_{\ell=1}^{\frac{i}{2}-o}\xs{\ell}_2 >0}_{\text{denote by }X}).
\end{align*}
Any run $w\in X$ satisfies $\msalt{\frac{i}{2} - o} - \msalt{0} \leq k - (\frac{i}{2} - o)\cdot t_S(2) + \hat{h} \leq  k - (\frac{i}{4} - o)\cdot t_S(2)$. For $k\leq \frac{i}{8\cdot t_S(2)}$ this number is $\leq (\frac{i}{8}-o)\cdot t_{S}(2)$, and hence we can use the Azuma's inequality to get 
\[
\calP^{\A}_{p(n,0)}(A) \leq  \frac{i}{2}\sum_{q\in Q}\sum_{j=n-i/2}^{n+i/2}\sum_{o=0}^{i/4} 2(c''')^{(i-o)b'''}  
\] 
for suitable polynomially computable numbers $b'''>0$, $c'''\in(0,1)$. Hence,
\[
\calP^{\A}_{p(n,0)}(A) \leq   \frac{i}{2}\sum_{q\in Q}\sum_{j=n-i/2}^{n+i/2} 2(c''')^{(i/4)\cdot b'''}/{(1-(c''')^{b'''})}) \leq a''''\cdot (c'''')^{i\cdot b''''}
\]

For suitable polynomially  computable numbers $a'''',b''''>0$, $c''''\in(0,1)$. The numbers $a_5$, $b_5$, $c_5$ can now be computed in polynomial time using the tail bounds given within this proof.
%Consider the one counter pVASS on the second counter (i.e., $\mathcal{B}_2$). Note that 
%\[
%\sum_{k\leq i/4} [p(n,0)\rightarrow^* r(*,k),i/2]\leq \sum_{k\leq i/4} [p(0)\rightarrow^* q(k),i/2]
%\]
%...FINISH!!! (USE AZUMA)
\end{proof}

Now we finish the proof of Theorem~\ref{thm:tail-bounds-convergence}. We have Let $H=\max\{2\hat{h},4|S|,16(\frac{\hat{h}}{t_S(2)})^2\}$. Then for all $i \geq H\cdot n$ it holds

\begin{align*}
&[p(n,0)\rightarrow^* q(0,*),i]=
\sum_{k\leq i}\, \sum_{r\in Q} \, \sum_{\ell=n-i/2}^{n+i/2} [p(n,0)\rightarrow^* r(\ell,k),i/2][r(\ell,k)\rightarrow^* q(0,*),i/2]
\end{align*}
and thus, by Lemma~\ref{lem:divergence-bound}, 
\begin{align*}
&[p(n,0)\rightarrow^* q(0,*),i]\quad\leq\quad
\sum_{k=0}^{i/8} a_5\cdot c_5^{b_5\sqrt{i/2}}+\sum_{k=i/8}^{i/2}\, \sum_{r\in Q}\, \sum_{\ell=n-i/2}^{n+i/2} [p(n,0)\rightarrow^* r(\ell,k),i/2][r(\ell,k)\rightarrow^* q(0,*),i/2]
\end{align*}
But by Lemma~\ref{lem:crucial-bound2}, for $k\geq i/8$ we have that
\[
[r(\ell,k)\rightarrow^* q(0,*),i/2]\leq a_4\cdot (c_4)^{i\cdot b_4}
\]
This together with the previous equation gives us
\begin{align*}
[p(n,0)\rightarrow^* q(0,*),i]\leq \frac{i}{8}\cdot c_5^{b_5\cdot \sqrt{i/2}}+a_4\cdot (c_4)^{i\cdot b_4}.
\end{align*}
%The rest of the proof is similar as in the case $\vec{t}[2]<0$.
The numbers $a_3$, $b_3$, $z_3$ can now be straightforwardly computed using the above inequality.
\subsection{Proof of Lemma~\ref{lem-II-II-gtrend-leftdown-trends-left-down}}
Let $C$ be the set of all configurations of the form $q(0,m)\in C[R_1,R_2]$ satisfying $m\leq \frac{a_1\cdot c_1^{b_1}}{1-c_1^{b_1}}$ where $a_1,b_1,c_1$ come from Theorem~\ref{thm:tail-bounds-height}.
As explained in Section~\ref{sec-two-counters}, Lemma~\ref{lem-II-II-gtrend-leftdown-trends-left-down} is an immediate consequence of the following
%Given $\xi\geq 0$, let $\X_\xi$ be the set of all configurations of the form $q(0,m)$ satisfying $m\leq \xi$. 
\begin{proposition}[The Attractor]\label{thm:attr}
Consider $C[R_1,R_2]$ where both $R_1$ and $R_2$ are of type II. Assume that $t_S(2)<0$, $t_S(1)\not = 0$, $\octrend_{R_1}<0$ and $\octrend_{R_2}<0$. \authNote{I assume that $t_S(1)>0$ and $t_S(2)>0$ has already been solved elsewhere.}
Then $C[R_1,R_2]$ is eagerly attracted to $C$. \authNote{the tailbound is only subexponential!!}
\end{proposition}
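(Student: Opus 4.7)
The plan is to identify a finite attractor $C$ for $C[R_1,R_2]$ and then show that $C[R_1,R_2]$ is eagerly attracted to $C$; together with Proposition~\ref{prop-attractor} and Lemma~\ref{lem-reduce} this will give goodness for $\emptyset$. For the attractor, I would take the set of configurations $q(0,m) \in C[R_1,R_2]$ with $m \leq (a_1 z_1^{b_1})/(1-z_1^{b_1})$, where $a_1,b_1,z_1$ are the constants from Theorem~\ref{thm:tail-bounds-height}. This set is finite and computable, and Theorem~\ref{thm:tail-bounds-height} is exactly tailored so that $C$ bounds the expected height at which an orbit returns to the ``opposite axis.''

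Next I would introduce the sequence of stopping times $\Theta_0 < \Theta_1 < \cdots$, where $\Theta_0$ is the first visit to $C_S[c_2{=}0]$, and $\Theta_k$ alternates between first visits to $C_S[c_1{=}0]$ (odd $k$) and $C_S[c_2{=}0]$ (even $k$). Since $t_S(2) < 0$ and $\tau_{R_1}, \tau_{R_2} < 0$, Corollary~\ref{cor-trend-positive} (together with symmetry in the two counters) ensures that almost every run in $\run(p\vec v)$ visits each axis infinitely often, so each $\Theta_k$ is a.s.\ finite. The key quantitative step is to show that there are computable $\hat a>0$, $0<\hat z<1$ with
\[
\calP_{p\vec v}(\Theta_k - \Theta_{k-1} \geq \ell) \;\leq\; \hat a \cdot \hat z^{\sqrt{\ell}}
\]
for all $k \geq 0$ and $\ell \in \Nset$, where $\Theta_{-1}=0$. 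For $\Theta_0$ I would split the first passage into ``reach $Z_S$'' (Theorem~\ref{thm-zeros}, exponential bound) and ``from there reach $C_S[c_2{=}0]$'' (Theorem~\ref{thm:tail-bounds-convergence}, which gives a $\sqrt{\ell}$-subexponential bound depending on the height at which $Z_S$ was first hit; that height is controlled via another application of Theorem~\ref{thm-zeros}). For $\Theta_k - \Theta_{k-1}$ with $k > 0$ the same decomposition applies, but the role of the initial height is played by the value of the opposite counter at time $\Theta_{k-1}$, which is controlled via Theorem~\ref{thm:tail-bounds-height} and the inductive hypothesis.

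Finally I would show that the probability that $w(\Theta_k) \in C$ is uniformly bounded away from zero by some computable $\delta > 0$, independently of the past. This follows directly from Markov's inequality applied to Theorem~\ref{thm:tail-bounds-height}, which bounds the expected value of the non-zero counter at $\Theta_k$ by $(a_1 z_1^{b_1})/(1-z_1^{b_1})$; hence with probability at least $1/2$ the second (or first) counter is within $C$'s threshold. Letting $\mathit{Rounds}$ be the least $k$ with $w(\Theta_k) \in C$, we get $\calP_{p\vec v}(\mathit{Rounds} \geq \ell) \leq (1-\delta)^{\ell}$. Combining the exponential bound on the number of rounds with the subexponential bound $\hat a \hat z^{\sqrt{\ell}}$ on each round length via a standard splitting argument (the hitting time to $C$ exceeds $\ell$ only if either more than $\sqrt{\ell}$ rounds are needed, or some round has length exceeding $\sqrt{\ell}$) yields a bound of the form $a\cdot z^{\sqrt[4]{\ell}}$ on the probability that $C$ is not reached within $\ell$ transitions, which is sufficient for Definition~\ref{def-attractor}.

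I expect the main obstacle to lie in the careful inductive handling of $\Theta_k - \Theta_{k-1}$: the tail bound on this increment depends on the height of the opposite counter at $\Theta_{k-1}$, which is itself a random variable controlled only in expectation (via Theorem~\ref{thm:tail-bounds-height}). Converting this expectation bound into a uniform $\sqrt{\ell}$-subexponential tail bound independent of $k$, without losing constants that degrade with $k$, requires verifying that the bound in Theorem~\ref{thm:tail-bounds-height} is genuinely independent of the initial first-counter value $n$ (which the theorem explicitly asserts) and then combining with Theorem~\ref{thm:tail-bounds-convergence} in the right way so that the exponent $\sqrt{n \tau_R b_3 + i d_3}$ absorbs the random initial height into an overall $\sqrt{\ell}$ bound after one further application of Markov's inequality on that height.
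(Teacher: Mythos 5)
Your proposal follows essentially the same route as the paper: the same finite attractor $C$, the same alternating stopping times $\Theta_k$, the same decomposition of each increment $\Theta_k-\Theta_{k-1}$ using Theorems~\ref{thm-zeros}, \ref{thm:tail-bounds-height}, and~\ref{thm:tail-bounds-convergence}, a geometric bound on the number of rounds, and the standard ``split at $\sqrt{\ell}$'' combination yielding an $O(z^{\sqrt[4]{\ell}})$ tail. One small slip: with the threshold $M = a_1 z_1^{b_1}/(1-z_1^{b_1})$ that you (and the paper) use for $C$, Markov's inequality with $\E[\xs{T}_2] \le M$ gives only $\calP(\xs{T}_2 > M) \le 1$, not $\le 1/2$; you would need to set the threshold to $2M$, or (as the paper implicitly does) use the tail bound $\calP(\xs{T}_2 \ge i) \le a_1 z_1^{b_1 i}$ directly to get a constant strictly below~$1$ — but this is cosmetic and does not affect the structure of the argument.
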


Given a configuration $p\vec{v}$ and a set $A$ of configurations we denote by $[p\vec{v}\rightarrow^* A,\geq i]$ the probability that a run that starting in $p\vec{v}$ visits $A$ in at least $i$ steps and does not visit $A$ between the first and the last step.

Given a run $w$, we denote by $T_1(w)$, $T_2(w)$, and $T_{12}(w)$ the least $k$ such that $w(k)\in C_S[c_1=0]$, $w(k)\in C_S[c_2=0]$, and $w(k)\in Z_S=C_S[c_1=0]\cup C_S[c_2=0]$, respectively.

The following lemma reformulates results of Theorem~\ref{thm:tail-bounds-height}, Theorem~\ref{thm:tail-bounds-divergence}, and Theorem~\ref{thm:tail-bounds-convergence} in a bit weaker but more transparent way.
\begin{lemma}\label{lem:attr-fund}
Let us fix $p\vec{v}\in C[R_1,R_2]$.
There are effectively computable numbers $a>0$ and $0<b<1$ (depending on $p\vec{v}$) such that the following holds:
\begin{enumerate}
\item For all $\ell\in \Nset$ we have 
\[
[p\vec{v}\rightarrow^* Z_S,\geq\ell]\quad \leq\quad a\cdot b^{\ell}
\]
\item For all $\ell\in \Nset$ we have
\[
\calP_{p\vec{v}} (\xs{T_{12}}_1+\xs{T_{12}}_2\geq \ell)\quad \leq\quad a\cdot b^{\ell}
\]
\item For all $q\in Q$, $m\in \Nset$ and $\ell\geq m$ we have
\[
[q(m,0)\rightarrow^* C_S[c_1=0],\geq \ell]\quad \leq\quad a\cdot b^{\sqrt{\ell-m}}
\]
and
\[
[q(0,m)\rightarrow^* C_S[c_2=0],\geq \ell]\quad \leq\quad a\cdot b^{\sqrt{\ell-m}}
\]
\item For all $q\in Q$, $m\in \Nset$ and $\ell\in \Nset$ we have 
\[
\calP_{q(m,0)} (\xs{T_1}_2\geq \ell)\quad \leq\quad a\cdot b^{\ell}
\]
\item  For all $q\in Q$, $m\in \Nset$ we have the following:
	\begin{enumerate}
	\item If $t_S(1)<0$, then for all $\ell\in \Nset$ we have 
	\[
	\calP_{q(0,m)} (\xs{T_2}_1\geq \ell)\quad \leq\quad a\cdot b^{\ell}
	\]
	\item If $t_S(1)>0$, then for all $\ell\geq m$ we have 
	\[
	\calP_{q(0,m)} (\xs{T_2}_1\geq \ell)\quad \leq\quad a\cdot b^{\sqrt{\ell-m}}
	\]
	\end{enumerate}
\end{enumerate}
\end{lemma}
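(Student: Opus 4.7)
The plan is to derive each of the five clauses as a direct corollary (possibly after a summation or a conditioning step) of the martingale tail bounds already established in Subsection~\ref{subsec:twoc-martingale}, namely Theorems~\ref{thm:tail-bounds-height},~\ref{thm:tail-bounds-divergence}, and~\ref{thm:tail-bounds-convergence}, together with the one-counter bound~\eqref{eq:bound-one-counter} underlying Theorem~\ref{thm-zeros}. The uniform constants $a,b$ depending only on $p\vec{v}$ are then obtained by taking the maximum (resp.\ minimum) of the finitely many constants produced across the five clauses. The enabling observation is that the $i$-th counter of $\A$ evolves identically to the single counter of $\A_i$ during any time interval in which the $(3-i)$-th counter stays positive, so passage-time events to $c_i=0$ are measurable in either model and the one-counter tail bounds transfer directly.

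Clause (4) is exactly Theorem~\ref{thm:tail-bounds-height} applied to $\A$ at initial configuration $q(m,0)$, and the resulting bound is uniform in $m$. Clause (1) follows by coupling: the first visit to $Z_S$ in $\A$ happens no later than the first visit to $c_2=0$ in $\A_2$ started from the projected configuration, and the latter admits an exponential tail by~\eqref{eq:bound-one-counter} since $t_S(2)<0$. For clause (2) I would split along $\{T_{12}=T_1\}$ and $\{T_{12}=T_2\}$ and apply Theorem~\ref{thm:tail-bounds-height} (to $\A$) for the former and its symmetric counterpart (obtained by swapping the roles of $c_1,c_2$ and projecting through $\A_1$) for the latter; the symmetric counterpart is available because our hypotheses give $t_S(1)\neq 0$ and $\tau_{R_1}<0$. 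Clause (5a) is precisely this symmetric counterpart of Theorem~\ref{thm:tail-bounds-height}. For clause (3), applying Theorem~\ref{thm:tail-bounds-convergence} to $\A_1$ (resp.\ $\A_2$, using $\tau_{R_i}<0$) yields the per-time bound $[q(m,0)\rightarrow^* q'(0,*),i]\leq i\cdot a_3\,z_3^{\sqrt{m|\tau_{R_1}|b_3+i\,d_3}}$; summing over $i\geq\ell$ collapses to a geometric tail bounded by $a\cdot b^{\sqrt{\ell-m}}$ once the bounded prefactor from the initial range $m\leq i<Hm/|\tau_{R_i}|$ is absorbed into $a$.

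The main obstacle is clause (5b), where $t_S(1)>0$ so $c_1$ has positive drift along the excursion $[0,T_2]$ and no off-the-shelf tail bound on $\xs{T_2}_1$ applies. My plan is to condition on $T_2$: by clause~(3) applied symmetrically, $T_2$ itself has a sub-exponential tail starting from $c_2=m$, so only values $T_2\lesssim m/|t_S(2)|$ contribute non-negligibly. Conditional on $T_2=i$, the value $\xs{T_2}_1$ is a sum of bounded-difference increments controlled by the martingale $\{\ms{\ell}\}$ of Lemma~\ref{lem:bounded-differences} and centred at roughly $i\cdot t_S(1)$, so Azuma's inequality gives
\[
\calP_{q(0,m)}\bigl(\xs{T_2}_1\geq\ell \,\bigm|\, T_2=i\bigr)\ \leq\ 2\exp\!\left(-\frac{(\ell-i\,t_S(1))^2}{2\,i\,B^2}\right)
\]
whenever $\ell-i\,t_S(1)>0$. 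Summing these bounds over the relevant range of $i$ and adding the tail-of-$T_2$ contribution then yields the required $a\cdot b^{\sqrt{\ell-m}}$ decay. Computability of $a,b$ in polynomial space is inherited from Theorems~\ref{thm:tail-bounds-height}--\ref{thm:tail-bounds-convergence} and, where it enters in (5b), from a polynomial-space-computable lower bound on $|t_S(1)|$ available by stability of $\A$.
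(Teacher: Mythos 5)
The paper offers no written proof of this lemma: it is asserted as a ``weaker but more transparent'' reformulation of Theorems~\ref{thm:tail-bounds-height}--\ref{thm:tail-bounds-convergence}, so your overall strategy --- deriving each clause from those three tail bounds plus the one-counter bound underlying Theorem~\ref{thm-zeros} and then taking worst-case constants --- is exactly the intended one, and clauses (1), (4) and (5a) are handled as the paper implicitly intends.

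There are, however, two concrete gaps. First, for clause (2) you invoke ``the symmetric counterpart of Theorem~\ref{thm:tail-bounds-height}'' on the event $\{T_{12}=T_2\}$; that counterpart requires $t_S(1)<0$, whereas the standing hypotheses of Proposition~\ref{thm:attr} give only $t_S(1)\neq 0$, so your argument breaks when $t_S(1)>0$ (and Theorem~\ref{thm:tail-bounds-height} is in any case stated only for initial configurations on an axis, not for a general $p\vec{v}$). The clause actually follows in one line from clause (1): each step changes each counter by at most one, so $\xs{T_{12}}_1+\xs{T_{12}}_2\le\vec{v}(1)+\vec{v}(2)+T_{12}$, and the constants are allowed to depend on $p\vec{v}$. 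Second, clause (5b) is not the obstacle you make it: the first counter starts at $0$ in $q(0,m)$ and increases by at most one per step, hence $\xs{T_2}_1\le T_2$, and (5b) is immediate from the second inequality of clause (3) --- which is also why it carries the same $\sqrt{\ell-m}$ exponent. Your conditional-Azuma route is both unnecessary and unsound as written: conditioning on $\{T_2=i\}$ destroys the martingale property of $\{\ms{\ell}\}$, so Azuma's inequality does not yield the displayed conditional bound. Finally, in clause (3) the mass coming from $\ell\le i<Hm/(-\tau_{R})$ is not a ``bounded prefactor'' that can be absorbed into $a$ uniformly in $m$: for $\ell$ in that range the left-hand side need not be small, while $b^{\sqrt{\ell-m}}$ can be exponentially small in $\sqrt{m}$. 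This restriction on $\ell$ is glossed over by the paper's statement as well, and one must check that the bound is only invoked (e.g.\ in Lemma~\ref{lem:attr-first-phase}) for $\ell$ beyond that threshold.
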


Let $w$ be a run starting in $p\vec{v}$.
Denote by $\Theta_0(w)$ the least $\ell$ such that $w(\ell)\in C_S[c_2=0]$.
Given $k\geq 1$, denote by $\Theta_k(w)$ the least $\ell\geq \Theta_{k-1}(w)$ such that the following holds
\begin{itemize}
\item If $k$ is odd, then $w(\ell)\in C_S[c_1=0]$.
\item If $k$ is even, then $w(\ell)\in C_S[c_2=0]$.
\end{itemize}

\begin{lemma}\label{lem:attr-first-phase}
There are effectively computable numbers $\hat{a}>0$ and $0<\hat{b}<1$ such that for all $k\geq 0$ and all $\ell\in \Nset$ we have
\[
\calP_{p\vec{v}}(\Theta_k-\Theta_{k-1}\geq \ell)\quad \leq \quad \hat{a}\cdot (\hat{b})^{\sqrt{\ell}}
\]
\end{lemma}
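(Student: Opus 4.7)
The natural approach is induction on $k$, combining items (1)--(5) of Lemma~\ref{lem:attr-fund} with Theorem~\ref{thm:tail-bounds-height} to secure the uniformity in $k$. For the base case $k=0$, I would decompose $\Theta_0 = T_{12} + \sigma$, where $T_{12}$ is the first hitting time of $Z_S$ (exponential tail by item~(1)) and $\sigma$ is the remaining time to reach $C_S[c_2=0]$, which is either zero (if $w(T_{12}) \in C_S[c_2=0]$) or a first-passage from some $q(0,m) \in C_S[c_1=0]$. Splitting the event $\{\sigma \geq \ell/2\}$ at $m = \ell/4$ and applying items~(2) and~(3), which respectively give an exponential tail on $m$ and a $\sqrt{\ell-m}$ sub-exponential tail on the subsequent first-passage time, yields the bound $\hat{a}\cdot \hat{b}^{\sqrt{\ell}}$ on $\Theta_0$.

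For the inductive step $k \geq 1$ with $k$ even (the odd case is symmetric), I would apply the strong Markov property at $\Theta_{k-1}\in C_S[c_1=0]$, writing $w(\Theta_{k-1}) = q(0,m)$, and again split $\{\Theta_k-\Theta_{k-1}\geq \ell\}$ according to whether $m \geq \ell/4$. Item~(3) dispatches the event $\{m<\ell/4\}$; the crux is to bound $\calP_{p\vec{v}}(m \geq \ell/4)$ uniformly in $k$. Here the decisive input is Theorem~\ref{thm:tail-bounds-height}: the height $m = \xs{\Theta_{k-1}}_2$ is precisely the $\xs{T}_2$ of that theorem applied to the previous round, which started at $w(\Theta_{k-2}) = r(n,0)\in C_S[c_2=0]$, and the theorem's exponential tail bound is \emph{independent of $n$}. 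This gives an exponential tail on $m$ whose constants do not drift with $k$, so composing with item~(3) via the same ``split at $\ell/4$'' trick reproduces the bound $\hat{a}\cdot \hat{b}^{\sqrt{\ell}}$ for round $k$.

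For the symmetric odd-$k$ case one bounds the $c_1$-height $n$ at $\Theta_{k-1}$, where the preceding round ran from $w(\Theta_{k-2}) = q(0,m') \in C_S[c_1=0]$ to $w(\Theta_{k-1}) = r(n,0) \in C_S[c_2=0]$. When $t_S(1) < 0$, item~(5a) supplies an exponential tail of the same form and the argument is identical. When $t_S(1) > 0$, item~(5b) only yields a sub-exponential tail in $\sqrt{\ell-m'}$ that depends on the previous counter height $m'$; here an additional layer of splitting (integrating the tail on $n$ against the inductive exponential bound on $m'$, say at the threshold $m' = \ell/16$) is required, but a short calculation confirms that the composed estimate is still dominated by $\hat{a}\cdot \hat{b}^{\sqrt{\ell}}$.

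The hard part is precisely this $k$-uniform control of the counter height at the round boundaries. A naive recursion that fed only the $\sqrt{\ell}$-tail on $\Theta_{k-1}-\Theta_{k-2}$ back into the next round would let the multiplicative constants degrade at every round, since each round's counter heights could in principle amplify the next. What rescues the induction is the fact that Theorem~\ref{thm:tail-bounds-height} (and, through it, item~(4) of Lemma~\ref{lem:attr-fund}) bounds the counter height at first passage to the opposite boundary \emph{independently of the starting height}; this single uniform ingredient decouples successive rounds and lets one exponential tail estimate on the counter values serve for all $k$, from which the claimed sub-exponential bound on each phase length follows.
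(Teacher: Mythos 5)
Your plan is correct and follows essentially the same route as the paper's proof: both decompose $\Theta_0$ via the first hit of $Z_S$ and then compose items (1)--(3) of Lemma~\ref{lem:attr-fund}, and both handle $\Theta_k-\Theta_{k-1}$ by conditioning on the counter height at $\Theta_{k-1}$, whose tail is controlled uniformly in $k$ through item (4) (i.e., Theorem~\ref{thm:tail-bounds-height}) for the $c_2$-height and through items (5a)/(5b) — with an extra split when $t_S(1)>0$ — for the $c_1$-height. You correctly single out the decisive point, namely that the height bounds do not depend on the starting counter value, which is exactly what the paper uses to prevent the constants from degrading across rounds.
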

\begin{proof}
We distinguish two cases $k=0$ and $k>0$.

\noindent {\bf Case $\Theta_0$:}
Note that 
\begin{align*}
\calP_{p\vec{v}}(\Theta_0\geq \ell) &\quad  \leq\quad  [p\vec{v}\rightarrow^* Z_S,\geq \ell/2]+\calP_{p\vec{v}} (\xs{T_{12}}_1+\xs{T_{12}}_2\geq \ell/4)\\
& \qquad + \sum_{q\in Q}\sum_{m=1}^{\ell/4} \calP_{p\vec{v}} (\xs{T_1}_2=m\wedge \ps{T_1}=q)\cdot [q(m,0)\rightarrow^* C_S[c_2=0],\geq \ell/2]\\
&\quad  \leq\quad  a\cdot b^{\ell/2} +a\cdot b^{\ell/4} + \sum_{q\in Q}\sum_{m=1}^{\ell/4} \calP_{p\vec{v}} (\xs{T_1}_2=m\wedge \ps{T_1}=q)\cdot a\cdot b^{\sqrt{\ell/2-m}}\\
& \quad \leq\quad 3\cdot a\cdot b^{\sqrt{\ell/4}}\\
\end{align*}
which can easily be rewritten to the desired form.

\noindent {\bf Case $\Theta_k$:} 
If either $t_S(1)<0$, or $k$ is odd, then using Lemma~\ref{lem:attr-first-phase} 2., 4., 5. a), and induction one may easily prove that
\[
\calP_{p\vec{v}}(\xs{\Theta_k}_1+\xs{\Theta_k}_2\geq \ell)\quad\leq\quad a\cdot b^{\ell}
\]
(Intuitively, whenever we start in $q(m,0)$, then we reach $C_S[c_1=0]$ with probability one and by Lemma~\ref{lem:attr-first-phase} 4., the probability that the height of the second counter at the time is at least $\ell$ is bounded by $a\cdot b^{\ell}$ (independently of $m$). The same holds for configurations $q(0,m)$ since $t_S(1)<0$.)

Now assume that $t_S(1)>0$ and that $k\geq 2$ is even. Note that
\begin{align*}
\calP_{p\vec{v}}(\xs{\Theta_k}_2\geq \ell) &\quad  \leq\quad  \calP_{p\vec{v}} (\xs{\Theta_{t-1}}_1\geq \ell/2)+ \sum_{q\in Q}\sum_{m=1}^{\ell/2} \calP_{p\vec{v}}(\xs{\Theta_{t-1}}_2=m\wedge \ps{\Theta_{t-1}}=q)\cdot\calP_{q(0,m)}(\xs{T_2}_1\geq \ell)\\
& \quad \leq \quad a\cdot b^{\ell/2}+ \sum_{q\in Q}\sum_{m=1}^{\ell/2} \calP_{p\vec{v}}(\xs{\Theta_{t-1}}_2=m\wedge \ps{\Theta_{t-1}}=q)\cdot a\cdot b^{\sqrt{\ell-m}}\\
&\quad  \leq\quad  2\cdot a\cdot b^{\sqrt{\ell/2}}\\
& \quad \leq \quad a'\cdot (b')^{\sqrt{\ell}}
\end{align*}
for suitable $a'>0$ and $0<b'<1$ that are effectively computable and satisfy $a\cdot b^{\sqrt{\ell}}\leq a'\cdot (b')^{\sqrt{\ell}}$.

So for {\em arbitrary} $t_S(1)\not = 0$ and  $k\geq 1$ we have that 
\[
\calP_{p\vec{v}}(\xs{\Theta_k}_1+\xs{\Theta_k}_2\geq \ell)\quad\leq\quad a'\cdot (b')^{\sqrt{\ell}}
\]
Now using the same argument as for $\Theta_0$, we obtain for $k$ odd,
\begin{align*}
\calP_{p\vec{v}}(\Theta_k-\Theta_{k-1}\geq \ell) &\quad  \leq\quad  \calP_{p\vec{v}} (\xs{\Theta_{t-1}}_2\geq \ell/2) + \sum_{q\in Q}\sum_{m=1}^{\ell/2} \calP_{p\vec{v}} (\xs{\Theta_{t-1}}_2=m\wedge \ps{\Theta_{t-1}}=q)\cdot [q(m,0)\rightarrow^* C_S[c_2=0],\geq \ell]\\
&\quad  \leq\quad  a'\cdot (b')^{\ell/2} + \sum_{q\in Q}\sum_{m=1}^{\ell/2} \calP_{p\vec{v}} (\xs{\Theta_{t-1}}_2=m\wedge \ps{\Theta_{t-1}}=q)\cdot a'\cdot (b')^{\sqrt{\ell-m}}\\
&\quad  \leq\quad  2\cdot a'\cdot (b')^{\sqrt{\ell/2}}\\
\end{align*}
which can easily be rewritten into the desired form.
For $k$ even the argument is similar with counters switched.
\end{proof}

Now let us finish the proof of Proposition~\ref{thm:attr}.
Let $w$ be a run starting in $p\vec{v}$. Denote by $\mathit{Rounds}(w)$ the least number $k\geq 0$ such that $w(T_k(w))\in C$. Denote by $\neg\mathit{Term}_k$ the set of all runs $w$ starting in $p\vec{v}$ satisfying $w(\Theta_k(w))\not\in C$. We write $\neg\mathit{Term}_{<k}$ to denote the~set of all runs $w$ starting in $p\vec{v}$ satisfying $w(\Theta_j(w))\not\in C$ for all $0\leq j<k$. Finally, we denote by $\neg\mathit{Term}$ the set of {\em all} runs that {\em do not} visit $C$ before visiting a configuration of $C_S[c_1=0]\smallsetminus C$.

It follows from Theorem~\ref{thm:tail-bounds-height} that there is an effectively computable constant $0\leq \bar{c}<1$ (independent of $\vec{v}(1)$) such that for all $q\in Q$ and all $m\in \Nset$ we have 
\[
\calP_{q(m,0)} (\neg\mathit{Term})\quad \leq\quad \bar{c}
\]
For all $k\geq 1$ odd holds
\begin{align*}
\calP_{p\vec{v}}(\neg\mathit{Term}_k & \wedge \neg\mathit{Term}_{<k})  = \sum_{q\in Q}\sum_{m\in \Nset} \calP_{p\vec{v}}(\xs{\Theta_{k-1}}_1=m\wedge \ps{\Theta_{k-1}}=q\wedge \neg\mathit{Term}_k\wedge \neg\mathit{Term}_{<k})\\
& =  \sum_{q\in Q}\sum_{m\in \Nset} \calP_{p\vec{v}}(\xs{\Theta_{k-1}}_1=m\wedge \ps{\Theta_{k-1}}=q\wedge \neg\mathit{Term}_{<k})\cdot  \calP_{p\vec{v}} (\neg\mathit{Term}_k\mid \xs{\Theta_{k-1}}_1=m\wedge \ps{\Theta_{k-1}}=q\wedge \neg\mathit{Term}_{<k})\\
& =  \sum_{q\in Q}\sum_{m\in \Nset} \calP_{p\vec{v}}(\xs{\Theta_{k-1}}_1=m\wedge \ps{\Theta_{k-1}}=q\wedge \neg\mathit{Term}_{<k})\cdot  \calP_{q(m,0)} (\neg\mathit{Term})\\
& \leq  \sum_{q\in Q}\sum_{m\in \Nset} \calP_{p\vec{v}}(\xs{\Theta_{k-1}}_1=m\wedge \ps{\Theta_{k-1}}=q\wedge \neg\mathit{Term}_{<k})\cdot  \bar{c}\\
& =\calP_{p\vec{v}}(\neg\mathit{Term}_{<k})\cdot \bar{c}
\end{align*}
For all $k\geq 1$ even holds
\[
\calP_{p\vec{v}}(\neg\mathit{Term}_k \wedge \neg\mathit{Term}_{<k})\quad \leq \quad \calP_{p\vec{v}}(\neg\mathit{Term}_{<k})
\]
Now since $\neg\mathit{Term}_{<k}=\neg\mathit{Term}_{k-1}\cap \neg\mathit{Term}_{<k-1}$ we obtain, by induction, that
$\calP_{p\vec{v}}(\mathit{Rounds}\geq \sqrt k)\leq (\bar{c})^{(\lfloor\sqrt k\rfloor)/2-1}$.
Thus for all $i\geq \Nset$ we have
\begin{align*}
[p\vec{v}\rightarrow^* C,\geq i] & \leq \calP(\mathit{Tours}\geq \lfloor\sqrt i\rfloor)+\sum_{k=1}^{\lfloor\sqrt i\rfloor} \calP(\Theta_k-\Theta_{k-1}\geq \lfloor\sqrt i\rfloor)\\
& \leq (\bar{c})^{(\lfloor\sqrt i\rfloor)/2-1}+\sqrt i \cdot \hat{a} (\hat{b})^{\lfloor\sqrt[4] i\rfloor}
\end{align*}
which proves that $p\vec{v}$ is eagerly attracted to $C$.
\newcommand{\diffnorm}[1]{\left|#1\right|_\mathit{diff}}%

\subsection{Geometric Sums of Stochastic Matrices} \label{sub-geometric-sums}

For the proof of Lemma~\ref{lem:bounded-differences} in Appendix~\ref{sub-bounded-difference} we will need a general lemma on geometric sums of stochastic matrices,
see Lemma~\ref{lem-mart-coupling} below.
For the proof of Lemma~\ref{lem-mart-coupling} we use a coupling argument on finite-state Markov chains.
As a preparation we first prove the following lemma on finite-state Markov chains.

\begin{lemma} \label{lem-mart-hitting-time-finite-chain}
 Consider a finite-state Markov chain on a set~$Q$ of states with $|Q| = n$. %and transition matrix $A \in [0,1]^{Q \times Q}$.
 Let $\ymin$ denote the smallest nonzero transition probability in the chain.
 Let $p \in Q$ be any state and $S \subseteq Q$ any subset of~$Q$.
 Define the random variable $T$ on runs starting in~$p$ by
  \[
   T := \begin{cases} k & \text{if the run hits a state in~$S$ for the first time after exactly $k$ steps} \\
                      \mathit{undefined} & \text{if the run never hits a state in~$S$ .}
        \end{cases}
  \]
 We have $\calP(T \ge k) \le 2 c^k$ for all $k \ge n$, where $c := \exp(-\ymin^n/n)$.
 Moreover, if $\calP(T < \infty) = 1$, then we have $\calE T \le 5 n / \ymin^n$,
  we we write~$\calE$ for the expectation with respect to~$\calP$.
\end{lemma}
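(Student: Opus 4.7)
The idea is to combine a shortest-path argument with the Markov property to obtain a geometric recurrence on the tail probabilities. Let $A \subseteq Q$ be the set of states from which $S$ is reachable. For any $q \in A$, the shortest simple path from $q$ to $S$ has length at most $n-1$, and walking this path has probability at least $\ymin^{n-1} \geq \ymin^n$. Hence $\calP_q(T < n) \geq \ymin^n$ whenever $q \in A$, whereas $\calP_q(T < \infty) = 0$ whenever $q \notin A$.

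Since $T$ is undefined on runs that never hit~$S$, I read $\calP(T \geq k)$ as $\calP(k \leq T < \infty)$ and write $g(k) := \calP(k \leq T < \infty)$. Applying the Markov property at time~$k$,
\[
g(k+n) \;=\; \sum_{q \notin S}\calP(T > k,\, w(k) = q)\cdot \calP_q(n \leq T < \infty).
\]
The factor $\calP_q(n \leq T < \infty)$ vanishes when $q \notin A$, and for $q \in A$ the first paragraph gives $\calP_q(n \leq T < \infty) \leq \calP_q(T<\infty) - \ymin^n \leq (1-\ymin^n)\calP_q(T<\infty)$, the last step using $\calP_q(T<\infty)\leq 1$. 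Summing back yields the recurrence $g(k+n) \leq (1-\ymin^n)\, g(k)$, which iterates to $g(k) \leq (1-\ymin^n)^{\lfloor k/n \rfloor}$. Combining $1-x \leq e^{-x}$ with $\lfloor k/n\rfloor \geq (k-n+1)/n$ for $k \geq n$ turns this into $g(k) \leq e^{\ymin^n(1-1/n)}\cdot c^k$, and a case split on whether $\ymin^n$ is small (absorb directly into~$c^k$) or large (use that $g(k) \leq 1$ and $c^k$ is not too small) trims the prefactor to the stated~$2$.

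For the expectation, I would use $\calE T = \sum_{k\geq 1}\calP(T \geq k)$, bound the first $n$ terms trivially by~$n$, and bound the tail by $\sum_{k>n} 2c^k \leq 2c^{n+1}/(1-c)$. The elementary estimate $1 - e^{-x} \geq x/2$ for $x \in [0,1]$ applied to $x = \ymin^n/n$ gives $1-c \geq \ymin^n/(2n)$, hence $2/(1-c) \leq 4n/\ymin^n$. Combined with $n \leq n/\ymin^n$, this produces $\calE T \leq 5n/\ymin^n$, as required.

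The main obstacle is purely bookkeeping: one has to keep track of states in $A \setminus S$ so that contributions from components unable to reach~$S$ do not break the geometric decay, and then squeeze the exponential constant down to the claimed~$2$. The structural content is just the shortest-path observation plus a single application of the Markov property.
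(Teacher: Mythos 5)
Your argument is structurally correct and, while it arrives at the same geometric-decay rate as the paper, it travels a genuinely different route: the paper bounds $\calP(T\geq k)$ directly by counting how many independent ``crash'' windows of length $n-1$ must be survived (where a crash is either hitting $S$ or a state from which $S$ is unreachable), whereas you set up the recurrence $g(k+n)\leq(1-\ymin^n)\,g(k)$ via the Markov property applied to $g(k)=\calP(k\leq T<\infty)$, after first showing that every live state escapes with probability at least $\ymin^n$ within $n-1$ steps. The recurrence approach is clean and the expectation bound (via $1-e^{-x}\geq x/2$ on $[0,1]$ and $n\leq n/\ymin^n$) matches the paper's in spirit.

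However, the final ``case split'' you sketch to trim the prefactor $e^{\ymin^n(1-1/n)}$ to~$2$ does not work as stated. If $\ymin^n$ is ``large'' (i.e., $\ymin^n>\ln 2$), your proposed fallback --- ``use that $g(k)\leq 1$ and $c^k$ is not too small'' --- fails: already for $n=1$, $\ymin=1$ gives $c=e^{-1}$ and $2c^k<1$ for $k\geq 1$, so $g(k)\leq 1$ is not enough. The gap is easily closed, but it requires the observation the paper makes explicitly: when $\ymin<1$, some state has at least two outgoing transitions, so $\ymin\leq 1/2$ and hence $\ymin^n\leq 1/2<\ln 2$, putting you squarely in your ``small'' case; and when $\ymin=1$ the chain is deterministic so $g(k)=0$ for $k\geq n$ trivially. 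Without one of these two observations, the prefactor $e^{\ymin^n(1-1/n)}$ only admits the a priori bound $e$, not $2$. I recommend replacing the vague case split with the $\ymin\leq 1/2$ observation (or handling $\ymin=1$ as a separate base case, as the paper does).
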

\begin{proof}
 If $\ymin=1$ then all states that are visited are visited after at most $n-1$ steps and hence $\calP(T \ge n) = 0$.
 Assume $\ymin < 1$ in the following.
 Since for each state the sum of the probabilities of the outgoing edges is~$1$, we must have $\ymin \le 1/2$.
 Call \emph{crash} the event of, within the first $n-1$ steps, either hitting~$S$ or some state~$r \in Q$ from which $S$ is not reachable.
 The probability of a crash is at least $\ymin^{n-1} \ge \ymin^n$, regardless of the starting state.
 Let $k \ge n$.
 For the event where $T \ge k$, a crash has to be avoided at least $\lfloor \frac{k-1}{n-1} \rfloor$ times; i.e.,
 \[
  \calP(T \ge k) \le (1-\ymin^n)^{\lfloor \frac{k-1}{n-1} \rfloor} \,.
 \]
 As $\lfloor \frac{k-1}{n-1} \rfloor \ge \frac{k-1}{n-1} - 1 \ge \frac{k}{n} - 1$, we have
 \begin{align*}
  \calP(T \ge k)
  & \le \frac{1}{1-\ymin^n} \cdot \left((1-\ymin^n)^{1/n}\right)^k
    \le 2 \cdot \left((1-\ymin^n)^{1/n}\right)^k \\
  &  =  2 \cdot \exp\left(\frac1n \log(1-\ymin^n)\right)^k
    \le 2 \cdot \exp\left(\frac1n \cdot (-\ymin^n)\right)^k
    = 2 \cdot c^k \,.
 \end{align*}
Moreover, if $\calP(T < \infty) = 1$, we have:
\begin{align*}
 \calE T
 & = \sum_{k=1}^\infty \calP(T \ge k) \\
 & \le n + \sum_{k=0}^\infty 2 c^k \\
 & = n + \frac{2}{1-\exp(-\ymin^n/n)} \\
 & \le n + \frac{4 n}{\ymin^n} && \text{as $\exp(-\ymin^n/n) \le 1 - \frac{\ymin^n}{2 n}$} \\
 & \le 5 n / \ymin^n
\end{align*}
\end{proof}

Now we are ready to prove the following lemma.

\begin{lemma} \label{lem-mart-coupling}
Let $G \in [0,1]^{Q \times Q}$ be a stochastic matrix with only one BSCC.
Let $\ymin$ denote the smallest nonzero entry of~$G$.
Let $\vec{r} \in \Rset^Q$ be a vector.
Define $\vec{f}(n) := \sum_{i=0}^{n-1} G^i \vec{r}$ for all $n \in \Nset$.
For any vector $\vec{v} \in \Rset^Q$ let us define $\diffnorm{\vec{v}} := \max_{p_1,p_2 \in Q} |\vec{v}[p_1] - \vec{v}[p_2]|$.
Then we have $\diffnorm{\vec{f}(n)} \le C \diffnorm{\vec{r}}$ for all $n \in \Nset$,
 where $C := 10 |Q| / \ymin^{|Q|}$.
\end{lemma}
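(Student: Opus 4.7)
The plan is to prove the bound by a coupling argument. Interpret the entries probabilistically as $\vec{f}(n)[p] = \E_p\bigl[\sum_{i=0}^{n-1} \vec{r}[X_i]\bigr]$, where $(X_i)_{i \ge 0}$ denotes the Markov chain with transition matrix~$G$ started at~$p$. Fix arbitrary $p_1, p_2 \in Q$; our goal is to bound $|\vec{f}(n)[p_1] - \vec{f}(n)[p_2]|$ uniformly in~$n$. Before coupling, note that adding any constant~$c$ to every entry of~$\vec{r}$ shifts each $\vec{f}(n)[p]$ by the same amount~$nc$ and thus leaves both the difference above and $\diffnorm{\vec{r}}$ unchanged. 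Choosing $c := -\tfrac12(\max_q \vec{r}[q] + \min_q \vec{r}[q])$, we may assume without loss of generality that $\|\vec{r}\|_\infty \le \tfrac12 \diffnorm{\vec{r}}$; this halving of the sup-norm is what will yield the constant $C = 10|Q|/\ymin^{|Q|}$ rather than $20|Q|/\ymin^{|Q|}$.

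For the coupling, fix any state $p_0$ in the unique BSCC~$B$ of~$G$. Let $X$ and~$Y$ be Markov chains with matrix~$G$ started at $p_1$ and $p_2$, and let $T_1 := \inf\{i : X_i = p_0\}$, $T_2 := \inf\{i : Y_i = p_0\}$. Since every state of~$Q$ reaches~$B$ and $p_0 \in B$, both $T_1$ and $T_2$ are almost surely finite. By the strong Markov property we may construct the joint law of $(X,Y)$ so that the post-hitting tails $(X_{T_1+j})_{j\ge 0}$ and $(Y_{T_2+j})_{j\ge 0}$ are one and the same realisation $(Z_j)_{j\ge 0}$ of the chain started at~$p_0$, i.e.\ $X_{T_1+j} = Z_j = Y_{T_2+j}$ for all $j \ge 0$, with the two pre-hitting segments independent. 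Both marginals are correct, and crucially this coupling works \emph{without} any aperiodicity assumption on~$B$, because it matches the two tails at their common origin~$p_0$ rather than at the same absolute time index.

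Under this coupling the difference $D := \sum_{i=0}^{n-1}(\vec{r}[X_i] - \vec{r}[Y_i])$ exhibits massive cancellation on the overlapping portion of the shared $Z$-tail. A short case analysis (based on whether $T_1$, $T_2$ are less than or at least~$n$) shows that $D$ reduces to a signed sum of at most $2\min(\max(T_1,T_2),n)$ reward terms: the two unmatched pre-hitting segments, of lengths $T_1\wedge n$ and $T_2\wedge n$, together with at most $|T_1-T_2|$ further $Z$-terms arising from the imbalance between the $Z$-tails if both chains hit $p_0$ before time~$n$. Since each such term has magnitude at most $\|\vec{r}\|_\infty \le \tfrac12 \diffnorm{\vec{r}}$, we obtain the pointwise bound $|D| \le \max(T_1,T_2)\cdot \diffnorm{\vec{r}}$. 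Invoking Lemma~\ref{lem-mart-hitting-time-finite-chain} with $S = \{p_0\}$, which gives $\E[T_i] \le 5|Q|/\ymin^{|Q|}$, we conclude
\[
  |\vec{f}(n)[p_1] - \vec{f}(n)[p_2]| \;\le\; \E[\max(T_1,T_2)]\,\diffnorm{\vec{r}}
  \;\le\; (\E[T_1]+\E[T_2])\,\diffnorm{\vec{r}}
  \;\le\; C\,\diffnorm{\vec{r}},
\]
and taking the maximum over $p_1,p_2$ yields the claim.

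The main subtlety, and the reason a coupling argument has to be designed rather than lifted off the shelf, is the possibility that~$B$ is periodic. The naive product-chain coupling on $Q\times Q$ that lets both chains evolve independently can fail to produce $X_i = Y_i$ at any time if $p_1$ and $p_2$ lie in different phase classes of~$B$, so expected coupling times would be infinite. The time-shifted coupling at~$p_0$ bypasses periodicity entirely by comparing $X$ and~$Y$ along synchronised excursions from~$p_0$ instead of along synchronised absolute times; together with the midpoint centring of~$\vec{r}$ this is what tightens the estimate to the stated constant $C = 10|Q|/\ymin^{|Q|}$.
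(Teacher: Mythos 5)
Your proof is essentially the same as the paper's: you reduce to probabilistic accumulated rewards, couple the two chains by tying their post-hitting tails at a common state in the BSCC (a time-shifted rather than time-synchronous coupling, which is exactly what the paper does and is indeed what makes periodicity irrelevant), bound the uncancelled reward terms by the hitting times, and invoke Lemma~\ref{lem-mart-hitting-time-finite-chain} to control the expected hitting times. The only superficial difference is the normalization of~$\vec{r}$: you recentre to $\|\vec{r}\|_\infty \le \tfrac12\diffnorm{\vec{r}}$, whereas the paper shifts $\vec{r}$ to be nonnegative with some entry zero and then exploits nonnegativity in the case analysis; both normalizations deliver the same constant $C = 10|Q|/\ymin^{|Q|}$, so your remark that the midpoint centring is what tightens the constant is not quite accurate (the paper's normalization, used carefully, also yields $\max(T_1,T_2)\diffnorm{\vec{r}}$), but this does not affect the correctness of your argument.
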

\begin{proof}
Let $\kappa \in \Rset$, and define $\vec{r}_+ := \vec{r} + \kappa \vec{1}$.
Note that $\diffnorm{\vec{r}_+} = \diffnorm{\vec{r}}$.
Since $G$ is stochastic, we have for all $n \in \Nset$:
\[
 \diffnorm{\sum_{i=0}^{n-1} G^i \vec{r}_+}
 =  \diffnorm{n \kappa \vec{1} + \sum_{i=0}^{n-1} G^i \vec{r}} = \diffnorm{n \kappa \vec{1} + \vec{f}(n)} = \diffnorm{\vec{f}(n)}
\]
So in the following we can assume without loss of generality that $\vec{r} \ge \vec{0}$ and $\vec{r}[p] = 0$ for some $p \in Q$,
 so that we have:
\begin{equation}
 \diffnorm{\vec{r}} = \norm{\vec{r}} \label{eq-lem-mart-coupling-norms}
\end{equation}

Consider the finite-state Markov chain on~$Q$ induced by~$G$.
Let $p_1, p_2 \in Q$ be arbitrary states.
\newcommand{\Xs}[1]{X^{(#1)}}%
We define random runs $\Xs{0}_1, \Xs{1}_1, \ldots$ and $\Xs{0}_2, \Xs{1}_2, \ldots$ in the Markov chain,
 with $\Xs{i}_1, \Xs{i}_2 \in Q$ for all $i \in \Nset$, and $\Xs{0}_1 = p_1$ and $\Xs{0}_2 = p_2$, and
\begin{equation}
\begin{aligned}
 \calP(\Xs{i+1}_1 = q \mid \Xs{i}_1 = p) & \quad = \quad G[p,q] && \text{for all $i \in \Nset$ \quad and} \\
 \calP(\Xs{i+1}_2 = q \mid \Xs{i}_2 = p) & \quad = \quad G[p,q] && \text{for all $i \in \Nset$.}
\end{aligned} \label{eq-mart-lem-mart-coupling-markov}
\end{equation}
We write $\X_1, \X_2$ for the sequences $(\Xs{i}_1)_i$ and $(\Xs{i}_2)_i$ in the following.
For $q \in Q$ we regard $\vec{r}[q]$ as a ``reward'' incurred when the chain is in state~$q$.
For each $n \in \Nset$ we define a random variable $R_1(n)$, the ``accumulated reward before time~$n$'':
 \[
  R_1(n) := \sum_{i=0}^{n-1} \vec{r}[\Xs{i}_1]
 \]
We define $R_2(n)$ similarly, replacing $\Xs{i}_1$ by~$\Xs{i}_2$.
Writing~$\calE$ for expectation, we have for all $n \in \Nset$:
\begin{equation}
 \calE R_1(n) = \vec{f}(n)[p_1] \quad \text{and} \quad \calE R_2(n) = \vec{f}(n)[p_2] \label{eq-mart-lem-mart-coupling-exp}
\end{equation}
We now refine the definition of $\X_1$ and~$\X_2$ by coupling them as follows.
Let $s \in Q$ be a state from the only BSCC of~$G$.
Let $T_1, T_2\in \Nset$ so that:
\begin{align*}
 T_1 & := \min \{ i \in \Nset \mid \Xs{i}_1 = s \} \\
 T_2 & := \min \{ i \in \Nset \mid \Xs{i}_2 = s \}
\end{align*}
Note that $T_1, T_2$ exist almost surely.
We now require from $\X_1, \X_2$ that for each $i \in \Nset$ we have:
\begin{itemize}
 \item if there is $j \in \Nset$ with $i-j = T_2 \ge T_1$, then $\Xs{i}_2 = \Xs{T_1 + j}_1$;
 \item if there is $j \in \Nset$ with $i-j = T_1 \ge T_2$, then $\Xs{i}_1 = \Xs{T_2 + j}_2$.
\end{itemize}
In words: if $\X_1$ reaches~$s$ first, then as soon $\X_2$ also reaches~$s$, it mimics the behavior of~$\X_1$ after it had reached~$s$;
 symmetrically, if $\X_2$ reaches~$s$ first, then $\X_1$ mimics~$\X_2$ in a similar way;
 if they reach~$s$ at the same time, their behavior is henceforth identical.
Note that although $\X_1, \X_2$ are not independent, Equations \eqref{eq-mart-lem-mart-coupling-markov}~and~\eqref{eq-mart-lem-mart-coupling-exp}
 remain valid, as they did not require independence.
By the coupling we have:
\[
 R_1(T_1 + n) - R_1(T_1) = R_2(T_2 + n) - R_2(T_2) \quad \text{for all $n \in \Nset$}
\]
Let $T_1 \le T_2$.
Then it follows for all $n \in \Nset$:
\begin{align*}
 R_1(T_1 + n) - R_2(T_1 + n)
% & = R_1(T_1 + T_2 - T_1 + n) - R_2(T_2 + n) \\
 & = R_2(T_2 + n) - R_2(T_2) + R_1(T_1) - R_2(T_1 + n) \\
 & \le (T_2 - T_1) \norm{\vec{r}} + T_1 \norm{\vec{r}} \le (T_1 + T_2) \norm{\vec{r}}
\end{align*}
Let now $T_2 \le T_1$.
Then we similarly have for all $n \in \Nset$:
\begin{align*}
 R_1(T_1 + n) - R_2(T_1 + n)
 & = R_2(T_2 + n) - R_2(T_2) + R_1(T_1) - R_2(T_1+ n) \\
 & \le R_1(T_1) \le T_1 \norm{\vec{r}}
\end{align*}
So in any case ($T_1 \le T_2$ or $T_2 \le T_1$) we have:
\begin{equation}
 R_1(n) - R_2(n) \le (T_1 + T_2) \norm{\vec{r}} \quad \text{for all $n \in \Nset$} \label{eq-mart-lem-coupling-R-diff-bound}
\end{equation}
We have:
\begin{align*}
 \vec{f}(n)[p_1] - \vec{f}(n)[p_2]
 & = \calE R_1(n) - \calE R_2(n) && \text{by~\eqref{eq-mart-lem-mart-coupling-exp}} \\
 & = \calE \left( R_1(n) - R_2(n) \right) && \text{by linearity of expectation} \\
 & \le \norm{\vec{r}} (\calE T_1 + \calE T_2) && \text{by~\eqref{eq-mart-lem-coupling-R-diff-bound}} \\
 & \le 10 \norm{\vec{r}} |Q| / \ymin^{|Q|} && \text{by Lemma~\ref{lem-mart-hitting-time-finite-chain}} \\
 & = 10 \diffnorm{\vec{r}} |Q| / \ymin^{|Q|} && \text{by~\eqref{eq-lem-mart-coupling-norms}}
\end{align*}
The statement follows, as $p_1, p_2$ were chosen arbitrarily.
\end{proof}

\subsection{Bounded Differences} \label{sub-bounded-difference}

We are going to prove Lemma~\ref{lem:bounded-differences}, stating that the martingale defined in~\eqref{eq-mart-m} has bounded differences.
Recall that we have fixed a region~$R$ of~$\A_2$ whose type is~II and which is determined by some $p \in S$, where $S$ satisfies $\twot_S(2)<0$.
Let us recall from~\cite{BKKNK:pMC-zero-reachability} the definition of the function $\vec{g}: \Nset \to \Rset^{S}$ referred to in~\eqref{eq-mart-m}.
A certain vector~$\vec{g}(0) \in \Rset^{S}$ was defined in~\cite{BKKNK:pMC-zero-reachability},
and since the matrix~$A$ in~\cite{BKKNK:pMC-zero-reachability} is stochastic, we can assume $\vec{g}(0) \ge \vec{0}$.
Using $\vec{g}(0)$ as base case, we define the function
$\vec{g}: \Nset \to \Rset^{S}$ inductively as follows:
 \begin{equation}
   \vec{g}(n + 1) = \vec{r}_\downarrow + G \vec{g}(n) \qquad \text{for all $n \in \Nset$,} \label{eq-mart-g}
 \end{equation}
where $\vec{r}_\downarrow \in \Rset^{S}$ was defined in~\cite{BKKNK:pMC-zero-reachability}, and $G \in \Rset^{S \times S}$ denotes the matrix such that $G[q,r]$ is the probability that starting from $q(1)$
 the configuration $r(0)$ is visited before visiting any configuration $r'(0)$ for any $r' \ne r$.
Since $\twot_S(2)<0$, the matrix~$G$ is stochastic, i.e., $G \vec{1} = \vec{1}$.
We prove:
\begin{lemma} \label{lem-G-one-BSCC}
Matrix~$G$ has only one BSCC.
\end{lemma}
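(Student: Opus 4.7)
The plan is to prove that $G$ has a unique BSCC by showing that any two states $q, r \in S$ that can belong to a BSCC of $G$ are $G$-connected. Unfolding the definition, $(G^n)[q,r] > 0$ for some $n \geq 1$ means there is a positive-probability path in $\A_2$ from $q(1)$ to $r(0)$ that touches the zero level only at a sequence of intermediate states $q_1, \dots, q_{n-1} \in S$ and at the terminal~$r$.

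First I would observe that a state $r \in S$ has an incoming $G$-edge if and only if $r$ has at least one incoming rule with second-counter change $-1$ in $\A_2$; the ``only if'' direction holds because the final step of any path witnessing $G[s,r] > 0$ must be a $-1$ transition into $r(0)$. Let $S^\star \subseteq S$ denote the set of such states. Any $r \in S \setminus S^\star$ has no incoming $G$-edge and is therefore transient under $G$, so every BSCC of $G$ is contained in $S^\star$.

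For the core step I would fix $q, r \in S^\star$ and construct a witnessing $G$-path, exploiting three structural facts: strong connectivity of $S$ in $\C_\A$ (hypothesis, since $S$ is a BSCC); the existence of a cycle in $\C_{\A_2}$ within $S$ with strictly positive second-counter change, which must exist because the region $R = \post(p(0))$ is infinite and a cycle structure with only non-positive changes would cap the counter reachable from $p(0)$; and the existence of a cycle with strictly negative change, guaranteed by $\twot_S(2) < 0$. Starting from $q(1)$, the path first pumps the counter high via the positive cycle, then follows a $\C_\A$-route from $q$ to a $-1$-predecessor $s$ of $r$ (which exists since $r \in S^\star$, and $s$ can be reached in $\C_\A$ from anywhere in $S$ by strong connectivity, while the counter stays bounded away from $0$ because it was pumped first), descends the counter to $s(1)$ using the negative cycle (again relocated to $s$ by strong connectivity), and finally takes the $-1$ rule into $r(0)$. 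Decomposing this $\A_2$-path at its zero crossings yields the desired $G$-sequence.

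The main obstacle is ensuring that every intermediate zero touch $q_i$ admits an outgoing $+1$ rule, so that the excursion can resume after each visit to level zero; otherwise the decomposition would produce a broken sequence rather than a genuine $G$-path. I would handle this by routing, via strong connectivity of~$S$, through states that carry the needed $+1$ rules, and by inserting short detours before each descent so that every intermediate $q_i$ is chosen from the set of states with both an outgoing $+1$ rule and an incoming $-1$ rule. Combining these two steps, $G$ is irreducible on $S^\star$, and since BSCCs of $G$ are confined to $S^\star$, the matrix $G$ admits a unique BSCC.
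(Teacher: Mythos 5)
Your high-level strategy (identify the set $S^\star$ of states with an incoming $G$-edge, show $G$ is irreducible there, conclude uniqueness) is sound, and the paper's proof is similar in spirit. The problem is the step that is supposed to produce the $G$-path.

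Your unfolding of $(G^n)[q,r]>0$ is incorrect. A $G$-path $q=q_0\to q_1\to\cdots\to q_n=r$ corresponds to $n$ \emph{separate} witnessing $\A_2$-paths, the $i$-th going from $q_i(1)$ to $q_{i+1}(0)$ and staying $\ge 1$ until its final step; between consecutive witnesses the counter is ``reset'' from $q_{i+1}(0)$ to $q_{i+1}(1)$, which is not an $\A_2$-transition. Concatenating the witnesses (with the obvious upward shifts) yields a single $\A_2$-path from $q(n)$ to $r(0)$ that stays $\ge 1$ until its very last step and first hits level $n-i$ at $q_i$ --- not a path from $q(1)$ that touches zero $n-1$ times. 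Consequently, ``decomposing an $\A_2$-path $q(1)\to\cdots\to r(0)$ at its zero crossings'' does not produce a $G$-path: if $\pi$ first hits level $0$ at $q_1$ (giving $G[q,q_1]>0$) and then continues via $q_1(0)\to t(1)$, the next zero hit $q_2$ is the first hit from $t(1)$, certifying $G[t,q_2]>0$, not $G[q_1,q_2]>0$. The $G$-edges you extract are $q\to q_1$, $t_1\to q_2$, $t_2\to q_3,\dots$ --- disconnected fragments, not a path from $q$. Your proposed ``$+1$-rule'' patch does not close this gap, since $t_i\neq q_i$ in general.

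The construction \emph{can} be repaired by building a single $\A_2$-path from $q(N)$ (for $N$ large) to $r(0)$ that stays $\ge 1$ except at its final step, then reading off the $G$-path level by level via the first-hit decomposition. This is essentially what the paper does, but abstractly: it writes $p\xrightarrow{k}q$ for ``some $\A_2$-path from $p(n)$ to $q(n+k)$'', observes once that $k<0$ is equivalent to $G$-reachability of $q$ from $p$, and then --- since $S$ is strongly connected in $\C_\A$ --- obtains $p\xrightarrow{k}q$ for \emph{some} $k$, and makes $k$ negative by pumping a strictly decreasing $G$-cycle at the BSCC state $q$. That algebraic pumping step replaces all of your explicit path-construction and entirely sidesteps the zero-crossing issue that breaks your argument.
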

\begin{proof}
For $p, q \in S$ and $k \in \mathbb{Z}$ we write $p \xrightarrow{k} q$ if there is a path in~$\A_2$
 from $p(n)$ to~$q(n+k)$ for some $n \ge 0$.
Observe that there is $k < 0$ with $p \xrightarrow{k} q$ if and only if
 there is a nonempty path in the graph of~$G$ from $p$ to~$q$.

Let $p \in S$ and $q \in \B$, where $\B \subseteq S$ is a BSCC of~$G$.
We need to show that $q$ is reachable from~$p$ in the graph of~$G$.
Let $k \in \mathbb{Z}$ with $p \xrightarrow{k} q$.
Since $q$ is in the BSCC~$\B$ of~$G$, we have $q \xrightarrow{\ell} q$ for some $\ell < 0$.
By combining the two paths, we get $p \xrightarrow{k + \ell} q$ and, by pumping the second path,
 $p \xrightarrow{k + a \ell} q$ for all $a \ge 0$.
So by choosing $a$ large enough we get $p \xrightarrow{m} q$ for some $m < 0$.
Hence there is a nonempty path in the graph of~$G$ from $p$ to~$q$.
\end{proof}

\begin{lemma} \label{prop-bounded-differences}
Let $\gmax > 0$ so that $\vec{0} \le \vec{g}(0) \le \gmax \vec{1}$.
Let $\rmax \ge 1$ such that $\norm{\vec{r}_\downarrow} \le \rmax$.
Let $\xmin$ be the smallest nonzero probability in the description of~$\A_2$.
Then
\[
|\ms{\ell+1}-\ms{\ell}| \ \le \
2 + 2 \gmax + 30 |S| \rmax / \xmin^{|S|^4}
\text{ \qquad for every $\ell\in \Nset$.}
\]
\end{lemma}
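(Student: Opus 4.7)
My plan is to split the analysis into the trivial \emph{frozen} case and the \emph{active} case. If some $\xs{j}_1 = 0$ with $j \le \ell$, then by~\eqref{eq-mart-m} we have $\ms{\ell+1} = \ms{\ell}$ (both defaulting to the last formula-defined value) so the difference is zero and the bound holds. Hence assume $\xs{j}_1 > 0$ for all $0 \le j \le \ell$, so that both $\ms{\ell}$ and $\ms{\ell+1}$ are given by the top case of~\eqref{eq-mart-m}. Expanding,
\[
  \ms{\ell+1} - \ms{\ell}
  \;=\; \bigl(\xs{\ell+1}_1 - \xs{\ell}_1\bigr) \;-\; \octrend_R \;+\; \vec{g}\bigl(\xs{\ell+1}_2\bigr)[\ps{\ell+1}] \;-\; \vec{g}\bigl(\xs{\ell}_2\bigr)[\ps{\ell}].
\]
Since counter changes are in $\{-1,0,1\}$ and $\octrend_R$ is an average of such changes, the first two terms are each bounded by~$1$ in absolute value, contributing a total of at most~$2$. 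All the work is therefore in bounding $|\vec{g}(n+\delta)[p'] - \vec{g}(n)[p]|$ with $\delta \in \{-1,0,1\}$ and $p,p' \in S$.

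For the~$\vec{g}$~difference I would use the decomposition $\vec{g}(n) = \vec{h}(n) + G^n \vec{g}(0)$, where $\vec{h}(n) := \sum_{i=0}^{n-1} G^i \vec{r}_\downarrow$, which is immediate from iterating~\eqref{eq-mart-g}. Then I would add and subtract $\vec{g}(n+\delta)[p]$ to split
\[
  \vec{g}(n+\delta)[p'] - \vec{g}(n)[p]
  \;=\; \bigl(\vec{g}(n+\delta)[p'] - \vec{g}(n+\delta)[p]\bigr) \;+\; \bigl(\vec{g}(n+\delta)[p] - \vec{g}(n)[p]\bigr),
\]
i.e., a \emph{spatial} difference (same argument, different states) and a \emph{temporal} difference (same state, shifted argument). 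For the spatial term, using $\diffnorm{\vec{r}_\downarrow} \le 2\rmax$, Lemma~\ref{lem-G-one-BSCC}, and applying Lemma~\ref{lem-mart-coupling} to the stochastic matrix~$G$ yields $\diffnorm{\vec{h}(n+\delta)} \le 10|S|/\ymin^{|S|} \cdot 2\rmax$, where $\ymin$ is the smallest nonzero entry of~$G$; and since $G$ is stochastic and $\vec{0} \le \vec{g}(0) \le \gmax\vec{1}$, the vector $G^{n+\delta}\vec{g}(0)$ has entries in $[0,\gmax]$, giving $\diffnorm{G^{n+\delta}\vec{g}(0)} \le \gmax$. For the temporal term, $\vec{h}(n+1) - \vec{h}(n) = G^n \vec{r}_\downarrow$ has entries of absolute value $\le \rmax$ (again since $G$ is stochastic), and $|G^{n+1}\vec{g}(0)[p] - G^n \vec{g}(0)[p]| \le \gmax$ because both quantities lie in $[0,\gmax]$; the $\delta = -1,0$ cases are analogous.

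Adding everything gives $|\vec{g}(n+\delta)[p'] - \vec{g}(n)[p]| \le 2C\rmax + \rmax + 2\gmax$ where $C = 10|S|/\ymin^{|S|}$. To reach the advertised form I would invoke the standard lower bound $\ymin \ge \xmin^{|S|^3}$ on termination probabilities of the form $G[q,r]$ in one-counter Markov chains (as used in~\cite{EWY:one-counter-PE} and in the proof of Lemma~\ref{lem-reg-same-value}), which yields $\ymin^{|S|} \ge \xmin^{|S|^4}$ and hence $2C\rmax + \rmax \le 3C\rmax \le 30|S|\rmax/\xmin^{|S|^4}$ (using $\rmax \ge 1$ and $\xmin \le 1$). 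Combining with the earlier contribution of $2$ from the counter-change and trend yields the claimed bound $2 + 2\gmax + 30|S|\rmax/\xmin^{|S|^4}$.

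The main obstacle is the spatial difference bound: a priori the quantity $\diffnorm{\vec{g}(n)}$ could blow up with~$n$, and it is exactly Lemma~\ref{lem-mart-coupling} (whose nontrivial coupling proof relies on $G$ having a single BSCC, given by Lemma~\ref{lem-G-one-BSCC}) that keeps it uniformly bounded. Everything else is either a one-step stochastic-matrix estimate or the observation that $G^n \vec{g}(0)$ stays within $[0,\gmax]$ entrywise; the rest is bookkeeping of the constants.
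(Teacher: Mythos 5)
Your proof is correct and takes essentially the same route as the paper: you use the closed form $\vec{g}(n)=G^n\vec{g}(0)+\sum_{i<n}G^i\vec{r}_\downarrow$, split the $\vec{g}$-difference into a spatial part (bounded uniformly via Lemma~\ref{lem-mart-coupling} together with Lemma~\ref{lem-G-one-BSCC}) and a temporal part (bounded via stochasticity of~$G$), and combine with the $|S|^3$ bound on the least entry of~$G$. The only cosmetic differences are that you explicitly dispatch the frozen case and pivot the triangle-inequality split on the intermediate $\vec{g}(n+\delta)[p]$ rather than $\vec{g}(n)[p']$; both choices yield the identical constant.
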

\begin{proof}
By~\eqref{eq-mart-g}, using straightforward induction, we obtain for all $n \ge 0$:
\begin{align}
 \vec{g}(n)  & = G^{n} \underbrace{\vec{g}(0)}_{\le \gmax \vec{1}} + \sum_{i=0}^{n - 1} G^i \underbrace{\vec{r}_\downarrow}_{\le \rmax \vec{1}}  \label{eq-mart-nob-bound-comp}
\end{align}
Since $G$ is stochastic, it follows for all $n \ge 0$:
\begin{equation}
\norm{\vec{g}(n+1) - \vec{g}(n)}
\ \le \ \norm{ G^{n+1} \vec{g}(0) - G^n \vec{g}(0) } + \norm{ G^n \vec{r}_\downarrow }
\ \le \ \gmax + \rmax
\label{eq-lem-mart-nob-g-bound-3}
\end{equation}
For a vector $\vec{v} \in \Rset^S$ let us define
 \[
  \diffnorm{\vec{v}} := \max_{p,q \in S} |\vec{v}[p] - \vec{v}[q]|
 \]
Observe that for any $\vec{u}, \vec{v} \in \Rset^S$ we have $\diffnorm{\vec{u} + \vec{v}} \le \diffnorm{\vec{u}} + \diffnorm{\vec{v}}$.
Let $C$ be the constant from Lemma~\ref{lem-mart-coupling} for the matrix~$G$.
We have for all $n \ge 0$:
\begin{align}
 \diffnorm{\vec{g}(n)}
 & \le \diffnorm{G^n \vec{g}(0)} + \diffnorm{\sum_{i=0}^{n-1} G^i \vec{r}_\downarrow} && \text{by~\eqref{eq-mart-nob-bound-comp}} \notag \\
 & \le \gmax + C \diffnorm{\vec{r}_\downarrow} && \text{by Lemmas \ref{lem-mart-coupling}~and~\ref{lem-G-one-BSCC}} \notag \\
 & \le \gmax + 2 C \rmax && \text{as $\diffnorm{\vec{r}_\downarrow} \le 2 \rmax$} \label{eq-lem-mart-nob-g-bound-4}
\end{align}
Recall from~\eqref{eq-mart-m} that $\ms{\ell} = \xs\ell_1 - \octrend_R \ell + \vec{g}\big(\xs\ell_2\big)[\ps\ell]$ for all $\ell \in \Nset$.
Hence we have:
\begin{align*}
|\ms{\ell+1}-\ms{\ell}|
& \le |\xs{\ell+1}_1 - \xs{\ell}_1| + |\octrend_R| \\
& \quad     + \left|\vec{g}\big(\xs{\ell+1}_2\big)[\ps{\ell+1}]
        - \vec{g}\big(\xs\ell_2\big)[\ps\ell]\right| \\
& \le 1 + 1 + \left|\vec{g}\big(\xs{\ell+1}_2\big)[\ps{\ell+1}]
                    - \vec{g}\big(\xs\ell_2\big)[\ps{\ell+1}]\right| \\
& \quad     + \left|\vec{g}\big(\xs\ell_2\big)[\ps{\ell+1}]
                    - \vec{g}\big(\xs\ell_2\big)[\ps{\ell}] \right| \\
& \le 2 + \left|\vec{g}\big(\xs{\ell+1}_2\big)
        - \vec{g}\big(\xs\ell_2\big)\right|
        + \diffnorm{\vec{g}\big(\xs\ell_2\big)} \\
& \le 2 + 2 \gmax + \underbrace{(2 C + 1)}_{\le 3 C} \rmax && \text{by \eqref{eq-lem-mart-nob-g-bound-3} and \eqref{eq-lem-mart-nob-g-bound-4}}
\intertext{We have $C := 10 |S| / \ymin^{|S|}$, where $\ymin$ is the smallest nonzero entry of~$G$.
By~\cite[Corollary 6]{EWY:one-counter} we have $\ymin \ge \xmin^{|S|^3}$, so we have $C \le 10 |S| / \xmin^{|S|^4}$.
Hence:}
|\ms{\ell+1}-\ms{\ell}|
& \le 2 + 2 \gmax + 30 |S| \rmax / \xmin^{|S|^4}
\end{align*}
\end{proof}

Now we can prove Lemma~\ref{lem:bounded-differences}.

\begin{reftheorem}{Lemma}{\ref{lem:bounded-differences}}
There is a bound $B\geq 1$ computable in polynomial space such that $|\ms{\ell+1}-\ms{\ell}|\leq B$ for every $\ell\in \Nset$.
\end{reftheorem}

\begin{proof} %[Proof of Lemma~\ref{lem:bounded-differences}]
This follows from Lemma~\ref{prop-bounded-differences},
as the vectors $\vec{g}(0)$ and $\vec{r}_\downarrow$, as defined in~\cite{BKKNK:pMC-zero-reachability}, can easily be expressed in the existential theory of the reals.
\end{proof}

\end{document}